\documentclass{elsarticle}
\biboptions{sort&compress}
\journal{}

\usepackage{array}
\usepackage{rotating}
\usepackage{varioref}
\usepackage{amsmath,amssymb,amsfonts,amsthm}
\usepackage{algorithmic,algorithm}
\usepackage{graphicx}
\DeclareGraphicsRule{.pdftex}{pdf}{*}{}
\usepackage{xspace}
\usepackage{xcolor}


\newcommand{\ch}[1]{\mathop{\smash{\textrm{ch}_{\geq #1}}}}
\newcommand{\BL}{\ensuremath{\mathcal{X}}}
\newcommand{\CL}{\ensuremath{\mathcal{C}}}
\newcommand{\ass}{\ensuremath{\mathrel{:=}}}
\newcommand{\sig}{\textrm{sig}}
\newcommand{\down}{\ensuremath{\mathord{\downarrow}}}
\newcommand{\up}{\ensuremath{\mathord{\uparrow}}}
\newcommand{\updown}{\ensuremath{\mathord{\updownarrow}}}
\renewcommand{\top}{\textrm{top}}
\newcommand{\siggeq}{\ensuremath{\mathrel{\gtrsim}}}
\newcommand{\sigequiv}{\ensuremath{\mathrel{\cong}}}
\newcommand{\expgeq}{\ensuremath{\mathrel{\geq_{\rm exp}}}}
\newcommand{\expequiv}{\ensuremath{\mathrel{\equiv_{\rm exp}}}}
\newcommand{\downgeq}[1]{\ensuremath{\mathrel{\geq_{\down}^{#1}}}}
\newcommand{\downequiv}[1]{\ensuremath{\mathrel{\equiv_{\down}^{#1}}}}
\newcommand{\weakdownequiv}[1]{\ensuremath{\mathrel{\approxeq_{\down}^{#1}}}}
\newcommand{\weakupdownequiv}[1]{\ensuremath{\mathrel{\approxeq_{\updown}^{#1}}}}
\newcommand{\updowngeq}[1]{\ensuremath{\mathrel{\geq_{\updown}^{#1}}}}
\newcommand{\updownequiv}[1]{\ensuremath{\mathrel{\equiv_{\updown}^{#1}}}}
\newcommand{\upgeq}{\ensuremath{\mathrel{\geq_{\up}}}}
\newcommand{\upequiv}{\ensuremath{\mathrel{\equiv_{\up}}}}
\newcommand{\pairs}[2]{\ensuremath{\mathrel{{#1}_{#2}}}}

\newtheorem{theorem}{Theorem}[section]
\newtheorem{corollary}[theorem]{Corollary}
\newtheorem{lemma}[theorem]{Lemma}

\newtheorem{proposition}[theorem]{Proposition}
\newdefinition{definition}[theorem]{Definition}
\newdefinition{notation}[theorem]{Notation}
\newdefinition{example}[theorem]{Example}
\newdefinition{remark}[theorem]{Remark}
\numberwithin{equation}{section}


\begin{document}

\begin{frontmatter}

\title{Structural characterizations of the navigational expressiveness of relation
algebras on a tree\tnoteref{pods-ref}}
\tnotetext[pods-ref]{A preliminary version of some of the results given here were presented at
  the Twenty-Fifth ACM Symposium on Principles of Database Systems \cite{GyssensPGF06}.}
\author[inst1]{George H.\,L.\ Fletcher\corref{cor1}}
\ead{g.h.l.fletcher@tue.nl}
\author[inst2]{Marc Gyssens}
\ead{marc.gyssens@uhasselt.be}
\author[inst3]{Jan Paredaens}
\ead{jan.paredaens@ua.ac.be}
\author[inst4]{Dirk Van Gucht}
\ead{vgucht@cs.indiana.edu}
\author[inst4]{Yuqing Wu}
\ead{yuqwu@indiana.edu}
\cortext[cor1]{Corresponding author. Phone: +31 (0)40 247 26 24. Fax: +31 (0)40 243 66 85.}
\address[inst1]{Eindhoven University of Technology, P.O. Box 513, 5600 MB Eindhoven, 
The Netherlands} 
\address[inst2]{Hasselt University and Transnational University of Limburg,
School for Information Technology, Belgium}
\address[inst3]{University of Antwerp, Belgium} 
\address[inst4]{Indiana University, Bloomington, USA} 


\begin{abstract}
Given a document $D$ in the form of an unordered node-labeled tree, we study the
expressiveness on $D$ of various basic fragments of XPath, the core navigational
language on XML documents.  Working from the perspective of these languages as
fragments of Tarski's relation algebra, we give characterizations, in terms of
the structure of $D$, for when a binary relation on its nodes is definable by an
expression in these algebras. Since each pair of nodes in such a relation
represents a unique path in $D$, our results therefore capture the sets of paths
in $D$ definable in each of the fragments. We refer to this perspective on
language semantics as the ``global view.'' In contrast with this global view,
there is also a ``local view'' where one is interested in the nodes to which one
can navigate starting from a particular node in the document. In this view, we
characterize when a set of nodes in $D$ can be defined as the result of applying
an expression to a given node of $D$. All these definability results, both in
the global and the local view, are obtained by using a robust two-step
methodology, which consists of first characterizing when two nodes cannot be
distinguished by an expression in the respective fragments of XPath, and then
bootstrapping these characterizations to the desired results.
\end{abstract}

\begin{keyword}
trees, relation algebra, XML, XPath, bisimulation, instance expressivity
\end{keyword}

\end{frontmatter}

\section{Introduction}
\label{sec-introduction}

In this paper, we investigate the expressive power of several basic fragments of
Tarski's relation algebra \cite{Tarski41} on finite tree-structured graphs.
Tarski's algebra is a fundamental tool in the field of algebraic logic 
which finds various applications in computer science
\cite{HirshHodkinson,Givant06,Maddux,TarskiGivant}.  Our investigation is
specifically motivated by the role the relation algebra plays in the study of
database query languages
\cite{GyssensSG91,tenCate:2007,Marx:2005,FletcherGLBGVW12,FletcherGLBGVW11,FletcherGWGBP09,SarathySG93}.
In particular, the algebras we consider in this paper correspond to
natural fragments of XPath.  XPath is a simple language for navigation in XML
documents (i.e., a standard syntax for representing node-labeled trees), which
is at the heart of standard XML transformation languages and other XML
technologies \cite{xpath}.  Keeping in the spirit of XML, we will continue to
speak in what follows of trees as ``documents'' and the algebras we study as
``XPath'' algebras.

XPath can be viewed as a query language in which an expression associates to
every document a binary relation on its nodes representing all navigation paths
in the document defined by that
expression~\cite{BenediktFK05,GottlobKoch,Marx:2005}.  From this query-level
perspective, several natural semantic issues have been investigated in recent
years for various fragments of XPath. These include expressibility, closure
properties, and complexity of evaluation
\cite{BenediktFK05,BenediktKoch:2009a,GottlobKP05,Marx:2005,tenCate:2007}, as well as
decision problems such as satisfiability, containment, and
equivalence~\cite{BenediktFG08,BojanczykMSS09,MiklauS04}.

Alternatively, we can view XPath as a navigational tool on a
particular given document, and study expressiveness issues from this
document-level perspective. (A similar duality exists in the
relational database model, where Bancilhon~\cite{Bancilhon78} and
Paredaens~\cite{Paredaens78} considered and characterized
expressiveness at the instance level, which, subsequently, Chandra and
Harel~\cite{ChandraHarel} contrasted with expressiveness at the
query level.) 

In this setting, our goal is to characterize, for various
natural fragments of XPath, when a binary relation on the nodes of a given
document (i.e., a set of navigation paths) is definable by an
expression in the fragment.

To achieve this goal, we develop a robust two-step methodology. The
first step consists of characterizing when two
nodes in a document cannot be distinguished by an expression in the
fragment under consideration. It turns out for those fragments we
consider that this notion of 
expression equivalence of nodes is equivalent to an appropriate
generalization of the classic notion of bisimilarity \cite{Sangiorgi}. 
The second step of our methodology
then consists of bootstrapping this result to a characterization for
when a  binary relation on the nodes of a given document is definable
by an expression in the fragment (in the sense of the previous
paragraph).

We refer to this perspective on the semantics of XPath at the document
level as the ``global view.'' In contrast with this global view, there
is also a ``local view'' which we consider. In this view, one is only
interested in the nodes to which one can navigate starting from a
particular given node in the document under consideration.  From this
perspective, a set of nodes of that document can be seen as the end
points of a set of paths starting at the given node.  For each of the
XPath fragments considered, we characterize when such a set represents the
set of \emph{all} paths starting at the given node defined by some
expression in the fragment. These characterizations are derived from
the corresponding characterizations in the ``global view,'' and turn
out to be particularly elegant in the important special case where the
starting node is the root.

In this paper, we study several natural XPath fragments.  The most expressive
among them is the \emph{XPath algebra} which permits the self, parent, and child
operators, predicates, compositions, and the boolean operators union,
intersection, and difference. (Since we work at the document level, i.e., the
document is given, there is no need to include the ancestor and descendant
operators as primitives.) We also consider the \emph{core XPath algebra}, which
is the XPath algebra without intersection and difference at the expression
level. The core XPath algebra is the adaptation to our setting of Core XPath of
Gottlob et al.~\cite{BenediktKoch:2009a,tenCate2010,GottlobKoch}.  Of both of
these algebras, we also consider various ``downward'' and ``upward'' fragments  
without the parent and child operator, respectively.   We also study ``positive''
variants of all the fragments considered, without the difference operator. 

Our strategy is to introduce and characterize generalizations of each of these
practical fragments, towards a broader perspective on relation algebras on
trees.  These generalizations are based on a simple notion of path counting, a
feature which also appears in XPath.

The robustness of the characterizations provided in this paper is
further strengthened by their feasibility.  As discussed in Section
\ref{sec-conclusions},  the global and local definability problems for
each of the XPath fragments are decidable in polynomial time.  This
feasibility hints towards efficient partitioning and reduction
techniques on both the set of nodes and the set of paths in a document.
Such techniques may fruitfully applied towards, e.g., document compression
\cite{Buneman:2003}, access control \cite{Fundulaki:2004}, and
designing indexes for query processing
\cite{FletcherGWGBP09,KaushikSBG02,MiloS99}.

We proceed in the paper as follows.
In Section \ref{subsec-docandnav}, we formally define documents and the algebras, and then in 
Section \ref{subsec-sig}, we define a notion of ``signatures'' which 
will be essential in the sequel.
In Section \ref{sec-distinguishability}, we define the semantic and syntactic notions of node
distinguishability necessary to obtain our desired structural characterizations.
In the balance of the paper, we apply our two-step methodology to link
semantic expression equivalence in the languages to appropriate structural syntactic
equivalence notions.
In particular, we give structural characterizations, under both the global and
local views, 
\begin{itemize}
    \item of ``strictly'' (Section \ref{sec-strictlydownward}) and ``weakly'' 
        (Section \ref{sec-weaklydownward}) downward languages, and their positive variants;
    \item  of upward languages and their positive variants (Section \ref{sec-upward});
        and,
    \item of languages with both downward and upward navigation, and their positive variants
        (Section \ref{sec-general}).
\end{itemize}
Along the way, we also establish the equivalence of some of these fragments, using
the structural characterizations obtained.
We conclude in Section~\ref{sec-conclusions} with a discussion of some
ramifications of our results and directions for further study.

\section{Documents and navigation}
\label{subsec-docandnav}

In this paper, we are interested in navigating over documents in the
form of unordered labeled trees. Formally, we denote such a document
as $D=(V,\textit{Ed},r,\lambda)$, with $D$ the document name, $V$ the
set of nodes of the tree, $\textit{Ed}$ the set of edges of the tree,
$r$ the root of the tree, and $\lambda: V\to\mathcal{L}$ a function
assigning to each node a label from some infinite set of labels
$\mathcal{L}$. 

\begin{example}
\label{ex-document}

Figure~\ref{fig-document} shows an example of a document that will be
used throughout the paper.  Here, $r=v_1$ is the root of the tree with label
$\lambda(v_1) = a$.

\end{example}

\begin{figure}
\begin{center}
\resizebox{0.9\textwidth}{!}{\input{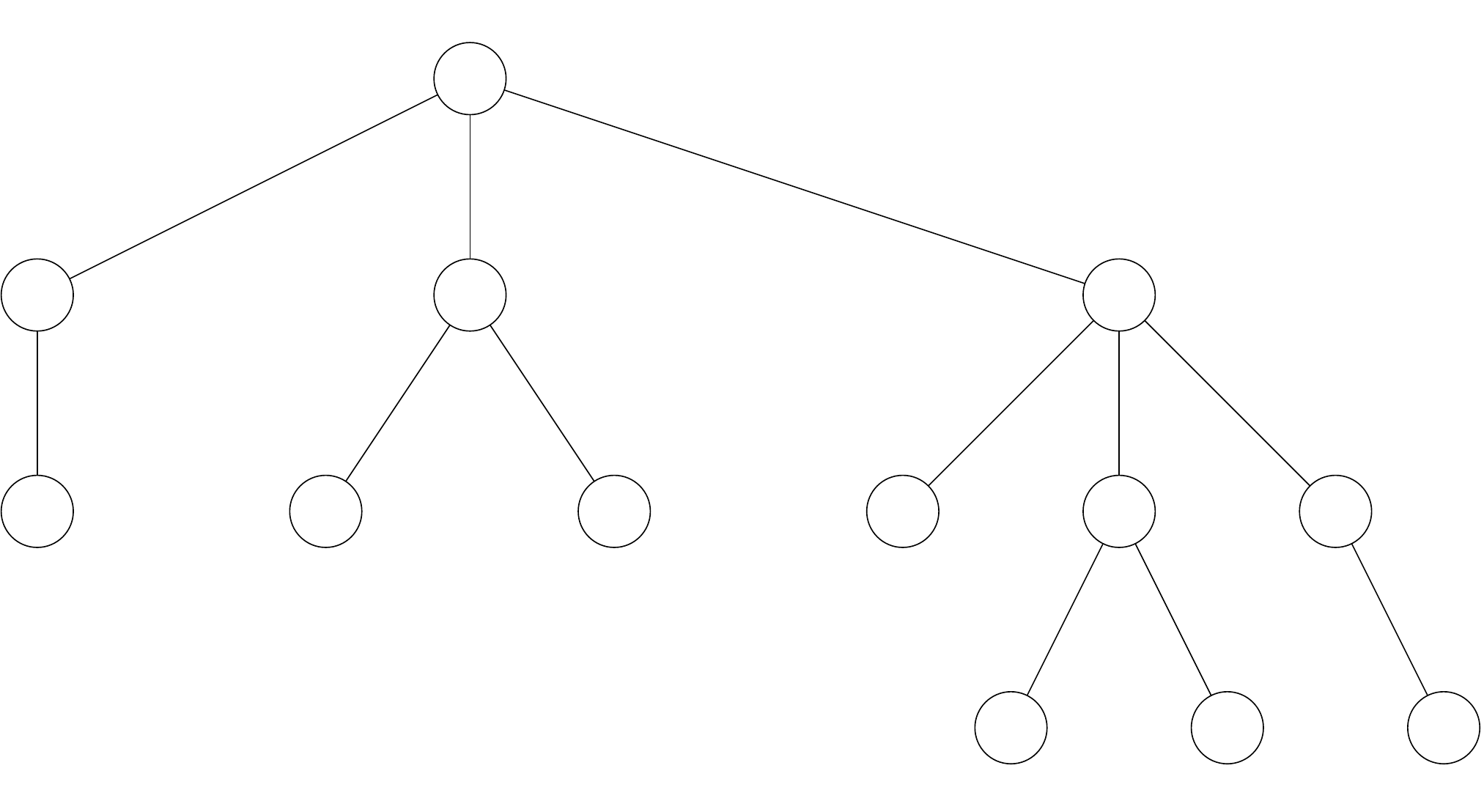_t}}
\end{center}
\caption[a]{Example document.}
\label{fig-document}
\end{figure}

We next define a set of operations on documents, as tabulated
in Table~\ref{tab-binops}. The left column shows the syntax of the
operation, and the right column its semantics, given a document
$D=(V,\textit{Ed},r,\lambda)$. Notice that, in each case, the result
is a binary relation on the nodes of the document.

\begin{table}
    \centering
\caption{Binary operations on documents. The left column shows the
  syntax of the operation, and the right column its semantics, given a
  document $D=(V,\textit{Ed},r,\lambda)$. Below, $\ell$ is a label in
  $\mathcal{L}$ and $k\geq 1$ a natural number. Furthermore, in the
  recursive definitions, $e$, $e_1$, and $e_2$ represents expressions
  built with the operations.}
\label{tab-binops}
\small{
$$\begin{array}{|c|c|}
\hline
\textrm{Syntax}&\textrm{Semantics}\\
\hline
\emptyset&\emptyset(D)=\emptyset\\
\varepsilon&\varepsilon(D)=\{(v,v)\mid v\in V\}\\
\hat{\ell}&\hat{\ell}(D)=\{(v,v)\mid v\in V\ \&\ \lambda(v)=\ell\}\\
\down&\down(D)=\textit{Ed}\\
\up&\up(D)=\textit{Ed}^{-1}\\
\pi_1(e)&\pi_1(e)(D)=\{(v,v)\mid(\exists w)(v,w)\in e(D)\}\\
\pi_2(e)&\pi_2(e)(D)=\{(w,w)\mid(\exists v)(v,w)\in e(D)\}\\
e^{-1}&e^{-1}(D)=e(D)^{-1}\\
\ch{k}(e)&\ch{k}(e)(D)=\{(v,v)\mid v\in V\ \&\ |\{w\mid
(v,w)\in\textit{Ed}\ \&\ (w,w)\in\pi_1(e)(D)|\geq k\}\\
e_1/e_2&e_1/e_2(D)=\{(u,w)\mid (\exists v)((u,v)\in e_1(D)\ \&\ (v,w)\in
e_2(D))\}\\
e_1\cup e_2&e_1\cup e_2(D)=e_1(D)\cup e_2(D)\\
e_1\cap e_2&e_1\cap e_2(D)=e_1(D)\cap e_2(D)\\
e_1 - e_2&e_1 - e_2(D)=e_1(D) - e_2(D)\\
\hline
\end{array}$$
}
\end{table}

The \emph{basic algebra}, denoted $\BL$, is the language consisting of
all expressions built from $\emptyset$, $\varepsilon$, $\hat{\ell}$
with $\ell\in\mathcal{L}$, composition (``$/$''), and union
(``$\cup$''). The basic algebra $\BL$ can be extended by adding some
of the other operations in Table~\ref{tab-binops}, which we call
\emph{nonbasic}. If $E$ is a set of nonbasic operations, then $\BL(E)$
denotes the algebra obtained by adding the operations in $E$ to the
basic algebra $\BL$. When writing expressions, we assume that unary
operations take precedence over binary operations, and that
composition takes precedence over the set operations.

Notice that we do not consider transitive closure operations such as
the descendant (``$\down^\ast$'') or ancestor
(``$\up^\ast$'') operations of XPath. The reason for this is
that, in this paper, we only consider navigation within a single,
given document. 

\begin{example}
\label{ex-operations}

Consider the document $D$ in Figure~\ref{fig-document}. Let $e$ be the
expression $\up/\pi_1(\down/\hat{b}/\down/\hat{c})-\ch{2}(\varepsilon)/\up$ in
the language $\BL(\down,\up,\pi_1,\ch{2},-)$ (or, for that matter, in any language
$\BL(E)$ with  $\{\down,\up,\pi_1,\ch{2},-\}\subseteq E$). Then,
$e(D)=\{(v_2,v_1),(v_8,v_4),(v_{10},v_4)\}$. 

\end{example}

Not all the above operations are primitive, however. For instance,
intersection (``$\cap$'') is expressible as soon a set
difference (``$-$'') is expressible, since, for any two sets $A$ and
$B$, $A\cap B=A-(A-B)$. Even more eliminations are possible in the
following setting.

\begin{proposition}
\label{prop-eliminate}

Let $E$ be a set of nonbasic operations containing set difference
``$-$'') or intersection
(``$\cap$'') for which ``$\down$'' and ``$\up$'' are both
contained in $E$ or both not contained in $E$. Then, for each
expression $e$ in $\BL(E)$, there is an equivalent expression in
$\BL(E-\{\pi_1,\pi_2,\mathstrut^{-1}\})$.

\end{proposition}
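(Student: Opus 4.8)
The plan is to eliminate the three operators family by family, in a carefully chosen order, using two semantic identities together with the standard laws for driving converse toward the leaves of an expression. Throughout I may use intersection freely: if $-\in E$ then $A\cap B = A-(A-B)$, so $\cap$ is expressible in $\BL(E-\{\pi_1,\pi_2,\mathstrut^{-1}\})$ whenever $-$ or $\cap$ lies in $E$.

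First I would remove every occurrence of $\pi_1$ and $\pi_2$. The key is the pair of identities
\[
\pi_1(e)\equiv\varepsilon\cap(e/e^{-1}),\qquad \pi_2(e)\equiv\varepsilon\cap(e^{-1}/e),
\]
which follow directly from the semantics in Table~\ref{tab-binops}: a pair $(v,v)$ lies in $e/e^{-1}$ exactly when $e(D)$ contains some pair with first component $v$, and dually for $e^{-1}/e$. Processing the expression by structural induction from the innermost projections outward and replacing each $\pi_1(e)$ and $\pi_2(e)$ in this way yields an equivalent expression with no projections, at the cost of possibly introducing new occurrences of converse. Since this step uses only $/$, $\cup$, $\cap$ and $\mathstrut^{-1}$, the result lies in $\BL\big((E\setminus\{\pi_1,\pi_2\})\cup\{\mathstrut^{-1}\}\big)$.

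Second, now that no projection remains, I would remove every converse by pushing each $\mathstrut^{-1}$ toward the atoms via the homomorphism laws
\[
(e_1/e_2)^{-1}\equiv e_2^{-1}/e_1^{-1},\quad (e_1\cup e_2)^{-1}\equiv e_1^{-1}\cup e_2^{-1},\quad (e_1\cap e_2)^{-1}\equiv e_1^{-1}\cap e_2^{-1},\quad (e_1-e_2)^{-1}\equiv e_1^{-1}-e_2^{-1},
\]
together with $(e^{-1})^{-1}\equiv e$ and the fact that every coreflexive atom is self-converse, so that $\varepsilon^{-1}\equiv\varepsilon$, $\hat{\ell}^{-1}\equiv\hat{\ell}$, $\emptyset^{-1}\equiv\emptyset$, and $(\ch{k}(e))^{-1}\equiv\ch{k}(e)$. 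This terminates because each rewrite strictly lowers the depth at which converse is applied. The counting operator $\ch{k}$ causes no trouble: its value depends on its argument only through that argument's domain, so after recursively clearing projection and converse inside the argument the rewritten subexpression stays equivalent. I would also stress that the order of the two passes is essential — clearing projections must come first, because it introduces converse, whereas the converse-clearing pass introduces no new projections; doing them in the opposite order would loop.

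The crux, and the only place where the hypotheses on $E$ are used, is what happens when $\mathstrut^{-1}$ reaches a direction atom: there the sole nontrivial cases are $\down^{-1}\equiv\up$ and $\up^{-1}\equiv\down$. Both rewrites stay inside $\BL(E-\{\pi_1,\pi_2,\mathstrut^{-1}\})$ exactly when $\down$ and $\up$ both belong to $E$; and if neither belongs to $E$, then $\down$ and $\up$ never appear as atoms and this case is vacuous. This is precisely the requirement that $\down$ and $\up$ be both in or both out of $E$. (Were only one present, converse would carry navigation in the forbidden direction outside the language, so the restriction is genuinely needed.) Combining the two passes produces, for every $e\in\BL(E)$, an equivalent expression in $\BL(E-\{\pi_1,\pi_2,\mathstrut^{-1}\})$, as claimed.
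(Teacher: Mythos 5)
Your proof is correct and takes essentially the same route as the paper's: you first eliminate $\pi_1,\pi_2$ via $\pi_1(e)=(e/e^{-1})\cap\varepsilon$ and $\pi_2(e)=(e^{-1}/e)\cap\varepsilon$, then push converse to the atoms by the same homomorphism laws, the only nontrivial leaf cases being $\down^{-1}=\up$ and $\up^{-1}=\down$, which is exactly where the both-in-or-both-out hypothesis on $E$ is used. Your additional remarks --- expressing $\cap$ via $-$, the ordering of the two passes, and termination of the rewriting --- merely make explicit what the paper leaves implicit.
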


\begin{proof}
First, we eliminate both projections using the identities
\begin{eqnarray*}
\pi_1(e)&=&(e/e^{-1})\cap\varepsilon\rm;\\
\pi_2(e)&=&(e^{-1}/e)\cap\varepsilon\rm.
\end{eqnarray*}
Hence, each expression in
$\BL(E)$ can be replaced by an equivalent expression in
$\BL((E\cup\{\mathstrut^{-1}\})-\{\pi_1,\pi_2\})$. It remains to
show that we can eliminate inverse (``$\mathstrut^{-1}$''). This
follows from the following identities. In these,
$D=(V,\textit{Ed},r,\lambda)$ is a document, $\ell\in\mathcal{L}$ is a
label, $k\geq 1$ is a natural number, and $e$, $e_1$ and $e_2$ are
expressions in $\BL(E)$. 
\begin{itemize}
\item $\emptyset^{-1}(D)=\emptyset(D)$;
\item $\varepsilon^{-1}(D)=\varepsilon(D)$;
\item $\hat{\ell}^{-1}(D)=\hat{\ell}(D)$;
\item $\down^{-1}(D)=\up(D)$;
\item $\up^{-1}(D)=\down(D)$;
\item $(e^{-1})^{-1}(D)=e(D)$;
\item $(e_1/e_2)^{-1}(D)=e_2^{-1}/e_1^{-1}(D)$;
\item $\ch{k}(e)^{-1}(D)=\ch{k}(e)(D)$;
\item $(e_1\cup e_2)^{-1}(D)=e_1^{-1}\cup e_2^{-1}(D)$;
\item $(e_1\cap e_2)^{-1}(D)=e_1^{-1}\cap e_2^{-1}(D)$;
\item $(e_1 - e_2)^{-1}(D)=e_1^{-1} - e_2^{-1}(D)$.
\end{itemize}
\end{proof}

Notice that in a language with both upward (``$\up$'') and downward
(``$\down$'') navigation, the identities
$\pi_1(e)(D)=\pi_2(e^{-1})(D)$ and
$\pi_2(e)(D)=\pi_1(e^{-1})(D)$ imply that one projection
operation can be eliminated in favor of the other. Hence, it does not
make sense to consider the projection operations separately.

Some counting operations (``$\ch{k}(e)$'') can also be
simulated. One can easily verify the following.

\begin{proposition}
\label{prop-counting}

Let $D=(V,\textit{Ed},r,\lambda)$ be a document. 
Then,
\begin{enumerate}

\item $\ch{1}(e)(D)=\pi_1(\down/e)(D)$;

\item $\ch{2}(e)(D)=
  \pi_1(\down/(\pi_1(e)/\up/\down/\pi_1(e)-\varepsilon))(D)$; and

\item $\ch{3}(e)(D)= \pi_1(\down/
  ((\pi_1(e)/\up/\down/\pi_1(e)-\varepsilon)/
  (\pi_1(e)/\up/\down/\pi_1(e)-\varepsilon)- \varepsilon)(D)$\\

\end{enumerate}
\end{proposition}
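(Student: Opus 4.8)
The plan is to verify each of the three identities by a direct semantic computation, reducing the counting operation $\ch{k}$ to the boolean and composition operations already defined in Table~\ref{tab-binops}. Recall that $\ch{k}(e)(D)$ consists of the diagonal pairs $(v,v)$ for which the node $v$ has at least $k$ distinct children $w$ satisfying $(w,w)\in\pi_1(e)(D)$. The key observation driving the whole argument is that $\pi_1(e)$ is a subidentity relation selecting exactly those nodes $w$ that are the first coordinate of some pair in $e(D)$, so we may freely treat $e$ through the lens of the set $S\ass\{w\mid(w,w)\in\pi_1(e)(D)\}$ of ``selected'' nodes, and the task becomes counting how many children of $v$ lie in $S$.

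First I would dispatch case (1). Here $\down/e$ composes the edge relation with $e$, so $(v,w')\in(\down/e)(D)$ exactly when $v$ has a child $w$ with $(w,w')\in e(D)$; applying $\pi_1$ projects onto the first coordinate, yielding the diagonal pairs $(v,v)$ such that $v$ has at least one child $w$ lying in $S$. This is precisely $\ch{1}(e)(D)$, so the identity holds immediately. The main conceptual device for the higher cases appears already here: the composite $\pi_1(e)/\up/\down/\pi_1(e)$ relates a selected child $w_1$ of $v$ to a selected child $w_2$ of $v$, because $\up$ moves up to the common parent $v$ and $\down$ moves back down to any child. Subtracting $\varepsilon$ forces $w_1\neq w_2$, so the relation $R\ass(\pi_1(e)/\up/\down/\pi_1(e)-\varepsilon)(D)$ relates a selected child of $v$ to a \emph{distinct} selected child of the same parent.

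For case (2), I would argue that $\pi_1(\down/R)(D)$ collects the nodes $v$ for which some child $w_1$ of $v$ is related under $R$ to another node; by the description of $R$, this happens iff $v$ has a selected child $w_1$ that has a distinct selected sibling $w_2$, i.e. iff $v$ has at least two selected children. For case (3), the relation $R/R-\varepsilon$ relates a selected child $w_1$ to a selected child $w_3$ (possibly equal) via an intermediate distinct selected child $w_2$, and the subtraction of $\varepsilon$ is intended to guarantee that the existence of such a two-step path witnesses three distinct selected children under $v$; wrapping in $\pi_1(\down/\cdot)$ then selects exactly the $v$ with at least three selected children. I would carry out the counting bookkeeping carefully: given at least three selected children one can exhibit a nondiagonal $R/R$ pair, and conversely a nondiagonal $R/R$ pair forces $w_1,w_2,w_3$ to be three children of $v$ with $w_1\neq w_2$, $w_2\neq w_3$, and $w_1\neq w_3$.

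The step I expect to be the main obstacle is precisely this last counting argument in case (3). The subtlety is that $R$ and hence $R/R$ only enforce pairwise distinctness along consecutive steps ($w_1\neq w_2$ and $w_2\neq w_3$), whereas one truly needs all three of $w_1,w_2,w_3$ distinct to conclude the presence of three selected children. A pair $(w_1,w_3)\in(R/R)(D)$ with $w_1=w_3$ is exactly the degenerate situation where only two distinct selected children exist (going out to $w_2$ and back), which is why the outer subtraction of $\varepsilon$ is needed; I would have to check that removing the diagonal from $R/R$ is both necessary and sufficient to isolate the genuinely three-children case, and in particular that no parasitic diagonal pairs survive that would cause $\pi_1(\down/\cdot)$ to over-report. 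Handling this correctly — confirming that $w_1\neq w_3$ together with the built-in $w_1\neq w_2\neq w_3$ indeed yields three distinct selected children, and that the construction does not erroneously include a $v$ with only two — is the crux, and it is why the author prudently stops at $k=3$ rather than stating a general pattern.
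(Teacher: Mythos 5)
Your proposal is correct, and it is exactly the direct semantic verification that the paper has in mind — the paper states this proposition with only the remark that "one can easily verify" it and supplies no written proof. Your treatment of the one genuinely delicate point (that in case (3) the outer subtraction of $\varepsilon$ from $R/R$ is what upgrades the pairwise distinctness $w_1\neq w_2$, $w_2\neq w_3$ to full distinctness of $w_1,w_2,w_3$, and conversely that a diagonal $R/R$ pair can arise from only two selected children) is precisely the bookkeeping needed to make the verification rigorous.
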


\begin{example}
\label{ex-primitive}

Consider again the expression $e \mathrel{:=}
\up/\pi_1(\down/\hat{b}/\down/\hat{c})-\ch{2}(\varepsilon)/\up$ of
Example~\ref{ex-operations}. Using 
Proposition~\ref{prop-counting}, and making some straightforward
simplifications, we can rewrite $e$ as
$\up/\pi_1(\down/\hat{b}/\down/\hat{c})-
\pi_1(\down/(\up/\down-\varepsilon))/\uparrow$, an expression of
$\BL(\down,\up,\pi_1,-)$. Alternatively, one can use
Proposition~\ref{prop-eliminate} and the techniques exhibited in its
proof to rewrite $e$ as
$$\up/(\down/\hat{b}/\down/\hat{c}/\up/\down) -
\down/((\up/\down-\varepsilon)/(\up/\down-\varepsilon)
\cap\varepsilon)/\up,$$
an expression in $\BL(\down,\up,\cap,-)$. Finally, we invite the
reader to verify that $e$ can also be rewritten as 
$$\pi_1(\varepsilon-\pi_1(\down/(\up/\down-
\varepsilon)))/\up/\pi_1(\down/\hat{b}/\down/\hat{c}),$$
also an expression of $\BL(\down,\up,\pi_1,-)$.

\end{example}

We shall call the language
$\BL(\down,\up,\pi_1,\pi_2,.^{-1},\cap,-)$, which by
Proposition~\ref{prop-eliminate} is equivalent to
$\BL(\down,\up,-)$, the \emph{XPath algebra}.\footnote{
Note that the XPath algebra corresponds to the (full) relation algebra of Tarski
\cite{Tarski41}, adapted to our setting (cf.\ \cite{tenCate:2007}).} This is justified by the following result.

\begin{proposition}
\label{prop-XPath}

Given a single document $D=(V,\textit{Ed},r,\lambda)$, the
XPath algebra is equivalent to XPath.

\end{proposition}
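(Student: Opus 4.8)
The key challenge here is that this is fundamentally a *definitional equivalence* claim between two formalisms (the algebra $\BL(\down,\up,-)$ and "XPath"), so the proof depends entirely on pinning down a precise reference syntax for XPath. Let me think about how one would actually prove this.

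First, what is "XPath" formally? The paper cites BenediktFK05, GottlobKoch, Marx:2005. The standard navigational core of XPath (sometimes "NavXPath" or the full navigational XPath) is a two-sorted language: there are *path expressions* (which denote binary relations on nodes) and *node expressions* or *qualifiers/filters* (which denote sets of nodes, used inside predicates `[...]`). Path expressions are built from axis steps (`self`, `child`, `parent`, and on a single document we can ignore `descendant`/`ancestor` as the paper notes), composition `/`, union `|`, and the ability to attach a filter `[q]`. Node expressions (qualifiers) are built from label tests, path existence tests `p` (a path is "true" at a node if it's nonempty from that node), boolean combinations `and`, `or`, `not`, and node comparisons.

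So the proof structure is a mutual translation / bidirectional simulation.

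Approach. The proof is a standard two-directional structural induction establishing mutual expressibility. One direction (XPath into the algebra) and one direction (the algebra into XPath) must be set up simultaneously because both are two-sorted inductions that interleave path-level and node-level (filter-level) constructs.

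The plan is to first fix a concrete grammar for navigational XPath with its two syntactic sorts — path expressions $p$ denoting binary relations $p(D) \subseteq V \times V$, and qualifier (filter) expressions $q$ denoting sets $q(D) \subseteq V$ — and recall their semantics. The crucial bridge is that a qualifier $q$ denoting a node set $S$ corresponds exactly to the algebra's coreflexive (subidentity) relation $\{(v,v) \mid v \in S\}$, so $\pi_1$ is the operation that turns a path expression into the qualifier "there exists a matching path from here," and attaching a filter $p[q]$ corresponds to composing with that coreflexive, i.e. $p/\pi_1(q)$.

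Then I would carry out the two inductions. For XPath $\to$ algebra: translate each axis (`self` $\mapsto \varepsilon$, `child` $\mapsto \down$, `parent` $\mapsto \up$), label test $\mapsto \hat\ell$, composition $\mapsto$ "$/$", union $\mapsto$ "$\cup$". For qualifiers, translate $p$ (existential path test) $\mapsto \pi_1(\bar p)$ where $\bar p$ is the translation of $p$; `and` $\mapsto$ "$\cap$", `or` $\mapsto$ "$\cup$" on coreflexives, and crucially `not` $\mapsto \varepsilon - (\cdot)$, which is where set difference is essential. Node comparisons reduce to existence of a common path and are likewise expressible via composition, inverse, and the available boolean operations. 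By Proposition~\ref{prop-eliminate}, all of $\pi_1,\pi_2,{}^{-1},\cap$ are available in $\BL(\down,\up,-)$, so every gadget used is legitimate. For the reverse direction, algebra $\to$ XPath: each primitive of $\BL(\down,\up,-)$ is directly an XPath path construct, with set difference $e_1 - e_2$ being the one genuinely nontrivial case — it is not a native path operation, and I would handle it by observing that on a fixed finite document any binary relation difference can be expressed using filters and negation at the node level only when the relation is coreflexive, and otherwise by appealing to the equivalence of the full relation algebra with three-variable first-order logic / the known fact that full navigational XPath with all boolean path operators is exactly Tarski's relation algebra.

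The main obstacle — and the step I expect to absorb most of the real work — is the set-difference / negation matchup. Native XPath has negation only at the qualifier (node) level, not at the path level, whereas the algebra has full path-level difference "$e_1 - e_2$". Whether these coincide depends delicately on which version of XPath is taken as the reference: for the *full* navigational XPath studied in tenCate:2007 (which the footnote explicitly invokes by identifying the XPath algebra with Tarski's full relation algebra), path-level complementation is indeed present or derivable, and the equivalence is exactly the known correspondence between Tarski's relation algebra and full XPath. I would therefore lean on that prior correspondence result rather than re-deriving path-level negation from scratch, and confine the original argument to checking that restriction to a single document (dropping the transitive-closure axes) matches the paper's algebra without $\down^\ast$/$\up^\ast$ — a routine verification given the document is fixed. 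The honest framing is that this proposition is essentially a pointer to the established relation-algebra/XPath equivalence (cf.\ \cite{tenCate:2007}), specialized to the single-document, closure-free setting, and the "proof" is the explicit translation table plus this citation.
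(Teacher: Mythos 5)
Your core translation is exactly the paper's proof: the paper likewise simulates the XPath label step $\ell$ by $\down/\hat{\ell}$ (and, conversely, $\hat{\ell}$ by $\varepsilon[\textrm{label}=\ell]$), and then reduces the whole proposition to a structural induction turning each XPath predicate $P$ into an algebra expression denoting the coreflexive $\{(n,n)\mid n\in P(D)\}$, with precisely your five cases: $\pi_1(f)$ for an existential path test, $\hat{\ell}$ for a label test, $\varepsilon - f$ for negation, $f_1\cap f_2$ for conjunction, and $f_1\cup f_2$ for disjunction (all available in $\BL(\down,\up,-)$ by Proposition~\ref{prop-eliminate}). Where you genuinely diverge is the converse direction: the paper declares the proof complete once predicates are translated into the algebra, and never says how a path-level difference $e_1-e_2$ of the algebra is to be rendered in XPath; you single this out as the main obstacle and discharge it by fixing the reference syntax to be full navigational XPath with path-level boolean operators, invoking its known identification with Tarski's relation algebra (the same identification the paper relegates to a footnote citing \cite{tenCate:2007}). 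Your framing buys honesty about the definitional character of the claim --- it really does hinge on which ``XPath'' is meant, and the paper's own Example~\ref{ex-notequivalent} shows the proposition would fail for a Core-XPath-style language whose negation lives only inside predicates, since $\up/\down-\varepsilon$ is not expressible there --- while the paper's terser proof buys brevity at the cost of leaving that direction implicit. In short, there is no gap in your proposal relative to what the paper actually establishes; if anything, you address a step the paper skips.
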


\begin{proof}
    Notice that $\ell$ in XPath \cite{xpath} is simulated by $\down/\hat{\ell}$ in
the XPath algebra.  Furthermore, $\hat{\ell}$ in the XPath algebra is
simulated by  $\varepsilon[\textrm{label}=\ell]$ in XPath.
The proof is complete if, for each predicate $P$ in XPath, there exists
an XPath algebra expression $e$ such that $e(D)=\{(n,n)\mid n\in P(D)\}$.
This is proved by structural induction:
\begin{enumerate}

\item if $P$ is an XPath expression without predicates, then take
    $e := \pi_1(f)$, with  $f$ the XPath algebra expression
    obtained from $P$ by replacing everywhere $\ell$ by
    $\down/\hat{\ell}$.

\item if $P$ is $\textrm{label}= \ell$, then take $e:= \hat{\ell}$.

\item if $P$ is $\lnot Q$, with $Q$ an XPath predicate, then take
  $e:= \varepsilon - f$, with $f$ the 
    XPath algebra expression corresponding to $Q$.

\item if $P$ is $Q_1\land Q_2$, with $Q_1$ and $Q_2$ XPath predicates,
  then take $e:= f_1\cap f_2$, with $f_1$ and $f_2$ the 
    XPath algebra expressions corresponding to $Q_1$ and $Q_2$,
    respectively. 

\item if $P$ is $Q_1\lor Q_2$, with $Q_1$ and $Q_2$ XPath predicates,
  then take $e:= f_1\cup f_2$, with $f_1$ and $f_2$ the 
    XPath algebra expressions corresponding to $Q_1$ and $Q_2$,
    respectively.

\end{enumerate}
\end{proof} 

Besides the \emph{standard languages} $\BL(E)$, with $E$ a set of
nonbasic operations, we also consider the so-called \emph{core languages}
$\CL(E)$. More concretely, $\CL(E)$ is defined recursively in the same way as
$\BL(E-\{\cap,-\})$, except that in expressions of the form 
$\pi_1(f)$, and $\pi_2(f)$, $f$ may be a boolean combination of
expressions of the language using union and the operations in
$E\cap\{\cap,-\}$, rather than just an expression of the language.

The above terminology is inspired by the fact that
$\CL(\down,\up,\pi_1,\pi_2,-,\cap)$, the language which we call the 
\emph{core XPath algebra\/}, is the adaptation to our setting of
Core XPath of Gottlob and Koch \cite{GottlobKoch}.

\begin{example}
\label{ex-core}

Continuing with Example~\ref{ex-primitive}, we consider again the
expression $e \mathrel{:=}
\up/\pi_1(\down/\hat{b}/\down/\hat{c})-\ch{2}(\varepsilon)/\up$
of Example~\ref{ex-operations}. Obviously, there is no core language of
which $e$ is an expression, as set difference (``$-$'') occurs at the
outer level, and not in a subexpression $f$ which in turn is embedded
in a subexpression of the form $\pi_1(f)$ or
  $\pi_2(f)$. However, in Example~\ref{ex-primitive}, the expression
  $e$ has been shown to be equivalent to
$$\pi_1(\varepsilon-\pi_1(\down/(\up/\down-
\varepsilon)))/\up/\pi_1(\down/\hat{b}/\down/\hat{c}),$$
which is an expression of $\CL(\down,\up,\pi_1,-,\cap)$, and hence
also of the core XPath algebra.

\end{example}

Given a set of nonbasic operators $E$, an expression in $\BL(E)$ can
in general \textit{not} be converted to an equivalent expression in
$\CL(E)$, however, as will follow from the results of this paper, even
though there are exceptions (Section~\ref{sec-strictlydownward},
Theorem~\ref{theo-downwardcorecoincides}).

\begin{table}
\centering
\caption{Languages studied in this paper.}\label{table:language-summary}
\footnotesize{
    \begin{tabular}{|>{\centering}m{.3\textwidth}<{\centering}>{\centering}m{.6\textwidth}<{\centering}|}
    \hline
    {\em Language} & {\em Relation algebra fragment} \tabularnewline
    \hline
    strictly downward (core) XPath algebra with counting up to $k$ & 
    \begin{multline*}\BL(\down,\pi_1,\ch{1}(.),\ldots,\ch{k}(.),-) \\ = \CL(\down,\pi_1,\ch{1}(.),\ldots,\ch{k}(.),-)\end{multline*}\tabularnewline
    \hline
    strictly downward (core) positive XPath algebra & $\BL(\down,\pi_1,\cap) = \CL(\down,\pi_1,\cap)$  \tabularnewline
    \hline
    weakly downward (core) XPath algebra with counting up to $k$ & 
    \begin{multline*}\BL(\down,\pi_1,\pi_2,\ch{1}(.),\ldots,\ch{k}(.),-) \\ = \CL(\down,\pi_1,\pi_2,\ch{1}(.),\ldots,\ch{k}(.),-)\end{multline*}\tabularnewline
    \hline
    weakly downward (core) positive XPath algebra & 
    $\BL(\down,\pi_1,\pi_2) = \BL(\down,\pi_1,\pi_2,\cap) =  \CL(\down,\pi_1,\pi_2,\cap)$ \tabularnewline
    \hline
    strictly upward (core) XPath algebra & $\BL(\up,\pi_1,-) = \CL(\up,\pi_1,-)$ \tabularnewline
    \hline
    strictly upward (core) positive XPath algebra & $\BL(\up,\pi_1,\cap) = \CL(\up,\pi_1,\cap)$\tabularnewline
    \hline
    weakly upward languages &{\em see Section \ref{subsec-weaklyupward}} \tabularnewline
    \hline
    XPath algebra & $\BL(\down,\up,\pi_1,\pi_2,.^{-1},\cap,-) = \BL(\down,\up,-)$ \tabularnewline
    \hline
    XPath algebra with counting up to $k$ & $\BL(\down,\up,\ch{1}(.),\ldots,\ch{k}(.),-)$  \tabularnewline
    \hline
    core XPath algebra & $\CL(\down,\up,\pi_1,\pi_2,-,\cap)$ \tabularnewline
    \hline
    core XPath algebra with counting up to $k$ & $\CL(\down,\up,\pi_1,\pi_2,\ch{1}(.),\ldots,\ch{k}(.),-)$ \tabularnewline
    \hline
    (core) positive XPath algebra (\cite{WuGGP11})& $\BL(\down,\up,\cap) =\BL(\down,\up,\pi_1,\pi_2)
    = \CL(\down,\up,\pi_1,\pi_2,\cap)$ \tabularnewline
    \hline
\end{tabular}
}
\end{table}

Table \ref{table:language-summary} gives an overview of the various relation
algebra fragments we investigate below.

To conclude this section, we observe that, given a document and an
expression, we have defined the semantics of that expression as a
binary relation over the nodes of the document, i.e., as a set of
pair of nodes. From the perspective of navigation, however, it is
useful to be able to say that an expression allows one to navigate
from one node of the document to another. For this purpose,
we introduce the following notation.

\begin{definition}
\label{def-efunction}

Let $e$ be an arbitrary expression, and let
$D=(V,\textit{Ed},r,\lambda)$ be a document. For $v\in V$, $e(D)(v)
\mathrel{:=} \{w\mid (v,w)\in e(D)\}$.
\end{definition}

Definition~\ref{def-efunction} reflects the ``local'' perspective of
an expression working on particular nodes of a document, rather than
the ``global'' perspective of working on an entire document.

\begin{example}
\label{ex-local}

Consider again the expression
$e\ass \up/\pi_1(\down/\hat{b}/\down/\hat{c})-\ch{2}(\varepsilon)/\up$ of
Example~\ref{ex-operations}. We have established that, for the
document $D$ in Figure~\ref{fig-document},
$e(D)=\{(v_2,v_1),(v_8,v_4),(v_{10},v_4)\}$. Hence,
$e(D)(v_8)=\{v_4\}$ and $e(D)(v_1)=\emptyset$.

\end{example}
\section{Signatures}
\label{subsec-sig}

Given a pair of nodes in a document, there is a unique path in that
document (not taking into account the direction of the edges) to
navigate from the first to the second node, in general by going a
few steps upward in the tree, and then going a few steps downward. We
call this the \emph{signature} of that pair of nodes, and shall
formally represent it by an expression in $\BL(\down,\up)$.

\begin{definition}
\label{def-signature}

Let $D=(V,\textit{Ed},r,\lambda)$ be a document, and let $v,w\in
V$. The \emph{signature} of the pair $(v,w)$, denoted $\sig(v,w)$, is
the expression in $\BL(\down,\up)$ that is recursively
defined, as follows:
\begin{itemize}

\item if $v=w$, then $\sig(v,w) \mathrel{:=} \varepsilon$;

\item if $v$ is an ancestor of $w$, and $z$ is the child of $v$ on the
  path from $v$ to $w$, then $\sig(v,w) \mathrel{:=} \down/\sig(z,w)$;

\item otherwise\footnote{In particular, $v\neq r$.}, if $z$ is the
  parent of $v$, then $\sig(v,w) \mathrel{:=} \up/\sig(z,w)$.

\end{itemize}
\end{definition}

Given nodes $v$ and $w$ of a document $D=(V,\textit{Ed},r,\lambda)$,
we denote by $\top(v,w)$ the unique node on the undirected path from
$v$ to $w$ that is an ancestor of both $v$ and $w$. Clearly,
$$\sig(v,w)=\sig(v,\top(v,w))/\sig(\top(v,w),w)=\up^m/\down^n,$$
where $m$, respectively $n$, is the distance from $\top(v,w)$ to $v$,
respectively $w$; and, for an expression $e$ and a natural number
$i\geq 1$, $e^i$ denotes the $i$-fold composition of
$e$.\footnote{Here, and elsewhere in this paper, equality between
  expressions must be interpreted at the semantic and not at the
  syntactic level, i.e., for two expressions $e_1$ and $e_2$ in one of
  the languages considered here, $e_1=e_2$ means that, for each
  document~$D$, $e_1(D)=e_2(D)$.} (We put $e^0\mathrel{:=} \varepsilon$.)

The signature of a pair of nodes of a document can be seen as a
description of the unique path connecting these nodes, but also as an
expression that can be applied to the document under consideration. We
shall often exploit this duality.

\begin{example}
\label{ex-signature}

For the document $D$ in Figure~\ref{fig-document},
$\sig(v_1,v_1)=\varepsilon$, 
$\sig(v_1,v_2)=\down$, 
$\sig(v_6,v_4)=\up^2/\down$, and
$\sig(v_{11},v_5)=\up^3/\down^2$.  We have that 
$$\begin{array}{lcl}
\sig(v_{11},v_5)(D)&=&\{(v_{11},v_5),(v_{12},v_5),(v_{13},v_5),
(v_{11},v_6),(v_{12},v_6),(v_{13},v_6),\\
&&\phantom{\{}(v_{11},v_7),(v_{12},v_7),(v_{13},v_7),
(v_{11},v_8),(v_{12},v_8),(v_{13},v_8),\\
&&\phantom{\{}(v_{11},v_9),(v_{12},v_9),(v_{13},v_9),
(v_{11},v_{10}),(v_{12},v_{10}),(v_{13},v_{10})\}.
\end{array}$$
Notice that not each pair in the result has the same signature as
$(v_{11},v_5)$. For instance, $\sig(v_{11},v_8)=\up^2/\down$ and
$\sig(v_{11},v_9)=\up$.

\end{example}

Now, let $(v_1,w_1)$ and $(v_2,w_2)$ be two pairs of nodes in a
document~$D=(V,\textit{Ed},r,\lambda)$. We say that $(v_1,w_1)$ 
\emph{subsumes\/} $(v_2,w_2)$, denoted $(v_1,w_1)\siggeq
(v_2,w_2)$, if $(v_2,w_2)$ is in $\sig(v_1,w_1)(D)$.  We say that
$(v_1,w_1)$ are $(v_2,w_2)$ \emph{congruent}, denoted
$(v_1,w_1)\sigequiv (v_2,w_2)$, if $(v_1,w_1)\siggeq (v_2,w_2)$ and
$(v_2,w_2)\siggeq (v_1,w_1)$. It can be easily seen that, in this case,
$\sig(v_1,w_1)=\sig(v_2,w_2)$. Informally speaking, the path
from~$v_1$ to~$w_1$ has then the same shape as the path 
from~$v_2$ to~$w_2$.

\begin{example}
\label{ex-subsumption}

Consider again Example~\ref{ex-signature}. Clearly, $(v_{11},v_5)$
subsumes each pair of nodes in $\sig(v_{11},v_5)(D)$,
e.g., $(v_{11},v_5)\siggeq (v_{12},v_6)$ and $(v_{11},v_5)\siggeq
(v_{12},v_9)$.  Notice that also $(v_{12},v_6)\siggeq (v_{11},v_5)$, and
hence $(v_{11},v_5)\sigequiv (v_{12},v_6)$. However,
$(v_{12},v_9)\not\siggeq (v_{11},v_5)$. Hence, these pairs are not congruent.

\end{example}

By definition, subsumption is captured by the
``$\sig$'' expression. One may wonder if there also exists an
expression that precisely captures congruence. This is the
case in the following situations.

\begin{proposition}
\label{prop-congruence}

Let $D=(V,\textit{Ed},r,\lambda)$ be a document and let
$v_1,v_2,w_1,w_2\in V$. Then,

\begin{enumerate}

\item\label{congruence-1} if $v_1$ is an ancestor of $w_1$ or vice
  versa, $(v_1,w_1)\sigequiv (v_2,w_2)$ if and only if
  $(v_2,w_2)\in\sig(v_1,w_1)(D)$;

\item\label{congruence-2} otherwise, let
  $\sig(v_1,w_1)=\up^m/\down^n$. Then, as $m\geq 1$ and $n\geq 1$,
%
%
$(v_1,w_1)\sigequiv (v_2,w_2)$ if and only if
  $(v_2,w_2)\in\up^m/\down^n-\up^{m-1}/\down^{n-1}(D)$.
%
%
\end{enumerate}
\end{proposition}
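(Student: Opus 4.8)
The plan is to reduce the whole statement to an analysis of least common ancestors. First I would record the reformulation that, for any two pairs, $(v_1,w_1)\sigequiv(v_2,w_2)$ holds exactly when $\sig(v_1,w_1)=\sig(v_2,w_2)$. The forward direction is the observation already noted after Definition~\ref{def-signature}; for the converse, if the signatures coincide then applying $\sig(v_1,w_1)$ to $D$ with witness $\top(v_2,w_2)$ shows $(v_2,w_2)\in\sig(v_1,w_1)(D)$, and symmetrically, so the pairs subsume each other. Thus in both cases the task is to characterize when $\sig(v_2,w_2)$ equals $\sig(v_1,w_1)$. Note also that the ``only if'' directions are not entirely free: in case~(\ref{congruence-2}) congruence gives $(v_2,w_2)\in\up^m/\down^n(D)$ immediately, but one still must check $(v_2,w_2)\notin\up^{m-1}/\down^{n-1}(D)$.

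Next I would set up the geometry once and for all. Writing $\sig(v,w)=\up^m/\down^n$, where $m$ and $n$ are the distances from $\top(v,w)$ to $v$ and to $w$ respectively, the basic tool is the witness description of composition: $(v',w')\in\up^a/\down^b(D)$ if and only if there is a common ancestor $t$ of $v'$ and $w'$ lying at distance $a$ above $v'$ and distance $b$ above $w'$. Because the ancestors of any node form a chain, such a $t$ is uniquely the ancestor of $v'$ at distance $a$, and, being a common ancestor, it is an ancestor of (or equal to) $\top(v',w')$. Writing $k\geq 0$ for the distance from $t$ down to $\top(v',w')$, the true signature is $\sig(v',w')=\up^{a-k}/\down^{b-k}$. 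In other words, membership in $\up^a/\down^b(D)$ pins down the signature up to the \emph{overlap} $k$ by which the chosen witness sits above the genuine least common ancestor.

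For statement~(\ref{congruence-1}), $v_1$ being an ancestor of $w_1$ or vice versa means $m=0$ or $n=0$, so $\sig(v_1,w_1)$ is a single directed path, $\down^n$ or $\up^m$. Here no overlap is possible: a side of length $0$ forces $k=0$, so any $(v_2,w_2)\in\sig(v_1,w_1)(D)$ has $\top(v_2,w_2)$ equal to the relevant endpoint, whence $\sig(v_2,w_2)=\sig(v_1,w_1)$ and the pairs are congruent; the reverse implication is immediate from the definition of $\siggeq$. This is exactly why no subtraction term is needed in this case, and why it must be separated from the general one (where $\up^{m-1}$ would be undefined).

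For statement~(\ref{congruence-2}), with $m,n\geq 1$, I would prove that $\sig(v_2,w_2)=\up^m/\down^n$ if and only if $(v_2,w_2)\in\up^m/\down^n(D)$ while $(v_2,w_2)\notin\up^{m-1}/\down^{n-1}(D)$; by the previous paragraph this is the assertion that the overlap $k$ equals $0$. The crux is therefore the equivalence ``$k\geq 1$ if and only if $(v_2,w_2)\in\up^{m-1}/\down^{n-1}(D)$.'' If $k\geq 1$, the two descending paths from the witness $t$ share their first step, and that common child $t_1$ is a common ancestor at distances $m-1$ and $n-1$, witnessing membership in $\up^{m-1}/\down^{n-1}(D)$. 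Conversely, a witness $s$ for $\up^{m-1}/\down^{n-1}$ is simultaneously the unique ancestor of $v_2$ at distance $m-1$ and of $w_2$ at distance $n-1$, hence the common child of $t$ toward both nodes, forcing the two descending paths to agree on their first step and so $k\geq 1$. Combining this with the reformulation yields both directions of the ``if and only if.'' I expect this converse — converting a lower common ancestor into genuine path overlap — to be the main obstacle, and it is precisely the place where the tree structure, in the guise of uniqueness of the ancestor at each given distance, is essential.
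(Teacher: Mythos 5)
Your proposal is correct and takes essentially the same route as the paper: for part~(\ref{congruence-2}), both arguments hinge on taking the unique ancestor $t_2$ of $v_2$ at distance $m$ and observing that the children of $t_2$ toward $v_2$ and toward $w_2$ coincide exactly when $(v_2,w_2)\in\up^{m-1}/\down^{n-1}(D)$, so that excluding that term forces $t_2=\top(v_2,w_2)$ and hence $\sig(v_2,w_2)=\up^m/\down^n$. Your write-up is merely more explicit where the paper is terse---it unifies part~(\ref{congruence-1}) under the same overlap analysis instead of a separate induction, and it spells out the ``only if'' directions (in particular that congruence excludes membership in $\up^{m-1}/\down^{n-1}(D)$), which the paper dismisses as straightforward.
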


\begin{proof}
\begin{enumerate}
\item As the ``only if'' is trivial, it
suffices to consider the ``if,'' which follows from a straightforward
induction argument.

\item As the ``only if'' is straightforward, we only consider the
  ``if.'' Let $t_2 \mathrel{:=} \up^m(D)(v_2)$. Since $w_2\in\down^n(D)(t_2)$,
$t_2$ is a common ancestor. Let $v_2'$ and $w_2'$ be the children of
$t_2$ on the path to $v_2$ and $w_2$, respectively.
If $v_2'=w_2'$, then
$(v_2,w_2)\in\up^{m-1}/\down^{n-1}(D)$, a contradiction, Hence,
$v_2'\neq w_2'$ and $t_2=\top(v_2,w_2)$, and
$\sig(v_2,w_2)=\up^m/\down^n= \sig(v_1,w_1)$. 

\end{enumerate}
\end{proof}

For later use, but also because of their independent interest, 
we finally note the following fundamental properties of
subsumption and congruence.

\begin{proposition}
\label{prop-subsumption}

Let $v$, $w$, $v_1$, $w_1$, $z_1$, $v_2$, $w_2$, and $z_2$ be nodes of
a document~$D$. Then the following properties hold.

\begin{enumerate}

\item $(v,v)\siggeq(w,w)$.

\item \label{subsumption-2}$(v_1,w_1)\siggeq (v_2,w_2)$ implies that
  $(w_1,v_1)\siggeq (w_2,v_2)$.

\item \label{subsumption-3}If $\top(v_1,z_1)$ is also an ancestor of
  $w_1$, then $(v_1,w_1)\siggeq (v_2,w_2)$ and
  $(w_1,z_1)\siggeq (w_2,z_2)$ imply that $(v_1,z_1)\siggeq (v_2,z_2)$.

\item\label{subsumption-4} All properties above also hold when
  subsumption is replaced by congruence, provided that, in
  item~\ref{subsumption-3}, the condition ``$\top(v_2,z_2)$ is also an
  ancestor of $w_2$'' is added.

\end{enumerate}
\end{proposition}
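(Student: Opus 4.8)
The plan is to read off subsumption from its definition and work with the resulting ``witness'' description of a common ancestor. By definition $(v_1,w_1)\siggeq(v_2,w_2)$ means $(v_2,w_2)\in\sig(v_1,w_1)(D)$, and writing $\sig(v_1,w_1)=\up^m/\down^n$ this says exactly that $v_2$ and $w_2$ have a common ancestor $t$ with $v_2$ at distance $m$ and $w_2$ at distance $n$ below $t$. I will use this reformulation throughout, the point being that subsumption does not require $t$ to be $\top(v_2,w_2)$, only to be \emph{some} common ancestor at the prescribed depths.

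Items~1 and~2 are immediate. For item~1, $\sig(v,v)=\varepsilon=\up^0/\down^0$ and every $(w,w)$ lies in $\varepsilon(D)$. For item~2, I would note $\sig(w_1,v_1)=\sig(v_1,w_1)^{-1}$ (reversing a $\up^m/\down^n$ path yields $\up^n/\down^m$) and combine it with the general equivalence $(v_2,w_2)\in e(D)\iff(w_2,v_2)\in e^{-1}(D)$; the same common ancestor $t$ witnesses both directions.

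The heart is item~3. Set $t\ass\top(v_1,z_1)$ and write $\sig(v_1,z_1)=\up^a/\down^c$, so $a$ and $c$ are the depths of $v_1$ and $z_1$ below $t$. The hypothesis that $t$ is also an ancestor of $w_1$ makes $t$ a common ancestor of all three of $v_1,w_1,z_1$, hence the shallowest of the three pairwise tops $\top(v_1,w_1),\top(w_1,z_1),\top(v_1,z_1)$ (each of the others is a descendant of $t$); by the standard three-node fact that at least two pairwise tops coincide with the shallowest one, and since $\top(v_1,z_1)=t$ is already one of them, at least one of $\top(v_1,w_1),\top(w_1,z_1)$ equals $t$. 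Using item~2 to swap sides if necessary, assume $\top(v_1,w_1)=t$. Now invoke the two hypotheses to obtain their witnesses: a common ancestor $s_1$ of $v_2,w_2$ realizing $\sig(v_1,w_1)$, and a common ancestor $s_2$ of $w_2,z_2$ realizing $\sig(w_1,z_1)$. Both $s_1$ and $s_2$ are ancestors of $w_2$, hence comparable along the path from $w_2$ to the root. A short distance bookkeeping along $t\to w_1$ — matching the depths of $v_1,w_1,z_1$ below $t$ against those of $v_2,w_2,z_2$ below the $s_i$ — shows that $s_1$ sits above $s_2$ by exactly the distance at which $\top(w_1,z_1)$ lies below $t$, and consequently that $s_1$ is a common ancestor of $v_2$ and $z_2$ at distances $a$ and $c$. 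This witness gives $(v_2,z_2)\in\up^a/\down^c(D)=\sig(v_1,z_1)(D)$, i.e.\ $(v_1,z_1)\siggeq(v_2,z_2)$.

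Finally, item~4 bootstraps the previous three. The congruence versions of items~1 and~2 follow by applying the subsumption versions in both directions (using $\sig(v,v)=\sig(w,w)$ for item~1). For item~3, one run of the subsumption argument yields $(v_1,z_1)\siggeq(v_2,z_2)$, and a second run with the two sides interchanged yields $(v_2,z_2)\siggeq(v_1,z_1)$; this reverse run is precisely where the extra hypothesis ``$\top(v_2,z_2)$ is also an ancestor of $w_2$'' is needed, since it now plays the role that ``$\top(v_1,z_1)$ is an ancestor of $w_1$'' played before. I expect the distance bookkeeping in item~3 — keeping the two possibly distinct tops and the relative positions of $s_1,s_2$ straight — to be the only delicate part, with items~1, 2, and~4 being routine once it is in place.
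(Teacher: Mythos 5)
Your proof is correct, and it takes a recognizably different route from the paper's on the crucial item~\ref{subsumption-3}. The paper argues at the level of expressions: from the two hypotheses it gets $(v_2,z_2)\in\sig(v_1,w_1)/\sig(w_1,z_1)(D)$ by composition, then rewrites the composite $\sig(v_1,w_1)/\sig(w_1,z_1)$ as $\sig(v_1,s_1)/\sig(s_1,z_1)$, where $s_1$ is the higher of $\top(v_1,w_1)$ and $\top(w_1,z_1)$, and finally uses the hypothesis that $\top(v_1,z_1)$ is an ancestor of $w_1$ to conclude $s_1=\top(v_1,z_1)$, so that the composite collapses to $\sig(v_1,z_1)$; no case analysis and no nodes on the ``$2$''-side are ever named. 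You instead work entirely with witnesses in the target: you invoke the three-node fact about pairwise least common ancestors to split into the two cases $\top(v_1,w_1)=\top(v_1,z_1)$ or $\top(w_1,z_1)=\top(v_1,z_1)$, dispose of the second by the symmetry of item~\ref{subsumption-2}, and then do ancestor-uniqueness bookkeeping on $s_1$ and $s_2$ (your arithmetic checks out: $s_1$ is the ancestor of $s_2$ at distance $e=\mathrm{depth}$ of $\top(w_1,z_1)$ below $\top(v_1,z_1)$, whence $z_2$ lies at distance $e+f=c$ below $s_1$). What each approach buys: the paper's is uniform and short, but its claimed ``equality'' of expressions is really only the containment $\sig(v_1,w_1)/\sig(w_1,z_1)(D)\subseteq\sig(v_1,z_1)(D)$, and verifying even that containment requires precisely the path-collapsing facts you make explicit; your version is more elementary and self-contained, every step being checked directly on nodes, at the price of a case split and an appeal to the three-LCA fact. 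Your treatment of items~1, 2, and~4 matches the paper's (which dismisses them as straightforward), and your observation that the added hypothesis in item~\ref{subsumption-4} is exactly what licenses the reverse run of item~\ref{subsumption-3} is the intended reading.
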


\begin{proof}
All properties are straightforward, except for
Property~\ref{subsumption-3}. So, assume that $(v_1,w_1)\siggeq (v_2,w_2)$ and
  $(v_1,z_1)\siggeq (v_2,z_2)$. Hence, $(v_2,w_2)\in\sig(v_1,w_1)(D)$ and
$(v_2,z_2)\in\sig(v_1,z_1)(D)$, as a consequence of which
$$(v_2,z_2)\in\sig(v_1,w_1)/\sig(w_1,z_1)(D).$$
For the sake of abbreviation, let $t_1\ass\top(v_1,w_1)$ and $u_1\ass
\top(w_1,z_1)$. Using these nodes, we can write
$$\sig(v_1,w_1)/\sig(w_1,z_1)=
\sig(v_1,t_1)/\sig(t_1,w_1)/\sig(w_1,u_1)/\sig(u_1,z_1),$$ which
is equal to $\sig(v_1,s_1)/\sig(s_1,z_1)$,
where $s_1$ is the higher of~$t_1$ and~$u_1$ in~$D$. Notice that
$s_1$ is a common ancestor of~$v_1$ and~$z_1$, as a consequence of
which it is also an ancestor of~$\top(v_1,z_1)$, the least common
ancestor of~$v_1$ and~$z_1$.  By assumption, $\top(v_1,z_1)$ is a
common ancestor of~$v_1$, $w_1$, and~$z_1$, and hence also of
$\top(v_1,w_1)$ and $\top(w_1,z_1)$, the highest of which is~$s_1$.  Thus,
$s_1=\top(v_1,z_1)$, and, therefore,
$\sig(v_1,s_1)/\sig(s_1,z_1)=\sig(v_1,z_1)$. In 
summary, $(v_2,z_2)\in\sig(v_1,z_1)(D)$, and hence
$(v_1,z_1)\siggeq(v_2,z_2)$.
\end{proof}

Observe that the condition in Proposition~\ref{prop-subsumption},
(\ref{subsumption-3}), is necessary for that part of the
proposition to hold, as shown by the following counterexample.

\begin{figure}
\begin{center}
\resizebox{0.8\textwidth}{!}{\input{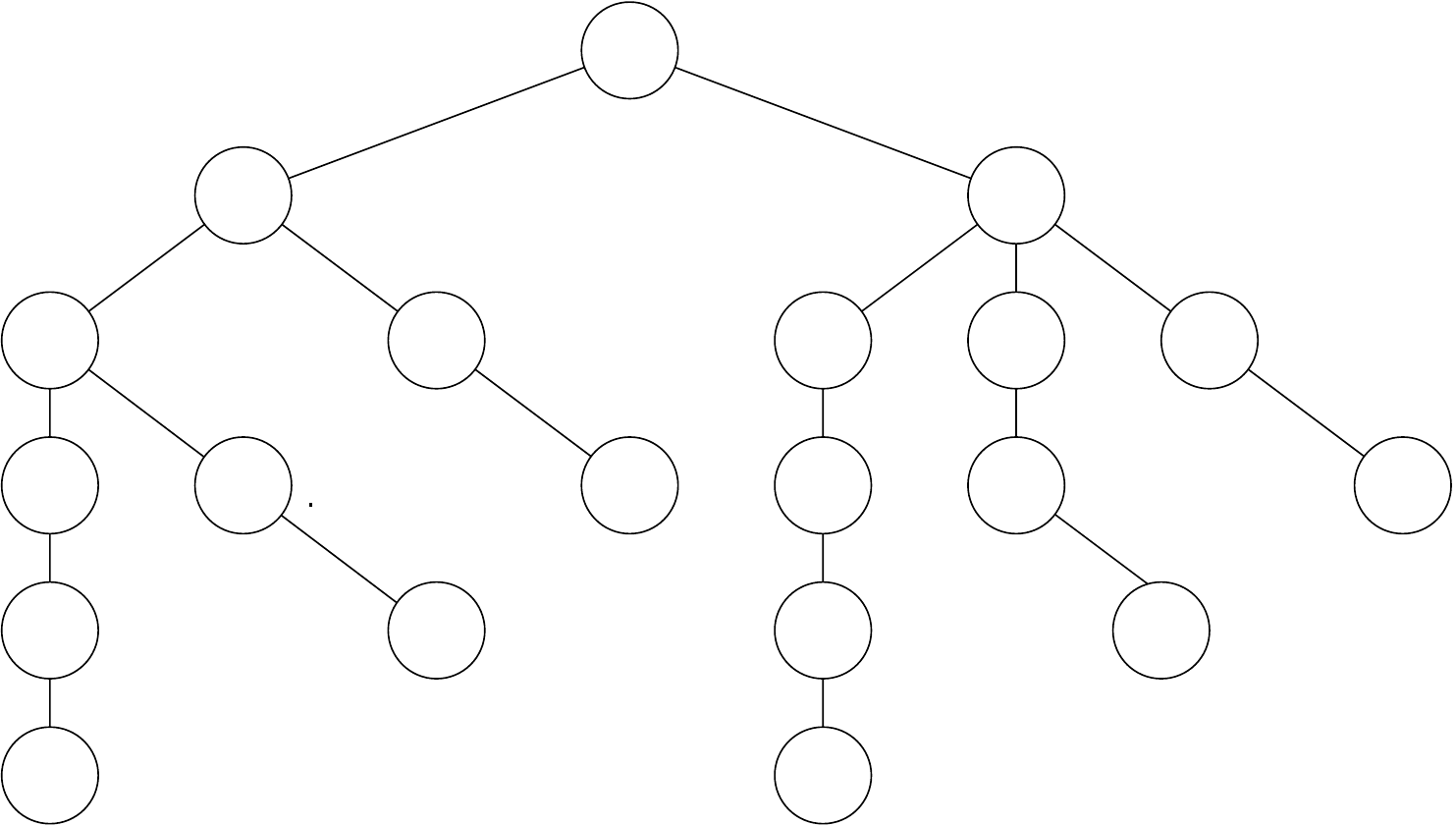_t}}
\end{center}
\caption[a]{Document of Example~\ref{ex-nottrans}.}
\label{fig-bp-nottrans}
\end{figure}

\begin{example}
\label{ex-nottrans}

Consider the document in Figure~\ref{fig-bp-nottrans}. Labels have
been omitted, because they are not relevant in this discussion. (We
assume all nodes have the same label.) 
Observe that
$(v_1,w_1)\sigequiv(v_2,w_2)$ and $(w_1,z_1)\sigequiv
(w_2,z_2)$. However, $\top(v_1,z_1)$ is \emph{not\/} an ancestor of
$w_1$, hence, Proposition~\ref{prop-subsumption},
(\ref{subsumption-3}), is \emph{not\/} applicable. We see that,
indeed, $(v_1,z_1)$ does \emph{not\/} subsume $(v_2,z_2)$, let alone
that $(v_2,z_2)$ and $(v_2,w_2)$ would be congruent.
\end{example}

\section{Distinguishability of nodes in a document}
\label{sec-distinguishability}

We wish to link the distinguishing power of a navigational language on
a document to syntactic conditions which can readily be verified on
that document. As argued before, the action of an expression on a
document can be interpreted as (1) returning pairs of nodes, or (2)
given a node, returning the set of nodes that can be reached from that
node. We shall refer to the first interpretation as the \emph{pairs
  semantics}, and to the second interpretation as the \emph{node
  semantics}. In this section, we propose suitable semantic and syntactic
notions of distinguishability for the node semantics.

\subsection{Distinguishability of nodes at the semantic level}
\label{subsec-semanticdistinguishability}

We propose the following distinguishability criterion based on the
emptiness or nonemptiness of the set of nodes that can be reached by
applying an arbitrary expression of the language under consideration.

\begin{definition}
\label{def-ee}

Let $L$ be one of the languages considered in Section~\ref{subsec-docandnav}.
Let $D=(V,\textit{Ed},r,\lambda)$ be a document, and let
$v_1,v_2\in V$. Then, 
\begin{enumerate}

\item $v_1$ and $v_2$ are \emph{expression-related}, denoted $v_1
  \expgeq v_2$, if, for each expression $e$ in $L$,
  $e(D)(v_1)\neq\emptyset$ implies $e(D)(v_2)\neq\emptyset$; and

\item $v_1$ and $v_2$ are \emph{expression-equivalent}, denoted $v_1
  \expequiv v_2$, if $v_1\expgeq v_2$ and $v_2\expgeq v_1$.

\end{enumerate}
\end{definition}

In principle, we should have reflected the language under
consideration in the notation for expression-equivalence. As the
language under consideration will always be clear from the context, we
chose not to do so in order to avoid overloaded notation.

The following observation is useful.

\begin{proposition}
\label{prop-notempty-empty}

Let $E$ be a set of nonbasic operations containing 
first projection (``$\pi_1$'') and set difference (``$-$'').
Consider expression-equivalence with respect to~$\BL(E)$. Let
$D=(V,\textit{Ed},r,\lambda)$ be a document, and let 
$v_1,v_2\in V$. Then, $v_1\expequiv v_2$ if and only if $v_1\expgeq v_2$.

\end{proposition}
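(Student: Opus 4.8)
The ``only if'' direction is immediate from the definition, since $v_1 \expequiv v_2$ means precisely $v_1 \expgeq v_2$ and $v_2 \expgeq v_1$, so in particular $v_1 \expgeq v_2$. All the content lies in the ``if'' direction, which amounts to showing that the relation $\expgeq$ is symmetric in the presence of $\pi_1$ and ``$-$'': assuming $v_1 \expgeq v_2$, I must derive $v_2 \expgeq v_1$.

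The plan is to exploit a complementation trick that converts emptiness into nonemptiness. Given an arbitrary expression $e$ in $\BL(E)$, I would form the auxiliary expression $g \ass \varepsilon - \pi_1(e)$, which is again an expression of $\BL(E)$ because $\varepsilon$ is basic and both $\pi_1$ and ``$-$'' lie in $E$. Using the semantics in Table~\ref{tab-binops}, I would compute that $\pi_1(e)(D) = \{(v,v) \mid e(D)(v) \neq \emptyset\}$, and hence $g(D) = \{(v,v) \mid e(D)(v) = \emptyset\}$. In terms of the local notation of Definition~\ref{def-efunction}, this says exactly that $g(D)(v) \neq \emptyset$ if and only if $e(D)(v) = \emptyset$, for every node $v$.

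With this equivalence in hand, I would apply the hypothesis $v_1 \expgeq v_2$ to the expression $g$: it yields $g(D)(v_1) \neq \emptyset \Rightarrow g(D)(v_2) \neq \emptyset$, which by the displayed equivalence is the same as $e(D)(v_1) = \emptyset \Rightarrow e(D)(v_2) = \emptyset$. Taking the contrapositive gives $e(D)(v_2) \neq \emptyset \Rightarrow e(D)(v_1) \neq \emptyset$. Since $e$ was an arbitrary expression of $\BL(E)$, this establishes $v_2 \expgeq v_1$, and together with the hypothesis this is $v_1 \expequiv v_2$, as desired.

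There is no deep obstacle here; the argument is essentially a one-line semantic computation packaged with a contrapositive. The only points that require care are, first, checking that $g$ is genuinely a well-formed expression of $\BL(E)$ --- which is exactly where both hypotheses on $E$ are used, together with the availability of $\varepsilon$ --- and, second, tracking the direction of the implications correctly, since the complementation swaps the roles of $v_1$ and $v_2$. The use of $\pi_1$ is essential: it reduces nonemptiness of a binary relation to a diagonal (self-loop) relation, on which set difference against $\varepsilon$ behaves as Boolean negation.
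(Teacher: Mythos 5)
Your proof is correct and uses essentially the same argument as the paper: the key construction $\varepsilon-\pi_1(e)$, which turns emptiness of $e(D)(v)$ into nonemptiness of a diagonal relation, combined with contraposition. The paper merely packages it as a proof by contradiction (taking a witness $f$ for $v_2\not\expgeq v_1$ and forming $\pi_1(\varepsilon-\pi_1(f))$, with an outer $\pi_1$ that is harmless but unnecessary), whereas you apply the hypothesis directly to the complemented expression; the two are the same maneuver.
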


\begin{proof}
Assume that $v_2\not\expgeq v_1$. Then there exists an expression $f$
in $\BL(E)$ such that $f(D)(v_2)\neq\emptyset$ and
$f(D)(v_1)=\emptyset$. Now consider $e\ass
\pi_1(\varepsilon-\pi_1(f))$. Clearly, $e(D)(v_2)=\emptyset$ and
$e(D)(v_1)\neq\emptyset$, hence $v_1\not\expgeq v_2$. By
contraposition, $v_1\expgeq v_2$ implies $v_2\expgeq v_1$, and hence
also $v_1\expequiv v_2$.
\end{proof}

\subsection{Distinguishability of nodes at the syntactic level}
\label{subsec-syntacticdistinguishability}

Our syntactic criterion of distinguishability is based on the
similarity of the documents locally around the nodes under
consideration. In order to decide this similarity, we shall consider a
hierarchy for the degree of coarseness by which we compare the
environments of those nodes. We shall also consider variants for the
cases where from the given nodes of the document we (1) only look
downward; (2) only look upward; or (3) look in both directions.
\subsubsection{Downward distinguishability}
\label{subsubsec-downwardcase}

For the downward case, we consider the following syntactic notions of
distinguishability of nodes. They are all defined recursively on the
height of the first node.

\begin{definition}
\label{def-kdownwardequivalent}

Let $D=(V,\textit{Ed},r,\lambda)$ be a document, let
$v_1,v_2\in V$, and let $k\geq 1$. Then, $v_1$ and $v_2$ are
\emph{downward-$k$-equivalent}, denoted $v_1\downequiv{k} v_2$,
if
\begin{enumerate}

\item $\lambda(v_1)=\lambda(v_2)$;

\item for each child $w_1$ of $v_1$, there exists a child
$w_2$ of $v_2$ such that $w_1\downequiv{k} w_2$, and vice versa;

\item for each child $w_1$ of $v_1$ and $w_2$ of $v_2$ such that
  $w_1\downequiv{k} w_2$, $\min(|\bar{w}_1|,k)=\min(|\bar{w}_2|,k)$,
  where, for $i=1,2$, $\bar{w}_i$ is the set of all siblings of $w_i$
  (including $w_i$ itself) that are downward $k$-equivalent to
  $w_i$.\footnote{For a set $A$, $|A|$ denotes the cardinality of
    $A$.}

\end{enumerate}
\end{definition}

For $k=1$, the third condition in the above definition is trivially
satisfied. In the literature, downward 1-equivalence is usually
referred to as \emph{bisimilarity} \cite{Sangiorgi}.

\begin{example}
\label{ex-downward}

Consider again the example document in
Figure~\ref{fig-document}. Notice that $v_2\downequiv{k} v_{10}$ for
any value of $k\geq 1$. We also have that $v_2\downequiv{1} v_3$, and,
for any value of $k\geq 2$, $v_2\not\downequiv{k} v_3$. Finally,
notice that $v_3\not\downequiv{k} v_4$ for any value of $k\geq 1$.

\end{example}

The following is immediate from the second condition in the
Definition~\ref{def-kdownwardequivalent}.

\begin{proposition}
\label{prop-kdownwardheight}

Let $D=(V,\textit{Ed},r,\lambda)$ be a document, let
$v_1,v_2\in V$, and let $k\geq 1$. If $v_1\downequiv{k} v_2$, then
$v_1$ and $v_2$ have equal height\footnote{By the height of a node, we
  mean the length of the longest path from that node to a leaf.} in~$D$.

\end{proposition}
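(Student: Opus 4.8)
The plan is to prove Proposition~\ref{prop-kdownwardheight} by induction on the height of $v_1$, exploiting condition~(2) of Definition~\ref{def-kdownwardequivalent} together with the recursive structure of downward-$k$-equivalence. The claim is that $v_1\downequiv{k} v_2$ forces $v_1$ and $v_2$ to have the same height. First I would set up the induction: the base case is when $v_1$ is a leaf, i.e.\ has height $0$. I would argue that if $v_1$ has no children, then condition~(2) (``for each child $w_1$ of $v_1$ there exists a child $w_2$ of $v_2$ \ldots, and vice versa'') forces $v_2$ to have no children either, since the ``vice versa'' direction would otherwise produce a child $w_2$ of $v_2$ with no matching child of $v_1$. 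Hence $v_2$ is also a leaf of height~$0$, and the base case holds.

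For the inductive step, I would assume the statement holds for all nodes of height strictly less than that of $v_1$, and suppose $v_1\downequiv{k} v_2$ with $v_1$ of height $h\geq 1$. By definition of height, $h = 1 + \max\{\textrm{height}(w_1) : w_1 \textrm{ a child of } v_1\}$, and there is some child $w_1^{*}$ of $v_1$ realizing this maximum. By condition~(2), there is a child $w_2^{*}$ of $v_2$ with $w_1^{*}\downequiv{k} w_2^{*}$; since $w_1^{*}$ has height $h-1 < h$, the induction hypothesis gives $\textrm{height}(w_2^{*}) = h-1$, so $v_2$ has height at least $h$. The symmetric ``vice versa'' direction of condition~(2) gives the reverse inequality: any child $w_2$ of $v_2$ realizing the maximal height among children of $v_2$ has a downward-$k$-equivalent child $w_1$ of $v_1$ of equal height (again by induction), so the maximal child-height of $v_2$ cannot exceed that of $v_1$. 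Combining the two inequalities yields $\textrm{height}(v_1) = \textrm{height}(v_2)$.

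The only mild subtlety is making the induction well-founded: the recursion in Definition~\ref{def-kdownwardequivalent} is on the height of the first node, and I am invoking the equivalence $w_1^{*}\downequiv{k} w_2^{*}$ at strictly smaller height, so the induction hypothesis applies cleanly to the pair $(w_1^{*}, w_2^{*})$. I do not expect any real obstacle here; condition~(3) and the sibling-counting machinery play no role, and the argument uses only the back-and-forth matching of children in condition~(2). The proof is essentially a routine structural induction, which is presumably why the paper states it follows ``immediately'' from condition~(2).
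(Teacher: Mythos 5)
Your proof is correct, and it is precisely the routine induction on the height of $v_1$, using only the back-and-forth matching in condition~(2) of Definition~\ref{def-kdownwardequivalent}, that the paper has in mind when it declares the proposition ``immediate from the second condition'' without writing out a proof. Both the base case (the ``vice versa'' direction forcing $v_2$ to be a leaf) and the two-sided inequality in the inductive step are handled cleanly, so there is nothing to object to.
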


The following property of downward-$k$-equivalence will turn out to be
very useful in the sequel.

\begin{proposition}
\label{prop-coarse}

Let $D=(V,\textit{Ed},r,\lambda)$ be a document, and let $k\geq 1$.
Let ``$\equiv$'' be an equivalence relation on~$V$ such that, for all
$v_1,v_2\in V$ with $v_1\equiv v_2$,
\begin{enumerate}

\item $\lambda(v_1) = \lambda(v_2)$;

\item for each child $w_1$ of $v_1$, there exists a child $w_2$ of
    $v_2$ such that $w_1\equiv w_2$, and vice versa; and

\item for each child $w_1$ of $v_1$ and each child $w_2$ of $v_2$ such
  that $w_1\equiv w_2$, $\min(|\tilde{w}_1|,k)=\min(|\tilde{w}_2|,k)$,
where, for $i=1,2$, $\tilde{w}_i$ is the set of all siblings of $v_i$
  (including $v_i$ itself) that are equivalent to $v_i$ under ``$\equiv$.''

\end{enumerate}
Then, for all $v_1,v_2\in V$, $v_1\equiv v_2$ implies $v_1\downequiv{k}
v_2$.

\end{proposition}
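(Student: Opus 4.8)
The plan is to establish the implication $v_1 \equiv v_2 \Rightarrow v_1 \downequiv{k} v_2$ by well-founded induction on the height of $v_1$, checking in the inductive step that the pair $(v_1,v_2)$ satisfies the three clauses of Definition~\ref{def-kdownwardequivalent}. Clause~(1) (equal labels) is nothing but the first hypothesis on $\equiv$. When $v_1$ is a leaf, the ``vice versa'' half of the second hypothesis forces $v_2$ to be a leaf too, so clauses~(2) and~(3) hold vacuously; this also serves as the base case. For the inductive step I would assume the implication already holds for every pair whose first component has height strictly smaller than that of $v_1$.

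For clause~(2), I would take a child $w_1$ of $v_1$; the second hypothesis on $\equiv$ yields a child $w_2$ of $v_2$ with $w_1 \equiv w_2$, and since $w_1$ has smaller height than $v_1$ the induction hypothesis upgrades this to $w_1 \downequiv{k} w_2$. The ``vice versa'' direction is symmetric. This same argument shows that the set of $\equiv$-classes represented among the children of $v_1$ coincides with the set represented among the children of $v_2$.

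Clause~(3) is the main obstacle, because it compares cardinalities of $\downequiv{k}$-classes of siblings, whereas the hypotheses on $\equiv$ only control cardinalities of $\equiv$-classes of siblings. I would bridge this gap as follows. Applying the induction hypothesis to pairs of children of $v_1$ (each of smaller height), $\equiv$ refines $\downequiv{k}$ among the children, so every $\downequiv{k}$-class $C$ of children is partitioned into $\equiv$-classes, all contained in $C$; by the observation closing the previous paragraph, these $\equiv$-classes are the same for $v_1$ and for $v_2$. Writing $n_{v_i}(C)$ and $m_{v_i}(C')$ for the number of children of $v_i$ lying in a $\downequiv{k}$-class $C$, respectively an $\equiv$-class $C'$, one then has $n_{v_i}(C)=\sum_{C'\subseteq C} m_{v_i}(C')$, the sum ranging over the $\equiv$-classes of children contained in $C$.

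Now I would split on the threshold. If some $\equiv$-class $C'\subseteq C$ satisfies $m_{v_1}(C')\geq k$, then the third hypothesis on $\equiv$, which gives $\min(m_{v_1}(C'),k)=\min(m_{v_2}(C'),k)$, forces $m_{v_2}(C')\geq k$ as well, so both totals $n_{v_1}(C)$ and $n_{v_2}(C)$ are $\geq k$ and clause~(3) holds with both sides capped at $k$. Otherwise every $C'\subseteq C$ has $m_{v_1}(C')<k$; then the third hypothesis yields $m_{v_1}(C')=m_{v_2}(C')$ exactly for each such $C'$, whence $n_{v_1}(C)=n_{v_2}(C)$ and clause~(3) again holds. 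The delicate point throughout is precisely this interplay between the cap ``$\min(\cdot,k)$'' and the summation over $\equiv$-subclasses: since the cap does not distribute over the sum, the case distinction on whether some subclass already reaches $k$ is exactly what is needed to conclude.
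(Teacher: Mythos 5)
Your proposal is correct and follows essentially the same route as the paper's proof: induction on the height of $v_1$, using the induction hypothesis to show that $\equiv$ refines $\downequiv{k}$ among the children so that each $\downequiv{k}$-class of siblings decomposes into matching $\equiv$-classes on both sides, and then the identical two-case analysis (some $\equiv$-subclass of size $\geq k$ versus all subclasses of size $<k$) to handle the cap $\min(\cdot,k)$. The only difference is notational (your counting functions $n_{v_i}$, $m_{v_i}$ versus the paper's explicit partitions $W_{1i}$, $W_{2i}$), not mathematical.
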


\begin{proof}
By induction of the height of~$v_1$.

If $v_1$ is a leaf, the second condition above implies that $v_2$ must
also be a leaf. By the first condition,
$\lambda(v_1)=\lambda(v_2)$. Hence, $v_1\downequiv{k} v_2$.

If $v_1$ is not a leaf, we still have, by the first condition, that
$\lambda(v_1)=\lambda(v_2)$. Hence the first condition in the
definition of $v_1\downequiv{k} v_2$
(Definition~\ref{def-kdownwardequivalent}) is satisfied.

The second condition in the definition of $v_1\downequiv{k} v_2$
follows from the second condition above and the induction hypothesis.

It remains to show that also the third condition in the definition of
$v_1\downequiv{k} v_2$ holds. Thereto, let $w_1$ be a child of $v_1$
and $w_2$ be a child of $v_2$ such that $w_1\downequiv{k} w_2$. We
show that $\min(|\bar{w}_1|,k)=\min(|\bar{w}_2|,k)$, where, for
$i=1,2$, $\bar{w}_i$ is the set of all siblings of $w_i$ (including
$w_i$ itself) that are downward $k$-equivalent to $w_i$.  Let
$\{W_{11},\ldots,W_{1\ell}\}$ be the coarsest partition of $\bar{w}_1$
in $\equiv$-equivalent nodes, and let $\{W_{21},\ldots,W_{2\ell}\}$ be
the coarsest partition of $\bar{w}_2$ in $\equiv$-equivalent nodes. By
the induction hypothesis and the second condition above, both
partitions have indeed the same size. It follows furthermore that no
node of $\bar{w}_1$ is $\equiv$-equivalent with a child of $v_1$
outside $\bar{w}_1$, and that no node of $\bar{w}_2$ is
$\equiv$-equivalent with a child of $v_2$ outside $\bar{w}_2$.
Without loss of generality, we may assume that, for $i=1,\ldots,\ell$,
every node in $W_{1i}$ is $\equiv$-equivalent to every node in
$W_{2i}$. Hence, by the third condition above, $\min(|W_{1i}|,k)=
\min(|W_{2i}|,k)$. We now distinguish two cases.
\begin{enumerate}

\item For all $i=1,\ldots,\ell$, $|W_{1i}|<k$. Then, for all
  $i=1,\ldots,\ell$, $|W_{1i}|=|W_{2i}|$. It follows that
  $|\bar{w}_1|=|\bar{w}_2|$, and, hence, also
that $\min(|\bar{w}_1|,k)=\min(|\bar{w}_2|,k)$.

\item For some $i$, $1\le i\le\ell$, $|W_{1i}|\geq k$. Then,
$|W_{2i}|=|W_{1i}|\geq k$. Hence, $|\bar{w}_1|\geq k$ and
  $|\bar{w}_2|\ge k$.
It follows that $\min(|\bar{w}_1|,k)=\min(|\bar{w}_2|,k)=k$.

\end{enumerate}
We conclude that, in both cases, the third condition in the
definition of $v_1\downequiv{k} v_2$ is also satisfied. 
\end{proof}

So, given a document $D=(V,\textit{Ed},r,\lambda)$,
downward-$k$-equivalence is the coarsest equivalence relation on $V$
satisfying Proposition~\ref{prop-coarse}.

A straightforward application of Proposition~\ref{prop-coarse} yields

\begin{corollary}
\label{cor-kplusoneisfiner}

Let $k\ge 1$. Let $D=(V,\textit{Ed},r,\lambda)$ be a document, and let
$v_1,v_2\in V$. If $v_1\downequiv{k+1}v_2$, then $v_1\downequiv{k}v_2$.

\end{corollary}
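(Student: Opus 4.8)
The plan is to invoke Proposition~\ref{prop-coarse} directly, taking the equivalence relation ``$\equiv$'' appearing there to be downward-$(k+1)$-equivalence ``$\downequiv{k+1}$'' and the counting parameter to be $k$. Since the remark immediately following Proposition~\ref{prop-coarse} records that $\downequiv{k+1}$ is an equivalence relation on $V$, it qualifies as an admissible ``$\equiv$,'' so it suffices to check that $\downequiv{k+1}$ satisfies the three conditions listed in Proposition~\ref{prop-coarse} \emph{for the value $k$}. The proposition then yields that $v_1\downequiv{k+1}v_2$ implies $v_1\downequiv{k}v_2$, which is exactly the claim.

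First I would observe that conditions (1) and (2) of Proposition~\ref{prop-coarse} are literally conditions (1) and (2) of the definition of $\downequiv{k+1}$ (Definition~\ref{def-kdownwardequivalent}), since neither mentions the counting parameter; hence they transfer verbatim. The substance lies in condition (3). Let $w_1$ be a child of $v_1$ and $w_2$ a child of $v_2$ with $w_1\downequiv{k+1}w_2$, and let $\bar{w}_i$ denote the set of siblings of $w_i$ (including $w_i$ itself) that are downward-$(k+1)$-equivalent to $w_i$; these are precisely the sets playing the role of ``$\tilde{w}_i$'' in Proposition~\ref{prop-coarse}. From condition (3) in the definition of $\downequiv{k+1}$ we already know that $\min(|\bar{w}_1|,k+1)=\min(|\bar{w}_2|,k+1)$, whereas what Proposition~\ref{prop-coarse} demands (for parameter $k$) is $\min(|\bar{w}_1|,k)=\min(|\bar{w}_2|,k)$.

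The only genuine computation, then, is the elementary numerical fact that for nonnegative integers $a$ and $b$, $\min(a,k+1)=\min(b,k+1)$ implies $\min(a,k)=\min(b,k)$. I would dispatch this by cases on the common value $m\ass\min(a,k+1)=\min(b,k+1)$: if $m\le k$ then $a=b=m\le k$, so both $k$-truncations equal $m$; if $m=k+1$ then $a\ge k+1$ and $b\ge k+1$, so both $k$-truncations equal $k$. Applying this with $a=|\bar{w}_1|$ and $b=|\bar{w}_2|$ establishes condition (3), completing the verification and hence the proof. The main (and essentially only) obstacle worth flagging is bookkeeping rather than mathematics: one must confirm that the sibling-sets ``$\tilde{w}_i$'' of Proposition~\ref{prop-coarse}, instantiated with ``$\equiv$'' $=$ ``$\downequiv{k+1}$,'' coincide with the sets $\bar{w}_i$ used here, so that the truncation identity can be transported between the two statements cleanly.
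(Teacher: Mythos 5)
Your proposal is correct and follows exactly the paper's own argument: both apply Proposition~\ref{prop-coarse} with ``$\equiv$'' instantiated as $\downequiv{k+1}$ and parameter $k$, transfer conditions (1) and (2) verbatim, and reduce condition (3) to the fact that $\min(|A|,k+1)=\min(|B|,k+1)$ implies $\min(|A|,k)=\min(|B|,k)$. The only difference is that you spell out the two-case verification of this numerical fact, which the paper merely asserts.
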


\begin{proof}
It suffices to observe that ``$\downequiv{k+1}$'' is an equivalence
relation satisfying Proposition~\ref{prop-coarse} for the value of $k$ in
the statement of the Corollary, above. For the first two conditions in
Proposition~\ref{prop-coarse}, this follows immediately from the
corresponding conditions in Definition~\ref{def-kdownwardequivalent}.
For the third condition in Proposition~\ref{prop-coarse}, this also follows
from the third condition in Definition~\ref{def-kdownwardequivalent}
if one takes into account that, for arbitrary sets $A$ and $B$,
$\min(|A|,k+1)=\min(|B|,k+1)$ implies that $\min(|A|,k)=\min(|B|,k)$.
\end{proof}
\subsubsection{Upward distinguishability}
\label{subsubsec-upwardcase}

If we only look upward in the document, there is only one reasonable
definition of node distinguishability, as each node has at most one
parent. In contrast with the downward case, the recursion in the
definition is on the \emph{depth} of the first node.

\begin{definition}
\label{def-upwardequivalent}

Let $D=(V,\textit{Ed},r,\lambda)$ be a document, and let
$v_1,v_2\in V$. Then, $v_1$ and $v_2$ are \emph{upward-equivalent},
denoted $v_1\upequiv v_2$, if 
\begin{enumerate}

\item $\lambda(v_1)=\lambda(v_2)$;

\item $v_1$ is the root if and only if $v_2$ is the root;

\item if $v_1$ and $v_2$ are not the root, and
  $u_1$ and $u_2$ are the parents of $v_1$ and
  $v_2$, respectively, then $u_1\upequiv u_2$.

\end{enumerate}
\end{definition}

It is easily seen that two nodes are upward-equivalent if the paths
from the root to these two nodes are isomorphic in the sense that
they have the same length and corresponding nodes have the same label.

\begin{example}
\label{ex-upward}

In the example document of Figure~\ref{fig-document} we have, e.g., that
$v_6\upequiv v_7$, $v_8\upequiv v_9$, $v_{11}\upequiv v_{12}$, but
$v_8\not\upequiv v_{13}$.

\end{example}
\subsubsection{Two-way distinguishability}

If we look both upward and downward in a document, we can define a
notion of equivalence by combining the definitions of upward- and
$k$-downward-equivalence: two nodes are $k$-equivalent if they are
upward-equivalent, and if corresponding nodes on the isomorphic paths
from the root to these nodes are $k$-downward-equivalent. More
formally, we have the following recursive definition, where the
recursion is on the depth of the first node.

\begin{definition}
\label{def-kequivalent}

Let $D=(V,\textit{Ed},r,\lambda)$ be a document, let
$v_1,v_2\in V$, and let $k\geq 1$. Then, $v_1$ and $v_2$ are
\emph{$k$-equivalent}, denoted $v_1\updownequiv{k} v_2$,
if
\begin{enumerate}

\item $v_1 \downequiv{k} v_2$;

\item $v_1$ is the root if and only if $v_2$ is the root; and

\item if $v_1$ and $v_2$ are not the root, and $u_1$ and $u_2$ are the
  parents of $v_1$ and $v_2$, respectively, then $u_1\updownequiv{k} u_2$.

\end{enumerate}
\end{definition}

Stated in a nonrecursive way, two nodes are $k$-equivalent if the
paths from the root to these two nodes have equal length and
corresponding nodes on these two paths are downward-$k$-equivalent.

\begin{example}
\label{exa-bothways}

Consider again the example document in Figure~\ref{fig-document}. We
have that, e.g, $v_5\updownequiv{1} v_6\updownequiv{1} v_7$, but no two
of these nodes are $k$-equivalent for any value of $k\geq 2$. Also,
$v_5\not\updownequiv{k} v_8$ and $v_8\not\updownequiv{k} v_{13}$, for
any value of $k\geq 1$.

\end{example}

By a straightforward inductive argument, the following is immediate
from Corollary~\ref{cor-kplusoneisfiner}.

\begin{proposition}
\label{prop-kplusoneisfiner}

Let $k\ge 1$. Let $D=(V,\textit{Ed},r,\lambda)$ be a document, and let
$v_1,v_2\in V$. If $v_1\updownequiv{k+1}v_2$, then $v_1\updownequiv{k}v_2$.

\end{proposition}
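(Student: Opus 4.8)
The plan is to mirror the recursive structure of Definition~\ref{def-kequivalent} and run an induction on the depth of the first node~$v_1$, exactly as the recursion in that definition does. At each level we verify the three defining conditions of $k$-equivalence, drawing the first one from Corollary~\ref{cor-kplusoneisfiner}, the second one trivially, and the third one from the induction hypothesis.

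Concretely, suppose $v_1\updownequiv{k+1}v_2$. I would first dispatch the base case, where $v_1$ has depth~$0$, i.e.\ $v_1=r$. By condition~2 of $(k{+}1)$-equivalence, $v_2=r$ as well, so condition~2 of $k$-equivalence holds; condition~3 is vacuous since both nodes are the root; and for condition~1 I invoke that $v_1\downequiv{k+1}v_2$ (condition~1 of $(k{+}1)$-equivalence) together with Corollary~\ref{cor-kplusoneisfiner} to conclude $v_1\downequiv{k}v_2$. Hence $v_1\updownequiv{k}v_2$.

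For the inductive step I assume the claim for all nodes of depth less than that of~$v_1$. Condition~1 of $k$-equivalence again follows by applying Corollary~\ref{cor-kplusoneisfiner} to $v_1\downequiv{k+1}v_2$; condition~2 is literally the same statement for~$k$ and for~$k{+}1$, so it carries over verbatim from the hypothesis $v_1\updownequiv{k+1}v_2$. For condition~3, if $v_1$ and $v_2$ are not the root, let $u_1$ and $u_2$ be their respective parents. Condition~3 of $(k{+}1)$-equivalence gives $u_1\updownequiv{k+1}u_2$, and since $u_1$ has strictly smaller depth than $v_1$, the induction hypothesis yields $u_1\updownequiv{k}u_2$, which is precisely condition~3 of $k$-equivalence. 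All three conditions being met, $v_1\updownequiv{k}v_2$.

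I do not anticipate any genuine obstacle here: the argument is a direct structural induction, and the only real content is the downward part, which is already isolated in Corollary~\ref{cor-kplusoneisfiner}. The one point worth stating carefully is that the induction must be on the \emph{depth} of the first node (so that passing to the parent strictly decreases the induction parameter), matching the direction of recursion in Definition~\ref{def-kequivalent}; this is what legitimizes the appeal to the induction hypothesis at the parents $u_1,u_2$.
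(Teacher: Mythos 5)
Your proof is correct and is exactly the "straightforward inductive argument" from Corollary~\ref{cor-kplusoneisfiner} that the paper invokes (the paper leaves the details to the reader); you have simply spelled out the induction on depth explicitly. No issues.
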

\subsection{Distinguishability of pairs of nodes at the syntactic
  level}
\label{subsec-pairsofnodes}

We also define notions of distinguishability of \emph{pairs} of nodes, by
requiring that the pairs have subsumed or congruent signatures and that
corresponding nodes on the (undirected) paths between begin and end
points of both pairs are related under one of the notions defined
in Subsection~\ref{subsec-syntacticdistinguishability}.

\begin{definition}
\label{def-pairsofnodes}
Let $D=(V,\textit{Ed},r,\lambda)$ be a document,
let $\vartheta$ be
  one of the syntactic relationships between nodes defined
in Subsection \ref{subsec-syntacticdistinguishability}, and let
$v_1,w_1,v_2,$ and $w_2$ be nodes in $V$. Then, 
$(v_1,w_1)$ \emph{$\vartheta$-subsumes\/} $(v_2,w_2)$, denoted
  $(v_1,w_1)\pairs{\siggeq}{\vartheta}(v_2,w_2)$ (respectively, $(v_1,w_1)$ and
  $(v_2,w_2)$ are \emph{$\vartheta$-congruent\/}, denoted 
  $(v_1,w_1)\pairs{\sigequiv}{\vartheta}(v_2,w_2)$) if
\begin{enumerate}

\item $(v_1,w_1)\siggeq (v_2,w_2)$ (respectively,
$(v_1,w_1)\sigequiv (v_2,w_2)$); and

\item for each node $y_1$ on the path form $v_1$ to $w_1$,
  $y_1\vartheta y_2$, where $y_2$ is the unique ancestor of $v_2$ or
  $w_2$ or both for which $(v_2,y_2)\in\sig(v_1,y_1)(D)$ (or, equivalently,
  $(y_2,w_2)\in\sig(y_1,w_1)(D)$).\footnote{In the sequel, we call
    $y_1$ and $y_2$ \emph{corresponding\/} nodes.}

\end{enumerate}
\end{definition}

\begin{example}
\label{ex-pairsofnodes}

Consider again the example document in Figure~\ref{fig-document}. We
have that, e.g., $(v_2,v_5)\pairs{\sigequiv}{\downequiv{k}} (v_3,v_6)$
for $k=1$ but not for any higher value of $k$;
$(v_2,v_5)\pairs{\sigequiv}{\downequiv{k}} (v_{10},v_{13})$ for any
value of $k\geq 1$; 
$(v_2,v_5)\pairs{\sigequiv}{\upequiv} (v_4,v_9)$;
$(v_5,v_6)\pairs{\sigequiv}{\updownequiv{k}} (v_5,v_7)$ for any value of 
$k\geq 1$; and $(v_6,v_7)\pairs{\siggeq}{\updownequiv{1}}(v_2,v_5)$,
but not the other way around. 

\end{example}

The following observation is obvious from the definition.

\begin{proposition}
\label{prop-subpairs}

Let $D=(V,\textit{Ed},r,\lambda)$ be a document, let
$\varphi\in\{\siggeq,\sigequiv\}$, let $\vartheta$ be one
of the syntactic relationships between nodes defined in
Subsection~\ref{subsec-syntacticdistinguishability}, and 
let $v_1$, $w_1$, $v_2$, and $w_2$ be nodes of $D$ such that
$(v_1,w_1)\pairs{\varphi}{\vartheta}(v_2,w_2)$. Let $y_1$ and $y_2$ be
nodes on the path from $v_1$ to $w_1$, and let $z_1$ and $z_2$ be
ancestors of $v_2$ or $w_2$ or both corresponding to
$y_1$ and $y_2$, respectively. Then
$(y_1,z_1)\pairs{\varphi}{\vartheta}(y_2,z_2)$. 

\end{proposition}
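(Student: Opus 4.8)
The plan is to unwind Definition~\ref{def-pairsofnodes} for the relevant sub-pairs and verify its two clauses directly. Spelling out that definition, the content of the assertion is that the sub-pair $(y_1,y_2)$ carved out of the path from $v_1$ to $w_1$ stands in the relation $\pairs{\varphi}{\vartheta}$ to the sub-pair $(z_1,z_2)$ of corresponding nodes on the $(v_2,w_2)$ side. First I would fix coordinates on the two paths. Writing $t_1\ass\top(v_1,w_1)$, the path from $v_1$ to $w_1$ is $v_1=p_0,\dots,p_m=t_1=q_0,\dots,q_n=w_1$, where $p_i$ is the ancestor of $v_1$ at distance $i$ and $q_j$ the node at distance $j$ below $t_1$ towards $w_1$, so that $\sig(v_1,w_1)=\up^m/\down^n$. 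Because $(v_1,w_1)\,\varphi\,(v_2,w_2)$, there is a common ancestor $t_2$ of $v_2$ and $w_2$ at distance $m$ above $v_2$ and $n$ above $w_2$; this yields a matching sequence $v_2=p'_0,\dots,p'_m=t_2=q'_0,\dots,q'_n=w_2$, and by construction $p'_i$ (resp.\ $q'_j$) is exactly the node corresponding to $p_i$ (resp.\ $q_j$) in the sense of Definition~\ref{def-pairsofnodes}. Thus $y_1,y_2$ are among the $p_i,q_j$ and $z_1,z_2$ are the matching primed nodes.

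For clause~1 I would read off $\sig(y_1,y_2)$ from the coordinates (one of $\up^a$, $\down^b$, or $\up^a/\down^b$, according to whether $y_1,y_2$ both lie on the upward segment, both on the downward segment, or on opposite segments). In every case the matching nodes $z_1,z_2$ admit $t_2$ --- or, when they lie on one ancestral line, their higher endpoint --- as a common ancestor at exactly the distances prescribed by $\sig(y_1,y_2)$, so $(z_1,z_2)\in\sig(y_1,y_2)(D)$ and hence $(y_1,y_2)\siggeq(z_1,z_2)$. When $\varphi$ is congruence I would invoke Proposition~\ref{prop-congruence}: congruence of $(v_1,w_1)$ and $(v_2,w_2)$ forces $t_2=\top(v_2,w_2)$ with the two branches distinct, so the primed sequence is a genuine simple path; then $\top(z_1,z_2)$ is the expected node and $\sig(z_1,z_2)=\sig(y_1,y_2)$, giving $(y_1,y_2)\sigequiv(z_1,z_2)$.

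For clause~2 the key observation is that the correspondence attached to the sub-pair coincides with the restriction of the original correspondence: any node $y$ on the path from $y_1$ to $y_2$ is one of the $p_c$ or $q_d$, and composing signatures shows that the unique ancestor of $z_1$ or $z_2$ matching $y$ is precisely the primed node $p'_c$ or $q'_d$ already furnished by the hypothesis. Consequently $y\,\vartheta\,(\text{its correspondent})$ is one of the relations granted by clause~2 of $(v_1,w_1)\pairs{\varphi}{\vartheta}(v_2,w_2)$, so clause~2 for the sub-pair is inherited verbatim.

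The one genuine subtlety, and the place I would be most careful, is the subsumption case in which $t_2$ is \emph{not} the least common ancestor of $v_2$ and $w_2$. Then the primed sequence may backtrack, and for a sub-pair with $y_1,y_2$ on opposite segments the true signature $\sig(z_1,z_2)$ can be strictly shorter than $\sig(y_1,y_2)$. This does no harm for $\siggeq$, since subsumption only requires containment witnessed by the common ancestor $t_2$; but it would destroy congruence, which is exactly why the $\sigequiv$ case must lean on Proposition~\ref{prop-congruence} to rule out backtracking. Everything else is routine coordinate bookkeeping, confirming that the result is, as claimed, immediate from the definition.
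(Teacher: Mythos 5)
Your proposal is correct and is essentially the paper's own approach: the paper offers no argument at all, introducing the proposition with ``The following observation is obvious from the definition,'' and your coordinate-level unwinding of Definition~\ref{def-pairsofnodes} (including the key observation that backtracking of the witnessing path through $t_2$ is harmless for subsumption and is excluded, via Proposition~\ref{prop-congruence}, in the congruence case) is exactly the verification that claim presupposes. One cosmetic remark: the paper's wording of the statement garbles the subscripts relative to Figure~\ref{fig-subpairs} (it is $y_1,z_1$ that lie on the path from $v_1$ to $w_1$, with $y_2,z_2$ the corresponding nodes), and you silently adopted the figure's intended reading, which is the right call.
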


The mutual position of the nodes in the statement of
Proposition~\ref{prop-subpairs} is illustrated in Figure~\ref{fig-subpairs}.

\begin{figure}[!thb]
\begin{center}
\resizebox{0.4\textwidth}{!}{\input{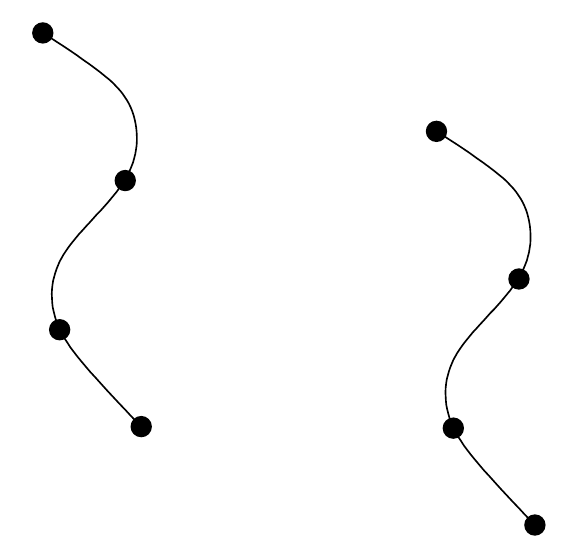_t}}
\end{center}
\caption{Mutual position of the nodes mentioned in the statement
  of Proposition~\ref{prop-subpairs}.}
\label{fig-subpairs}
\end{figure}

From Proposition~\ref{prop-congruence}, (\ref{congruence-1}), the
following is also obvious.

\begin{proposition}
\label{prop-congissub}

Let $D=(V,\textit{Ed},r,\lambda)$ be a document, let $\vartheta$ be one
of the syntactic relationships between nodes defined in
Subsection~\ref{subsec-syntacticdistinguishability}, and let
$v_1,v_2,w_1,$ and $w_2$ be nodes in $V$. If $v_1$ is an ancestor of $w_1$
or vice versa, $(v_1,w_1)\pairs{\sigequiv}{\vartheta}(v_2,w_2)$
if and only if $(v_1,w_1)\pairs{\siggeq}{\vartheta}(v_2,w_2)$.

\end{proposition}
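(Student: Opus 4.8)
The plan is to observe that the two compound relations $\pairs{\siggeq}{\vartheta}$ and $\pairs{\sigequiv}{\vartheta}$ of Definition~\ref{def-pairsofnodes} differ \emph{only} in their first clause: the former requires plain subsumption $(v_1,w_1)\siggeq(v_2,w_2)$, whereas the latter requires plain congruence $(v_1,w_1)\sigequiv(v_2,w_2)$. The second clause, stipulating that each node $y_1$ on the path from $v_1$ to $w_1$ satisfies $y_1\vartheta y_2$ with its corresponding node $y_2$, is verbatim the same in both definitions. Hence the whole argument reduces to showing that, under the hypothesis that $v_1$ is an ancestor of $w_1$ or vice versa, the two first clauses are equivalent.

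For the direction from congruence to subsumption, nothing beyond the definitions is needed: congruence $(v_1,w_1)\sigequiv(v_2,w_2)$ is by definition the conjunction of $(v_1,w_1)\siggeq(v_2,w_2)$ and $(v_2,w_2)\siggeq(v_1,w_1)$, so it entails subsumption, and since the second clause is unchanged, $(v_1,w_1)\pairs{\sigequiv}{\vartheta}(v_2,w_2)$ implies $(v_1,w_1)\pairs{\siggeq}{\vartheta}(v_2,w_2)$ with no assumption whatsoever on the mutual position of $v_1$ and $w_1$.

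The converse direction is where the ancestor hypothesis enters. Here I would invoke Proposition~\ref{prop-congruence}, (\ref{congruence-1}), which asserts precisely that, when $v_1$ is an ancestor of $w_1$ or vice versa, $(v_1,w_1)\sigequiv(v_2,w_2)$ holds if and only if $(v_2,w_2)\in\sig(v_1,w_1)(D)$, the latter being by definition exactly plain subsumption $(v_1,w_1)\siggeq(v_2,w_2)$. Consequently, the first clause of $\pairs{\siggeq}{\vartheta}$ already forces the first clause of $\pairs{\sigequiv}{\vartheta}$; combined with the shared second clause, this gives that $(v_1,w_1)\pairs{\siggeq}{\vartheta}(v_2,w_2)$ implies $(v_1,w_1)\pairs{\sigequiv}{\vartheta}(v_2,w_2)$.

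Combining the two directions yields the biconditional. There is no genuine obstacle: the statement is an immediate corollary of Proposition~\ref{prop-congruence}, the only point worth noting being that the identical second clause of the two definitions carries over without modification, which is automatic since both the hypothesis and the notion of corresponding nodes are literally the same in the two cases.
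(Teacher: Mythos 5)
Your proof is correct and takes exactly the route the paper intends: the paper states Proposition~\ref{prop-congissub} as an immediate consequence of Proposition~\ref{prop-congruence}, (\ref{congruence-1}), and your argument simply spells this out, noting that the two relations of Definition~\ref{def-pairsofnodes} share their second clause verbatim and that the ancestor hypothesis makes their first clauses equivalent. Nothing is missing.
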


Finally, from Definitions~\ref{def-kequivalent}
and~\ref{def-pairsofnodes}, the following is immediate.

\begin{proposition}
\label{prop-kequivalent}

Let $D=(V,\textit{Ed},r,\lambda)$ be a document, let
$v_1,v_2\in V$, and let $k\geq 1$. Then, $v_1\updownequiv{k} v_2$ 
if and only if $(r,v_1)\pairs{\sigequiv}{\downequiv{k}} (r,v_2)$.

\end{proposition}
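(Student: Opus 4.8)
The plan is to show that each side of the biconditional unpacks to the same elementary statement, namely that the root-to-$v_1$ and root-to-$v_2$ paths have equal length and that, at each common depth, the corresponding ancestors of $v_1$ and $v_2$ are downward-$k$-equivalent. This is exactly the nonrecursive reading of $k$-equivalence recorded just after Definition~\ref{def-kequivalent}, so the two conditions will be seen to coincide verbatim.

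First I would dispose of the signature-congruence clause of $\pairs{\sigequiv}{\downequiv{k}}$. Since $r$ is the root, it is an ancestor of both $v_1$ and $v_2$, so Proposition~\ref{prop-congruence}(\ref{congruence-1}) applies and yields that $(r,v_1)\sigequiv(r,v_2)$ holds if and only if $(r,v_2)\in\sig(r,v_1)(D)$. Writing $n$ for the depth of $v_1$, we have $\sig(r,v_1)=\down^n$, so this membership says precisely that $v_2$ also lies at depth $n$. In other words, clause~1 of Definition~\ref{def-pairsofnodes} is equivalent to ``$v_1$ and $v_2$ have equal depth.''

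Next I would read off the corresponding-nodes clause. Under the equal-depth condition, a node $y_1$ on the path from $r$ to $v_1$ is just the ancestor $a_i$ of $v_1$ at some depth $i\le n$, and $\sig(r,a_i)=\down^i$; hence the node $y_2$ demanded by Definition~\ref{def-pairsofnodes}, characterized by $(r,y_2)\in\sig(r,a_i)(D)=\down^i(D)$, is exactly the ancestor $b_i$ of $v_2$ at depth $i$ (it is well defined precisely because the equal-depth condition guarantees $v_2$ has an ancestor at every depth $i\le n$). Thus clause~2 of $\pairs{\sigequiv}{\downequiv{k}}$ amounts to requiring $a_i\downequiv{k} b_i$ for every $i$ from $0$ to $n$. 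Combining the two clauses, $(r,v_1)\pairs{\sigequiv}{\downequiv{k}}(r,v_2)$ holds exactly when $v_1$ and $v_2$ have equal depth and their ancestors agree, depthwise, up to $\downequiv{k}$.

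Finally, I would match this with $v_1\updownequiv{k} v_2$. The equivalence between the recursive Definition~\ref{def-kequivalent} and its nonrecursive form follows by a routine induction on the depth of $v_1$: the base case $v_1=r$ forces $v_2=r$ through clause~2 (and $r\downequiv{k}r$ is trivial), while the inductive step peels off $v_1,v_2$, applies the hypothesis to their parents, and uses clause~1 of Definition~\ref{def-kequivalent} to supply the depth-$n$ comparison $v_1\downequiv{k} v_2$. Since this nonrecursive form is identical to the statement derived in the previous paragraph, the two sides of the proposition are equivalent. I expect no genuine obstacle; the only point requiring care is the bookkeeping that identifies the ``corresponding node'' $y_2$ of Definition~\ref{def-pairsofnodes} with the equal-depth ancestor $b_i$ of $v_2$, which is exactly where the explicit form $\sig(r,a_i)=\down^i$ is invoked.
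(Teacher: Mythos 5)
Your proposal is correct and matches the paper's intent exactly: the paper offers no explicit proof, declaring the proposition ``immediate'' from Definitions~\ref{def-kequivalent} and~\ref{def-pairsofnodes}, and the nonrecursive reading of $k$-equivalence you reduce both sides to is precisely the restatement the paper records just after Definition~\ref{def-kequivalent}. Your unpacking (equal depth via Proposition~\ref{prop-congruence}, depthwise $\downequiv{k}$ of corresponding ancestors, and the routine induction identifying the recursive and nonrecursive forms) is simply the spelled-out version of that same argument.
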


Table \ref{table:summary-syntactic} summarizes  all of the distinguishability notions
presented in this section.
The balance of the paper is devoted to identifying the languages which
correspond in expressive power to each of these notions.

\begin{table}[hbt]
    \caption{Distinguishability notions of Section \ref{sec-distinguishability}.}
    \label{table:summary-syntactic}
\begin{center}
    \begin{tabular}{|ccc|}
        \hline
        {\em distinguishability  notion} & {\em notation} & {\em defined in}\\
        \hline
        expression-related & \expgeq         & Definition \ref{def-ee}\\
     expression-equivalent & \expequiv       & Definition \ref{def-ee}\\
   downward-$k$-equivalent & \downequiv{k}   & Definition \ref{def-kdownwardequivalent}\\
         upward-equivalent & \upequiv        & Definition \ref{def-upwardequivalent}\\
            $k$-equivalent & \updownequiv{k} & Definition \ref{def-kequivalent}\\
      $\vartheta$-subsumes & \pairs{\siggeq}{\vartheta} & Definition \ref{def-pairsofnodes}\\
     $\vartheta$-congruent & \pairs{\sigequiv}{\vartheta} &  Definition \ref{def-pairsofnodes}\\
        \hline
    \end{tabular}
\end{center}
\end{table}

\section{Strictly downward languages}
\label{sec-strictlydownward}

We call a language \emph{downward\/} if, for any expression $e$ in that
language, and for any node $v$ of the document $D$ under
consideration, all nodes in $e(D)(v)$ are descendants of $v$.

In this section, we consider languages with the stronger property
that, for any expression $e$ in the language, and for any node $v$ of
the document $D$ under consideration, $e(D)(v)=e(D')(v)$, where $D'$
is the subtree of~$D$ rooted at~$v$. We shall call such languages
\emph{strictly downward\/}.

Downward languages that are \emph{not} strictly downward will be
called \emph{weakly downward} and are the subject of
Section~\ref{sec-weaklydownward}. 

Considering the nonbasic operations in Table~\ref{tab-binops}, the
language $\BL(E)$ is strictly downward if and only if $E$ does
\emph{not} contain upward navigation (``$\up$''), 
second projection (``$\pi_2$''), and
inverse (``$.^{-1}$''). It is  
the purpose of this section to investigate the expressive power of
these languages at the document level, both for query expressiveness and
navigational expressiveness, and, in some cases, derive actual
characterizations for these.
\subsection{Sufficient conditions for expression equivalence}
\label{subsec-strictlydownsufficient}

If $e$ is an expression in a downward language $\BL(E)$,
then it follows immediately from the definition that, given a node $v$
of the document~$D$ under consideration, each node in $e(D)(v)$ is a
descendant of~$v$. Therefore, we only need to consider
ancestor-descendant pairs of nodes, for which corresponding notions of
subsumption and congruence coincide (Proposition~\ref{prop-congissub}).

The following property of $\downequiv{k}$-congruence, $k\geq 1$, for 
ancestor-descendant pairs of nodes will turn out to be very useful.

\begin{lemma}
\label{lem-strictlydown1}

Let $D=(V,\textit{Ed},r,\lambda)$ be a document, let $v_1$, $w_1$, and
$v_2$ be nodes of $D$ such that $w_1$ is a descendant of $v_1$, and let
$k\geq 1$. If $v_1\downequiv{k} v_2$, then $v_2$ has a descendant
$w_2$ in $D$ such that $(v_1,w_1)\pairs{\sigequiv}{\downequiv{k}} (v_2,w_2)$.

\end{lemma}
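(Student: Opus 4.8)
The plan is to argue by induction on the distance $n$ from $v_1$ to its descendant $w_1$, building the matching descendant $w_2$ of $v_2$ one edge at a time. First I would observe that, since $w_1$ is a descendant of $v_1$, the signature $\sig(v_1,w_1)$ is purely downward, equal to $\down^n$. Consequently, by Proposition~\ref{prop-congissub}, for an ancestor-descendant pair the desired $\downequiv{k}$-congruence $(v_1,w_1)\pairs{\sigequiv}{\downequiv{k}}(v_2,w_2)$ is equivalent to the corresponding $\downequiv{k}$-subsumption, so it suffices to produce a descendant $w_2$ of $v_2$ at distance $n$ for which every node $y_1$ on the path from $v_1$ to $w_1$ is downward-$k$-equivalent to its corresponding node on the path from $v_2$ to $w_2$.

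For the base case $n=0$ I would take $w_2 \ass v_2$; here $\sig(v_1,w_1)=\varepsilon$, the only path node is $v_1$ itself, its corresponding node is $v_2$, and the hypothesis $v_1\downequiv{k}v_2$ gives the claim immediately. For the inductive step, let $z_1$ be the child of $v_1$ lying on the path to $w_1$, so that $w_1$ is a descendant of $z_1$ at distance $n-1$. Since $v_1\downequiv{k}v_2$, the second clause of Definition~\ref{def-kdownwardequivalent} supplies a child $z_2$ of $v_2$ with $z_1\downequiv{k}z_2$. Applying the induction hypothesis to the pair $(z_1,w_1)$ and the node $z_2$ yields a descendant $w_2$ of $z_2$ with $(z_1,w_1)\pairs{\sigequiv}{\downequiv{k}}(z_2,w_2)$; as $z_2$ is a child of $v_2$, this $w_2$ is a descendant of $v_2$ at distance $n$, which settles the first (signature) condition of Definition~\ref{def-pairsofnodes}.

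It then remains to check the second condition, namely that each node $y_1$ on the path from $v_1$ to $w_1$ is downward-$k$-equivalent to its corresponding node $y_2$. The node $y_1=v_1$ corresponds to $v_2$, and $v_1\downequiv{k}v_2$ holds by hypothesis. Every other path node $y_1$ lies on the path from $z_1$ to $w_1$, with $\sig(v_1,y_1)=\down/\sig(z_1,y_1)$; since $z_2$ is precisely the child of $v_2$ through which the path to $w_2$ passes, the node corresponding to $y_1$ relative to $(v_1,w_1)$ and $(v_2,w_2)$ coincides with the node corresponding to $y_1$ relative to $(z_1,w_1)$ and $(z_2,w_2)$. The inductive conclusion $(z_1,w_1)\pairs{\sigequiv}{\downequiv{k}}(z_2,w_2)$ therefore delivers $y_1\downequiv{k}y_2$ for all such $y_1$, completing the step.

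The induction itself is routine; the one point demanding care — and the step I expect to be the main obstacle — is the bookkeeping in the last paragraph, i.e., verifying that the correspondence between path nodes produced at the outer level of the recursion is consistent with the correspondence produced by the induction hypothesis at the inner level. Concretely, this amounts to checking that prepending the edge $v_2\to z_2$ and prefixing $\down$ to the signature $\sig(z_1,y_1)$ shift the relative position of every path node in the same way, so that the uniqueness of the descending path in a tree forces the two candidate corresponding nodes to agree. Once this alignment is in place, both clauses of Definition~\ref{def-pairsofnodes} follow and the lemma is proved.
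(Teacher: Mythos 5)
Your proof is correct and follows essentially the same route as the paper: induction on the length of the path from $v_1$ to $w_1$, taking $w_2\ass v_2$ in the base case, using the second clause of Definition~\ref{def-kdownwardequivalent} to pick a matching child of $v_2$, and invoking the induction hypothesis on that child pair. The only difference is that you spell out explicitly the alignment of corresponding path nodes, which the paper dismisses as ``straightforward'' from Definition~\ref{def-pairsofnodes}.
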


\begin{proof}
The proof is by induction of the length of the path from $v_1$ to
$w_1$. If $w_1=v_1$, then, obviously,
Lemma~\ref{lem-strictlydown1} is satisfied for $w_2\ass v_2$.
If $w_1\neq v_1$, then let $y_1$ be the child of $v_1$ on the path to
$w_1$. By Definition~\ref{def-kdownwardequivalent}, $v_2$ has a child
$y_2$ such that $y_1\downequiv{k} y_2$. By the induction hypothesis, 
$y_2$ has a descendant $w_2$ in $D$ such that
$(y_1,w_1)\pairs{\sigequiv}{\downequiv{k}}(y_2,w_2)$. From
Definition~\ref{def-pairsofnodes}, it is now 
straightforward that $(v_1,w_1)\pairs{\sigequiv}{\downequiv{k}}(v_2,w_2)$.
\end{proof}

We now link $\downequiv{k}$-congruence of ancestor-descendant pairs of
nodes with expressibility in strictly downward languages.

\begin{proposition}
\label{prop-strictlydown}

Let $k\geq 1$, and let $E$ be the set of all nonbasic operations in
Table~\ref{tab-binops}, except for upward navigation (``$\up$''),
second projection
(``$\pi_2$''), inverse (``$.^{-1}$''), and selection on at least $m$
children satisfying some condition (``$\ch{m}(.)$'') for $m>k$. Let
$e$ be an expression in $\BL(E)$.  Let $D=(V,\textit{Ed},r,\lambda)$
be a document, let $v_1$, $w_1$, $v_2$, and $w_2$ be nodes of $D$ such
that $w_1$ is a descendant of $v_1$ and $w_2$ is a descendant of
$v_2$.  Assume furthermore that
$(v_1,w_1)\pairs{\sigequiv}{\downequiv{k}} (v_2,w_2)$. Then,
$(v_1,w_1)\in e(D)$ if and only if $(v_2,w_2)\in e(D)$.

\end{proposition}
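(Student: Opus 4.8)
The plan is to prove the statement by structural induction on $e$, carrying along \emph{all} ancestor-descendant pairs at once: the induction hypothesis asserts that for every strict subexpression $f$ and every pair of ancestor-descendant pairs $(a,b)$, $(c,d)$ with $(a,b)\pairs{\sigequiv}{\downequiv{k}}(c,d)$ one has $(a,b)\in f(D)\iff(c,d)\in f(D)$. Note first that $\sigequiv$ forces $\sig(v_1,w_1)=\sig(v_2,w_2)=\down^n$, so both pairs sit at the same downward distance $n$, and that condition~(2) of Definition~\ref{def-pairsofnodes} makes corresponding endpoints downward-$k$-equivalent, in particular $v_1\downequiv{k}v_2$ and $w_1\downequiv{k}w_2$. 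The base cases are then immediate: for $\emptyset$ neither pair qualifies; for $\varepsilon$ and $\down$ membership depends only on $n$ (namely $n=0$, resp.\ $n=1$); and for $\hat{\ell}$ we also use $\lambda(v_1)=\lambda(v_2)$, which follows from $v_1\downequiv{k}v_2$.

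For the Boolean connectives $\cup$, $\cap$, and $-$, membership of $(v_i,w_i)$ is a fixed Boolean combination of memberships in the immediate subexpressions evaluated \emph{at the same pair}, so the claim follows by applying the induction hypothesis to each subexpression at $(v_1,w_1)$ and $(v_2,w_2)$. For composition $e_1/e_2$ the key is strict downwardness: if $(v_1,x_1)\in e_1(D)$ and $(x_1,w_1)\in e_2(D)$, then $x_1$ is a descendant of $v_1$ and an ancestor of $w_1$, hence $x_1$ lies on the unique path from $v_1$ to $w_1$. Letting $x_2$ be the corresponding node on the path from $v_2$ to $w_2$, it follows directly from Definition~\ref{def-pairsofnodes} (cf.\ Proposition~\ref{prop-subpairs}) that $(v_1,x_1)\pairs{\sigequiv}{\downequiv{k}}(v_2,x_2)$ and $(x_1,w_1)\pairs{\sigequiv}{\downequiv{k}}(x_2,w_2)$. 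The induction hypothesis on $e_1$ and $e_2$ then transports the intermediate witness $x_1$ to the witness $x_2$, giving $(v_2,w_2)\in e_1/e_2(D)$; the converse is symmetric.

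The projection case $\pi_1(f)$ is where Lemma~\ref{lem-strictlydown1} enters. Membership forces $v_i=w_i$, and $(v_i,v_i)\in\pi_1(f)(D)$ means there is a descendant $x_i$ of $v_i$ with $(v_i,x_i)\in f(D)$. Given such an $x_1$ and using $v_1\downequiv{k}v_2$, Lemma~\ref{lem-strictlydown1} produces a descendant $x_2$ of $v_2$ with $(v_1,x_1)\pairs{\sigequiv}{\downequiv{k}}(v_2,x_2)$, and the induction hypothesis on $f$ yields $(v_2,x_2)\in f(D)$; symmetry finishes this case.

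I expect the counting case $\ch{m}(f)$ with $1\le m\le k$ to be the main obstacle. Call a child $w$ of $v_i$ \emph{good} if $(w,w)\in\pi_1(f)(D)$; by exactly the projection argument (Lemma~\ref{lem-strictlydown1} together with the induction hypothesis on $f$), goodness is invariant under $\downequiv{k}$, hence constant on each $\downequiv{k}$-class of children. Partition the children of $v_1$ and of $v_2$ into their $\downequiv{k}$-classes. By Definition~\ref{def-kdownwardequivalent}(2) these classes correspond bijectively, and by Definition~\ref{def-kdownwardequivalent}(3) corresponding classes $W_{1j}$, $W_{2j}$ satisfy $\min(|W_{1j}|,k)=\min(|W_{2j}|,k)$, i.e.\ either $|W_{1j}|=|W_{2j}|<k$ or both sizes are $\ge k$; moreover $W_{1j}$ is good iff $W_{2j}$ is. Since the number of good children of $v_i$ is the sum of $|W_{ij}|$ over the good classes, the desired equivalence ``$v_1$ has $\ge m$ good children $\iff v_2$ does'' follows precisely because $m\le k$: if every good class has $|W_{1j}|=|W_{2j}|$ the two totals coincide, whereas if some good class has $|W_{1j}|,|W_{2j}|\ge k$ then both totals are at least $k\ge m$. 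This last inequality is exactly the point at which the restriction to counting thresholds $m\le k$ is indispensable, and it completes the induction.
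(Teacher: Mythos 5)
Your proof is correct and follows essentially the same route as the paper's: structural induction with the congruence-invariance statement as the hypothesis, Proposition~\ref{prop-subpairs} plus strict downwardness for composition, Lemma~\ref{lem-strictlydown1} for $\pi_1$, and for $\ch{m}(f)$ the same two-case analysis on corresponding $\downequiv{k}$-classes of ``good'' children (sizes equal when below $k$, both at least $k\geq m$ otherwise). The only cosmetic differences are that you carry the ``if and only if'' explicitly as the induction hypothesis where the paper proves one direction and appeals to symmetry, and that you partition all children rather than just the good ones before summing over good classes.
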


\begin{proof}
By symmetry, it suffices to show
that $(v_1,w_1)\in e(D)$ implies $(v_2,w_2)\in e(D)$.
We prove this by structural induction.
For the atomic operators $\emptyset$, $\varepsilon$, $\hat{\ell}$
($\ell\in\mathcal{L}$), and $\downarrow$, it is
straightforward that Proposition~\ref{prop-strictlydown} holds.
We have now settled the base case and turn to the induction step.
\begin{enumerate}

\item $e \ass e_1/e_2$, with $e_1$ and $e_2$ satisfying 
Proposition~\ref{prop-strictlydown}. Assume that $(v_1,w_1)\in
e(D)$. Then there exists $y_1\in V$ such that 
$(v_1,y_1)\in e_1(D)$ and $(y_1,w_1)\in e_2(D)$. By the strictly
downward nature of $\BL(E)$, $y_1$ is on the path from $v_1$ to $w_1$.
Let $y_2$ be the node on the path from $v_2$ to $w_2$ corresponding to
$y_1$. By Proposition~\ref{prop-subpairs}, $(v_1,y_1)\downequiv{k}
(v_2,y_2)$ and $(y_1,w_1)\downequiv{k} (y_2,w_2)$. By the induction
hypothesis, $(v_2,y_2)\in e_1(D)$ and $(y_2,w_2)\in e_2(D)$. Hence,
$(v_2,w_2)\in e(D)$.

\item $e \ass \pi_1(f)$, with $f$ satisfying
  Proposition~\ref{prop-strictlydown}. Assume that $(v_1,w_1)\in
  e(D)$. Then, necessarily $v_1=w_1$, and, consequently, $v_2=w_2$.
  From $(v_1,v_1)\in\pi_1(f)(D)$, it follows that there exists $z_1\in
  V$ such that $(v_1,z_1)\in f(D)$.  Since $v_1\downequiv{k} v_2$, it
  also follows, by Lemma~\ref{lem-strictlydown1}, that there exists a
  descendant $z_2$ of $w_2$ such that
  $(v_1,z_1)\pairs{\sigequiv}{\downequiv{k}} (v_2,z_2)$. By the
  induction hypothesis, $(v_2,z_2)\in f(D)$. Hence, $(v_2,v_2)\in
  e(D)$.

\item $e \ass \ch{m}(f)$, with $m\leq k$ and $f$ satisfying
  Proposition~\ref{prop-strictlydown}. Assume that $(v_1,w_1)\in
  \ch{m}(f)(D)$. Hence, $v_1=w_1$, which in turn implies $v_2=w_2$.
  Let $\down/\pi_1(f)(D)(v_1)=Y_1$ and let
  $\down/\pi_1(f)(D)(v_2)=Y_2$. By assumption, $|Y_1|\geq m$.
  Now, let $y$ be a child of $v_1$ in $Y_1$ or a child of $v_2$ in
  $Y_2$, and let $z$ be a child of $v_1$ not in $Y_1$ or a child of
  $v_2$ not in $Y_2$. By assumption, there exists a node $y'$ such
  that $(y,y')\in f(D)$. Now, suppose that $y\downequiv{k} z$. Then,
  by Proposition~\ref{lem-strictlydown1}, there exists a node $z'$
  such that $(y,y')\pairs{\sigequiv}{\downequiv{k}} (z,z')$. But then,
  by the induction hypothesis, $(z,z')\in f(D)$, contrary to our
  assumptions. We may therefore conclude that $y\not\downequiv{k}
  z$. Since furthermore $v_1\downequiv{k} v_2$, it follows that, for
  all $y_1\in Y_1$, there exists $y_2\in Y_2$ such that
  $y_1\downequiv{k} y_2$, and vice versa.  Hence, for some $n\geq 1$,
  we can write $Y_1=Y_{11}\cup\ldots\cup Y_{1n}$ and
  $Y_2=Y_{21}\cup\ldots\cup Y_{2n}$ such that
\begin{enumerate}

\item $Y_{11},\ldots,Y_{1n}$ are maximal sets of mutually
  downward-$k$-equivalent children of $v_1$, and are hence pairwise
  disjoint;

\item $Y_{21},\ldots,Y_{2n}$ are maximal sets of mutually
  downward-$k$-equivalent children of $v_2$, and are hence pairwise
  disjoint; and

\item for all $i=1,\ldots,n$, each node of $Y_{1i}$ is
  downward-$k$-equivalent to each node of $Y_{2i}$.

\end{enumerate}
If, for some $i$, $|Y_{1i}|\geq k$, it follows from $v_1\downequiv{k}
v_2$ that $|Y_{2i}|\geq k$, and, hence, that $|Y_2|\geq k\geq m$. If, on the
other hand,  for all $i=1,\ldots,n$, $|Y_{1i}|<k$, it
follows from $v_1\downequiv{k} v_2$ that $|Y_{1i}|=|Y_{2i}|$, and,
hence, that $|Y_1|=|Y_2|$. Since $|Y_1|\geq m$, it follows that, also
in this case, $|Y_2|\geq m$. We may thus conclude that, in all cases,
$|Y_2|\geq m$, and, hence, that $(v_2,v_2)\in \ch{m}(f)(D)=e(D)$.

\item $e \ass e_1\cup e_2$, with $e_1$ and $e_2$ satisfying
  Proposition~\ref{prop-strictlydown}. Assume that $(v_1,w_1)\in e(D)$.
Then, $(v_1,w_1)\in e_1(D)$ or $(v_1,w_1)\in e_2(D)$. Without loss of
generality, assume the former. Then, by the induction hypothesis,
$(v_2,w_2)\in e_1(D)$. Hence, $(v_2,w_2)\in e(D)$.
 
\item $e \ass e_1\cap e_2$, with $e_1$ and $e_2$ satisfying
  Proposition~\ref{prop-strictlydown}. Assume that $(v_1,w_1)\in
  e(D)$.  Then, $(v_1,w_1)\in e_1(D)$ and $(v_1,w_1)\in e_2(D)$.  It
  follows by the induction hypothesis that $(v_2,w_2)\in e_1(D)$ and
  $(v_2,w_2)\in e_2(D)$. Hence, $(v_2,w_2)\in e(D)$.

\item $e \ass e_1-e_2$, with $e_1$ and $e_2$ satisfying
  Proposition~\ref{prop-strictlydown}. Assume that $(v_1,w_1)\in e(D)$.
Then $(v_1,w_1)\in e_1(D)$ and $(v_1,w_1)\notin e_2(D)$. By the
induction hypothesis, $(v_2,w_2)\in 
  e_1(D)$ and $(v_2,w_2)\notin e_2(D)$. (Indeed, if $(v_2,w_2)\in
  e_2(D)$, then, again by the induction hypothesis, $(v_1,w_1)\in
  e_2(D)$, a contradiction.) Hence, $(v_2,w_2)\in e(D)$.
\end{enumerate}
\end{proof}

\begin{corollary}
\label{cor-strictlydown1}

Let $k\geq 1$, and let $E$ be the set of all nonbasic operations in
Table~\ref{tab-binops}, except for upward navigation (``$\up$''),
second projection
(``$\pi_2$''), inverse (``$.^{-1}$''), and selection on at least $m$
children (``$\ch{m}(.)$'') for $m>k$. Let $e$ be an expression in
$\BL(E)$. Let $D=(V,\textit{Ed},r,\lambda)$ be a document, let $v_1$
and $v_2$ be nodes of $D$ such that $v_1\downequiv{k} v_2$ and let
$w_1$ be a descendant of $v_1$. If $(v_1,w_1)\in e(D)$, then there
exists a descendant $w_2$ of $v_2$ such that $(v_2,w_2)\in e(D)$.

\end{corollary}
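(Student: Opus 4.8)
The plan is to obtain this Corollary as a direct consequence of the two preceding results, Lemma~\ref{lem-strictlydown1} and Proposition~\ref{prop-strictlydown}, applied in sequence. Observe first that the restriction placed on the set $E$ of nonbasic operations in the statement of the Corollary is exactly the one under which Proposition~\ref{prop-strictlydown} holds (all nonbasic operations except ``$\up$'', ``$\pi_2$'', ``$.^{-1}$'', and ``$\ch{m}(.)$'' for $m>k$), so that proposition is available for the expression $e$ at hand.

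So, suppose $v_1\downequiv{k} v_2$, that $w_1$ is a descendant of $v_1$, and that $(v_1,w_1)\in e(D)$. The first step is to produce the candidate node $w_2$. Since $v_1\downequiv{k} v_2$ and $w_1$ is a descendant of $v_1$, Lemma~\ref{lem-strictlydown1} guarantees a descendant $w_2$ of $v_2$ with $(v_1,w_1)\pairs{\sigequiv}{\downequiv{k}}(v_2,w_2)$. Note that this $w_2$ is chosen purely from the structural (downward) similarity of $v_1$ and $v_2$ and does not yet involve the expression $e$.

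The second step is to transfer membership in $e(D)$ along this $\downequiv{k}$-congruence. Here $w_1$ is a descendant of $v_1$, $w_2$ is a descendant of $v_2$ (both by construction), and $(v_1,w_1)\pairs{\sigequiv}{\downequiv{k}}(v_2,w_2)$, so the hypotheses of Proposition~\ref{prop-strictlydown} are met. That proposition then yields that $(v_1,w_1)\in e(D)$ if and only if $(v_2,w_2)\in e(D)$; since $(v_1,w_1)\in e(D)$ by assumption, we conclude $(v_2,w_2)\in e(D)$, which is what we wanted, with $w_2$ the required descendant of $v_2$.

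There is essentially no obstacle once the two ingredients are in place: the only thing to verify is that they dovetail, namely that the $\pairs{\sigequiv}{\downequiv{k}}$-congruence delivered by Lemma~\ref{lem-strictlydown1} is precisely the hypothesis consumed by Proposition~\ref{prop-strictlydown}, and that the node $w_2$ it produces is indeed a descendant of $v_2$ (so that the ancestor--descendant precondition of the proposition holds). Both are immediate from the statements of the two results, so the Corollary follows at once. The real work has already been done in the inductive argument of Proposition~\ref{prop-strictlydown} and the path-induction of Lemma~\ref{lem-strictlydown1}; this Corollary merely repackages them for the ``local'' (node-semantics) view, discarding the specific witness $w_1$ in favour of existence of \emph{some} matching witness $w_2$.
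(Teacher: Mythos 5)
Your proof is correct and is exactly the paper's own argument: first apply Lemma~\ref{lem-strictlydown1} to obtain a descendant $w_2$ of $v_2$ with $(v_1,w_1)\pairs{\sigequiv}{\downequiv{k}}(v_2,w_2)$, then invoke Proposition~\ref{prop-strictlydown} to transfer $(v_1,w_1)\in e(D)$ to $(v_2,w_2)\in e(D)$. The additional checks you spell out (the operation set matching, the ancestor--descendant preconditions) are the same ones the paper leaves implicit.
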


\begin{proof}
By Lemma~\ref{lem-strictlydown1}, there exists a descendant $w_2$ of
$v_2$ such that $(v_1,w_1)\pairs{\sigequiv}{\downequiv{k}}(v_2,w_2)$. By
Proposition~\ref{prop-strictlydown}, it now follows that
$(v_2,w_2)\in e(D)$. 
\end{proof}

\begin{corollary}
\label{cor-strictlydown2}

Let $k\geq 1$, and let $E$ be a set of nonbasic operations in
Table~\ref{tab-binops} not  
containing upward navigation (``$\up$''), second projection
(``$\pi_2$''), inverse  
(``$.^{-1}$''), or selection on at least $m$ children satisfying some
condition (``$\ch{m}(.)$'') for $m>k$.  Consider the language $\BL(E)$
or $\CL(E)$. Let $D=(V,\textit{Ed},r,\lambda)$ be a document, and let
$v_1$ and $v_2$ be nodes of $D$. If $v_1\downequiv{k} v_2$, then
$v_1\expequiv v_2$.

\end{corollary}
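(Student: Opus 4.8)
The plan is to reduce the statement to a direct application of Corollary~\ref{cor-strictlydown1}. First I would exploit the symmetry of downward-$k$-equivalence: $v_1\downequiv{k}v_2$ trivially yields $v_2\downequiv{k}v_1$. Since by Definition~\ref{def-ee} expression-equivalence is the conjunction of $v_1\expgeq v_2$ and $v_2\expgeq v_1$, it suffices to prove the single implication ``$v_1\downequiv{k}v_2$ implies $v_1\expgeq v_2$''; applying this implication to the symmetric hypothesis $v_2\downequiv{k}v_1$ then delivers $v_2\expgeq v_1$, and hence $v_1\expequiv v_2$.

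To prove $v_1\expgeq v_2$, I would unfold Definition~\ref{def-ee} and show that, for every expression $e$ of the language, $e(D)(v_1)\neq\emptyset$ implies $e(D)(v_2)\neq\emptyset$. So assume $e(D)(v_1)\neq\emptyset$ and choose a witness $w_1$ with $(v_1,w_1)\in e(D)$. Because $E$ contains none of $\up$, $\pi_2$, or inverse, the language is strictly downward, so every node in $e(D)(v_1)$ is a descendant of $v_1$; in particular $w_1$ is a descendant of $v_1$. This is precisely the hypothesis of Corollary~\ref{cor-strictlydown1}: from $v_1\downequiv{k}v_2$, $w_1$ a descendant of $v_1$, and $(v_1,w_1)\in e(D)$, the corollary supplies a descendant $w_2$ of $v_2$ with $(v_2,w_2)\in e(D)$, so that $e(D)(v_2)\neq\emptyset$, as required.

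The only real point to verify --- and the step I expect to be the main (though minor) obstacle --- is that Corollary~\ref{cor-strictlydown1} genuinely applies to the expression $e$ in both languages named in the statement. That corollary is phrased for the \emph{maximal} admissible operation set $E_{\max}$, namely all nonbasic operations of Table~\ref{tab-binops} except $\up$, $\pi_2$, inverse, and $\ch{m}$ with $m>k$. Any $E$ meeting the present hypotheses is a subset of $E_{\max}$, so every $\BL(E)$ expression is already a $\BL(E_{\max})$ expression and the corollary applies verbatim. For the core language $\CL(E)$, I would observe that, by its definition, a core expression differs from an ordinary expression only by confining $\cap$ and $-$ to predicate subexpressions $\pi_1(f)$; it is nonetheless a syntactically well-formed expression of $\BL(E_{\max})$, since $\pi_1$ lies in $E_{\max}$ and $\BL(E_{\max})$ permits $\cap$ and $-$ everywhere. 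Hence Corollary~\ref{cor-strictlydown1} applies unchanged to $\CL(E)$ as well, and the counting bound $m\le k$ together with the exclusion of the upward operations is respected throughout.
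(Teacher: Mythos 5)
Your proof is correct and follows essentially the same route as the paper's: reduce to a single implication by the symmetry of $\downequiv{k}$, use the strictly downward character of the language (no $\up$, $\pi_2$, or inverse) to obtain a descendant witness $w_1$, and apply Corollary~\ref{cor-strictlydown1} after noting that every $\BL(E)$ or $\CL(E)$ expression is an expression of the larger language in that corollary. Your explicit check that core expressions are well-formed expressions of $\BL(E_{\max})$ is exactly the detail the paper compresses into ``notice that $e$ is also an expression in the language considered in Corollary~\ref{cor-strictlydown1}.''
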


\begin{proof}
Let $e$ be an expression in the language under consideration
such that $e(D)(v_1)\neq\emptyset$. Hence, there exists a descendant $w_1$ of
$v_1$ such that $(v_1,w_1)\in e(D)$. Notice that $e$ is also an
expression in the language considered in
Corollary~\ref{cor-strictlydown1}. Hence, there exists a descendant $w_2$ of
$v_2$ such that $(v_2,w_2)\in e(D)$, so $e(D)(v_2)\neq\emptyset$. By
symmetry, the converse also holds. We may thus conclude that
$v_1\expequiv v_2$. 
\end{proof}

We may thus conclude that downward-$k$-equivalence is a sufficient
condition for expression-equivalence under a strictly downward
language provided $\ch{m}$ cannot be expressed for $m>k$.

Even more, Corollary~\ref{cor-strictlydown2} does no longer hold if
this restriction is removed, as shown by the following counterexample.

\begin{example}
\label{ex-downsuff-counterex}

Consider again the example document in
Figure~\ref{fig-document}. We established in Example~\ref{ex-downward}
that $v_2\downequiv{1} v_3$, but $v_2\not\downequiv{2} v_3$. In the
language $\BL(\ch{2})$, clearly $v_2\not\expequiv v_3$, as
$\ch{2}(\varepsilon)(D)(v_2)=\emptyset$, while
$\ch{2}(\varepsilon)(D)(v_3)\neq\emptyset$.

\end{example}
\subsection{Necessary conditions for expression equivalence}
\label{subsec-strictlydownnecessary}

We now explore requirements on the set of nonbasic operations
expressible in the language under which downward-$k$-equivalence
($k\geq 1$) is a necessary condition for expression-equivalence.  As
we have endeavored to make as few assumptions as possible,
Proposition~\ref{prop-downnecessary} 
also holds for a class of languages
that are \emph{not} (strictly) downward.

\begin{proposition}
\label{prop-downnecessary}

Let $k\geq 1$, and let $E$ be a set of nonbasic operations containing
set difference (``$-$'').  Consider the language $\BL(E)$ or
$\CL(E)$. Assume that, in this language, 
first projection (``$\pi_1$'') can be expressed, as well as
selection on at least $m$
children satisfying some condition (``$\ch{m}(.)$''),
for all $m=1,\ldots,k$.  Let $D=(V,\textit{Ed},r,\lambda)$ be a
document, and let $v_1$ and $v_2$ be nodes of $D$. If $v_1\expequiv
v_2$, then $v_1\downequiv{k} v_2$.

\end{proposition}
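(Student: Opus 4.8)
The plan is to prove the contrapositive: if $v_1\not\downequiv{k}v_2$, then $v_1\not\expequiv v_2$. By symmetry of expression-equivalence it suffices to produce a single expression $e$ in the language with $e(D)(v_1)\neq\emptyset$ and $e(D)(v_2)=\emptyset$. The whole argument reduces to constructing, for every node $v\in V$, a \emph{characteristic expression} $e_v$ that is coreflexive (i.e.\ $e_v(D)\subseteq\varepsilon(D)$) and satisfies $(w,w)\in e_v(D)$ if and only if $w\downequiv{k}v$, for all $w\in V$. Granting such expressions, I take $e\ass e_{v_1}$: since $v_1\downequiv{k}v_1$ we get $e_{v_1}(D)(v_1)=\{v_1\}\neq\emptyset$, while $v_2\not\downequiv{k}v_1$ forces $(v_2,v_2)\notin e_{v_1}(D)$, and coreflexivity then gives $e_{v_1}(D)(v_2)=\emptyset$. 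This witnesses $v_1\not\expgeq v_2$ in the sense of Definition~\ref{def-ee}, hence $v_1\not\expequiv v_2$.

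I would build $e_v$ by induction on the height of $v$, which is well-defined because downward-$k$-equivalent nodes have equal height (Proposition~\ref{prop-kdownwardheight}), so children of $v$ sit strictly below $v$. Let $C_1,\dots,C_p$ be the distinct $\downequiv{k}$-classes occurring among the children of $v$; pick a child $c_i\in C_i$ and let $n_i\geq 1$ be the number of children of $v$ lying in $C_i$ (with $p=0$ in the leaf base case). By the induction hypothesis the $e_{c_i}$ are already defined and coreflexive. Set
$$g_v\ass\varepsilon-(e_{c_1}\cup\cdots\cup e_{c_p}),$$
and, for each $i$, let $t_i\ass\ch{k}(e_{c_i})$ when $n_i\geq k$ and $t_i\ass\ch{n_i}(e_{c_i})-\ch{n_i+1}(e_{c_i})$ when $n_i<k$ (here $n_i+1\leq k$, so $\ch{n_i+1}$ is available). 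Finally put
$$e_v\ass\hat{\lambda(v)}\cap t_1\cap\cdots\cap t_p\cap(\varepsilon-\ch{1}(g_v)).$$
Intuitively, $\hat{\lambda(v)}$ fixes the label; $t_i$ forces the number of children in class $C_i$ to match $v$'s up to the threshold $k$ (exactly $n_i$ when $n_i<k$, at least $k$ otherwise); and $\varepsilon-\ch{1}(g_v)$ forbids any child lying outside $C_1,\dots,C_p$. Note the leaf case $p=0$ collapses to $e_v=\hat{\lambda(v)}\cap(\varepsilon-\ch{1}(\varepsilon))$, selecting childless nodes of the right label, which grounds the induction.

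The correctness of $e_v$ is then checked against the three clauses of Definition~\ref{def-kdownwardequivalent}. For ``$\Leftarrow$'', if $w\downequiv{k}v$ then the labels agree, clause~2 puts every child of $w$ into some $C_i$ (so $w$ satisfies $\varepsilon-\ch{1}(g_v)$) and each $C_i$ is met (so each $t_i$ is satisfied), while clause~3, being exactly the equality $\min(|\bar w|,k)=\min(|\bar v|,k)$ of class sizes, is precisely what $t_i$ records. For ``$\Rightarrow$'', from $(w,w)\in e_v(D)$ one reads back clause~1 from $\hat{\lambda(v)}$; clause~2 forward from $\varepsilon-\ch{1}(g_v)$ and backward from the fact that each $t_i$ forces at least one child of $w$ in $C_i$; and clause~3 from the threshold-matching built into the $t_i$. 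This recovers $w\downequiv{k}v$.

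Two points deserve care, and I expect the genuine obstacle to lie there. First, the \emph{negative} conditions — ``exactly $n_i$ children'' and ``no child in another class'' — are inexpressible without set difference, so this is exactly where the hypothesis $-\in E$ (together with the availability of $\pi_1$ and of $\ch{1},\dots,\ch{k}$) is consumed; one must verify the semantics of $\ch{m}$ from Table~\ref{tab-binops} interacts with the coreflexive arguments $e_{c_i}$ and $g_v$ as claimed. Second, membership in the \emph{core} language $\CL(E)$ must be argued, since there $\cap$ and $-$ may appear only inside $\pi_1(\cdot)$ or $\pi_2(\cdot)$. Because each $e_v$ is coreflexive, every boolean combination above can be enclosed in a single outer $\pi_1(\cdot)$ (and each $\ch{m}$ is applied to an already-$\pi_1$-wrapped coreflexive subexpression), so the construction stays within $\CL(E)$ as well as $\BL(E)$; making this rewriting explicit, while routine, is the fiddly part of the proof.
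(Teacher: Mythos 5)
Your proof is correct, but it takes a genuinely different route from the paper's. The paper never builds defining expressions for the $\downequiv{k}$-classes at this stage: it observes that $\downequiv{k}$ is the \emph{coarsest} equivalence relation satisfying the three closure conditions of Proposition~\ref{prop-coarse}, and then verifies that $\expequiv$ itself satisfies those conditions, obtaining the required distinguishing expressions abstractly---from Proposition~\ref{prop-notempty-empty} applied to non-expression-equivalent children, combined by finite intersections wrapped in $\pi_1(\cdot)$ and fed into $\ch{1}(.),\ldots,\ch{k}(.)$; the coarsest-relation property then immediately yields that $v_1\expequiv v_2$ implies $v_1\downequiv{k}v_2$, with no induction on the document. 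You instead prove the contrapositive by explicitly constructing, by induction on height, a coreflexive characteristic expression $e_v$ that defines the $\downequiv{k}$-class of each node $v$; your counting gadgets $t_i$ and the exclusion term $\varepsilon-\ch{1}(g_v)$ correctly encode clauses 2 and 3 of Definition~\ref{def-kdownwardequivalent}, and the $\pi_1$-wrapping of boolean combinations of coreflexive subexpressions (the same device the paper uses in its footnote) keeps the construction inside $\CL(E)$, where intersection can be replaced by $A\cap B=A-(A-B)$ since only ``$-$'' is assumed to lie in $E$. What your approach buys: it is self-contained (no appeal to Propositions~\ref{prop-coarse} or~\ref{prop-notempty-empty}), and it proves something strictly stronger along the way---your $e_v$ is essentially the separating expression of Lemma~\ref{lem-separationdown1}, which the paper can only obtain later as a consequence of Theorem~\ref{theo-downequivalent} together with Proposition~\ref{prop-notempty-empty}. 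What the paper's approach buys: given the machinery already in place it is considerably shorter, and it isolates the finite-intersection trick as the only place where expressions are actually manufactured, leaving all structural bookkeeping to the purely combinatorial Proposition~\ref{prop-coarse}.
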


\begin{proof}
Since expression-equivalence in the context of $\BL(E)$ implies
expression-equivalence in the context of $\CL(E)$, we may assume
without loss of generality that the language under consideration is $\CL(E)$.
To prove Proposition~\ref{prop-downnecessary}, it suffices to show
that expression-equivalence (``$\expequiv$'') satisfies the conditions
of Proposition~\ref{prop-coarse}.
\begin{enumerate}

\item If $v_1\expequiv v_2$, then $\lambda(v_1)=\lambda(v_2)$, for,
  otherwise, $\widehat{\lambda(v_1)}(D)(v_1)\neq\emptyset$, while
$\widehat{\lambda(v_1)}(D)(v_2)=\emptyset$, a contradiction.

\item If $v_1\expequiv v_2$ and $v_1$ is not a leaf, then $v_2$ is not a leaf
either, for, otherwise, $\ch{1}(\varepsilon)(D)(v_1)\neq\emptyset$, while
$ch_1(\varepsilon)(D)(v_2)=\emptyset$, a contradiction. Let $w_1$ be a child of
$v_1$, and let $w_2^1,\ldots,w_2^n$ be all children of $w_2$.
Suppose for the sake of contradiction that, for all $i=1,\ldots,n$,
$w_1\not\expequiv w_2^i$. Then, by Proposition~\ref{prop-notempty-empty},
there exists an expression $e_i$ in $\CL(E)$ such that
$e_i(D)(w_1)\not=\emptyset$ and $e_i(D)(w_2^i)=\emptyset$, for all
$i=1,\ldots,n$. Now, let $e\ass
\pi_1(e_1)\cap\ldots\cap\pi_1(e_n)$, which can be
expressed in $\CL(E).$\footnote{Let $f_1$ and $f_2$ be expressions in
  $\CL(E)$ such that $f_1(D)\subseteq\varepsilon(D)$ and
  $f_2(D)\subseteq\varepsilon(D)$. Then, $f_1\cap f_2$ can be
  expressed in $\CL(E)$ as $\pi_1(\varepsilon-
  \pi_1(\varepsilon-f_1)\cup\pi_1(\varepsilon-f_2))$.}
 Then, $\ch{1}(e)(D)(v_1)\neq\emptyset$ while
$\ch{1}(e)(D)(v_2)=\emptyset$, contradicting $v_1\expequiv v_2$. Hence,
there does exist a child $w_2$ of $v_2$ such that $w_1\expequiv w_2$. 
Of course, the same also goes with the roles of $v_1$ and $v_2$ reversed.

\item Finally, let $v_1$ and $v_2$ be non-leaf nodes such that
  $v_1\expequiv v_2$, and let $w_1$ and $w_2$ be children of $v_1$ and
  $v_2$, respectively, such that $w_1\expequiv w_2$. For $i=1,2$, let
  $\tilde{w}_i$ be the set of all siblings of $w_i$ (including $w_i$
  itself) that are expression-equivalent to $w_i$. As in the previous
  item, we can construct an expression $e$ in $\CL(E)$ such that
  $e(D)(w_1)\neq\emptyset$ (and hence $e(D)(w)\neq\emptyset$ for each
  node $w$ in $\tilde{w}_1$ or $\tilde{w_2}$)
  and $e(D)(w)=\emptyset$ for each sibling of $w_1$ not in $\tilde{w}_1$
  \emph{and} for each sibling of $w_2$ not in $\tilde{w}_2$. For the
  sake of contradiction, assume that $\min(|\tilde{w}_1|,k)\neq
  \min(|\tilde{w}_2|,k)$. Without loss of generality, assume that
  $\min(|\tilde{w}_1|,k)<\min(|\tilde{w}_2|,k)$. Hence, 
  $\min(|\tilde{w}_1|,k)=|\tilde{w}_1|$. Let $m\ass
  \min(|\tilde{w}_2|,k)$. Then, $\ch{m}(e)(D)(v_1)=\emptyset$, while
  $\ch{m}(e)(D)(v_2)\neq\emptyset$, contradicting $v_1\expequiv v_2$.
  We may thus conclude that $\min(|\tilde{w}_1|,k)=\min(|\tilde{w}_2|,k)$. 

\end{enumerate}
\end{proof}

Notice that the languages satisfying the statement of
Proposition~\ref{prop-downnecessary} need not contain any navigation
operations (``$\down$'' or ``$\up$''). Of course, in the context of
this Section, we are interested in languages in which downward navigation
(``$\down$'') is possible. Specializing 
Proposition~\ref{prop-downnecessary} to this case, 
we may thus conclude that downward-$k$-equivalence is a
necessary condition for expression-equivalence under a strictly
downward language containing first
projection (``$\pi_1$'') and set difference (``$-$''), provided 
selection on at least $m$ children satisfying some condition
(``$\ch{m}$'') for all $m=1,\ldots,k$ can be expressed.
\subsection{Characterization of expression equivalence}
\label{subsec-strictlydowncharacterization}

The languages containing downward navigation (``$\down$'') and
satisfying both Corollary~\ref{cor-strictlydown2} 
of Subsection~\ref{subsec-strictlydownsufficient}
and Proposition~\ref{prop-downnecessary} 
of Subsection~\ref{subsec-strictlydownnecessary} 
are \\ 
$\BL(\down,\pi_1,\ch{1}(.),\ldots,\ch{k}(.),-)$ and
$\CL(\down,\pi_1,\ch{1}(.),\ldots,\ch{k}(.),-)$.
We call these languages
the \emph{strictly downward XPath algebra with counting up to~$k$\/} and the
\emph{strictly downward core XPath algebra with counting up to~$k$\/},
respectively. Combining the aforementioned results, we get the following.

\begin{theorem}
\label{theo-downequivalent}

Let $k\geq 1$, and consider the strictly downward (core) XPath algebra with
counting up to~$k$.  Let $D=(V,\textit{Ed},r,\lambda)$ be a document,
and let $v_1$ and $v_2$ be nodes of $D$. Then $v_1\expequiv v_2$, if
and only if $v_1\downequiv{k} v_2$.

\end{theorem}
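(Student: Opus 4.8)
The plan is to obtain Theorem~\ref{theo-downequivalent} as an immediate corollary of the two halves already established in the preceding subsections. The statement is a biconditional: $v_1\expequiv v_2$ if and only if $v_1\downequiv{k} v_2$, with expression-equivalence taken with respect to the strictly downward (core) XPath algebra with counting up to~$k$, i.e.\ the language $\BL(\down,\pi_1,\ch{1}(.),\ldots,\ch{k}(.),-)$ or $\CL(\down,\pi_1,\ch{1}(.),\ldots,\ch{k}(.),-)$. The essential observation is that this specific language satisfies the hypotheses of \emph{both} the sufficiency result and the necessity result, so there is nothing left to do but verify the hypotheses and cite the two results.

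First I would establish the direction $v_1\downequiv{k} v_2 \Rightarrow v_1\expequiv v_2$. For this I appeal to Corollary~\ref{cor-strictlydown2}. I need to check that the set $E=\{\down,\pi_1,\ch{1}(.),\ldots,\ch{k}(.),-\}$ meets its hypotheses: $E$ contains neither upward navigation ``$\up$'', nor second projection ``$\pi_2$'', nor inverse ``$.^{-1}$'', and it contains no counting operator $\ch{m}(.)$ with $m>k$. All of these are evident from the explicit description of the language, so Corollary~\ref{cor-strictlydown2} applies directly and gives $v_1\expequiv v_2$.

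For the converse direction $v_1\expequiv v_2 \Rightarrow v_1\downequiv{k} v_2$, I invoke Proposition~\ref{prop-downnecessary}. Here I must verify that $E$ contains set difference ``$-$'' (it does), that first projection ``$\pi_1$'' can be expressed (it is in $E$), and that selection $\ch{m}(.)$ can be expressed for every $m=1,\ldots,k$ (all of these are in $E$ as well). Hence Proposition~\ref{prop-downnecessary} applies and yields $v_1\downequiv{k} v_2$. Combining the two implications gives the stated equivalence. The same argument covers the core language $\CL(E)$, since both Corollary~\ref{cor-strictlydown2} and Proposition~\ref{prop-downnecessary} are stated for $\BL(E)$ and $\CL(E)$ alike.

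There is no genuine obstacle in this theorem itself: all the real work has already been done in proving Proposition~\ref{prop-strictlydown} (with its supporting Lemma~\ref{lem-strictlydown1}) for the sufficiency side and Proposition~\ref{prop-downnecessary} for the necessity side. The only thing to be careful about is the bookkeeping of the operator set---making sure that the counting operators present are exactly those with index $\le k$, which is precisely the matching condition that lets both one-sided results fire simultaneously on the same language. In that sense the ``hardest'' part is merely confirming that this particular language is the common refinement of the two families of hypotheses, which is exactly how the language was chosen in Subsection~\ref{subsec-strictlydowncharacterization}.
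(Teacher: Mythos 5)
Your proof is correct and follows exactly the paper's own route: the paper states Theorem~\ref{theo-downequivalent} as the immediate combination of Corollary~\ref{cor-strictlydown2} (sufficiency) and Proposition~\ref{prop-downnecessary} (necessity), after observing that the language $\BL(\down,\pi_1,\ch{1}(.),\ldots,\ch{k}(.),-)$ (and its core variant) satisfies the hypotheses of both. Your hypothesis-checking is the same bookkeeping the paper relies on, so there is nothing to add.
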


A special case arises when $k=1$, since selection on at least one
child satisfying some condition (``$\ch{1}(.)$'') can be expressed in
terms of the other operations required by
Theorem~\ref{theo-downequivalent}, by Proposition~\ref{prop-counting}.
The languages we then obtain, $\BL(\down,\pi_1,-)$ and
$\CL(\down,\pi_1,-)$, are called the \emph{strictly downward XPath 
  algebra} and the \emph{strictly downward core XPath algebra}, respectively.
We have the following.

\begin{corollary}
\label{cor-downequivalent}

Consider the strictly downward (core) XPath algebra.  Let
$D=(V,\textit{Ed},r,\lambda)$ be a document, and let $v_1$ and $v_2$
be nodes of $D$. Then $v_1\expequiv v_2$, if and only if
$v_1\downequiv{1} v_2$.

\end{corollary}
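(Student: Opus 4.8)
The plan is to derive the corollary directly from Theorem~\ref{theo-downequivalent} by specializing it to $k=1$ and observing that the counting operator $\ch{1}(.)$ contributes no additional expressive power once $\down$ and $\pi_1$ are available. Concretely, Theorem~\ref{theo-downequivalent} with $k=1$ asserts that, for the languages $\BL(\down,\pi_1,\ch{1}(.),-)$ and $\CL(\down,\pi_1,\ch{1}(.),-)$, one has $v_1\expequiv v_2$ if and only if $v_1\downequiv{1} v_2$. It therefore suffices to show that these two languages have the same expressive power, at the document level, as $\BL(\down,\pi_1,-)$ and $\CL(\down,\pi_1,-)$, respectively.

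For this I would invoke Proposition~\ref{prop-counting}, which gives the identity $\ch{1}(e)(D)=\pi_1(\down/e)(D)$. Proceeding inductively from the innermost occurrences outward, every subexpression of the form $\ch{1}(e)$ can be replaced by $\pi_1(\down/e)$, yielding an equivalent expression in which $\ch{1}$ no longer occurs. The converse inclusion is immediate, since $\BL(\down,\pi_1,-)\subseteq\BL(\down,\pi_1,\ch{1}(.),-)$ and likewise for the core variants. Hence each pair of languages defines exactly the same binary relations on $D$.

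Since expression-equivalence (Definition~\ref{def-ee}) depends only on the family of relations $\{e(D)\mid e\in L\}$ definable in the language $L$ --- through the emptiness or nonemptiness of $e(D)(v_1)$ and $e(D)(v_2)$ --- two languages defining the same relations induce the same expression-equivalence relation on nodes. Consequently, $v_1\expequiv v_2$ with respect to $\BL(\down,\pi_1,-)$ (respectively, $\CL(\down,\pi_1,-)$) holds precisely when it holds with respect to $\BL(\down,\pi_1,\ch{1}(.),-)$ (respectively, $\CL(\down,\pi_1,\ch{1}(.),-)$), which by the specialized theorem is equivalent to $v_1\downequiv{1} v_2$.

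The one point requiring care, and the only genuine obstacle, is ensuring that the rewriting $\ch{1}(e)\mapsto\pi_1(\down/e)$ stays inside the \emph{core} language $\CL(\down,\pi_1,-)$, and not merely inside $\BL(\down,\pi_1,-)$. Here I would observe that composition is permitted at the top level of core expressions, so $\down/e$ is a legitimate expression whenever $e$ is, and $\pi_1$ applied to it is a legitimate projection; thus the substitution never introduces a set operation outside a projection and therefore respects the syntactic restrictions defining $\CL$. With this verification in place, the corollary follows.
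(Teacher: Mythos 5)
Your proposal is correct and follows essentially the same route as the paper, which justifies the corollary by specializing Theorem~\ref{theo-downequivalent} to $k=1$ and noting that $\ch{1}(.)$ is expressible from $\down$ and $\pi_1$ via Proposition~\ref{prop-counting} (the identity $\ch{1}(e)=\pi_1(\down/e)$). Your additional care about the inductive rewriting and about staying inside the core language merely spells out details the paper leaves implicit.
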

\subsection{Characterization of navigational expressiveness}
\label{subsec-strictlybp}

We shall now investigate the expressiveness of strictly downward
languages at the document level. In other words, we shall address the
question whether, given a document, we can characterize when a set of
pairs of nodes of that document is the result of some query in the
language under consideration applied to that document. Such type of
results are often referred to as BP-characterizations, after
Bancilhon~\cite{Bancilhon78} and Paredaens~\cite{Paredaens78} who
first proved such results for Codd's relational calculus and algebra,
respectively (cf.\ \cite{ChandraHarel}). 

We start by proving a converse to Proposition~\ref{prop-strictlydown}.

\begin{proposition}
\label{prop-strictlydownconverse}

Let $k\geq 1$, and let $E$ be a set of nonbasic operations containing
downward navigation (``$\down$'') and
set difference (``$-$'').  Consider the language $\BL(E)$ or
$\CL(E)$. Assume that, in this language, first projection
(``$\pi_1$'') can be expressed, as well as
selection on at least $m$
children satisfying some condition (``$\ch{m}(.)$''),
for all $m=1,\ldots,k$.  Let $D=(V,\textit{Ed},r,\lambda)$ be a
document, and let
$v_1$, $w_1$, $v_2$, and $w_2$ be nodes of $D$ such that $w_1$ is a
descendant of~$v_1$ and $w_2$ is a descendant of~$v_2$. Assume
furthermore that, for each expression~$e$ in the language, $(v_1,w_1)\in
e(D)$ if and only if $(v_2,w_2)\in e(D)$. Then
$(v_1,w_1)\pairs{\sigequiv}{\downequiv{k}}(v_2,w_2)$.

\end{proposition}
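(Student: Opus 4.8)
The plan is to verify the two requirements of Definition~\ref{def-pairsofnodes} for $(v_1,w_1)\pairs{\sigequiv}{\downequiv{k}}(v_2,w_2)$: that the signatures are congruent, and that every node $y_1$ on the path from $v_1$ to $w_1$ is downward-$k$-equivalent to its corresponding node $y_2$. The overall idea is to exploit the given pair-indistinguishability hypothesis, which is a statement in the pairs semantics, to extract node-level information, and then to appeal to the already-established necessity result, Proposition~\ref{prop-downnecessary}.

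First I would settle congruence. Since $w_1$ is a descendant of $v_1$, we have $\sig(v_1,w_1)=\down^n$, where $n$ is the length of the path from $v_1$ to $w_1$, and hence $(v_1,w_1)\in\down^n(D)$. As $\down^n$ is an expression of the language, the hypothesis yields $(v_2,w_2)\in\down^n(D)$; because $w_2$ is assumed to be a descendant of $v_2$, this forces the path from $v_2$ to $w_2$ to have length exactly $n$, so $\sig(v_2,w_2)=\down^n=\sig(v_1,w_1)$. By Proposition~\ref{prop-congruence}(\ref{congruence-1}), $(v_1,w_1)\sigequiv(v_2,w_2)$.

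The crux is the second requirement. Fix a node $y_1$ on the path from $v_1$ to $w_1$, say at distance $i$ from $v_1$ (so at distance $n-i$ from $w_1$), and let $y_2$ be its corresponding node, i.e., the unique node at distance $i$ below $v_2$ and $n-i$ above $w_2$. I would first show $y_1\expequiv y_2$, and for this it suffices, by Proposition~\ref{prop-notempty-empty}, to prove $y_1\expgeq y_2$. So let $f$ be any expression of the language with $f(D)(y_1)\neq\emptyset$, and consider the expression $e\ass\down^i/\pi_1(f)/\down^{n-i}$, which again belongs to the language. Navigating from $v_1$ down to $y_1$, testing $\pi_1(f)$ there, and then descending to $w_1$ shows $(v_1,w_1)\in e(D)$. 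By hypothesis, $(v_2,w_2)\in e(D)$, so there is a node $z$ at distance $i$ below $v_2$ with $f(D)(z)\neq\emptyset$ and with $w_2$ at distance $n-i$ below $z$. Since $D$ is a tree, the unique path of length $n$ from $v_2$ to $w_2$ passes through exactly one node at distance $i$ from $v_2$, namely $y_2$; hence $z=y_2$ and $f(D)(y_2)\neq\emptyset$. This establishes $y_1\expgeq y_2$, and therefore $y_1\expequiv y_2$.

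Finally, Proposition~\ref{prop-downnecessary} applies to the language (it contains $\pi_1$, $-$, and $\ch{m}$ for $m=1,\ldots,k$), so $y_1\expequiv y_2$ gives $y_1\downequiv{k}y_2$. As this holds for every $y_1$ on the path, both conditions of Definition~\ref{def-pairsofnodes} are met and $(v_1,w_1)\pairs{\sigequiv}{\downequiv{k}}(v_2,w_2)$. I expect the main obstacle to be the middle step: correctly packaging the node-level test $f$ into the composed expression $\down^i/\pi_1(f)/\down^{n-i}$ and arguing, from the acyclicity of the tree, that the witness $z$ produced on the $(v_2,w_2)$ side is forced to coincide with the corresponding node $y_2$. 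This is precisely where indistinguishability of the pair is converted into indistinguishability of its interior nodes.
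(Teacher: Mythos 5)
Your proof is correct and takes essentially the same route as the paper's: the identical sandwich expression $\sig(v_1,y_1)/\pi_1(f)/\sig(y_1,w_1)$ (your $\down^i/\pi_1(f)/\down^{n-i}$) converts the pair-level hypothesis into expression-equivalence of corresponding interior nodes, after which Proposition~\ref{prop-downnecessary} gives $y_1\downequiv{k} y_2$. The only cosmetic differences are that the paper derives $y_1\expequiv y_2$ by running the argument symmetrically in both directions instead of appealing to Proposition~\ref{prop-notempty-empty}, and it leaves implicit the tree-uniqueness argument forcing the witness node to coincide with $y_2$, which you spell out.
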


\begin{proof}
First notice that, by assumption, $(v_2,w_2)\in\sig(v_1,w_1)(D)$, and
vice versa. Hence, $(v_1,w_1)\sigequiv (v_2,w_2)$. Let $y_1$ be a node
on the path from $v_1$ to~$w_1$, and let $y_2$ be the corresponding
node on the path from $v_2$ to $w_2$. By construction,
$(v_1,y_1)\sigequiv (v_1,y_2)$ and $(y_1,w_1)\sigequiv
(y_2,w_2)$. Now, let $f$ be any expression in the language such that
$f(D)(y_1)\neq\emptyset$. Then, $(y_1,y_1)\in \pi_1(f)(D)$. Let
$e\ass \sig(v_1,y_1)/\pi_1(f)/\sig(y_1,w_1)$. By construction,
$(v_1,w_1)\in e(D)$. Hence, by assumption, $(v_2,w_2)\in e(D)$, which
implies $(y_2,y_2)\in \pi_2(f)(D)$ or $f(D)(y_2)\neq\emptyset$. The same
holds vice versa, and we may thus conclude that $y_1\expequiv y_2$, and,
hence, by Proposition~\ref{prop-downnecessary}, $y_1\downequiv{k} y_2$.
We may thus conclude that $(v_1,w_1)\downequiv{k}(v_2,w_2)$.
\end{proof}

Combining Propositions~\ref{prop-strictlydown}
and~\ref{prop-strictlydownconverse}, we obtain the following.

\begin{corollary}
\label{cor-strictlydown}

Let $k\geq 1$, and consider the strictly downward (core) XPath algebra with
counting up to~$k$. Let
$D=(V,\textit{Ed},r,\lambda)$ be a document, and let $v_1$, $w_1$,
$v_2$, and $w_2$ be nodes of $D$ such that $w_1$ is a descendant of
$v_1$ and $w_2$ is a descendant of $v_2$. 
Then, the property that, for each expression $e$ in the
language under consideration, $(v_1,w_1)\in e(D)$ if and only if
$(v_2,w_2)\in e(D)$ is equivalent to the property
$(v_1,w_1)\pairs{\sigequiv}{\downequiv{k}} (v_2,w_2)$.

\end{corollary}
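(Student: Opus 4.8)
The plan is to recognize that Corollary~\ref{cor-strictlydown} is a clean bidirectional packaging of two results already established, so the proof is essentially a bookkeeping argument combining a forward implication with its converse. The statement asserts an equivalence between two properties of the pairs $(v_1,w_1)$ and $(v_2,w_2)$ (both ancestor-descendant pairs): on the one hand, the \emph{expression-indistinguishability} property that $(v_1,w_1)\in e(D)$ iff $(v_2,w_2)\in e(D)$ for every expression $e$ in the strictly downward (core) XPath algebra with counting up to $k$; and on the other hand, the structural property $(v_1,w_1)\pairs{\sigequiv}{\downequiv{k}}(v_2,w_2)$. The key observation is that each direction of this equivalence has already been proven as a standalone proposition, so the corollary is proved simply by citing both.

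First I would establish the direction from the structural property to expression-indistinguishability. Assuming $(v_1,w_1)\pairs{\sigequiv}{\downequiv{k}}(v_2,w_2)$, Proposition~\ref{prop-strictlydown} applies directly (since the language under consideration is exactly $\BL(E)$ or $\CL(E)$ for the set $E$ described there, namely downward navigation, first projection, counting operations $\ch{m}$ only for $m\le k$, and set difference, and its positive/core relatives), yielding that $(v_1,w_1)\in e(D)$ iff $(v_2,w_2)\in e(D)$ for every expression $e$ in the language. I would need to remark briefly that Proposition~\ref{prop-strictlydown} is stated for $\BL(E)$ but carries over to $\CL(E)$, since every core expression is in particular an expression whose behavior on ancestor-descendant pairs is governed by the same structural induction.

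Then I would establish the converse direction by invoking Proposition~\ref{prop-strictlydownconverse}. Assuming that $(v_1,w_1)\in e(D)$ iff $(v_2,w_2)\in e(D)$ for every expression $e$ in the language, Proposition~\ref{prop-strictlydownconverse} applies verbatim (the hypotheses there—a language containing $\down$, $-$, $\pi_1$, and $\ch{m}$ for all $m=1,\ldots,k$—are precisely those defining the strictly downward (core) XPath algebra with counting up to $k$), and concludes $(v_1,w_1)\pairs{\sigequiv}{\downequiv{k}}(v_2,w_2)$. Combining the two implications gives the claimed equivalence.

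The only genuine subtlety, and hence the step I expect to require the most care, is verifying that the language named in the corollary really does satisfy the hypotheses of \emph{both} cited propositions simultaneously. Proposition~\ref{prop-strictlydown} requires that $\ch{m}$ \emph{not} be expressible for $m>k$ (so that the inductive argument on the counting operator never overshoots $k$), while Proposition~\ref{prop-strictlydownconverse} requires that $\ch{m}$ \emph{be} expressible for every $m\le k$ (so that the distinguishing expressions needed in its proof can be built). The strictly downward (core) XPath algebra with counting up to $k$, namely $\BL(\down,\pi_1,\ch{1}(.),\ldots,\ch{k}(.),-)$ and its core counterpart, is engineered to meet both constraints at once—it has exactly the counting operators $\ch{1},\ldots,\ch{k}$ and no more—so the two propositions dovetail. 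Once this compatibility is confirmed, the corollary follows immediately, with no further computation.
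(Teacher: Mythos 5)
Your proposal is correct and follows exactly the paper's own route: the paper obtains Corollary~\ref{cor-strictlydown} precisely by combining Proposition~\ref{prop-strictlydown} (congruence implies expression-indistinguishability) with Proposition~\ref{prop-strictlydownconverse} (the converse), and your check that the strictly downward (core) XPath algebra with counting up to~$k$ satisfies the hypotheses of both propositions simultaneously is the same bookkeeping the paper leaves implicit.
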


In order to state our first BP-result, we need the following two lemmas.

\begin{lemma}
\label{lem-separationdown1}

Let $k\geq 1$. Let $D=(V,\textit{Ed},r,\lambda)$ be a document, and
let $v_1$ be a node of $D$.  There exists an expression $e_{v_1}$ in
the strictly downward core XPath algebra with counting up to~$k$ such that, for
each node $v_2$ of $D$, $e_{v_1}(D)(v_2)\neq\emptyset$ if and only if
$v_1\downequiv{k} v_2$. 

\end{lemma}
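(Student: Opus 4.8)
I need to construct, for a fixed node $v_1$, a single expression $e_{v_1}$ in $\CL(\down,\pi_1,\ch{1}(.),\ldots,\ch{k}(.),-)$ (the strictly downward core XPath algebra with counting up to $k$) such that $e_{v_1}(D)(v_2)\neq\emptyset$ exactly when $v_1\downequiv{k} v_2$.

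So the plan is to build an expression that "recognizes" the downward-$k$-equivalence class of $v_1$. The natural approach is recursion on the height of $v_1$, mirroring the recursive structure of Definition~\ref{def-kdownwardequivalent}. Let me think about what such an expression needs to encode.

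The definition of $v_1 \downequiv{k} v_2$ has three conditions: same label, matching children up to $\downequiv{k}$, and matching multiplicities up to $k$. I need to capture all three. Let me think about how to build the expression.

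For the base case (leaf), $v_1 \downequiv{k} v_2$ iff $v_2$ is a leaf with the same label. So I want $e_{v_1} = \hat{\lambda(v_1)} - \pi_1(\down)$ (label matches AND no children). Actually $\pi_1(\down)$ selects nodes with at least one child, so $\hat{\lambda(v_1)} - \pi_1(\down)$ selects same-labeled leaves. Good — but wait, I need $e_{v_1}(D)(v_2) \neq \emptyset$, and these expressions produce pairs $(v,v)$, so nonemptiness of the result applied at $v_2$ means $v_2$ is in the selection. That works.

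For the recursive case, suppose $v_1$ has children $c_1,\ldots,c_p$ partitioned into downward-$k$-equivalence classes. By induction I have expressions $e_{c_i}$ recognizing each class. The node $v_2$ matches $v_1$ iff: (a) same label; (b) for each class of children of $v_1$, $v_2$ has a child in that class and no child of $v_2$ lies outside all these classes — this requires checking both "$v_2$ has $\geq 1$ child in class $C$" (via $\ch{1}(e_C)$) and "$v_2$ has no child that fails to be $\downequiv{k}$ to any $c_i$"; and (c) the multiplicity condition via $\ch{m}(e_{C})$ with the right $m \leq k$.

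**Structure of the plan.** I proceed by induction on the height of $v_1$. First I would establish the leaf base case as above. For the inductive step, let $C_1,\ldots,C_s$ be the distinct downward-$k$-equivalence classes among the children of $v_1$, with class $C_j$ represented by some child $c_j$ and having multiplicity $m_j = \min(|C_j|, k)$ (the number of children of $v_1$ in that class, capped at $k$). By induction each $C_j$ is recognized by an expression $e_{c_j}$. Then I set
$$ e_{v_1} \ass \hat{\lambda(v_1)} \;\cap\; \bigcap_{j=1}^{s}\ch{m_j}(e_{c_j}) \;-\; \pi_1\!\Big(\down \,\big/\, \big(\varepsilon - \textstyle\bigcup_{j=1}^s \pi_1(e_{c_j})\big)\Big), $$
or an equivalent expression compiled into the core fragment. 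The three conjuncts encode precisely the three conditions of Definition~\ref{def-kdownwardequivalent}: the label check, the requirement that $v_2$ has at least $m_j$ children in each class $C_j$ (which forces, when $m_j<k$, that the count matches by combination with the last term, and when $m_j=k$ gives the capped equality), and the final subtracted term which removes any $v_2$ possessing a child outside all classes $C_1,\ldots,C_s$. Since intersection and the counting operators are available (or expressible) in the core language with $\ch{\cdot}$ and $-$, and since $\pi_1, \down$ are present, this is a legitimate expression of the claimed fragment.

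**Correctness and the main obstacle.** I would verify both directions. If $v_1\downequiv{k} v_2$, then by Proposition~\ref{prop-coarse}-type reasoning each condition of the definition holds at $v_2$, so $v_2$ survives every conjunct and survives the subtraction (it has no "foreign" child, since every child of $v_2$ is $\downequiv{k}$ to some child of $v_1$); hence $(v_2,v_2)\in e_{v_1}(D)$. Conversely, if $v_2$ survives, the label matches, the counting conjuncts force at least $m_j$ children of $v_2$ in each class and the subtraction forbids foreign children, and I must argue these together give exactly the matching-and-multiplicity conditions of the definition; here I lean on the induction hypothesis that $e_{c_j}$ recognizes $C_j$ and that distinct classes are recognized by expressions with disjoint supports. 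The main obstacle will be the multiplicity bookkeeping in the converse direction: I must carefully ensure that "$\geq m_j$ children in class $C_j$ for every $j$" together with "no foreign children" yields $\min(|\bar w_1|,k)=\min(|\bar w_2|,k)$ for every relevant sibling-class — in particular handling the capped case $m_j=k$ versus the exact case $m_j<k$ separately, exactly as in the two-case analysis of Proposition~\ref{prop-coarse}. Once that combinatorial accounting is done, Proposition~\ref{prop-coarse} (with "$\equiv$" taken to be $\downequiv{k}$ itself) guarantees the recognized relation coincides with downward-$k$-equivalence. I would also remark that the expression lives in the core language because every difference and intersection occurs inside a $\pi_1$ (or is eliminable as in the footnote to Proposition~\ref{prop-downnecessary}), so the construction stays within $\CL(\down,\pi_1,\ch{1}(.),\ldots,\ch{k}(.),-)$ as required.
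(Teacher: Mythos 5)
Your construction has a genuine gap in the converse direction, and it sits exactly at the point you flag as the ``main obstacle'': the expression as written enforces no \emph{upper} bound on multiplicities, and the combination of lower bounds with the ``no foreign children'' subtraction cannot supply one. Concretely, take $k=3$, let $v_1$ have label $a$ and exactly two children, both $b$-labeled leaves, and let $v_2$ have label $a$ and five children, all $b$-labeled leaves. Here there is a single child class $C_1$ with $m_1=\min(2,3)=2$, and by your base case $e_{c_1}$ recognizes the $b$-labeled leaves. Then $v_2$ passes every part of your expression: the label test, the counting test $\ch{2}(e_{c_1})$ (since $5\geq 2$), and the subtracted term (every child of $v_2$ lies in $C_1$, so $v_2$ has no foreign child). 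Yet $v_1\not\downequiv{3}v_2$, because for corresponding children $\min(|\bar{w}_1|,3)=2\neq 3=\min(|\bar{w}_2|,3)$, violating the third condition of Definition~\ref{def-kdownwardequivalent}. So the claim that ``at least $m_j$ children in each class'' plus ``no foreign children'' yields the multiplicity condition is false: a surplus of children \emph{inside} a class of size $<k$ is invisible to your expression. The repair is straightforward: additionally subtract $\bigcup_{j\,:\,m_j<k}\ch{m_j+1}(e_{c_j})$, which is legal in the fragment because $m_j+1\leq k$ whenever $m_j<k$; with this extra term a surviving $v_2$ has exactly $m_j$ children in class $C_j$ when $m_j<k$ and at least $k$ of them when $m_j=k$, and the converse direction then goes through by the induction hypothesis.

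For comparison, the paper's proof avoids any explicit construction. It invokes Theorem~\ref{theo-downequivalent} (expression equivalence coincides with $\downequiv{k}$ for this fragment) together with Proposition~\ref{prop-notempty-empty} to obtain, for each node $w$ with $v_1\not\downequiv{k}w$, a separating expression $f_{v_1,w}$ satisfying $f_{v_1,w}(D)(v_1)\neq\emptyset$ and $f_{v_1,w}(D)(w)=\emptyset$, and then sets $e_{v_1}\ass\pi_1\bigl(\bigcap_{w}\pi_1(f_{v_1,w})\bigr)$, a finite intersection since $V$ is finite. Your inductive construction, once corrected as above, is a legitimate and more explicit alternative, but as stated the key step fails.
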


\begin{proof}
Let $w$ be any node of $D$ such that $v_1\not\downequiv{k} w$. By
Theorem~\ref{theo-downequivalent}, $v_1\not\expequiv w$. By
Proposition~\ref{prop-notempty-empty}, there exists an expression
$f_{v_1,w}$ in the strictly downward core XPath algebra with counting
up to~$k$ such that
$f_{v_1,w}(D)(v_1)\neq\emptyset$ and 
$f_{v_1,w}(D)(w)=\emptyset$. Now consider the expression
$$e_{v_1}\ass \pi_1\left(\bigcap_{w\in V\ \&\ v_1\not\downequiv{k} w}
\pi_1(f_{v_1,w})\right),$$ 
which is also in the strictly downward core XPath algebra with
counting up to~$k$. By construction,
$e_{v_1}(D)(v_1)\neq\emptyset$. Now consider a node $v_2$ of $D$. If
$v_1\downequiv{k} v_2$, then, by Theorem~\ref{theo-downequivalent},
$v_1\expequiv v_2$. Hence, by definition,
$e_{v_1}(D)(v_2)\neq\emptyset$. If, on the other hand,
$v_1\not\downequiv{k} v_2$, then, by construction,
$e_{v_1}(D)(v_2)=\emptyset$.
\end{proof}

\begin{lemma}
\label{lem-separationdown2}

Let $k\geq 1$.  Let $D=(V,\textit{Ed},r,\lambda)$ be a document, and
let $v_1$ and $w_1$ be nodes of $D$ such that $w_1$ is a descendant
of $v_1$. There exists an expression $e_{v_1,w_1}$ in the strictly downward
core XPath algebra with counting up to~$k$ such that, for all nodes
$v_2$ and $w_2$ of $D$ with $w_2$ a descendant of $v_2$, $(v_2,w_2)\in
e_{v_1,w_1}(D)$ if and only if
$(v_1,w_1)\pairs{\sigequiv}{\downequiv{k}} (v_2,w_2)$.

\end{lemma}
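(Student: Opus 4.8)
The plan is to construct $e_{v_1,w_1}$ as a composition that traces the unique downward path from $v_1$ to $w_1$, placing at each vertex a node test that isolates that vertex's downward-$k$-equivalence class, as furnished by Lemma~\ref{lem-separationdown1}.

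First I would record the path. Since $w_1$ is a descendant of $v_1$, we have $\sig(v_1,w_1)=\down^n$ for some $n\geq 0$, and the unique path from $v_1$ to $w_1$ is $v_1=x_0,x_1,\dots,x_n=w_1$ with each $x_{i+1}$ a child of $x_i$. For each $i$, Lemma~\ref{lem-separationdown1} supplies an expression $e_{x_i}$ in the strictly downward core XPath algebra with counting up to~$k$ such that $e_{x_i}(D)(z)\neq\emptyset$ if and only if $x_i\downequiv{k} z$; consequently $\pi_1(e_{x_i})(D)=\{(z,z)\mid x_i\downequiv{k} z\}$. I would then set
$$e_{v_1,w_1}\ass \pi_1(e_{x_0})/\down/\pi_1(e_{x_1})/\down/\cdots/\down/\pi_1(e_{x_n}),$$
reading $e_{v_1,w_1}\ass\pi_1(e_{v_1})$ when $n=0$. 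This expression again lies in the strictly downward core XPath algebra with counting up to~$k$, since $\pi_1$, $\down$, and composition are all available there.

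The verification then amounts to unfolding the composition: $(v_2,w_2)\in e_{v_1,w_1}(D)$ holds if and only if there exist nodes $z_0,\dots,z_n$ with $z_0=v_2$, $z_n=w_2$, each $z_{i+1}$ a child of $z_i$, and $x_i\downequiv{k} z_i$ for every $i$. Because the path between $v_2$ and its descendant $w_2$ in the tree is unique, these $z_i$ are forced to be exactly the nodes on that path. Hence membership says precisely that the path from $v_2$ to $w_2$ has length $n$---equivalently $\sig(v_2,w_2)=\down^n=\sig(v_1,w_1)$, so that $(v_1,w_1)\sigequiv(v_2,w_2)$---and that every $x_i$ is downward-$k$-equivalent to its corresponding node $z_i$. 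By Definition~\ref{def-pairsofnodes}, together with Proposition~\ref{prop-congissub} (since all pairs here are ancestor--descendant pairs), this is exactly the condition $(v_1,w_1)\pairs{\sigequiv}{\downequiv{k}}(v_2,w_2)$.

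The step requiring the most care is identifying the existentially quantified intermediate nodes $z_i$ produced by the composition with the canonically defined corresponding nodes of Definition~\ref{def-pairsofnodes}. This identification rests on uniqueness of paths in a tree and on the observation that, for ancestor--descendant pairs, $\sig(v_1,x_i)=\down^i$ pins $z_i$ down as the unique descendant of $v_2$ at distance $i$ along the path to $w_2$; with this in hand, both implications follow directly from the unfolded composition.
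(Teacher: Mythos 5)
Your proof is correct and follows essentially the same route as the paper: the expression $\pi_1(e_{x_0})/\down/\pi_1(e_{x_1})/\down/\cdots/\down/\pi_1(e_{x_n})$, assembled from the class-isolating expressions of Lemma~\ref{lem-separationdown1} along the unique path from $v_1$ to $w_1$, is exactly the paper's construction. The only (harmless) difference is in the ``congruence implies membership'' direction, which the paper obtains by citing Corollary~\ref{cor-strictlydown}, whereas you verify it directly by unfolding the composition and using uniqueness of paths in a tree.
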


\begin{proof}
From Lemma~\ref{lem-separationdown1}, we know that, for node $y_1$ of $D$,
there exists an expression $e_{y_1}$ in the strictly downward core
XPath algebra with counting up to~$k$
such that, for each node $y_2$ of $D$,
$e_{y_1}(D)(y_2)\neq\emptyset$ if and only if $y_1\downequiv{k}
y_2$. Now, let $v_1$ and $w_1$ be nodes of $D$ such that $w_1$ is a
descendant of $v_1$, and let $v_1=y_{11},\ldots,y_{1n}=w_1$ be the
path from $v_1$ to $w_1$ in $D$. Define
$$e_{v_1,w_1}\ass
\pi_1(e_{y_{11}})/\down/\pi_1(e_{y_{12}})/\ldots\down/\pi_1(e_{y_{1n}}),$$
which is also in the strictly downward core XPath algebra with
counting up to~$k$. By
construction, $(v_1,w_1)\in e_{v_1,w_1}(D)$. 
Let $v_2$ and $w_2$ be nodes of $D$ such
that $w_2$ is a descendant of $v_2$. If
$(v_1,w_1)\pairs{\sigequiv}{\downequiv{k}} (v_2,w_2)$, then, by
Corollary~\ref{cor-strictlydown}, $(v_2,w_2)\in
e_{v_1,w_1}(D)$. Conversely, if $(v_2,w_2)\in e_{v_1,w_1}(D)$,
then, by construction, $(v_1,w_1)\sigequiv (v_2,w_2)$.  Thus, let
$v_2=y_{21},\ldots,y_{2n}=w_2$ be the path from $v_2$ to $w_2$ in $D$.
Again by construction, it follows that, for $j=1,\ldots,n$,
$e_{y_{1j}}(D)(y_{2j})\neq\emptyset$, or, equivalently, that
$y_{1j}\downequiv{k} y_{2j}$. Hence, $(v_1,w_1)\downequiv{k}
(v_2,w_2)$.
\end{proof}

We are now ready to state the actual result.

\begin{theorem}
\label{theo-strictlydown-bp}

Let $k\geq 1$.
Let $D=(V,\textit{Ed},r,\lambda)$ be a document, and
let $R\subseteq V\times V$. Then, there exists an expression $e$ in the
strictly downward (core) XPath algebra with counting up to~$k$ such that
$e(D)=R$ if and only if, 
\begin{enumerate}

\item for all $v,w\in V$, $(v,w)\in R$ implies $w$ is a descendant
  of~$v$; and, 

\item for all $v_1,w_1,v_2,w_2\in V$ with $w_1$ a descendant of $v_1$,
  $w_2$ a descendant of $v_2$, and
  $(v_1,w_1)\pairs{\sigequiv}{\downequiv{k}} (v_2,w_2)$, 
$(v_1,w_1)\in R$ implies $(v_2,w_2)\in R$. 

\end{enumerate}
\end{theorem}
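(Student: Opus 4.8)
The plan is to prove both implications, with essentially all of the technical burden already discharged by the preceding lemmas, so that the theorem itself reduces to a clean assembly.

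For the ``only if'' direction, I would suppose $e$ is an expression in the strictly downward (core) XPath algebra with counting up to~$k$ and $R=e(D)$. Condition~1 is immediate from strict downwardness: for every node $v$, each node of $e(D)(v)$ is a descendant of $v$, so every pair of $e(D)$ is an ancestor-descendant pair. Condition~2 is exactly Proposition~\ref{prop-strictlydown}: whenever $w_1$ is a descendant of $v_1$, $w_2$ is a descendant of $v_2$, and $(v_1,w_1)\pairs{\sigequiv}{\downequiv{k}}(v_2,w_2)$, membership of $(v_1,w_1)$ in $e(D)$ forces membership of $(v_2,w_2)$; hence $R$ is closed under $\pairs{\sigequiv}{\downequiv{k}}$.

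For the ``if'' direction, I would assume $R$ satisfies Conditions~1 and~2 and build $e$ directly from the separating expressions of Lemma~\ref{lem-separationdown2}. If $R=\emptyset$, take $e\ass\emptyset$. Otherwise, for each $(v,w)\in R$, which by Condition~1 has $w$ a descendant of $v$, Lemma~\ref{lem-separationdown2} supplies an expression $e_{v,w}$ in the language such that, for all $v',w'$ with $w'$ a descendant of $v'$, $(v',w')\in e_{v,w}(D)$ if and only if $(v,w)\pairs{\sigequiv}{\downequiv{k}}(v',w')$. Since $V$ is finite, $R$ is finite, and I set
$$e\ass\bigcup_{(v,w)\in R}e_{v,w},$$
again an expression in the strictly downward (core) XPath algebra with counting up to~$k$. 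To check $R\subseteq e(D)$, each $(v,w)\in R$ satisfies $(v,w)\pairs{\sigequiv}{\downequiv{k}}(v,w)$ by reflexivity, so $(v,w)\in e_{v,w}(D)\subseteq e(D)$. To check $e(D)\subseteq R$, any pair in $e(D)$ lies in some $e_{v,w}(D)$ and is therefore an ancestor-descendant pair (the language is strictly downward); writing it as $(v',w')$ with $w'$ a descendant of $v'$, the defining property of $e_{v,w}$ gives $(v,w)\pairs{\sigequiv}{\downequiv{k}}(v',w')$, and Condition~2 then yields $(v',w')\in R$.

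The substance of the result really lives in Lemmas~\ref{lem-separationdown1} and~\ref{lem-separationdown2}, which manufacture expressions isolating a single $\downequiv{k}$-class and a single $\pairs{\sigequiv}{\downequiv{k}}$-class; with these in hand the theorem poses no further difficulty. The only points requiring care are the degenerate case $R=\emptyset$ and the observation that strict downwardness guarantees every pair produced by $e$ is an ancestor-descendant pair, so that Lemma~\ref{lem-separationdown2}'s characterization, stated only for descendant pairs, in fact applies to all of $e(D)$.
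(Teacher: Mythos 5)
Your proof is correct and follows essentially the same route as the paper's: both directions are discharged exactly as in the paper, with Condition~1 from strict downwardness, Condition~2 from Proposition~\ref{prop-strictlydown} (the paper cites Corollary~\ref{cor-strictlydown}, which packages the same fact), and the ``if'' direction built as the union $\bigcup_{(v,w)\in R}e_{v,w}$ of the separating expressions from Lemma~\ref{lem-separationdown2}. Your explicit handling of $R=\emptyset$ and your remark that strict downwardness is what licenses applying Lemma~\ref{lem-separationdown2} (stated only for ancestor--descendant pairs) to all of $e(D)$ are minor points of added care, not a different argument.
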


\begin{proof}
To see the ``only if,'' it suffices to notice that the first condition
follows from the downward character of the language, and the second
from Corollary~\ref{cor-strictlydown}.
The remainder of the proof concerns the ``if.'' From
Lemma~\ref{lem-separationdown2}, we know that, for all nodes $v_1$ and
$w_1$ of $D$ such that $w_1$ is a descendant of $v_1$, there exists an
expression $e_{v_1,w_1}$ in $\CL(E)$ such that, for all nodes $v_2$
and $w_2$ of $D$, $(v_2,w_2)\in e_{v_1,w_1}(D)$ if and only if
$(v_1,w_1)\pairs{\sigequiv}{\downequiv{k}} (v_2,w_2)$. Now consider the
expression 
$$e\ass \bigcup_{(v_1,w_1)\in R}e_{v_1,w_1}.$$
This expression, which is well defined because $(v_1,w_1)\in R$
by assumption implies that $w_1$ is a descendant of $v_1$, is also in
$\CL(E)$ (and hence also in $\BL(E)$). It remains to show that $e(D)=R$.
Clearly, $R\subseteq e(D)$. We prove the reverse inclusion. Thereto,
let $v_2$ and $w_2$ be nodes such that $(v_2,w_2)\in e(D)$. By
construction, there exist nodes $v_1$ and
$w_1$ in $D$ such that $w_1$ is a descendant of $v_1$ and
$(v_2,w_2)\in e_{v_1,w_1}(D)$. Hence,
$(v_1,w_1)\pairs{\sigequiv}{\downequiv{k}} (v_2,w_2)$. But then, by
assumption, also $(v_2,w_2)\in R$. So, $e(D)\subseteq R$. 
\end{proof}

As before, we can specialize Theorem~\ref{theo-strictlydown-bp} to the
strictly downward (core) XPath algebra.

\begin{corollary}
\label{cor-downXPath-bp}

Let $D=(V,\textit{Ed},r,\lambda)$ be a document, and let $R\subseteq
V\times V$. There exists an expression $e$ in the strictly downward (core)
XPath algebra such that $e(D)=R$ if and only if,
\begin{enumerate}

\item for all $v,w\in V$, $(v,w)\in R$ implies $w$ is a descendant
  of~$v$;

\item for all $v_1,w_1,v_2,w_2\in V$ with $w_1$ a descendant of $v_1$,
  $w_2$ a descendant of $v_2$, and $(v_1,w_1)\downequiv{1} (v_2,w_2)$,
  $(v_1,w_1)\in R$ implies $(v_2,w_2)\in R$. 

\end{enumerate}
\end{corollary}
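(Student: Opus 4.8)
The plan is to obtain Corollary~\ref{cor-downXPath-bp} as the instance $k=1$ of Theorem~\ref{theo-strictlydown-bp}. The only thing that needs justification is that, on a fixed document $D$, the strictly downward (core) XPath algebra and the strictly downward (core) XPath algebra with counting up to $k=1$ define the very same binary relations; once this is in hand, the biconditional of the theorem transfers verbatim, since its two conditions for $k=1$ are precisely conditions~1 and~2 of the corollary.

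First I would verify the language equivalence $\BL(\down,\pi_1,-) = \BL(\down,\pi_1,\ch{1}(.),-)$, and likewise for the core variants $\CL$. One inclusion is immediate, as $\BL(\down,\pi_1,-)$ is syntactically contained in $\BL(\down,\pi_1,\ch{1}(.),-)$. For the reverse inclusion the key ingredient is Proposition~\ref{prop-counting}, item~(1), which gives $\ch{1}(e)(D)=\pi_1(\down/e)(D)$; thus every subexpression of the form $\ch{1}(e)$ may be replaced by $\pi_1(\down/e)$. A routine structural induction then rewrites any expression of $\BL(\down,\pi_1,\ch{1}(.),-)$ into an equivalent expression of $\BL(\down,\pi_1,-)$. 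For the core case one must additionally check that this rewriting respects the restriction that set difference occurs only inside $\pi_1(\cdot)$ subexpressions, but since the substitution $\ch{1}(e)\mapsto\pi_1(\down/e)$ introduces no new Boolean operator, that restriction is automatically preserved.

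With the two algebras shown to define the same relations on $D$, I would simply invoke Theorem~\ref{theo-strictlydown-bp} at $k=1$. Its first condition is literally the first condition of the corollary, and its second condition, phrased as $(v_1,w_1)\pairs{\sigequiv}{\downequiv{1}}(v_2,w_2)$, is exactly the relation abbreviated $(v_1,w_1)\downequiv{1}(v_2,w_2)$ in the corollary's statement. Hence a relation $R\subseteq V\times V$ is definable in the strictly downward (core) XPath algebra if and only if conditions~1 and~2 hold, as claimed. I do not anticipate any genuine obstacle: the substantive content is inherited entirely from Theorem~\ref{theo-strictlydown-bp}, and the sole extra step, the eliminability of $\ch{1}(.)$ in favor of $\down$ and $\pi_1$, is already supplied by Proposition~\ref{prop-counting}. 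The only point demanding a little care is the bookkeeping in the core case described above.
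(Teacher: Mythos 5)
Your proposal is correct and follows essentially the same route as the paper: the paper also obtains this corollary by specializing Theorem~\ref{theo-strictlydown-bp} to $k=1$, having already noted (after Theorem~\ref{theo-downequivalent}) that $\ch{1}(.)$ is expressible via Proposition~\ref{prop-counting} as $\pi_1(\down/e)$, so that the strictly downward (core) XPath algebra with counting up to~$1$ coincides with $\BL(\down,\pi_1,-)$, respectively $\CL(\down,\pi_1,-)$. Your extra bookkeeping for the core case and the reading of $(v_1,w_1)\downequiv{1}(v_2,w_2)$ as $(v_1,w_1)\pairs{\sigequiv}{\downequiv{1}}(v_2,w_2)$ are both consistent with the paper's conventions.
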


We can also recast Theorem~\ref{theo-strictlydown-bp} in terms of
node-level navigation.

\begin{theorem}
\label{theo-strictlydown-nodelevel}

Let $k\geq 1$.  Let $D=(V,\textit{Ed},r,\lambda)$ be a document, let
$v$ be a node of $D$, and let $W\subseteq V$. Then there exists an
expression $e$ in the strictly downward (core) XPath algebra with counting up
to~$k$ such that $e(D)(v)=W$ if and only if all nodes of~$W$ are
descendants of $v$, and, for all $w_1,w_2\in W$ with
$(v,w_1)\pairs{\sigequiv}{\downequiv{k}} (v,w_2)$, $w_1\in W$ implies
$w_2\in W$.

\end{theorem}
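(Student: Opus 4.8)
The plan is to derive Theorem~\ref{theo-strictlydown-nodelevel} from the global-view result Theorem~\ref{theo-strictlydown-bp} by specializing the binary relation to paths emanating from the fixed node~$v$. First I would establish the \emph{``only if''} direction: suppose $e(D)(v)=W$ for some expression $e$ in the strictly downward (core) XPath algebra with counting up to~$k$. Since the language is downward, every node in $e(D)(v)$ is a descendant of~$v$, so all nodes of~$W$ are descendants of~$v$. For the closure condition, take $w_1,w_2\in V$ with $w_1$ a descendant of~$v$ and $(v,w_1)\pairs{\sigequiv}{\downequiv{k}}(v,w_2)$ (note $w_2$ is then automatically a descendant of~$v$, since congruent signatures share the same shape and $v$ is an ancestor of~$w_1$). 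By Corollary~\ref{cor-strictlydown}, $(v,w_1)\in e(D)$ iff $(v,w_2)\in e(D)$, i.e.\ $w_1\in e(D)(v)$ iff $w_2\in e(D)(v)$; hence $w_1\in W$ implies $w_2\in W$.

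For the \emph{``if''} direction, I would define the binary relation $R\ass\{v\}\times W=\{(v,w)\mid w\in W\}$ and verify that $R$ meets the two hypotheses of Theorem~\ref{theo-strictlydown-bp}. Condition~(1) of that theorem holds because every $w\in W$ is a descendant of~$v$. The real content is checking condition~(2): given $v_1,w_1,v_2,w_2$ with $w_1$ a descendant of~$v_1$, $w_2$ a descendant of~$v_2$, $(v_1,w_1)\pairs{\sigequiv}{\downequiv{k}}(v_2,w_2)$, and $(v_1,w_1)\in R$, I must show $(v_2,w_2)\in R$. Since $(v_1,w_1)\in R$ forces $v_1=v$, and since the signatures are congruent with $v$ an ancestor of~$w_1$, the corresponding node~$v_2$ (the ancestor of~$w_2$ matching~$v_1$ under the congruence) must also equal~$v$; this is where I would invoke Proposition~\ref{prop-kequivalent} or the structure of Definition~\ref{def-pairsofnodes} to pin down that the begin points coincide. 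Concretely, $v_2$ is the node with $(v_2,v_2)\in\sig(v_1,v_1)(D)=\varepsilon(D)$ corresponding to $y_1=v_1$, forcing $v_2=v$. Then the node-level closure hypothesis applied to $w_1,w_2$ (both descendants of~$v$ with congruent downward-$k$ signatures from~$v$) yields $w_2\in W$, i.e.\ $(v,w_2)=(v_2,w_2)\in R$.

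Having verified both conditions, Theorem~\ref{theo-strictlydown-bp} supplies an expression~$e$ with $e(D)=R=\{v\}\times W$, whence $e(D)(v)=\{w\mid(v,w)\in R\}=W$, completing the ``if'' direction. The main obstacle, and the step I would handle most carefully, is the bookkeeping in condition~(2): showing that $\pairs{\sigequiv}{\downequiv{k}}$-congruence of the pairs $(v_1,w_1)$ and $(v_2,w_2)$ together with $v_1=v$ genuinely forces $v_2=v$, rather than merely $v_2\downequiv{k}v$. This hinges on the congruence relating \emph{corresponding} nodes along the matching paths (Definition~\ref{def-pairsofnodes}), so that the begin point~$v_1$ corresponds precisely to~$v_2$; because $v_1=v$ is a fixed single node and the path from~$v$ to~$w_1$ is an ancestor-descendant path, the corresponding begin point~$v_2$ is uniquely determined and equals~$v$. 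Everything else is a routine transcription of the global result into the local (node-semantics) language via Definition~\ref{def-efunction}.
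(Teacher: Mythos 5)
Your ``only if'' direction is correct and matches the paper's. The genuine gap is in the ``if'' direction, at exactly the step you identified as the crux: the claim that $(v,w_1)\pairs{\sigequiv}{\downequiv{k}}(v_2,w_2)$ with $v_1=v$ forces $v_2=v$ is \emph{false}. Definition~\ref{def-pairsofnodes} only requires corresponding nodes to be related by $\downequiv{k}$; taking the corresponding pair $y_1=v$, $y_2=v_2$ yields precisely $v\downequiv{k}v_2$ and nothing stronger. Your computation that $(v_2,y_2)\in\sig(v_1,v_1)(D)=\varepsilon(D)$ shows only that the node of the pair $(v_2,w_2)$ \emph{corresponding to} $v_1$ is $v_2$ itself; it does not identify $v_2$ with $v$. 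Concretely, let the root have two children $a_1,a_2$ with the same label, each having a single child $b_1,b_2$, also with a common label. Then $a_1\downequiv{k}a_2$ and $b_1\downequiv{k}b_2$, so $(a_1,b_1)\pairs{\sigequiv}{\downequiv{k}}(a_2,b_2)$. Take $v=a_1$ and $W=\{b_1\}$, which satisfies the theorem's hypotheses. Your relation $R=\{v\}\times W=\{(a_1,b_1)\}$ then violates condition~(2) of Theorem~\ref{theo-strictlydown-bp}: it contains $(a_1,b_1)$ but not the congruent pair $(a_2,b_2)$. So Theorem~\ref{theo-strictlydown-bp} simply cannot be applied to $R=\{v\}\times W$, and the expression it would produce does not exist for that~$R$.

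The paper's proof repairs exactly this point by taking $R$ to be the congruence \emph{closure} of $\{v\}\times W$, namely
$$R\ass\{(v',w_2)\mid\ \textrm{there exists}\ w_1\in W\ \textrm{such that}\ (v,w_1)\pairs{\sigequiv}{\downequiv{k}}(v',w_2)\}.$$
This $R$ does satisfy both conditions of Theorem~\ref{theo-strictlydown-bp}, so one obtains $e$ with $e(D)=R$; the relation $R$ may well contain pairs whose begin point is some $v'\neq v$ with $v'\downequiv{k}v$, but that is harmless because one only evaluates $e(D)(v)$. The inclusion $W\subseteq e(D)(v)$ is immediate, and the reverse inclusion $e(D)(v)\subseteq W$ is where the closure hypothesis on $W$ is actually used: any $(v,w_2)\in R$ is congruent to some $(v,w_1)$ with $w_1\in W$, whence $w_2\in W$. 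Your argument becomes correct if you replace your $R$ by this closure; as written, it fails.
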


\begin{proof}
\emph{Only if}. Let $e$ be an expression in the language under
consideration such that $e(D)(v)=W$. Let $w_1,w_2\in V$ be descendants
of~$v$ with $(v,w_1)\downequiv{k} (v,w_2)$, and assume 
that $w_1\in W=e(D)(v)$. Hence, $(v,w_1)\in e(D)$.
By Corollary~\ref{cor-strictlydown}, $(v,w_2)\in e(D)$. Hence, $w_2\in
e(D)(v)=W$. 

\emph{If}. Let $W\subseteq V$ satisfy the property that all nodes of
$W$ are descendants of $v$, and, for all $w_1,w_2\in V$ with
$w_1\downequiv{k} w_2$, $w_1\in W$ implies $w_2\in W$. 
Let $R := \{(v',w_2)\mid\ \textrm{there exists}\ w_1\in
W\ \textrm{such that}\ (v,w_1)\pairs{\sigequiv}{\downequiv{k}} (v',w_2)\}$. 
Clearly, $R$ satisfies the properties 
of Theorem~\ref{theo-strictlydown-bp}. Hence, there exists an expression $e$
in the language under consideration such that $R=e(D)$. Clearly, $W\subseteq
e(D)(v)$. We prove the reverse inclusion. Therefore, let $w_2\in
e(D)(v)$, i.e., $(v,w_2)\in R$. Then there exists $w_1\in R$ such that
$(v,w_1)\pairs{\sigequiv}{\downequiv{k}} (v,w_2)$. By the property that
$W$ satisfies, $w_2\in W$. Hence, $e(D)(v)\subseteq W$, and,
therefore, $e(D)(v)=W$. 
\end{proof}

Again, we can specialize Theorem~\ref{theo-strictlydown-nodelevel} to the
strictly downward (core) XPath algebra.

\begin{corollary}
\label{cor-strictlydown-nodelevel}

Let $D=(V,\textit{Ed},r,\lambda)$ be a document, let $v$ be a node of
$D$, and let $W\subseteq V$. Then there exists an 
expression $e$ in the strictly downward (core) XPath algebra such that
$e(D)(v)=W$ if and only if all nodes of~$W$ are descendants of $v$,
and, for all nodes $w_1$ and $w_2$ of $D$ with
$(v,w_1)\pairs{\sigequiv}{\downequiv{1}} (v,w_2)$, 
$w_1\in W$ implies $w_2\in W$. 

\end{corollary}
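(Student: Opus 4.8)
The plan is to obtain this statement as the special case $k=1$ of Theorem~\ref{theo-strictlydown-nodelevel}. The only point that needs checking is that, for $k=1$, the language ``strictly downward (core) XPath algebra with counting up to~$1$'' occurring in that theorem coincides with the language ``strictly downward (core) XPath algebra'' occurring in the corollary.

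First I would recall that the strictly downward (core) XPath algebra is, by definition, $\BL(\down,\pi_1,-)$ (respectively $\CL(\down,\pi_1,-)$), whereas the strictly downward (core) XPath algebra with counting up to~$1$ is $\BL(\down,\pi_1,\ch{1}(.),-)$ (respectively $\CL(\down,\pi_1,\ch{1}(.),-)$). These two languages have the same expressive power. On the one hand, every expression of the smaller language is trivially an expression of the larger one. On the other hand, by the first identity of Proposition~\ref{prop-counting}, we have $\ch{1}(e)(D)=\pi_1(\down/e)(D)$, so any occurrence of $\ch{1}(.)$ in an expression of the larger language can be rewritten using only $\down$ and $\pi_1$, yielding an equivalent expression of the smaller one. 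Hence both languages define exactly the same relations $e(D)$, and therefore also the same sets $e(D)(v)$.

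With this identification in hand, the corollary follows immediately. Applying Theorem~\ref{theo-strictlydown-nodelevel} with $k=1$ yields that there exists an expression $e$ in the strictly downward (core) XPath algebra with counting up to~$1$---equivalently, in the strictly downward (core) XPath algebra---such that $e(D)(v)=W$ if and only if all nodes of~$W$ are descendants of~$v$ and, for all nodes $w_1$ and $w_2$ with $(v,w_1)\pairs{\sigequiv}{\downequiv{1}}(v,w_2)$, $w_1\in W$ implies $w_2\in W$. This is precisely the asserted characterization, where $\downequiv{1}$ is the ordinary notion of bisimilarity.

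Since the whole argument is a direct specialization, I anticipate no genuine obstacle; the only step requiring (routine) care is confirming the language equivalence $\BL(\down,\pi_1,\ch{1}(.),-)=\BL(\down,\pi_1,-)$ through Proposition~\ref{prop-counting}, exactly as was already observed in the text preceding Corollary~\ref{cor-downequivalent}.
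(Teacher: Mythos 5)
Your proposal is correct and matches the paper's own treatment: the paper also obtains Corollary~\ref{cor-strictlydown-nodelevel} as the immediate specialization of Theorem~\ref{theo-strictlydown-nodelevel} to $k=1$, relying on the observation (made just before Corollary~\ref{cor-downequivalent}) that $\ch{1}(.)$ is expressible via Proposition~\ref{prop-counting}, so that the strictly downward (core) XPath algebra with counting up to~$1$ coincides with the strictly downward (core) XPath algebra. Your explicit verification of that language identification is exactly the routine step the paper leaves implicit.
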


A special case of Theorem~\ref{cor-strictlydown-nodelevel} is when we
are only interested in navigation from the \emph{root}.

\begin{theorem}
\label{theo-strictlydown-root}

Let $k\geq 1$.  Let $D=(V,\textit{Ed},r,\lambda)$ be a document, and
let $W\subseteq V$. Then there exists an expression $e$ in the
strictly downward (core) XPath algebra with counting up to~$k$ such that
$e(D)(r)=W$ if and only if, for all nodes $w_1$ and $w_2$ of $D$ with
$w_1\updownequiv{k} w_2$, $w_1\in W$ implies $w_2\in W$.

\end{theorem}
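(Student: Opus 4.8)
The plan is to obtain Theorem~\ref{theo-strictlydown-root} as the specialization of Theorem~\ref{theo-strictlydown-nodelevel} to the case $v = r$, so the whole argument is a short translation of hypotheses rather than a fresh induction. First I would observe that, since $r$ is the root of $D$, every node of $V$—and in particular every node of any $W \subseteq V$—is a descendant of $r$. Consequently the first condition appearing in Theorem~\ref{theo-strictlydown-nodelevel}, namely that all nodes of $W$ be descendants of $v$, is automatically satisfied when $v = r$ and therefore imposes no constraint; the only surviving requirement is the closure condition on $\pairs{\sigequiv}{\downequiv{k}}$-congruent pairs.

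The second step is to rewrite that closure condition in the form stated for the root. Theorem~\ref{theo-strictlydown-nodelevel}, applied with $v := r$, requires that for all descendants $w_1, w_2$ of $r$ (that is, for all $w_1, w_2 \in V$) with $(r, w_1) \pairs{\sigequiv}{\downequiv{k}} (r, w_2)$, the membership $w_1 \in W$ implies $w_2 \in W$. By Proposition~\ref{prop-kequivalent}, the relation $(r, w_1) \pairs{\sigequiv}{\downequiv{k}} (r, w_2)$ holds precisely when $w_1 \updownequiv{k} w_2$. Substituting this equivalence into the condition yields exactly the closure requirement in the statement to be proved: for all $w_1, w_2 \in V$ with $w_1 \updownequiv{k} w_2$, $w_1 \in W$ implies $w_2 \in W$.

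Combining these two observations, the biconditional follows immediately by invoking Theorem~\ref{theo-strictlydown-nodelevel} at the root and rephrasing its conditions as above. I expect no genuine technical obstacle here, since all the substance is already carried by Proposition~\ref{prop-kequivalent}, which packages the fact that $k$-equivalence of two nodes coincides with $\downequiv{k}$-congruence of the corresponding root-to-node paths. The only point meriting a moment's care is to verify that the passage to $v = r$ loses none of the hypotheses of Theorem~\ref{theo-strictlydown-nodelevel}: this is exactly what the first step guarantees, because the root trivially witnesses the descendant condition for every $w \in V$.
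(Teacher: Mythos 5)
Your proposal is correct and follows essentially the same route as the paper's own proof: specialize Theorem~\ref{theo-strictlydown-nodelevel} to $v=r$ (where the descendant condition is vacuously satisfied, since every node of $V$ is a descendant of the root) and then translate $(r,w_1)\pairs{\sigequiv}{\downequiv{k}}(r,w_2)$ into $w_1\updownequiv{k} w_2$ via Proposition~\ref{prop-kequivalent}. The only difference is that you make explicit the triviality of the descendant condition, which the paper leaves implicit in its ``it immediately follows.''
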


\begin{proof}
From Theorem~\ref{theo-strictlydown-nodelevel}, it immediately follows that
there exists an expression $e$ in the language under consideration such
that $e(D)(r)=W$ if and only if, for $w_1,w_2\in V$ with
$(r,w_1)\pairs{\sigequiv}{\downequiv{k}}(r,w_2)$, $w_1\in W$ implies
$w_2\in W$. By Proposition~\ref{prop-kequivalent},
$(r,w_1)\pairs{\sigequiv}{\downequiv{k}} (r,w_2)$ is 
equivalent to $w_1\updownequiv{k} w_2$. 
\end{proof}

The specialization of Theorem~\ref{theo-strictlydown-root} to the
case of the strictly downward (core) XPath algebra is as follows.

\begin{corollary}
\label{cor-strictlydown-root}

Let $D=(V,\textit{Ed},r,\lambda)$ be a document, 
and let $W\subseteq V$. Then there exists an 
expression $e$ in the strictly downward (core) XPath algebra such that
$e(D)(r)=W$ if and only if, for all nodes $w_1$ and $w_2$ of $D$ with
$w_1\updownequiv{1} w_2$, $w_1\in W$ implies $w_2\in W$. 

\end{corollary}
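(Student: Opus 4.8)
The plan is to derive Corollary~\ref{cor-strictlydown-root} directly from Theorem~\ref{theo-strictlydown-root} by specializing the counting parameter to $k=1$, exactly as the preceding corollaries (e.g.\ Corollary~\ref{cor-strictlydown-nodelevel}) were obtained from their parameterized counterparts. First I would recall that Theorem~\ref{theo-strictlydown-root}, instantiated at $k=1$, states that there exists an expression $e$ in the strictly downward (core) XPath algebra with counting up to~$1$ such that $e(D)(r)=W$ if and only if, for all nodes $w_1$ and $w_2$ of $D$ with $w_1\updownequiv{1} w_2$, $w_1\in W$ implies $w_2\in W$. The hypothesis and conclusion of the corollary then coincide verbatim with this instance, so the entire content of the argument is the identification of the relevant language.

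The one substantive point to address is the identification of the language: I must argue that the strictly downward (core) XPath algebra \emph{with counting up to~$1$} is the same language as the plain strictly downward (core) XPath algebra. This is precisely the observation already made in the text preceding Corollary~\ref{cor-downequivalent}, namely that selection on at least one child (``$\ch{1}(.)$'') is expressible in terms of the remaining operations by Proposition~\ref{prop-counting}, item~(1), which gives $\ch{1}(e)(D)=\pi_1(\down/e)(D)$. Hence $\BL(\down,\pi_1,\ch{1}(.),-)$ collapses to $\BL(\down,\pi_1,-)$, and likewise for the core variant, so the two families of languages define exactly the same binary relations on any document.

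With that identification in hand, the proof is a one-line appeal: apply Theorem~\ref{theo-strictlydown-root} with $k=1$ and substitute the language equivalence just noted. I do not anticipate any real obstacle here, since the structural work has all been done upstream; the only care needed is to make sure the statement's ``$\updownequiv{1}$'' matches the ``$\updownequiv{k}$'' of the theorem at $k=1$, which it does by inspection. Concretely, the proof reads as follows.

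\begin{proof}
This is the special case $k=1$ of Theorem~\ref{theo-strictlydown-root}, upon observing that the strictly downward (core) XPath algebra with counting up to~$1$ coincides with the strictly downward (core) XPath algebra. Indeed, by Proposition~\ref{prop-counting}, item~(1), selection on at least one child (``$\ch{1}(.)$'') is expressible as $\ch{1}(e)(D)=\pi_1(\down/e)(D)$, so that $\BL(\down,\pi_1,\ch{1}(.),-)=\BL(\down,\pi_1,-)$ and, analogously, $\CL(\down,\pi_1,\ch{1}(.),-)=\CL(\down,\pi_1,-)$.
\end{proof}
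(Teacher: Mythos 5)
Your proposal is correct and matches the paper's own (implicit) argument: the paper states this corollary as exactly "the specialization of Theorem~\ref{theo-strictlydown-root} to the strictly downward (core) XPath algebra," relying on the same observation made before Corollary~\ref{cor-downequivalent} that $\ch{1}(.)$ is expressible via Proposition~\ref{prop-counting}, so that the counting-up-to-$1$ language collapses to $\BL(\down,\pi_1,-)$, respectively $\CL(\down,\pi_1,-)$. Nothing is missing.
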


To conclude this section, we observe that none of the characterization
results above distinguish between the language $\BL(E)$ and the
corresponding core language $\CL(E)$.  This is not surprising, as, for
\emph{all\/} downward languages, they have the same expressive power,
not only at the navigational level for a given document, but also at
the level of queries, i.e., for each expression $e$ in $\BL(E)$, there
exists an equivalent expression $e'$ in the corresponding core
language $\CL(E)$, meaning that, \emph{for each\/} document $D$,
$e(D)=e'(D)$. Thereto, we prove a slightly stronger result.

\begin{theorem}
\label{theo-downwardcorecoincides}

Let $E$ be a set of nonbasic operations containing
downward navigation (``$\down$'') and
first projection (``$\pi_1$''),
and not containing upward navigation
(``$\up$''), and inverse
(``$.^{-1}$''). Let $e$ be an
expression in the language under consideration. With the exception of
intersection (``$\cap$'') and set difference (``$-$'') operations used
as operands in boolean combinations of subexpressions of the language
within a first projection or conditional operation, all intersection
and set difference operations can be eliminated, to the extent that
these operations occur in the language under consideration.

\end{theorem}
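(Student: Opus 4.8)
The plan is to prove a \emph{normal form} for downward expressions and then observe that this normal form already lies in the core language. Concretely, I would show by structural induction that every expression $e$ in $\BL(E)$ is equivalent to a finite union
$$e=\bigcup_{i}\; s^i_0/\down/s^i_1/\down/\cdots/\down/s^i_{n_i}$$
of \emph{path expressions}, in which each $s^i_j$ is a \emph{test} (a subexpression whose value is always contained in $\varepsilon(D)$) and every such test is itself a legal core expression, i.e.\ all occurrences of $\cap$ and $-$ inside it sit within a $\pi_1$, $\pi_2$, or $\ch{m}$. Since a path expression is a composition of $\down$'s and core tests, and a union of core expressions is a core expression, such a normal form is exactly an expression of $\CL(E)$ in which no $\cap$ or $-$ occurs except inside a first projection or conditional operation. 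This is precisely the conclusion of the theorem, so the whole content lies in establishing the normal form.

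The base cases ($\emptyset$, $\varepsilon$, $\hat{\ell}$, $\down$) are immediate, and union is handled by concatenating the two unions. For composition $e_1/e_2$, after distributing over the unions it suffices to compose two single path expressions; because the value of a path expression on a pair $(v,w)$ forces $w$ to lie a fixed distance below $v$, the composition is again a single path expression, the two tests meeting at the junction being merged by mere composition (for tests, $s/t=s\cap t$ holds semantically, so no $\cap$ need be introduced). The cases $\pi_1(f)$, $\pi_2(f)$, and $\ch{m}(f)$ are also easy: by the induction hypothesis $f$ is already a core expression, so each of these is a single core test, i.e.\ a path expression of length~$0$; note that this is exactly where $\cap$ and $-$ are \emph{allowed} to survive, safely enclosed in a projection or conditional.

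The heart of the argument, and the main obstacle, is the treatment of $\cap$ and $-$ at the top level. Here I would exploit the fact that in a tree the path between an ancestor and a descendant is \emph{unique}: a path expression $s_0/\down/\cdots/\down/s_n$ accepts $(v,w)$ iff $w$ lies exactly $n$ levels below $v$ and the $i$-th node $x_i$ on the unique path satisfies $s_i$ for every $i$. Thus a path expression of length $n$ corresponds to the conjunction $\bigwedge_i [x_i\in s_i]$, and a union of length-$n$ path expressions corresponds to a disjunction of such conjunctions, i.e.\ a DNF formula over the atomic ``position tests.'' Grouping the path expressions of $e_1$ and $e_2$ by length (those of different lengths are disjoint and drop out of both $\cap$ and the relevant part of $-$), the relational operations $\cap$ and $-$ translate into the connectives $\wedge$ and $\wedge\neg$ on these formulas. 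Since any boolean combination of conjunctions can be put in disjunctive normal form, the result is again a union of conjunctive path expressions: for $\cap$ the merged test at position $i$ is $s_i\cap t_i$ (which, as a test, equals $s_i/t_i$), while for $-$ one obtains, after distributing $\neg(\bigwedge_i t_i)=\bigvee_i \neg t_i$, a union of path expressions whose $i$-th test is $s_i-t_i$.

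The delicate point, and the reason difference is harder than intersection, is that $\neg$ does not distribute over the conjunction directly, so the DNF expansion is genuinely needed and produces, for each disjunct, a residual test of the form $s_i-t_i$. These are kept legal by folding them into a first projection, writing $s_i-t_i$ as $\pi_1(s_i-t_i)$ (and any $s_i\cap t_i$ as $\pi_1(s_i\cap t_i)$, or, when $-\in E$ but $\cap\notin E$, via $\pi_1(\varepsilon-\pi_1(\varepsilon-s_i)\cup\pi_1(\varepsilon-t_i))$). This guarantees that every test produced along the induction remains a core expression, so the induction hypothesis is maintained and the normal form, hence the theorem, follows. The bookkeeping over lengths and the verification that the iterated subtraction $P-\bigcup_j Q_j=(\cdots((P-Q_{j_1})-Q_{j_2})\cdots)$ stays within unions of conjunctive path expressions are the routine but careful remaining parts.
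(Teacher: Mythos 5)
Your proposal is correct and takes essentially the same route as the paper's proof: the paper also reduces each operand to a path-expression normal form of core tests separated by $\down$'s, observes that a length mismatch trivializes $\cap$ (to $\emptyset$) and $-$ (to $e_1$), merges tests positionwise as $\pi_1(c_{1j}\cap c_{2j})$ for intersection, and for difference takes the union over positions $j$ of copies of $e_1$ with $c_{1j}$ replaced by $\pi_1(c_{1j}-c_{2j})$ --- exactly your DNF argument. If anything, your version is more careful, since the paper writes each operand as a \emph{single} path expression and leaves the handling of unions of path expressions (and the iterated subtraction $P-\bigcup_j Q_j$) implicit.
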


\begin{proof}
The proof goes by structural induction. Therefore, 
consider the expression $e_1\cap e_2$,
respectively, $e_1-e_2$ (to the extent these operations occur in the
language under consideration), where $e_1$ and $e_2$ are expressions
not containing eliminable intersection and set difference operations.
For $i=1,2$, we may write
$$e_i=c_{i0}/\down/c_{i1}/\down/\ldots/\down/c_{in_{i-1}}/\down/c_{in_i},$$
where, for $j=0,\ldots,n_i$, $c_{ij}$ is an expression in $\CL(E)$
with the property that, for each document $D$, $c_{ij}(D)\subseteq
\varepsilon(D)$. From here on, we consider both cases separately.
\begin{enumerate}

\item \emph{Intersection}. Clearly, if $n_1\neq n_2$, then, for each
  document $D$, $e_1\cap e_2(D)=\emptyset=\emptyset(D)$. In the other
  case, let $n\ass n_1=n_2$. For $j=0,\ldots,n$, let $c_j\ass
  \pi_1(c_{1j}\cap c_{2j})$, which is an expression of $\CL(E)$,
  equivalent to $c_{1j}\cap c_{2j}$. Let
$$e'\ass c_0/\down/c_1/\down/\ldots/\down/c_{n-1}/\down/c_n.$$
A straightforward set-theoretical argument
reveals that, for each document $D$, $e'(D)=e_1\cap e_2(D)$.

\item \emph{Difference}.  Clearly, if $n_1\neq n_2$, then, for each
  document $D$, $e_1-e_2(D)=e_1(D)$. In the other case, let $n\ass
n_1=n_2$. For $j=0,\ldots,n$, let $e'_j$ be $e_1$ in which $c_{1j}$
is replaced by $\pi_1(c_{1j}-c_{2j})$, which is an expression of $\CL(E)$,
  equivalent to $c_{1j}- c_{2j}$. Let
$$e'=e'_0\cup e'_1\cup\ldots\cup e'_{n-1}\cup e_n'.$$
which is also in $\CL(E)$. A straightforward set-theoretical argument
reveals that, for each document $D$, $e'(D)=e_1-e_2(D)$.

\end{enumerate}
\end{proof}

\begin{corollary}
\label{cor-downwardcorecoincides}

Let $E$ be a set of nonbasic operations containing
downward navigation (``$\down$'') and
first projection (``$\pi_1$''), and not containing upward navigation
(``$\up$''), and inverse
(``$.^{-1}$''). Then, for each
expression $e$ in $\BL(E)$, there exists an expression $e'$ in
$\CL(E)$ such that, for each document $D$, $e(D)=e'(D)$.

\end{corollary}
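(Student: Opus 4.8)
The plan is to obtain the Corollary as an essentially immediate reading of Theorem~\ref{theo-downwardcorecoincides}. Fix an expression $e$ in $\BL(E)$. Since $E$ contains $\down$ and $\pi_1$ and contains neither upward navigation (``$\up$'') nor inverse (``$.^{-1}$''), the hypotheses of Theorem~\ref{theo-downwardcorecoincides} are met, and $\BL(E)$ is exactly the ``language under consideration'' there. I would therefore apply the Theorem to $e$ to obtain an equivalent expression $e'$ — equivalent in the strong sense that $e'(D)=e(D)$ for \emph{every} document $D$ — in which every intersection (``$\cap$'') and set difference (``$-$'') that is \emph{not} embedded as an operand of a boolean combination of subexpressions inside a first projection (or conditional operation) has been eliminated.

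The second step is to match the syntactic shape of $e'$ against the definition of the core language $\CL(E)$ from Section~\ref{subsec-docandnav}. By construction $e'$ is built from the basic operations together with the surviving nonbasic operations of $E$ (notably $\down$ and $\pi_1$), and its only occurrences of $\cap$ and $-$ sit inside subexpressions of the form $\pi_1(f)$, where $f$ is a boolean combination — using union and the operations in $E\cap\{\cap,-\}$ — of subexpressions of the language. This is precisely the grammar by which $\CL(E)$ extends $\BL(E-\{\cap,-\})$. Hence $e'$ is an expression of $\CL(E)$, and it witnesses the Corollary.

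The only point requiring care, and the one I would verify explicitly, is that the rewriting supplied by the proof of Theorem~\ref{theo-downwardcorecoincides} introduces no operation forbidden in $\CL(E)$, and in particular that the surviving intersections and differences are lodged specifically inside first projections — as the steps $\pi_1(c_{1j}\cap c_{2j})$ and $\pi_1(c_{1j}-c_{2j})$ in that proof arrange — since it is exactly $\pi_1(f)$ and $\pi_2(f)$ that the definition of $\CL(E)$ opens up to boolean combinations. Inspecting the Theorem's proof, the remaining rewriting steps only compose subexpressions with $\down$, take unions, and wrap the partial-identity tests $c_{ij}$ (each satisfying $c_{ij}(D)\subseteq\varepsilon(D)$) inside a single first projection; all of these are legal in $\CL(E)$ to the extent that $\cap$ or $-$ actually belong to $E$. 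The clause ``to the extent that these operations occur'' also settles the degenerate case in which $E$ contains neither $\cap$ nor $-$: there is then nothing to eliminate and $e$ already lies in $\CL(E)=\BL(E)$. I do not expect a genuine obstacle here, since the substantive work — establishing the downward normal form $c_{i0}/\down/c_{i1}/\ldots/\down/c_{in_i}$ and the set-theoretic identities that justify pushing $\cap$ and $-$ into a first projection — is already discharged inside the Theorem; the Corollary is just the observation that the resulting normal form is syntactically a core expression.
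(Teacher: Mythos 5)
Your proposal is correct and follows exactly the paper's route: the paper states Corollary~\ref{cor-downwardcorecoincides} as an immediate consequence of Theorem~\ref{theo-downwardcorecoincides}, with no separate proof, precisely because the theorem's elimination leaves $\cap$ and $-$ only inside boolean combinations under a projection, which is syntactically the grammar of $\CL(E)$. Your explicit verification that the normal-form rewriting ($c_{i0}/\down/c_{i1}/\ldots/\down/c_{in_i}$, unions, and the wrapped tests $\pi_1(c_{1j}\cap c_{2j})$ and $\pi_1(c_{1j}-c_{2j})$) stays within $\CL(E)$, and your handling of the degenerate case where $E$ contains neither $\cap$ nor $-$, are exactly the observations the paper leaves implicit.
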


By Theorem~\ref{theo-downwardcorecoincides}, we may even disallow
set difference or intersection operations (to the extent they occur in
the language under consideration) except those used as operands
of boolean combinations of subexpressions inside a 
projection operation without loosing expressive
power.
\subsection{Strictly downward languages not containing set difference}
\label{subsec-strictlydownnodifference}

So far, the characterizations of strictly downward languages involved
only languages containing the set difference operator. One could,
therefore, wonder if it is possible to provide similar
characterizations for languages not containing set difference. However,
the absence of set difference and the logical negation that is
inherently embedded in it has as a side effect that it is no longer
always possible to exploit equivalences or derive them.
\subsubsection{Weaker notions of downward and two-way distinguishability}
\label{subsubsec-downwardrelatedness}

Therefore, one would like to consider an asymmetric version 
of downward $k$-equivalence, say ``downward $k$-relatedness,'' which,
for the appropriate language could correspond to expression relatedness.
For $k=1$, such an approach could lead to the following definitions.

\begin{definition}
\label{def-downrelated}

Let $D=(V,\textit{Ed},r,\lambda)$ be a document, and let
$v_1,v_2\in V$. Then, 
\begin{enumerate}

\item $v_1$ and $v_2$ are \emph{downward-related}, denoted $v_1
  \downgeq{} v_2$, if
\begin{enumerate}

\item $\lambda(v_1)=\lambda(v_2)$; and

\item for each child $w_1$ of $v_1$, there exists a child
$w_2$ of $v_2$ such that $w_1\downgeq{} w_2$.

\end{enumerate}

\item $v_1$ and $v_2$ are \emph{weakly downward-equivalent}, denoted $v_1
  \weakdownequiv{} v_2$, if $v_1\downgeq{} v_2$ and $v_2\downgeq{} v_1$.

\end{enumerate}
\end{definition}

Obviously, downward 1-equivalence implies weak downward
equivalence. The converse, however, is \emph{not\/} true, as
illustrated by the following, simple example.

\begin{example}
\label{ex-weakdownwardequivalence}

Consider the document in Figure~\ref{fig-weak}. Labels have
been omitted, because they are not relevant in this discussion. (We
assume all nodes have the same label.) Obviously,
$x_1\downequiv{1} x_2$, hence $x_1\weakdownequiv{} x_2$. In particular,
$x_1\downgeq{} x_2$ and $x_2\downgeq{} x_1$. Also, $y_1\downgeq{}
x_2$, as the second condition to be verified is voidly satisfied in
this case. We may thus conclude that $v_1\weakdownequiv{}
v_2$. However, $v_1\not\downequiv{1} v_2$, as there is no child of
$v_2$ that is downward 1-equivalent to $y_1$.

\end{example}

\begin{figure}
\begin{center}
\resizebox{0.5\textwidth}{!}{\input{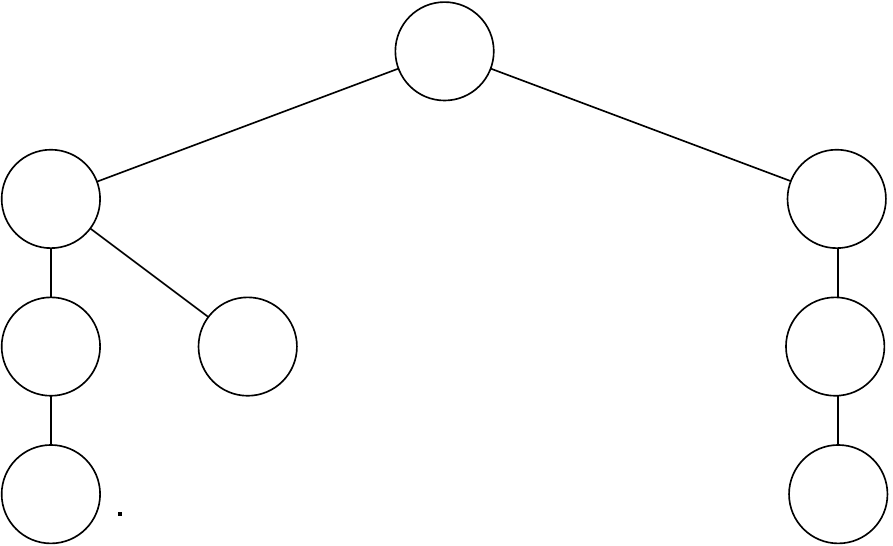_t}}
\end{center}
\caption[a]{Document of Example~\ref{ex-weakdownwardequivalence}.}
\label{fig-weak}
\end{figure}

Notice that, in Example~\ref{ex-weakdownwardequivalence}, there is
even no child of $v_2$ that is \emph{weakly\/} downward equivalent to
$y_1$! Therefore, we shall not even attempt to generalize
Definition~\ref{def-downrelated} to the case where $k>1$, as there is
no straightforward way to adapt the third condition of
Definition~\ref{def-kdownwardequivalent}.

We conclude this digression on alternatives for downward
1-equivalence by providing analogue alternatives for 1-equivalence.

\begin{definition}
\label{def-weakequivalent}

Let $D=(V,\textit{Ed},r,\lambda)$ be a document, and let
$v_1,v_2\in V$. Then, 
\begin{enumerate}

\item $v_1$ and $v_2$ are \emph{related\/}, denoted $v_1\updowngeq{}
  v_2$, if
\begin{enumerate}

\item $v_1\downgeq{} v_2$;

\item $v_1$ is the root if and only if $v_2$ is the root; and

\item if $v_1$ and $v_2$ are not the root, and $u_1$ and $u_2$ are the
  parents of $v_1$ and $v_2$, respectively, then $u_1\updowngeq{} u_2$.

\end{enumerate}

\item $v_1$ and $v_2$ are \emph{weakly equivalent\/}, denoted
$v_1\weakupdownequiv{} v_2$,  if $v_1\updowngeq{} v_2$ and 
$v_2\updowngeq{} v_1$.

\end{enumerate}
\end{definition}

\begin{example}
\label{ex-weakequivalent}

Consider again the document in Figure~\ref{fig-weak}. 
Observe that $v_1\weakupdownequiv{} v_2$. Furthermore, $y_1\updowngeq{}
x_2$, but not the other way around.
\end{example}

\begin{table}
\caption{Distinguishability notions of Section \ref{subsubsec-downwardrelatedness}.}
\label{table:summary-weak-syntactic}
\begin{center}
\begin{tabular}{|ccc|}
    \hline
    {\em distinguishability  notion} & {\em notation} & {\em defined in}\\
    \hline
    downward-related & \downgeq{} & Definition \ref{def-downrelated}\\
    weakly-downward-equivalent & \weakdownequiv{} & Definition \ref{def-downrelated}\\
    related & \updowngeq{} & Definition \ref{def-weakequivalent}\\
    weakly-equivalent & \weakupdownequiv{} & Definition \ref{def-weakequivalent}\\
    \hline
\end{tabular}
\end{center}
\end{table}

Table \ref{table:summary-weak-syntactic} summarizes all of the distinguishability notions
presented in this section.

The following analogue of Proposition~\ref{prop-kequivalent} is
straightforward. 

\begin{proposition}
\label{prop-weakequivalent}

Let $D=(V,\textit{Ed},r,\lambda)$ be a document, and let
$v_1,v_2\in V$. Then,
\begin{enumerate}

\item\label{weakequivalent-1} $v_1\updowngeq{} v_2$ 
if and only if $(r,v_1)\pairs{\sigequiv}{\downgeq{}} (r,v_2)$; and

\item\label{weakequivalent-2}  $v_1\weakupdownequiv{} v_2$ 
if and only if $(r,v_1)\pairs{\sigequiv}{\weakdownequiv{}} (r,v_2)$.

\end{enumerate}
\end{proposition}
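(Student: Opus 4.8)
The plan is to mirror the reasoning behind Proposition~\ref{prop-kequivalent}, exploiting that the root $r$ is an ancestor of every node. Since $r$ is an ancestor of both $v_1$ and $v_2$, we have $\sig(r,v_i)=\down^{d_i}$, where $d_i$ is the depth of $v_i$; hence $(r,v_1)\sigequiv(r,v_2)$ holds precisely when $v_1$ and $v_2$ have equal depth, and in that case the node corresponding (in the sense of Definition~\ref{def-pairsofnodes}) to a given ancestor of $v_1$ at depth $i$ is exactly the ancestor of $v_2$ at depth $i$. Thus, for these root-anchored pairs, the node correspondence is simply matching by depth, a relation that is symmetric in $v_1$ and $v_2$.

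For item~\ref{weakequivalent-1}, I would prove both directions by induction on the depth of $v_1$, unfolding the recursive Definition~\ref{def-weakequivalent} of $\updowngeq{}$. The ``root iff root'' clause forces $v_1$ and $v_2$ to have equal depth, while the recursion on parents together with the clause $v_1\downgeq{} v_2$ yields, at every depth $i$ along the root-to-node paths, that the depth-$i$ ancestor of $v_1$ is $\downgeq{}$ the depth-$i$ ancestor of $v_2$. This flat characterization is verbatim the two conditions defining $(r,v_1)\pairs{\sigequiv}{\downgeq{}}(r,v_2)$: congruent signatures (equal depth) together with $\downgeq{}$ holding between each pair of corresponding nodes. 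Conversely, the flat condition immediately supplies the base and recursive clauses of $\updowngeq{}$. This is precisely the argument for Proposition~\ref{prop-kequivalent}, with the symmetric $\downequiv{k}$ replaced by the asymmetric $\downgeq{}$.

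For item~\ref{weakequivalent-2}, I would reduce to item~\ref{weakequivalent-1}. By definition $v_1\weakupdownequiv{} v_2$ means $v_1\updowngeq{} v_2$ and $v_2\updowngeq{} v_1$, which by item~\ref{weakequivalent-1} is equivalent to the conjunction of $(r,v_1)\pairs{\sigequiv}{\downgeq{}}(r,v_2)$ and $(r,v_2)\pairs{\sigequiv}{\downgeq{}}(r,v_1)$. Because the correspondence is by depth, these two statements share the same equal-depth requirement and assert, at each depth $i$, that the depth-$i$ ancestor of $v_1$ is $\downgeq{}$ that of $v_2$ and vice versa; by Definition~\ref{def-downrelated} this conjunction is exactly $y_{1i}\weakdownequiv{} y_{2i}$ at every depth $i$, that is, $(r,v_1)\pairs{\sigequiv}{\weakdownequiv{}}(r,v_2)$.

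The only point requiring care, and the main (minor) obstacle, is the asymmetry of $\downgeq{}$: on swapping the roles of $v_1$ and $v_2$ in item~\ref{weakequivalent-2} one must verify that the node correspondence is unchanged. This is immediate here because $r$ anchors both pairs, so corresponding nodes are exactly those at equal depth, a symmetric relation; hence the two directional $\downgeq{}$-conditions combine cleanly into the symmetric $\weakdownequiv{}$-condition, and no genuine difficulty arises.
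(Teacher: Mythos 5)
Your proof is correct and is essentially the argument the paper has in mind: the paper states Proposition~\ref{prop-weakequivalent} as a ``straightforward'' analogue of Proposition~\ref{prop-kequivalent}, i.e., an unfolding of Definitions~\ref{def-downrelated}, \ref{def-weakequivalent}, and~\ref{def-pairsofnodes} via induction on depth, with root-anchored correspondence given by matching depths. Your explicit handling of the asymmetry of $\downgeq{}$ (noting that the depth-based correspondence is symmetric, so the two directional conditions combine into $\weakdownequiv{}$) is exactly the detail that makes the ``straightforward'' claim go through.
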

\subsubsection{Towards characterizing expression equivalence and
  navigational expressiveness}
\label{subsubsec-downwardpositivexpath}

The approach we shall take here is reviewing the results in
Sections~\ref{subsec-strictlydownsufficient}--\ref{subsec-strictlydowncharacterization}
and examine to which extent these results in the case where $k=1$
allow replacing downward 1-equivalence by weak downward equivalence.

We start by observing that the analogue of
Lemma~\ref{lem-strictlydown1} does not hold. Indeed, in the example
document of Example~\ref{ex-weakdownwardequivalence}, shown in
Figure~\ref{fig-weak}, $v_1\weakdownequiv{} v_2$. Also, there is no
child of $v_2$ that is weakly downward equivalent to $x_1$. Hence,
there is no node node $z$ for which
$(v_1,x_1)\pairs{\sigequiv}{\weakdownequiv{1}} (v_2,z)$. On the other
hand, we can restrict Lemma~\ref{lem-strictlydown1} to downward
relatedness:

\begin{lemma}
\label{lem-weakstrictlydown1}

Let $D=(V,\textit{Ed},r,\lambda)$ be a document, let $v_1$, $w_1$, and
$v_2$ be nodes of $D$ such that $w_1$ is a descendant of $v_1$. 
 If $v_1\downgeq{} v_2$, then $v_2$ has a descendant
$w_2$ in $D$ such that $(v_1,w_1)\pairs{\sigequiv}{\downgeq{}} (v_2,w_2)$.

\end{lemma}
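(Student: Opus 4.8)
The plan is to mirror the proof of Lemma~\ref{lem-strictlydown1} almost verbatim, proceeding by induction on the length of the path from $v_1$ to $w_1$, with downward-relatedness (``$\downgeq{}$'') now playing the role that downward-$k$-equivalence (``$\downequiv{k}$'') played there. The only structural feature of ``$\downequiv{k}$'' used in that earlier induction was its ``forth'' clause---every child of $v_1$ has a matching child of $v_2$---and precisely this clause survives in the asymmetric setting as the second condition of Definition~\ref{def-downrelated}. Since we are only asserting the \emph{existence} of one descendant $w_2$ of $v_2$, and not a symmetric statement, the missing ``back'' direction of ``$\downgeq{}$'' is never needed; this is exactly why the fully symmetric analogue fails for weak downward equivalence, as the discussion preceding the lemma already notes.

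Concretely, first I would treat the base case $w_1 = v_1$: here the path is the single node $v_1$, and taking $w_2 \ass v_2$ works, since $\sig(v_1,v_1) = \varepsilon = \sig(v_2,v_2)$ yields the congruence, while the lone corresponding pair $(v_1,v_2)$ satisfies $v_1 \downgeq{} v_2$ by hypothesis. For the inductive step $w_1 \neq v_1$, let $y_1$ be the child of $v_1$ on the path to $w_1$. By the second condition of Definition~\ref{def-downrelated}, $v_2$ has a child $y_2$ with $y_1 \downgeq{} y_2$. Since $w_1$ is a descendant of $y_1$ along a strictly shorter path, the induction hypothesis applied to $y_1$, $w_1$, $y_2$ yields a descendant $w_2$ of $y_2$---hence of $v_2$---with $(y_1,w_1)\pairs{\sigequiv}{\downgeq{}}(y_2,w_2)$.

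It then remains to lift this to $(v_1,w_1)\pairs{\sigequiv}{\downgeq{}}(v_2,w_2)$ by checking the two clauses of Definition~\ref{def-pairsofnodes}. Because $w_1$ is a descendant of $v_1$, the pair is an ancestor-descendant pair, so congruence and subsumption coincide (Proposition~\ref{prop-congissub}); concretely $\sig(v_1,w_1) = \down^n$ where $n$ is the path length, while $w_2$ sits at distance $n-1$ below $y_2$ and hence at distance $n$ below $v_2$, so $\sig(v_2,w_2) = \down^n$ as well and $(v_1,w_1)\sigequiv(v_2,w_2)$ follows from Proposition~\ref{prop-congruence}, (\ref{congruence-1}). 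For the node-wise clause, prepending the single edge $v_1 \to y_1$ (matched by $v_2 \to y_2$) shifts every corresponding node along $(y_1,w_1)$ to $(y_2,w_2)$ by one step but preserves the correspondence, so each relatedness $y_1'\downgeq{} y_2'$ obtained inductively transfers to the pair $(v_1,w_1),(v_2,w_2)$; the remaining top node $v_1$ corresponds to $v_2$, and $v_1 \downgeq{} v_2$ holds by hypothesis. I expect this final bookkeeping---verifying that the corresponding-node map for $(v_1,w_1)\to(v_2,w_2)$ restricts to the one for $(y_1,w_1)\to(y_2,w_2)$---to be the only genuinely delicate point, though it is routine once one unwinds the definition of ``corresponding node'' in Definition~\ref{def-pairsofnodes}.
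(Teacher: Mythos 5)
Your proof is correct and follows exactly the route the paper intends: the paper states Lemma~\ref{lem-weakstrictlydown1} without a separate proof precisely because the induction in the proof of Lemma~\ref{lem-strictlydown1} carries over verbatim, using only the ``forth'' clause of Definition~\ref{def-downrelated} in place of the corresponding clause of Definition~\ref{def-kdownwardequivalent}. Your observation that the asymmetry of $\downgeq{}$ is harmless here (and is exactly what breaks the symmetric analogue for $\weakdownequiv{}$) matches the paper's discussion preceding the lemma.
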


Proposition~\ref{prop-strictlydown} relies on
Lemma~\ref{lem-strictlydown1} to prove the inductive step for the
first projection (``$\pi_1$''). It therefore comes as no surprise
that we cannot replace downward 1-equivalence by weak downward
equivalence, there. Indeed, consider the expression
$e\ass\pi_1(\down/(\varepsilon-\pi_1(\down)))$. In the example
document of Example~\ref{ex-weakdownwardequivalence}, shown in
Figure~\ref{fig-weak}, $v_1\weakdownequiv{} v_2$, and, hence,
$(v_1,v_1)\pairs{\sigequiv}{\weakdownequiv{}} (v_2,v_2)$. Moreover,
$(v_1,v_1)\in e(D)$. However, $(v_2,v_2)\notin e(D)$. However, we can
``save'' Proposition~\ref{prop-strictlydown} by replacing downward
1-equivalence by downward \emph{relatedness\/}, provided we omit set
difference (``$-$'') from the set of operations of the
language. Indeed, we can then recover the proof, using
Lemma~\ref{lem-weakstrictlydown1} instead of
Lemma~\ref{lem-strictlydown1}. (Notice that, for the induction
step for set difference in the original proof, we must exploit
equivalence in both directions to deal with the negation inherent to
the difference operation.) In summary, we have the following.

\begin{lemma}
\label{lem-weakstrictlydown}

Let $E$ be the set of all nonbasic operations in
Table~\ref{tab-binops}, except for upward navigation (``$\up$''),
second projection
(``$\pi_2$''), inverse (``$.^{-1}$''), selection on at least $k$
children satisfying some condition (``$\ch{k}(.)$'') for $k>1$, and
set difference (``$-$''). Let $e$ be an expression in $\BL(E)$.  Let
$D=(V,\textit{Ed},r,\lambda)$ be a document, let $v_1$, $w_1$, $v_2$,
and $w_2$ be nodes of $D$ such that $w_1$ is a descendant of $v_1$ and
$w_2$ is a descendant of $v_2$.  Assume furthermore that
$(v_1,w_1)\pairs{\sigequiv}{\downgeq{}} (v_2,w_2)$. Then,
$(v_1,w_1)\in e(D)$ implies $(v_2,w_2)\in e(D)$.

\end{lemma}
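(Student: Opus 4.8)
The plan is to reprise the structural induction used for Proposition~\ref{prop-strictlydown}, but now only in the forward direction and with its symmetric engine Lemma~\ref{lem-strictlydown1} replaced throughout by the asymmetric Lemma~\ref{lem-weakstrictlydown1}. Since here $\BL(E)=\BL(\down,\pi_1,\ch{1}(.),\cap)$, the induction must treat exactly the atomic expressions $\emptyset$, $\varepsilon$, $\hat{\ell}$, $\down$ together with the operators $/$, $\cup$, $\cap$, $\pi_1$, and $\ch{1}(.)$; crucially, there is no set-difference case to handle. Throughout I assume $(v_1,w_1)\pairs{\sigequiv}{\downgeq{}}(v_2,w_2)$ with $w_1,w_2$ descendants of $v_1,v_2$ respectively, and aim to show $(v_1,w_1)\in e(D)$ implies $(v_2,w_2)\in e(D)$ for every $e\in\BL(E)$.

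For the base cases the conclusion is immediate: $(v_1,w_1)\sigequiv(v_2,w_2)$ already forces the signatures to coincide, which settles $\varepsilon$ and $\down$, while $\pairs{\sigequiv}{\downgeq{}}$ additionally provides $\lambda(v_1)=\lambda(v_2)$ at corresponding nodes (by Definition~\ref{def-downrelated}), which settles $\hat{\ell}$. The operators that are monotone in the forward sense carry over verbatim from the proof of Proposition~\ref{prop-strictlydown}. For $e_1/e_2$ I would split at the witnessing intermediate node $y_1$ (which lies on the $v_1$-to-$w_1$ path by the strictly downward character of the language), take its corresponding node $y_2$ on the $v_2$-to-$w_2$ path, invoke Proposition~\ref{prop-subpairs} (whose trivial proof extends to $\vartheta={\downgeq{}}$) to obtain $(v_1,y_1)\pairs{\sigequiv}{\downgeq{}}(v_2,y_2)$ and $(y_1,w_1)\pairs{\sigequiv}{\downgeq{}}(y_2,w_2)$, and then apply the induction hypothesis to $e_1$ and $e_2$. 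Union needs only that one disjunct survives and intersection that both do; both are direct from the induction hypothesis.

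The real content lives in the projection and counting cases, and this is where Lemma~\ref{lem-weakstrictlydown1} replaces Lemma~\ref{lem-strictlydown1}. For $e\ass\pi_1(f)$, membership $(v_1,w_1)\in e(D)$ forces $v_1=w_1$ and hence $v_2=w_2$, and yields a witness $(v_1,z_1)\in f(D)$ with $z_1$ a descendant of $v_1$. Since $\pairs{\sigequiv}{\downgeq{}}$ gives $v_1\downgeq{} v_2$, Lemma~\ref{lem-weakstrictlydown1} produces a descendant $z_2$ of $v_2$ with $(v_1,z_1)\pairs{\sigequiv}{\downgeq{}}(v_2,z_2)$, and the induction hypothesis on $f$ delivers $(v_2,z_2)\in f(D)$, so $(v_2,v_2)\in e(D)$. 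The case $e\ass\ch{1}(f)$ is handled in the same spirit: a witnessing child $y_1$ of $v_1$ satisfying $(y_1,y_1)\in\pi_1(f)(D)$ is matched, via $v_1\downgeq{} v_2$, by a child $y_2$ of $v_2$ with $y_1\downgeq{} y_2$, to which the same Lemma~\ref{lem-weakstrictlydown1} argument transports the witness for $f$; alternatively one simply appeals to the identity $\ch{1}(f)=\pi_1(\down/f)$ of Proposition~\ref{prop-counting} and reduces to the $\pi_1$ and composition cases. Note that the third condition of Definition~\ref{def-kdownwardequivalent}, which made the general $\ch{m}$ argument of Proposition~\ref{prop-strictlydown} delicate, never arises, since only $\ch{1}(.)$ is available.

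The step I expect to be the genuine obstacle is not any surviving case but the reason set difference must be dropped. In the difference case $e_1-e_2$, the original proof concludes $(v_2,w_2)\notin e_2(D)$ by contraposition: were $(v_2,w_2)\in e_2(D)$, the \emph{reverse} implication (from index $2$ to index $1$) would force $(v_1,w_1)\in e_2(D)$, contradicting $(v_1,w_1)\in e_1-e_2(D)$. That reverse implication is precisely the symmetry that $\downgeq{}$ lacks: because downward relatedness is asymmetric, one cannot transport membership backward from $(v_2,w_2)$ to $(v_1,w_1)$, and the argument collapses. This is exactly why the lemma is stated for $\BL(E)$ without ``$-$'' and only as an implication; the forward monotonicity of all remaining operators with respect to $\downgeq{}$ is what makes the one-directional claim go through.
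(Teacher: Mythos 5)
Your proof is correct and takes essentially the same route as the paper: the paper's own justification of this lemma is exactly the remark that the structural induction of Proposition~\ref{prop-strictlydown} can be ``recovered'' by substituting Lemma~\ref{lem-weakstrictlydown1} for Lemma~\ref{lem-strictlydown1} and dropping the set-difference case, which is the only step that needs the relation in both directions. Your detailed execution of the base cases, composition (via the extension of Proposition~\ref{prop-subpairs} to $\vartheta = {\downgeq{}}$), $\pi_1$, $\ch{1}(.)$, union, and intersection, together with your diagnosis that the difference case collapses precisely because $\downgeq{}$ is asymmetric, matches the paper's intended argument in every respect.
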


Two applications of Lemma~\ref{lem-weakstrictlydown} immediately yield
the following.

\begin{proposition}
\label{prop-weakstrictlydown}

Let $E$ be the set of all nonbasic operations in
Table~\ref{tab-binops}, except for upward navigation (``$\up$''),
second projection
(``$\pi_2$''), inverse (``$.^{-1}$''), selection on at least $k$
children satisfying some condition (``$\ch{k}(.)$'') for $k>1$, and
set difference (``$-$''). Let $e$ be an expression in $\BL(E)$.  Let
$D=(V,\textit{Ed},r,\lambda)$ be a document, let $v_1$, $w_1$, $v_2$,
and $w_2$ be nodes of $D$ such that $w_1$ is a descendant of $v_1$ and
$w_2$ is a descendant of $v_2$.  Assume furthermore that
$(v_1,w_1)\pairs{\sigequiv}{\weakdownequiv{}} (v_2,w_2)$. Then,
$(v_1,w_1)\in e(D)$ if and only $(v_2,w_2)\in e(D)$.

\end{proposition}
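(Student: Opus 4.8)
The plan is to derive the biconditional by invoking Lemma~\ref{lem-weakstrictlydown} once in each direction, exactly as the phrase ``two applications'' preceding the statement suggests. The pivotal observation is that weak downward equivalence is, by Definition~\ref{def-downrelated}, nothing more than the conjunction of two downward-relatedness statements: $y_1\weakdownequiv{}y_2$ holds precisely when both $y_1\downgeq{}y_2$ and $y_2\downgeq{}y_1$ hold. I would begin by unpacking the hypothesis $(v_1,w_1)\pairs{\sigequiv}{\weakdownequiv{}}(v_2,w_2)$ via Definition~\ref{def-pairsofnodes}: it asserts that $(v_1,w_1)\sigequiv(v_2,w_2)$ and that every node $y_1$ on the path from $v_1$ to $w_1$ is weakly downward equivalent to its corresponding node $y_2$ on the path from $v_2$ to $w_2$.

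From this single hypothesis I would extract two $\downgeq{}$-congruence statements. First, since the signatures are congruent and each corresponding pair satisfies $y_1\downgeq{}y_2$, Definition~\ref{def-pairsofnodes} directly yields $(v_1,w_1)\pairs{\sigequiv}{\downgeq{}}(v_2,w_2)$. Second, because congruence is symmetric (so $(v_2,w_2)\sigequiv(v_1,w_1)$) and each corresponding pair also satisfies $y_2\downgeq{}y_1$, the same definition read with the roles of the two pairs interchanged gives $(v_2,w_2)\pairs{\sigequiv}{\downgeq{}}(v_1,w_1)$. The one point that needs genuine care here is the symmetry of the node correspondence itself: since $\sig(v_1,w_1)=\sig(v_2,w_2)$, the map sending $y_1$ to its corresponding $y_2$ is a length-preserving bijection between the two (here descendant) paths, and its inverse is exactly the correspondence relevant to the reversed pair, so that $y_2\downgeq{}y_1$ is indeed the condition required for $(v_2,w_2)\pairs{\sigequiv}{\downgeq{}}(v_1,w_1)$. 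This bookkeeping about the correspondence is the main obstacle; everything else is a direct rewriting of definitions.

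Finally, I would apply Lemma~\ref{lem-weakstrictlydown} twice. Applied to $(v_1,w_1)\pairs{\sigequiv}{\downgeq{}}(v_2,w_2)$, it gives that $(v_1,w_1)\in e(D)$ implies $(v_2,w_2)\in e(D)$; applied to $(v_2,w_2)\pairs{\sigequiv}{\downgeq{}}(v_1,w_1)$, it gives the converse implication. Conjoining the two implications yields $(v_1,w_1)\in e(D)$ if and only if $(v_2,w_2)\in e(D)$, as desired. No fresh structural induction on $e$ is needed, since all the inductive work over the shape of $e$ has already been carried out inside Lemma~\ref{lem-weakstrictlydown}; the present argument is purely a symmetrization of that one-directional result, which is why the conclusion follows immediately.
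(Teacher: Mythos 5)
Your proposal is correct and matches the paper's own proof, which consists precisely of the remark that two applications of Lemma~\ref{lem-weakstrictlydown} yield the result. Your additional bookkeeping---splitting $\weakdownequiv{}$ into the two $\downgeq{}$ statements and checking that the node correspondence inverts properly when the roles of the pairs are swapped---is exactly the detail the paper leaves implicit, and it is handled correctly.
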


So, Proposition~\ref{prop-weakstrictlydown} is weaker than
Proposition~\ref{prop-strictlydown} in the sense that we had to exclude set
difference, but stronger in the sense that, in return, we were able to
replace the precondition by a weaker one.

The analogues of Corollaries~\ref{cor-strictlydown1} 
and~\ref{cor-strictlydown2} are now as follows.

\begin{corollary}
\label{cor-weakstrictlydown1}

Let $E$ be the set of all nonbasic operations in
Table~\ref{tab-binops}, except for upward navigation (``$\up$''),
second projection
(``$\pi_2$''), inverse (``$.^{-1}$''), selection on at least $k$
children satisfying some condition (``$\ch{k}(.)$'') for $k>1$, and
set difference (``$-$''). Let $e$ be an expression in $\BL(E)$.  Let
$D=(V,\textit{Ed},r,\lambda)$ be a document, let $v_1$ and $v_2$ be
nodes of $D$ such that $v_1\downgeq{} v_2$ and let $w_1$ be a
descendant of $v_1$. If $(v_1,w_1)\in e(D)$, then there exists a
descendant $w_2$ of $v_2$ such that $(v_2,w_2)\in e(D)$.

\end{corollary}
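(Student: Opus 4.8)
The plan is to follow exactly the template of the proof of Corollary~\ref{cor-strictlydown1}, replacing the symmetric notions used there by the one-directional, set-difference-free analogues that have just been established in this subsection. The combinatorial content has already been absorbed into the two preceding results, so the corollary should reduce to a two-step chaining argument.

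First I would apply Lemma~\ref{lem-weakstrictlydown1}. Since $v_1\downgeq{} v_2$ and $w_1$ is a descendant of $v_1$, this lemma immediately furnishes a descendant $w_2$ of $v_2$ for which $(v_1,w_1)\pairs{\sigequiv}{\downgeq{}}(v_2,w_2)$. Next I would feed this pair into Lemma~\ref{lem-weakstrictlydown}. All of its hypotheses are in place: $e$ is an expression in $\BL(E)$ for the stipulated $E$ (which excludes $\up$, $\pi_2$, $.^{-1}$, $\ch{k}(.)$ for $k>1$, and ``$-$''), both $w_1$ and $w_2$ are descendants of their respective first coordinates, and $(v_1,w_1)\pairs{\sigequiv}{\downgeq{}}(v_2,w_2)$. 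Since by assumption $(v_1,w_1)\in e(D)$, the lemma yields $(v_2,w_2)\in e(D)$, which is precisely the desired conclusion.

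There is essentially no obstacle here; the only point worth flagging is structural rather than technical. The argument goes through only because $E$ excludes set difference: this is exactly the restriction under which Lemma~\ref{lem-weakstrictlydown}, and hence the one-directional Lemma~\ref{lem-weakstrictlydown1}, remain valid, since the negation hidden in ``$-$'' would otherwise force us to exploit equivalence in both directions and thus return to the symmetric downward-$k$-equivalence of Corollary~\ref{cor-strictlydown1}. In summary, the proof is the relatedness counterpart of Corollary~\ref{cor-strictlydown1}, obtained by swapping out Lemma~\ref{lem-strictlydown1} and Proposition~\ref{prop-strictlydown} in favor of Lemmas~\ref{lem-weakstrictlydown1} and~\ref{lem-weakstrictlydown}, respectively.
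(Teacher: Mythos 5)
Your proof is correct and is exactly the argument the paper intends: the paper states this corollary as the ``analogue of Corollary~\ref{cor-strictlydown1}'' without writing out a proof, and the intended chaining is precisely yours---Lemma~\ref{lem-weakstrictlydown1} to produce the descendant $w_2$ of $v_2$ with $(v_1,w_1)\pairs{\sigequiv}{\downgeq{}}(v_2,w_2)$, then Lemma~\ref{lem-weakstrictlydown} to transfer $(v_1,w_1)\in e(D)$ to $(v_2,w_2)\in e(D)$. Your closing observation about why the exclusion of set difference is essential also matches the paper's own remark on the negation inherent in ``$-$''.
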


In other words, downward relatedness implies expression relatedness.

\begin{corollary}
\label{cor-weakstrictlydown2}

Let $E$ be a set of nonbasic operations not containing upward
navigation (``$\up$''), second
projection (``$\pi_2$''), inverse (``$.^{-1}$''), selection on at
least $k$ children satisfying some condition (``$\ch{k}(.)$'') for
$k>1$, and set difference (``$-$'').  Consider the language $\BL(E)$
or $\CL(E)$. Let $D=(V,\textit{Ed},r,\lambda)$ be a document, and let
$v_1$ and $v_2$ be nodes of $D$. If $v_1\weakdownequiv{} v_2$, then
$v_1\expequiv v_2$.

\end{corollary}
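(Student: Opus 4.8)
The plan is to bootstrap the directed transport result Corollary~\ref{cor-weakstrictlydown1} to expression-equivalence, exactly mirroring how Corollary~\ref{cor-strictlydown2} was obtained from Corollary~\ref{cor-strictlydown1}, but keeping careful track of the two directions, since here --- unlike downward-$k$-equivalence, which is already symmetric --- the relevant relation comes packaged as a pair of \emph{directed} relatedness statements. Since $E$ contains neither ``$\up$'', ``$\pi_2$'', nor ``$.^{-1}$'', the language under consideration is strictly downward; moreover every expression of $\CL(E)$ is also an expression of $\BL(E)$ (the only boolean combinations $\CL(E)$ permits inside projections use union and, where available, intersection, both of which are legal in $\BL(E)$), so it suffices to argue for $\BL(E)$, and Corollary~\ref{cor-weakstrictlydown1} then applies verbatim to every expression we meet.

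First I would unpack the hypothesis: $v_1\weakdownequiv{} v_2$ means, by Definition~\ref{def-downrelated}, that $v_1\downgeq{} v_2$ and $v_2\downgeq{} v_1$. To establish $v_1\expgeq v_2$, take any expression $e$ in the language with $e(D)(v_1)\neq\emptyset$. By the strictly downward character of the language there is a descendant $w_1$ of $v_1$ with $(v_1,w_1)\in e(D)$. Applying Corollary~\ref{cor-weakstrictlydown1} with the directed hypothesis $v_1\downgeq{} v_2$ yields a descendant $w_2$ of $v_2$ with $(v_2,w_2)\in e(D)$, so $e(D)(v_2)\neq\emptyset$; hence $v_1\expgeq v_2$. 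Running the identical argument with the roles of $v_1$ and $v_2$ exchanged, now invoking the other directed hypothesis $v_2\downgeq{} v_1$, gives $v_2\expgeq v_1$. Combining the two directions yields $v_1\expequiv v_2$.

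The step I would flag is conceptual rather than computational: because the language lacks set difference, Proposition~\ref{prop-notempty-empty} is unavailable, so ``$\expgeq$'' and ``$\expequiv$'' genuinely differ and I cannot collapse the two directions into one. This is precisely why the symmetric hypothesis ``$\weakdownequiv{}$'' (rather than the one-sided ``$\downgeq{}$'') is required, and why each direction must be discharged by a separate application of the directed Corollary~\ref{cor-weakstrictlydown1}, each time feeding it the appropriately oriented relatedness. No genuine obstacle remains once this orientation bookkeeping is respected, as all the real work has been front-loaded into Lemma~\ref{lem-weakstrictlydown} and Corollary~\ref{cor-weakstrictlydown1}.
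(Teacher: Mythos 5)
Your proposal is correct and follows essentially the same route as the paper: unpack $v_1\weakdownequiv{} v_2$ into the two directed statements $v_1\downgeq{} v_2$ and $v_2\downgeq{} v_1$, apply Corollary~\ref{cor-weakstrictlydown1} once in each direction to obtain $v_1\expgeq v_2$ and $v_2\expgeq v_1$, and combine these into $v_1\expequiv v_2$. Your additional remarks (that $\CL(E)\subseteq\BL(E)$ here, and that the absence of set difference blocks Proposition~\ref{prop-notempty-empty} and thus forces the two separate directed applications) are accurate elaborations of steps the paper leaves implicit.
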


\begin{proof}
The condition $v_1\weakdownequiv{} v_2$ implies $v_1\downgeq{} v_2$
and $v_2\downgeq{} v_1$. By Corollary~\ref{cor-weakstrictlydown1}, these
conditions in turn imply $v_1\expgeq v_2$ and $v_2\expgeq v_1$, which
together are equivalent to $v_1\expequiv v_2$.
\end{proof}

We now look to necessary conditions for expression equivalence for
strictly downward languages not containing set difference. Provided
intersection (``$\cap$'') is available, the expressibility of set
  difference is used only once in the proof of
  Proposition~\ref{prop-downnecessary}, namely where
Proposition~\ref{prop-notempty-empty} is invoked. We do not need this
Proposition, however, in the following variation of
Proposition~\ref{prop-downnecessary}:

\begin{lemma}
\label{lem-weakdownnecessary}

Let $E$ be a set of nonbasic operations containing
first projection (``$\pi_1$''), and
intersection (``$\cap$'').  Consider the language $\BL(E)$ or
$\CL(E)$. Let $D=(V,\textit{Ed},r,\lambda)$ be a
document, and let $v_1$ and $v_2$ be nodes of $D$. If $v_1\expgeq
v_2$, then $v_1\downgeq{} v_2$.

\end{lemma}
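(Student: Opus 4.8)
The plan is to prove Lemma~\ref{lem-weakdownnecessary} by induction on the height of~$v_1$, following the pattern of the first two conditions in the proof of Proposition~\ref{prop-downnecessary}, but exploiting the fact that we now only need to establish the one-directional relation ``$\downgeq{}$.'' As in Proposition~\ref{prop-downnecessary}, I would first observe that $v_1\expgeq v_2$ with respect to $\BL(E)$ implies $v_1\expgeq v_2$ with respect to $\CL(E)$, so it suffices to treat the (weaker, hence harder) case $\CL(E)$; this also forces every witnessing expression built below to be taken in $\CL(E)$.

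The two conditions of Definition~\ref{def-downrelated} are then verified directly. For the label condition, applying $v_1\expgeq v_2$ to $\widehat{\lambda(v_1)}$ gives $\widehat{\lambda(v_1)}(D)(v_1)=\{v_1\}\neq\emptyset$, whence $\widehat{\lambda(v_1)}(D)(v_2)\neq\emptyset$, i.e.\ $\lambda(v_1)=\lambda(v_2)$. If $v_1$ is a leaf, the second condition is vacuous and we are done; otherwise, applying $v_1\expgeq v_2$ to $\ch{1}(\varepsilon)$ shows that $v_2$ is not a leaf either. (Here $\ch{1}$ is available because, in the downward languages under consideration, $\down$ and $\pi_1$ are present, so $\ch{1}(\cdot)=\pi_1(\down/\cdot)$ by Proposition~\ref{prop-counting}.)

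The crux is the following claim, from which the second condition of Definition~\ref{def-downrelated} follows by the induction hypothesis applied to the child $w_1$ (whose height is strictly smaller than that of $v_1$): for every child $w_1$ of $v_1$ there is a child $w_2$ of $v_2$ with $w_1\expgeq w_2$. I would prove this by contradiction. Let $w_2^1,\ldots,w_2^n$ ($n\geq 1$) be the children of $v_2$, and suppose $w_1\not\expgeq w_2^i$ for every~$i$. This is exactly the place where set difference was needed in Proposition~\ref{prop-downnecessary} (through Proposition~\ref{prop-notempty-empty}) but no longer is: because $\expgeq$ is already asymmetric, the definition of $w_1\not\expgeq w_2^i$ \emph{directly} supplies an expression $f_i\in\CL(E)$ with $f_i(D)(w_1)\neq\emptyset$ and $f_i(D)(w_2^i)=\emptyset$, with no negation involved. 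I would then combine these witnesses with intersection and one downward selection, setting $e:=\ch{1}(\pi_1(\pi_1(f_1)\cap\cdots\cap\pi_1(f_n)))$, writing the intersection inside a projection so that $e\in\CL(E)$.

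A short computation closes the argument: $\pi_1(f_1)\cap\cdots\cap\pi_1(f_n)$ selects precisely the nodes $w$ with $f_i(D)(w)\neq\emptyset$ for all $i$, so $e(D)(v)\neq\emptyset$ iff $v$ has a child meeting all $n$ conditions. The child $w_1$ of $v_1$ meets them all, giving $e(D)(v_1)\neq\emptyset$, whereas each child $w_2^j$ of $v_2$ fails condition~$j$, giving $e(D)(v_2)=\emptyset$ and contradicting $v_1\expgeq v_2$. I expect no conceptual obstacle here; the delicate points are purely bookkeeping: confirming that the combined expression legitimately lies in $\CL(E)$ (intersection occurring only inside a projection) and that the downward selection $\ch{1}$ is indeed available from $\down$ and $\pi_1$. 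The genuinely load-bearing observation, which I would highlight, is that one-directionality of $\expgeq$ furnishes the separating expressions $f_i$ for free, so that intersection alone—without set difference—suffices to combine them.
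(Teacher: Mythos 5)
Your proof is correct and follows essentially the same route as the paper: the paper obtains this lemma precisely by observing that the proof of Proposition~\ref{prop-downnecessary} uses set difference only through the invocation of Proposition~\ref{prop-notempty-empty}, which becomes unnecessary when the hypothesis is the one-directional relation $\expgeq$ --- exactly the load-bearing observation you highlight, with the separating expressions $f_i$ combined by intersection inside a projection so as to stay within $\CL(E)$. Your explicit induction on the height of $v_1$ (replacing the appeal to Proposition~\ref{prop-coarse}, which is unavailable since $\downgeq{}$ is not an equivalence relation) and your remark that $\ch{1}$ must be expressible --- i.e., that $\down$ is present, as in the intended languages $\BL(\down,\pi_1,\cap)$ and $\CL(\down,\pi_1,\cap)$ --- simply make precise what the paper leaves implicit.
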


Two applications of Lemma~\ref{lem-weakdownnecessary} immediately
yield the following.

\begin{proposition}
\label{prop-weakdownnecessary}

Let $E$ be a set of nonbasic operations containing
first projection (``$\pi_1$''), and
intersection (``$\cap$'').  Consider the language $\BL(E)$ or
$\CL(E)$. Let $D=(V,\textit{Ed},r,\lambda)$ be a
document, and let $v_1$ and $v_2$ be nodes of $D$. If $v_1\expequiv
v_2$, then $v_1\weakdownequiv{} v_2$.

\end{proposition}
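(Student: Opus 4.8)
The plan is to obtain Proposition~\ref{prop-weakdownnecessary} as a pure symmetrization of Lemma~\ref{lem-weakdownnecessary}, exactly as the remark preceding the statement suggests (``two applications''). First I would unfold the hypothesis using Definition~\ref{def-ee}: the assertion $v_1\expequiv v_2$ is by definition nothing more than the conjunction of the two one-directional relations $v_1\expgeq v_2$ and $v_2\expgeq v_1$. This reduces the problem to handling each direction separately, with the roles of $v_1$ and $v_2$ interchanged in the second.

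Next I would apply Lemma~\ref{lem-weakdownnecessary}, which asserts precisely that $v_1\expgeq v_2$ implies $v_1\downgeq{} v_2$, to each of the two halves. Applied to $v_1\expgeq v_2$, it yields $v_1\downgeq{} v_2$; applied to $v_2\expgeq v_1$ (a legitimate instance of the lemma, since it makes no assumption on which of the two nodes is named first), it yields $v_2\downgeq{} v_1$. I would note that this symmetrization is clean precisely because downward relatedness ``$\downgeq{}$'' has only the two clauses of Definition~\ref{def-downrelated} (equality of labels and the one-sided child-existence condition) and carries \emph{no} sibling-cardinality clause; there is therefore no counting condition whose two directions would have to be reconciled.

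Finally, I would close the argument by reading Definition~\ref{def-downrelated} in the other direction: the weak downward equivalence $v_1\weakdownequiv{} v_2$ is by definition exactly the conjunction $v_1\downgeq{} v_2$ and $v_2\downgeq{} v_1$, which we have just established. Hence $v_1\weakdownequiv{} v_2$, as required.

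At the level of this proposition there is essentially no obstacle: all the genuine content is pushed into Lemma~\ref{lem-weakdownnecessary}, and the proposition is its routine symmetric consequence. The hard part lives one step earlier, in proving the lemma itself, where one must establish the child-existence clause of ``$\downgeq{}$'' \emph{without} set difference. There, for a child $w_1$ of $v_1$ admitting no child $w_2$ of $v_2$ with $w_1\expgeq w_2$, one would have to build a separating expression that is nonempty at $v_1$ but empty at $v_2$ using only first projection (``$\pi_1$'') and intersection (``$\cap$'')---in contrast with Proposition~\ref{prop-downnecessary}, whose argument relies on Proposition~\ref{prop-notempty-empty} and hence on the negation implicit in set difference. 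Since I am permitted to assume Lemma~\ref{lem-weakdownnecessary}, however, the present proof is complete in the three steps above.
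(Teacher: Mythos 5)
Your proof is correct and is exactly the paper's own argument: the paper derives Proposition~\ref{prop-weakdownnecessary} with the one-line remark that ``two applications of Lemma~\ref{lem-weakdownnecessary} immediately yield'' it, i.e., unfolding $v_1\expequiv v_2$ into $v_1\expgeq v_2$ and $v_2\expgeq v_1$, applying the lemma to each direction, and recombining via Definition~\ref{def-downrelated}. Your spelled-out version, including the observation that all the real work lives in the lemma, adds nothing beyond and omits nothing from the paper's intended proof.
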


So, Proposition~\ref{prop-weakdownnecessary} is weaker than
Proposition~\ref{prop-downnecessary} in the sense that the conclusion
is replaced by a weaker one, but stronger in the sense that, in
return, we no longer have to rely on the presence of difference.

The languages containing downward navigation (``$\down$'') and
satisfying both Corollary~\ref{cor-weakstrictlydown2} 
and Proposition~\ref{prop-weakdownnecessary} are
$\BL(\down,\pi_1,\cap)$ and $\CL(\down,\pi_1,\cap)$, which, moreover,
are equivalent, by Corollary~\ref{cor-downwardcorecoincides}. In
addition, we can eliminate intersection operations
except those used as operands of boolean combinations of
subexpressions inside a projection operation without
loosing expressive power. We call these languages the \emph{strictly
  downward positive XPath algebra\/} and the
\emph{strictly downward core positive XPath algebra\/},
respectively. Combining the aforementioned results, we 
get the following.

\begin{theorem}
\label{theo-weakdownequivalent}

Consider the strictly downward (core) positive XPath algebra.
Let $D=(V,\textit{Ed},r,\lambda)$ be a document, and let
$v_1$ and $v_2$ be nodes of $D$.  Then $v_1\expequiv v_2$ if and only
if $v_1\weakdownequiv{} v_2$.

\end{theorem}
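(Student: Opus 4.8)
The plan is to derive the two implications of the equivalence directly from the two results established immediately beforehand, since the substantive work has already been carried out. Recall that the strictly downward (core) positive XPath algebra is $\BL(\down,\pi_1,\cap)$ (equivalently $\CL(\down,\pi_1,\cap)$, by Corollary~\ref{cor-downwardcorecoincides}), so its set of nonbasic operations is $E=\{\down,\pi_1,\cap\}$.

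For the ``if'' direction, I would apply Corollary~\ref{cor-weakstrictlydown2}. Its hypothesis requires that $E$ contain none of upward navigation (``$\up$''), second projection (``$\pi_2$''), inverse (``$.^{-1}$''), selection (``$\ch{k}(.)$'') for $k>1$, and set difference (``$-$''); our $E=\{\down,\pi_1,\cap\}$ meets every one of these exclusions. Hence $v_1\weakdownequiv{} v_2$ implies $v_1\expequiv v_2$. For the ``only if'' direction, I would apply Proposition~\ref{prop-weakdownnecessary}, whose hypothesis asks only that $E$ contain first projection (``$\pi_1$'') and intersection (``$\cap$''), which again holds. Hence $v_1\expequiv v_2$ implies $v_1\weakdownequiv{} v_2$, and combining the two implications yields the claimed equivalence.

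Since the argument is purely a packaging of prior results, there is no real obstacle; the only points meriting a moment's attention are that the theorem covers \emph{both} the standard language $\BL(\down,\pi_1,\cap)$ and its core counterpart $\CL(\down,\pi_1,\cap)$, and that the language hypotheses must be confirmed to transfer. For the first, these two languages coincide in expressive power by Corollary~\ref{cor-downwardcorecoincides}, so expression-equivalence is identical for them and no separate treatment is needed. For the second, both Corollary~\ref{cor-weakstrictlydown2} and Proposition~\ref{prop-weakdownnecessary} are already stated uniformly for $\BL(E)$ and $\CL(E)$, so their hypotheses apply verbatim to the language under consideration.
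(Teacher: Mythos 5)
Your proof is correct and is essentially identical to the paper's own argument: the theorem is stated there precisely as the combination of Corollary~\ref{cor-weakstrictlydown2} (sufficiency) and Proposition~\ref{prop-weakdownnecessary} (necessity), with the identification $\BL(\down,\pi_1,\cap)=\CL(\down,\pi_1,\cap)$ via Corollary~\ref{cor-downwardcorecoincides}. Your verification of the operation-set hypotheses and the remark that both cited results already cover $\BL(E)$ and $\CL(E)$ uniformly match the paper's reasoning exactly.
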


We finally turn to the characterization of navigational
expressiveness. Proposition~\ref{prop-strictlydownconverse} and its
proof, and hence also Corollary~\ref{cor-strictlydown}, carry over to 
the current setting.

\begin{theorem}
\label{theo-weakstrictlydown}

Consider the strictly downward (core) positive XPath algebra.
Let $D=(V,\textit{Ed},r,\lambda)$ be a document, and let $v_1$, $w_1$,
$v_2$, and $w_2$ be nodes of $D$ such that $w_1$ is a descendant of
$v_1$ and $w_2$ is a descendant of $v_2$. 
Then, the property that, for each expression $e$ in the
language under consideration, $(v_1,w_1)\in e(D)$ if and only if
$(v_2,w_2)\in e(D)$ is equivalent to the property
$(v_1,w_1)\pairs{\sigequiv}{\weakdownequiv{}} (v_2,w_2)$.

\end{theorem}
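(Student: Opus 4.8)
The plan is to prove the claimed equivalence by splitting it into its two implications, exactly paralleling the treatment of the languages with set difference. One direction is already in hand: if $(v_1,w_1)\pairs{\sigequiv}{\weakdownequiv{}} (v_2,w_2)$, then Proposition~\ref{prop-weakstrictlydown} gives immediately that $(v_1,w_1)\in e(D)$ if and only if $(v_2,w_2)\in e(D)$ for every expression $e$ in the strictly downward (core) positive XPath algebra. So all the real work lies in the converse, and my plan is to transplant the proof of Proposition~\ref{prop-strictlydownconverse}, substituting its positive analogue Proposition~\ref{prop-weakdownnecessary} for Proposition~\ref{prop-downnecessary} at the single point where the necessary condition for expression-equivalence is invoked.

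For the converse, assume that for each expression $e$ in the language, $(v_1,w_1)\in e(D)$ iff $(v_2,w_2)\in e(D)$. Since $w_1$ is a descendant of $v_1$ and $w_2$ a descendant of $v_2$, the signatures $\sig(v_1,w_1)$, as well as the sub-signatures $\sig(v_1,y_1)$ and $\sig(y_1,w_1)$ along the path, are purely downward ($\down^n$-type) expressions, hence already expressible in $\BL(\down,\pi_1,\cap)$ without any appeal to $\up$ or set difference. Applying the indistinguishability hypothesis to $e\ass\sig(v_1,w_1)$ yields $(v_2,w_2)\in\sig(v_1,w_1)(D)$, and applying it symmetrically to $\sig(v_2,w_2)$ yields the reverse subsumption; together these give $(v_1,w_1)\sigequiv (v_2,w_2)$, which is the first clause of $\pairs{\sigequiv}{\weakdownequiv{}}$.

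For the second clause, I would take an arbitrary node $y_1$ on the path from $v_1$ to $w_1$, with corresponding node $y_2$ on the path from $v_2$ to $w_2$. Given any expression $f$ with $f(D)(y_1)\neq\emptyset$, I would form $e\ass\sig(v_1,y_1)/\pi_1(f)/\sig(y_1,w_1)$, again an expression of the language. Since $(v_1,w_1)\in e(D)$ by construction, the hypothesis forces $(v_2,w_2)\in e(D)$, and because $y_2$ is the unique node on the lower path matching $y_1$ this gives $f(D)(y_2)\neq\emptyset$; running the argument with the two pairs interchanged yields $y_1\expequiv y_2$. Invoking Proposition~\ref{prop-weakdownnecessary}, whose hypotheses require only first projection and intersection (both present in the positive language), then gives $y_1\weakdownequiv{} y_2$. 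As $y_1$ was arbitrary, both clauses of Definition~\ref{def-pairsofnodes} hold, so $(v_1,w_1)\pairs{\sigequiv}{\weakdownequiv{}} (v_2,w_2)$.

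The one point requiring genuine care, and the place I would double-check, is that dropping set difference does not break the converse argument. The proof of Proposition~\ref{prop-strictlydownconverse} uses difference only indirectly, through its appeal to Proposition~\ref{prop-downnecessary}, and the separating expression $\sig(v_1,y_1)/\pi_1(f)/\sig(y_1,w_1)$ involves only composition, first projection, and downward navigation, so no negation is needed anywhere. Replacing the necessary-condition lemma by its positive analogue therefore suffices. This is precisely the asymmetry flagged before the theorem: weakening the conclusion from downward $1$-equivalence to weak downward equivalence is exactly what permits trading set difference for intersection.
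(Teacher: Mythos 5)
Your proof is correct and follows essentially the same route as the paper, which states that Proposition~\ref{prop-strictlydownconverse} and its proof (and hence Corollary~\ref{cor-strictlydown}) carry over to the positive setting: one direction from Proposition~\ref{prop-weakstrictlydown}, and the converse by rerunning the separating-expression argument $\sig(v_1,y_1)/\pi_1(f)/\sig(y_1,w_1)$ with Proposition~\ref{prop-weakdownnecessary} in place of Proposition~\ref{prop-downnecessary}. Your explicit check that the construction uses only $\down$, composition, and $\pi_1$ (so no set difference is ever needed) is exactly the point the paper leaves implicit.
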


To derive a BP-result for the strictly downward (core) positive XPath
algebra, we observe that Lemmas~\ref{lem-separationdown1}
and~\ref{lem-separationdown2} and Theorem~\ref{theo-strictlydown-bp}
carry over to the current context, provided we replace downward
1-equivalence by downward relatedness.

\begin{lemma}
\label{lem-weakseparationdown}

Let $D=(V,\textit{Ed},r,\lambda)$ be a document.
\begin{enumerate}

\item Let $v_1$ be a node of $D$.  There exists an expression $e_{v_1}$ in
the strictly downward (core) positive XPath algebra such that, for
each node $v_2$ of $D$, $e_{v_1}(D)(v_2)\neq\emptyset$ if and only if
$v_1\downgeq{} v_2$. 

\item Let $v_1$ and $w_1$ be a nodes of $D$ such that $w_1$ is a
  descendant of $v_1$. There exists an expression $e_{(v_1,w_1)}$ in
  the strictly downward (core) positive XPath algebra such that, for
  all nodes $v_2$ and $w_2$ of $D$ with $w_2$ a descendant of $v_2$,
  $(v_2,w_2)\in e_{(v_1,w_1)}(D)$ if and only if
  $(v_1,w_1)\pairs{\sigequiv}{\downgeq{}} (v_2,w_2)$.

\end{enumerate}
\end{lemma}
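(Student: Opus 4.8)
The plan is to mimic the two-step construction behind Lemmas~\ref{lem-separationdown1} and~\ref{lem-separationdown2}, but since the positive setting lacks set difference, the appeal to Proposition~\ref{prop-notempty-empty} in the proof of Lemma~\ref{lem-separationdown1} is unavailable. Instead I would build the separating expression $e_{v_1}$ explicitly, by recursion on the height of $v_1$, as a \emph{characteristic expression} for downward relatedness. The crucial observation is that $\downgeq{}$ is a simulation-style, purely existential condition, so its witnessing expression can be written using only $\down$, $\pi_1$, and $\cap$, without any negation.

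For the first statement, if $v_1$ is a leaf with $\lambda(v_1)=\ell$, I would set $e_{v_1}\ass\hat{\ell}$; then $e_{v_1}(D)(v_2)\neq\emptyset$ iff $\lambda(v_2)=\ell$, which for a leaf $v_1$ is exactly $v_1\downgeq{}v_2$. If $v_1$ has label $\ell$ and children $c_1,\ldots,c_p$, then, having obtained $e_{c_1},\ldots,e_{c_p}$ by induction, I would set
$$e_{v_1}\ass\hat{\ell}\cap\pi_1(\down/e_{c_1})\cap\cdots\cap\pi_1(\down/e_{c_p}),$$
an expression of $\BL(\down,\pi_1,\cap)$ (equivalently $\CL(\down,\pi_1,\cap)$ by Corollary~\ref{cor-downwardcorecoincides}). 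Each $e_{v_1}$ built this way satisfies $e_{v_1}(D)\subseteq\varepsilon(D)$, so $e_{v_1}(D)(v_2)\neq\emptyset$ iff $(v_2,v_2)\in e_{v_1}(D)$, which unfolds to: $\lambda(v_2)=\ell$ and, for each $i$, $v_2$ has a child $u$ with $e_{c_i}(D)(u)\neq\emptyset$. By the induction hypothesis the latter is $c_i\downgeq{}u$, so the whole condition is precisely Definition~\ref{def-downrelated}, i.e.\ $v_1\downgeq{}v_2$. This settles the first statement.

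For the second statement I would follow Lemma~\ref{lem-separationdown2} closely: writing $v_1=y_{11},\ldots,y_{1n}=w_1$ for the path from $v_1$ to its descendant $w_1$, and using the expressions $e_{y_{1j}}$ just constructed, I would set
$$e_{(v_1,w_1)}\ass\pi_1(e_{y_{11}})/\down/\pi_1(e_{y_{12}})/\cdots/\down/\pi_1(e_{y_{1n}}),$$
which again lies in the strictly downward (core) positive XPath algebra and, by construction, contains $(v_1,w_1)$. If $(v_1,w_1)\pairs{\sigequiv}{\downgeq{}}(v_2,w_2)$, then $(v_2,w_2)\in e_{(v_1,w_1)}(D)$ follows from $(v_1,w_1)\in e_{(v_1,w_1)}(D)$ by Lemma~\ref{lem-weakstrictlydown}. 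Conversely, if $(v_2,w_2)\in e_{(v_1,w_1)}(D)$, the expression forces a descending path $v_2=y_{21},\ldots,y_{2n}=w_2$ with $y_{1j}\downgeq{}y_{2j}$ for every $j$; since $w_2$ is a descendant of $v_2$ this path is the unique one from $v_2$ to $w_2$, so it has the same length as the path from $v_1$ to $w_1$, giving $(v_1,w_1)\sigequiv(v_2,w_2)$ by Proposition~\ref{prop-congruence}(\ref{congruence-1}), and the $y_{2j}$ are exactly the corresponding nodes, giving $(v_1,w_1)\pairs{\sigequiv}{\downgeq{}}(v_2,w_2)$.

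The main obstacle is the first statement: in the full-difference setting one separates $v_1$ from each non-equivalent node one at a time using Proposition~\ref{prop-notempty-empty}, a technique that is intrinsically negative. The work here is to recognize that the relation to be captured is asymmetric and existential, so that a single positive characteristic expression --- assembled bottom-up from the label test and child witnesses --- suffices, and then to verify by induction that it is nonempty exactly on the nodes downward-related from $v_1$. Once this is in place, the second statement is a routine adaptation of Lemma~\ref{lem-separationdown2}.
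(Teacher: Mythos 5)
Your proof is correct, but for the first claim it takes a genuinely different route from the paper. The paper's intended proof (it only sketches it) carries over the separation-and-intersection argument of Lemma~\ref{lem-separationdown1}: for every node $w$ with $v_1\not\downgeq{}w$, the contrapositive of Lemma~\ref{lem-weakdownnecessary} supplies a positive expression $f_{v_1,w}$ with $f_{v_1,w}(D)(v_1)\neq\emptyset$ and $f_{v_1,w}(D)(w)=\emptyset$, one then sets $e_{v_1}\ass\pi_1\bigl(\bigcap_{w}\pi_1(f_{v_1,w})\bigr)$, and the ``if'' direction follows from Corollary~\ref{cor-weakstrictlydown1} (downward relatedness implies expression relatedness), which takes over the role that Theorem~\ref{theo-downequivalent} played in the symmetric setting. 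Note that the obstacle you identify --- the unavailability of Proposition~\ref{prop-notempty-empty} --- is thus sidestepped in the paper's route without any explicit construction: since $\expgeq$ is already asymmetric, Lemma~\ref{lem-weakdownnecessary} hands you the separating expression in the correct direction, and no negation is needed. Your alternative is nevertheless valid and has its own merits: the bottom-up characteristic expression $e_{v_1}\ass\hat{\ell}\cap\pi_1(\down/e_{c_1})\cap\cdots\cap\pi_1(\down/e_{c_p})$ literally transcribes Definition~\ref{def-downrelated} (whose clauses are purely existential, hence negation-free), the induction on the height of $v_1$ verifying $e_{v_1}(D)(v_2)\neq\emptyset\iff v_1\downgeq{}v_2$ is sound, and the resulting expression is explicit and of size linear in the subtree rooted at $v_1$, whereas the paper's construction is nonconstructive (it invokes semantic non-relatedness to conjure each $f_{v_1,w}$) and produces one conjunct per non-related node of the whole document. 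What the paper's method buys in exchange is uniformity: the same bootstrapping template works for all the other fragments (with counting, with difference), where an explicit characteristic expression would be considerably messier to write down. Your treatment of the second claim coincides with the paper's, and your appeals to Lemma~\ref{lem-weakstrictlydown} for the forward direction and to the forced descending path plus Proposition~\ref{prop-congruence} for the converse are exactly right.
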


In the proof of the first claim, the role of
Theorem~\ref{theo-downequivalent} is taken over by
Corollary~\ref{cor-weakstrictlydown1}.

\begin{theorem}
\label{theo-weakstrictlydown-bp}

Let $D=(V,\textit{Ed},r,\lambda)$ be a document, and let
$R\subseteq V\times V$. Then, there exists an expression $e$ in the
strictly downward (core) positive XPath algebra
such that $e(D)=R$ if and only if,
\begin{enumerate}

\item for all $v,w\in V$, $(v,w)\in R$ implies $w$ is a descendant
  of~$v$; and,

\item for all $v_1,w_1,v_2,w_2\in V$ with $w_1$ a descendant of $v_1$,
  $w_2$ a descendant of $v_2$, and
  $(v_1,w_1)\pairs{\sigequiv}{\downgeq{}} (v_2,w_2)$, 
$(v_1,w_1)\in R$ implies $(v_2,w_2)\in R$. 

\end{enumerate}
\end{theorem}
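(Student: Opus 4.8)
The plan is to mirror the proof of Theorem~\ref{theo-strictlydown-bp} step for step, substituting downward relatedness (``$\downgeq{}$'') for downward $1$-equivalence (``$\downequiv{1}$'') throughout, and invoking the positive-language analogues established earlier in this subsection in place of their difference-containing counterparts. The two workhorses will be Lemma~\ref{lem-weakstrictlydown} (for the ``only if'') and Lemma~\ref{lem-weakseparationdown} (for the ``if'').

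For the ``only if'' direction, suppose $e$ is an expression in the strictly downward (core) positive XPath algebra with $e(D)=R$. The first condition is immediate from the downward character of the language: every pair in $e(D)$ consists of a node and one of its descendants. For the second condition, I would invoke Lemma~\ref{lem-weakstrictlydown}: if $(v_1,w_1)\pairs{\sigequiv}{\downgeq{}}(v_2,w_2)$ and $(v_1,w_1)\in e(D)=R$, then the lemma yields $(v_2,w_2)\in e(D)=R$ directly. Note that, in contrast with the difference-containing case, I cannot appeal to the symmetric equivalence of Theorem~\ref{theo-weakstrictlydown}; I must use the one-directional Lemma~\ref{lem-weakstrictlydown}, so that the asymmetry of ``$\downgeq{}$'' is respected.

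For the ``if'' direction, assume $R$ satisfies the two conditions. Using Lemma~\ref{lem-weakseparationdown}, part~2, for every pair $(v_1,w_1)$ with $w_1$ a descendant of $v_1$, I fix an expression $e_{(v_1,w_1)}$ with the property that $(v_2,w_2)\in e_{(v_1,w_1)}(D)$ if and only if $(v_1,w_1)\pairs{\sigequiv}{\downgeq{}}(v_2,w_2)$. Condition~1 guarantees that every pair of $R$ is an ancestor-descendant pair, so the expression
$$e\ass\bigcup_{(v_1,w_1)\in R}e_{(v_1,w_1)}$$
is well defined and again lies in the language. To verify $e(D)=R$: the inclusion $R\subseteq e(D)$ follows from reflexivity of ``$\pairs{\sigequiv}{\downgeq{}}$'' (both ``$\sigequiv$'' and ``$\downgeq{}$'' are reflexive, and on the path from $v_1$ to $w_1$ every node corresponds to itself), which gives $(v_1,w_1)\in e_{(v_1,w_1)}(D)$ for each $(v_1,w_1)\in R$. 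For $e(D)\subseteq R$, take $(v_2,w_2)\in e(D)$; by construction there is some $(v_1,w_1)\in R$ with $(v_2,w_2)\in e_{(v_1,w_1)}(D)$, whence $(v_1,w_1)\pairs{\sigequiv}{\downgeq{}}(v_2,w_2)$, and since $w_2$ is a descendant of $v_2$ by downwardness, Condition~2 forces $(v_2,w_2)\in R$.

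The main obstacle is not computational but orientational: because ``$\downgeq{}$'' is asymmetric, I must keep the source pair $(v_1,w_1)$ indexing the separating expressions in exactly the same orientation within ``$\pairs{\sigequiv}{\downgeq{}}$'' as it appears in Condition~2, so that the closure property of $R$ lines up with the membership condition of $e_{(v_1,w_1)}(D)$. Checking reflexivity of ``$\pairs{\sigequiv}{\downgeq{}}$,'' and that Lemma~\ref{lem-weakseparationdown} supplies the separating expressions in the orientation needed, are the only points requiring genuine care; everything else is a routine transcription of the argument for Theorem~\ref{theo-strictlydown-bp}.
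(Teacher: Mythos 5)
Your proof is correct and follows essentially the same route the paper intends: Theorem~\ref{theo-weakstrictlydown-bp} is obtained by transcribing the proof of Theorem~\ref{theo-strictlydown-bp} with downward relatedness in place of downward $1$-equivalence, using Lemma~\ref{lem-weakseparationdown} to supply the separating expressions for the ``if'' direction and the one-directional Lemma~\ref{lem-weakstrictlydown} for the closure condition in the ``only if'' direction. Your explicit care about the asymmetry of ``$\downgeq{}$'' is exactly the point the paper flags when it notes that here $R$ is merely closed under $\pairs{\sigequiv}{\downgeq{}}$ rather than being a union of congruence classes.
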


The major difference between Theorems~\ref{theo-strictlydown-bp}
and~\ref{theo-weakstrictlydown-bp} is that, in the former, $R$ is a
partition of maximal sets of $\downequiv{k}$-congruent nodes, while,
in the latter, $R$ is merely closed under $\downgeq{}$-congruence.

We can also recast Theorem~\ref{theo-weakstrictlydown-bp} in terms of
node-level navigation, in much the same way as
Theorem~\ref{theo-strictlydown-bp}.

\begin{theorem}
\label{theo-weakstrictlydown-nodelevel}

Let $D=(V,\textit{Ed},r,\lambda)$ be a document, let
$v$ be a node of $D$, and let $W\subseteq V$. Then there exists an
expression $e$ in the strictly downward (core) positive XPath algebra
such that $e(D)(v)=W$ if and only if all nodes
of~$W$ are descendants of $v$, and, for all nodes $w_1$ and $w_2$ of $D$ with
$(v,w_1)\pairs{\sigequiv}{\downgeq{}} (v,w_2)$, $w_1\in W$ implies
$w_2\in W$.

\end{theorem}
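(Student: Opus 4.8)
The plan is to follow the proof of Theorem~\ref{theo-strictlydown-nodelevel} almost verbatim, replacing downward-$k$-congruence by the asymmetric downward relatedness $\pairs{\sigequiv}{\downgeq{}}$ and substituting the positive-language analogues (Lemma~\ref{lem-weakstrictlydown} and Theorem~\ref{theo-weakstrictlydown-bp}) for the results used there. Throughout, I would exploit that every node of $W$ is a descendant of $v$, so all pairs occurring below are ancestor-descendant pairs, whose signatures have the form $\down^n$ and whose corresponding nodes are simply matched by depth; this is what makes reflexivity and transitivity of $\pairs{\sigequiv}{\downgeq{}}$ usable.

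For the \emph{only if} direction, suppose $e(D)(v)=W$ for an expression $e$ of the strictly downward (core) positive XPath algebra, and let $w_1,w_2$ satisfy $(v,w_1)\pairs{\sigequiv}{\downgeq{}}(v,w_2)$ with $w_1\in W$. From $w_1\in W=e(D)(v)$ we get $(v,w_1)\in e(D)$. Since this language is $\BL(\down,\pi_1,\cap)$, which contains no set difference, Lemma~\ref{lem-weakstrictlydown} applies and yields $(v,w_2)\in e(D)$, i.e.\ $w_2\in e(D)(v)=W$. This is the crucial point where the \emph{asymmetric} Lemma~\ref{lem-weakstrictlydown}, rather than the symmetric Proposition~\ref{prop-weakstrictlydown}, must be invoked, because $\downgeq{}$ is only a preorder. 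The descendant condition on $W$ is immediate from the downward character of the language.

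For the \emph{if} direction, assume $W$ satisfies the two stated conditions, and set
$$R\ass\{(v',w')\mid\ \textrm{there exists}\ w_1\in W\ \textrm{such that}\ (v,w_1)\pairs{\sigequiv}{\downgeq{}}(v',w')\}.$$
I would then check the two hypotheses of Theorem~\ref{theo-weakstrictlydown-bp}. The first holds because $(v,w_1)\sigequiv(v',w')$ forces $\sig(v',w')=\sig(v,w_1)=\down^n$, so $w'$ is a descendant of $v'$. The second (closure under $\pairs{\sigequiv}{\downgeq{}}$) follows from transitivity: if $(v',w')\in R$ via $w_1\in W$ and $(v',w')\pairs{\sigequiv}{\downgeq{}}(v'',w'')$, then composing the two relatedness facts (equal signatures $\down^n$ together with componentwise transitivity of $\downgeq{}$ along the depth correspondence) gives $(v,w_1)\pairs{\sigequiv}{\downgeq{}}(v'',w'')$, so $(v'',w'')\in R$. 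Theorem~\ref{theo-weakstrictlydown-bp} then supplies an expression $e$ with $e(D)=R$.

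It remains to verify $e(D)(v)=W$. The inclusion $W\subseteq e(D)(v)$ uses reflexivity of $\pairs{\sigequiv}{\downgeq{}}$: for $w\in W$ we have $(v,w)\pairs{\sigequiv}{\downgeq{}}(v,w)$, hence $(v,w)\in R=e(D)$ and $w\in e(D)(v)$. For the reverse inclusion, take $w'\in e(D)(v)$, i.e.\ $(v,w')\in R$; by definition there is $w_1\in W$ with $(v,w_1)\pairs{\sigequiv}{\downgeq{}}(v,w')$, and the second closure property assumed of $W$ gives $w'\in W$. I expect the main obstacle to be the clean justification that $\pairs{\sigequiv}{\downgeq{}}$ is reflexive and transitive on ancestor-descendant pairs; this rests on $\downgeq{}$ being a simulation preorder and on the depth-indexed node correspondence composing without loss, precisely the place where the absence of symmetry prevents a direct reuse of the $\downequiv{k}$-equivalence reasoning of Theorem~\ref{theo-strictlydown-nodelevel}.
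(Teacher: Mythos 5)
Your proof is correct and is essentially the paper's own argument: the paper derives this theorem by recasting Theorem~\ref{theo-weakstrictlydown-bp} at the node level ``in much the same way as'' Theorem~\ref{theo-strictlydown-nodelevel} was derived from Theorem~\ref{theo-strictlydown-bp}, which is precisely your structure---the \emph{only if} via the asymmetric Lemma~\ref{lem-weakstrictlydown} (correctly chosen over the symmetric Proposition~\ref{prop-weakstrictlydown}, since the hypothesis here is $\pairs{\sigequiv}{\downgeq{}}$), and the \emph{if} via the closure set $R$ fed into the BP theorem. Your explicit check that $\pairs{\sigequiv}{\downgeq{}}$ is reflexive and transitive on ancestor-descendant pairs (because $\downgeq{}$ is a simulation preorder and the depth-indexed correspondence composes) merely fills in a detail the paper leaves implicit.
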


\begin{corollary}
\label{cor-weakstrictlydown-root}

Let $D=(V,\textit{Ed},r,\lambda)$ be a document, 
and let $W\subseteq V$. Then there exists an 
expression $e$ in the strictly downward (core) positive XPath algebra such that
$e(D)(r)=W$ if and only if, for all nodes $w_1$ and $w_2$ of $D$ with
$w_1\updowngeq{} w_2$, $w_1\in W$ implies $w_2\in W$. 

\end{corollary}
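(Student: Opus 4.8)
The plan is to obtain this corollary as a direct specialization of the node-level characterization in Theorem~\ref{theo-weakstrictlydown-nodelevel}, exactly mirroring the way Theorem~\ref{theo-strictlydown-root} was derived from Theorem~\ref{theo-strictlydown-nodelevel} in the counting case. So the first step is to apply Theorem~\ref{theo-weakstrictlydown-nodelevel} with the starting node taken to be the root, i.e.\ with $v := r$. This immediately yields that an expression $e$ in the strictly downward (core) positive XPath algebra with $e(D)(r) = W$ exists if and only if two conditions hold: (i) every node of $W$ is a descendant of $r$, and (ii) for all nodes $w_1, w_2$ of $D$ with $(r,w_1)\pairs{\sigequiv}{\downgeq{}}(r,w_2)$, $w_1 \in W$ implies $w_2 \in W$.

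The second step is to dispose of condition (i). Since $r$ is the root of $D$, every node of $V$ is a descendant of $r$; in particular every node of $W$ is, so (i) is vacuously satisfied and the characterization reduces to (ii) alone. The third and final step is to translate the pair-level condition (ii) into the promised single-node condition. Here I would invoke Proposition~\ref{prop-weakequivalent}, item~(\ref{weakequivalent-1}), which states precisely that $w_1 \updowngeq{} w_2$ if and only if $(r,w_1)\pairs{\sigequiv}{\downgeq{}}(r,w_2)$. Substituting this equivalence into (ii) rewrites it as: for all nodes $w_1$ and $w_2$ of $D$ with $w_1 \updowngeq{} w_2$, $w_1 \in W$ implies $w_2 \in W$, which is exactly the stated condition, completing the argument.

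There is no genuine obstacle, as the corollary is a mechanical specialization; the only point requiring a moment's care is the directionality of the relation $\updowngeq{}$. Because ``related'' is an asymmetric (preorder) notion rather than an equivalence, I would verify that the orientation of the implication in Proposition~\ref{prop-weakequivalent}(\ref{weakequivalent-1}) lines up with the orientation of the closure condition inherited from Theorem~\ref{theo-weakstrictlydown-nodelevel}, so that no spurious reversal of $w_1$ and $w_2$ is introduced in the substitution. Since Proposition~\ref{prop-weakequivalent} is phrased with the same argument order as the closure condition, the match is immediate, but this is the sole place where an error could creep in.
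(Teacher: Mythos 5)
Your proposal is correct and follows exactly the paper's own route: the paper derives this corollary by specializing Theorem~\ref{theo-weakstrictlydown-nodelevel} to $v=r$ (where the descendant condition is vacuous) and then invoking Proposition~\ref{prop-weakequivalent}, item~(\ref{weakequivalent-1}), to rewrite $(r,w_1)\pairs{\sigequiv}{\downgeq{}}(r,w_2)$ as $w_1\updowngeq{} w_2$, mirroring how Theorem~\ref{theo-strictlydown-root} was obtained from Theorem~\ref{theo-strictlydown-nodelevel} via Proposition~\ref{prop-kequivalent}. Your check on the orientation of the asymmetric relation $\updowngeq{}$ is the right point of care, and it indeed lines up as you say.
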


For the last result of this section, we relied on
Proposition~\ref{prop-weakequivalent}, (\ref{weakequivalent-1}).

\section{Weakly downward languages}
\label{sec-weaklydownward}

We now turn to \emph{weakly downward\/} languages: for any node $v$ of
the document $D$ under consideration, all nodes in $e(D)(v)$ are
descendants of $v$, but there are possibly nodes $v$ for which $e(D)(v)\neq
e(D')(v)$, with $D'$ the subtree of $D$ rooted at $v$.
\subsection{Sufficient conditions for expression-equivalence}
\label{subsec-weaklydownsufficient}

The key notion in
Sections~\ref{subsec-weaklydownsufficient}--\ref{subsec-weaklydowncharacterization}
is $\updownequiv{k}$-congruence, $k\geq 1$, restricted to
ancestor-descendant pairs. We first explore some properties of this
notion.

\begin{lemma}
\label{lem-weaklydown1}

Let $D=(V,\textit{Ed},r,\lambda)$ be a document, let $v_1$, $w_1$,
$v_2$, and $w_2$ be nodes of $D$ such that $w_1$ is a descendant of
$v_1$ and $w_2$ is a descendant of $v_2$, and let
$k\geq 1$. Then, $(v_1,w_1)\pairs{\sigequiv}{\updownequiv{k}}
(v_2,w_2)$ if and only if $(v_1,w_1)\sigequiv(v_2,w_2)$ and
$w_1\updownequiv{k} w_2$. 

\end{lemma}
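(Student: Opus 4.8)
The plan is to prove the two implications separately, exploiting that, since $w_1$ is a descendant of $v_1$ and $w_2$ a descendant of $v_2$, the relevant signatures are purely downward. Concretely, whenever $(v_1,w_1)\sigequiv(v_2,w_2)$ we have $\sig(v_1,w_1)=\sig(v_2,w_2)=\down^n$ for some $n\geq 0$, so that the ``path from $v_1$ to $w_1$'' is the descending chain $v_1=y_{1,0},y_{1,1},\dots,y_{1,n}=w_1$ of successive children, and the node $y_{2,i}$ corresponding to $y_{1,i}$ in the sense of Definition~\ref{def-pairsofnodes} is the unique descendant of $v_2$ at distance $i$, which lies on the chain $v_2=y_{2,0},\dots,y_{2,n}=w_2$. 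I would record this identification of corresponding nodes at the outset, as it is used in both directions.

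For the forward implication, suppose $(v_1,w_1)\pairs{\sigequiv}{\updownequiv{k}}(v_2,w_2)$. The first clause of Definition~\ref{def-pairsofnodes} is literally $(v_1,w_1)\sigequiv(v_2,w_2)$, and the second clause, instantiated at the endpoint $y_{1,n}=w_1$ whose corresponding node is $y_{2,n}=w_2$ (the unique ancestor of $w_2$ with $(v_2,w_2)\in\down^n(D)$), yields $w_1\updownequiv{k}w_2$. So this direction is immediate.

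For the backward implication, suppose $(v_1,w_1)\sigequiv(v_2,w_2)$ and $w_1\updownequiv{k}w_2$. The first clause of Definition~\ref{def-pairsofnodes} holds by hypothesis, so I only need to verify the second clause, i.e.\ that $y_{1,i}\updownequiv{k}y_{2,i}$ for every $i=0,\dots,n$. I would do this by downward induction on $i$, starting from $i=n$: the base case $y_{1,n}\updownequiv{k}y_{2,n}$ is exactly $w_1\updownequiv{k}w_2$, and the inductive step passes from $y_{1,i+1}\updownequiv{k}y_{2,i+1}$ to $y_{1,i}\updownequiv{k}y_{2,i}$ by invoking the third clause of Definition~\ref{def-kequivalent}, since $y_{1,i}$ and $y_{2,i}$ are respectively the parents of $y_{1,i+1}$ and $y_{2,i+1}$.

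There is no serious obstacle here; the only points requiring care are the bookkeeping that identifies corresponding nodes along an ancestor--descendant path with the matching ancestors of $w_1$ and $w_2$, and checking that the recursion of Definition~\ref{def-kequivalent} can legitimately be applied at each step --- that is, that $y_{1,i+1}$ and $y_{2,i+1}$ are never the root when $i+1\geq 1$, which holds because they are proper descendants of $v_1$ and $v_2$, respectively. The congruence hypothesis $(v_1,w_1)\sigequiv(v_2,w_2)$ is used precisely to guarantee that both chains have the common length $n$ and that the correspondence between their nodes is well defined.
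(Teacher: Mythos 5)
Your proof is correct, and in its nontrivial direction it takes a more elementary route than the paper's. For the ``if'' direction, the paper converts the hypothesis $w_1\updownequiv{k}w_2$ into a statement about root-paths via Proposition~\ref{prop-kequivalent} (namely $(r,w_1)\pairs{\sigequiv}{\downequiv{k}}(r,w_2)$), restricts that statement to corresponding prefixes using Proposition~\ref{prop-subpairs} to obtain $(r,y_1)\pairs{\sigequiv}{\downequiv{k}}(r,y_2)$ for each node $y_1$ on the path from $v_1$ to $w_1$, and then converts back with a second application of Proposition~\ref{prop-kequivalent} to conclude $y_1\updownequiv{k}y_2$. You instead unfold Definition~\ref{def-kequivalent} directly: since $k$-equivalence of two non-root nodes entails, by its third clause, $k$-equivalence of their parents, a downward induction starting from the endpoints ($y_{1,n}=w_1\updownequiv{k}w_2=y_{2,n}$) propagates $\updownequiv{k}$ up both chains simultaneously. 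Both arguments rest on the same underlying fact --- that $\updownequiv{k}$ is inherited by corresponding ancestors --- but yours derives it self-containedly from the recursive definition, whereas the paper reuses its previously established machinery about root-path congruences; the paper's route makes the reuse of Propositions~\ref{prop-kequivalent} and~\ref{prop-subpairs} explicit, while yours is shorter and needs no auxiliary results. Your attention to the side conditions is exactly what is needed: the two chains have common length $n$ because congruence forces $\sig(v_1,w_1)=\sig(v_2,w_2)=\down^n$, the corresponding nodes in the sense of Definition~\ref{def-pairsofnodes} are precisely the chain nodes, and $y_{1,i+1}$, $y_{2,i+1}$ are proper descendants of $v_1$, $v_2$ and hence never the root, so clause 3 of Definition~\ref{def-kequivalent} is always applicable. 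The ``only if'' direction is handled identically in both proofs.
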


\begin{proof}
As the ``only if'' is obvious, we focus on the ``if.'' By
Proposition~\ref{prop-kequivalent}, $w_1\updownequiv{k} w_2$ implies
that $(r,w_1)\pairs{\sigequiv}{\downequiv{k}} (r,w_2)$. Let $y_1$ be a
node on the path from~$v_1$ to~$w_1$, and let $y_2$ be the
corresponding node on the path from~$v_2$ to~$w_2$. By
Proposition~\ref{prop-subpairs}, we also have that
$(r,y_1)\pairs{\sigequiv}{\downequiv{k}} (r,y_2)$. By another
application of Proposition~\ref{prop-kequivalent}, we finally deduce
that $y_1\updownequiv{k} y_2$.
\end{proof}

\begin{lemma}
\label{lem-weaklydown2}

Let $D=(V,\textit{Ed},r,\lambda)$ be a document, let $v_1$ and $w_1$
be nodes of $D$ such that $w_1$ is a descendant of $v_1$, and let
$k\geq 1$. Then,
\begin{enumerate}

\item\label{weakdown1} each node $v_2$ of $D$ for which
  $v_1\updownequiv{k} v_2$ has a descendant
$w_2$ in $D$ such that $(v_1,w_1)\pairs{\sigequiv}{\updownequiv{k}}
  (v_2,w_2)$; and

\item\label{weakdown2} each node $w_2$ of $D$ for which
  $w_1\updownequiv{k} w_2$ has an an ancestor
$v_2$ in $D$ such that $(v_1,w_1)\pairs{\sigequiv}{\updownequiv{k}}
  (v_2,w_2)$.

\end{enumerate}
\end{lemma}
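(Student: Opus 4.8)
The plan is to prove both items by induction on the length of the path from $v_1$ to $w_1$, mirroring the structure of Lemma~\ref{lem-strictlydown1} but carrying along the upward component that $\updownequiv{k}$ demands. The two claims are close duals: item~\ref{weakdown1} starts from a node $v_2$ that is $k$-equivalent to the ancestor $v_1$ and must produce a matching descendant, while item~\ref{weakdown2} starts from a node $w_2$ that is $k$-equivalent to the descendant $w_1$ and must produce a matching ancestor. In both cases, by Lemma~\ref{lem-weaklydown1}, the goal $(v_1,w_1)\pairs{\sigequiv}{\updownequiv{k}}(v_2,w_2)$ reduces to establishing two things: that $(v_1,w_1)\sigequiv(v_2,w_2)$ (i.e.\ the signatures match, which for ancestor-descendant pairs means the paths have equal length), and that the relevant endpoint equivalence holds ($w_1\updownequiv{k}w_2$ for item~1, automatically given for item~2). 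This reformulation via Lemma~\ref{lem-weaklydown1} is what keeps the argument manageable.

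First I would prove item~\ref{weakdown1}. The base case $w_1=v_1$ is immediate, taking $w_2:=v_2$. For the inductive step, let $y_1$ be the child of $v_1$ on the path to $w_1$. The key mechanism is Definition~\ref{def-kdownwardequivalent}: since $v_1\updownequiv{k}v_2$ implies $v_1\downequiv{k}v_2$ (condition~1 of Definition~\ref{def-kequivalent}), the downward equivalence guarantees a child $y_2$ of $v_2$ with $y_1\downequiv{k}y_2$. I then need to upgrade this to $y_1\updownequiv{k}y_2$; this follows because $y_1$ and $y_2$ have $k$-equivalent parents ($v_1\updownequiv{k}v_2$) and are themselves downward-$k$-equivalent, which is exactly the recursive condition of Definition~\ref{def-kequivalent} read from the parent down. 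Applying the induction hypothesis to $(y_1,w_1)$ and the node $y_2$ yields a descendant $w_2$ of $y_2$ with $(y_1,w_1)\pairs{\sigequiv}{\updownequiv{k}}(y_2,w_2)$, and prepending the edge $v_1\to y_1$ (matched by $v_2\to y_2$) extends this to the desired relation for $(v_1,w_1)$ and $(v_2,w_2)$.

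For item~\ref{weakdown2}, I would again induct on the path length, but now the recursion walks \emph{upward} from $w_1$, which aligns naturally with the depth-based recursion in Definition~\ref{def-kequivalent}. Given $w_1\updownequiv{k}w_2$ with $w_1$ a proper descendant of $v_1$, let $x_1$ be the parent of $w_1$ (still an ancestor of $v_1$ or equal to $v_1$). Condition~3 of Definition~\ref{def-kequivalent} gives that the parent $x_2$ of $w_2$ satisfies $x_1\updownequiv{k}x_2$. Applying the induction hypothesis to the shorter path from $v_1$ to $x_1$ and the node $x_2$ produces an ancestor $v_2$ of $x_2$ with $(v_1,x_1)\pairs{\sigequiv}{\updownequiv{k}}(v_2,x_2)$; extending downward by one edge to $w_1$ and $w_2$ completes the step.

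The main obstacle I anticipate is bookkeeping the direction of the equivalence correctly: $\updownequiv{k}$ is defined by recursion on depth (toward the root), whereas the ancestor-descendant path is traversed either top-down (item~1) or bottom-up (item~2), so one must be careful that at each step the parent relation supplied by Definition~\ref{def-kequivalent} is used in the compatible direction, and that existence of the matching child (for item~1) genuinely follows from downward-$k$-equivalence rather than from the weaker upward condition. A secondary subtlety is confirming, via Proposition~\ref{prop-subpairs} or direct unwinding of Definition~\ref{def-pairsofnodes}, that the $\updownequiv{k}$ relation assembled pointwise along the path indeed certifies $(v_1,w_1)\pairs{\sigequiv}{\updownequiv{k}}(v_2,w_2)$; but since Lemma~\ref{lem-weaklydown1} already isolates the endpoint condition as the only nontrivial ingredient, this last verification is routine once the inductive skeleton is in place.
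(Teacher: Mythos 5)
Your proof is correct, and it splits into two cases relative to the paper's own argument. For item~\ref{weakdown2} you follow essentially the paper's route: induction on the path length, walking up from $w_1$, using condition~3 of Definition~\ref{def-kequivalent} to pair the parents, applying the induction hypothesis to the shorter path, and extending by one edge. (Two small wording points there: the parent $x_1$ of $w_1$ is a \emph{descendant} of $v_1$ or $v_1$ itself, not ``an ancestor of $v_1$''; and before invoking condition~3 you should note, as the paper does, that $w_1$ being a proper descendant of $v_1$ forces $w_1\neq r$, hence $w_2\neq r$ by condition~2, so the parent $x_2$ exists at all.) For item~\ref{weakdown1}, however, your route genuinely differs: the paper does not induct. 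It invokes Lemma~\ref{lem-strictlydown1} as a black box to obtain a descendant $w_2$ of $v_2$ with $(v_1,w_1)\pairs{\sigequiv}{\downequiv{k}}(v_2,w_2)$, then upgrades the endpoint to $w_1\updownequiv{k} w_2$ by concatenating root-paths via Proposition~\ref{prop-kequivalent}, and closes with Lemma~\ref{lem-weaklydown1}. You instead redo the induction underlying Lemma~\ref{lem-strictlydown1} while carrying the two-way equivalence along, using the key (and correct) observation that a downward-$k$-equivalent pair of children of $k$-equivalent parents is itself $k$-equivalent --- this is exactly Definition~\ref{def-kequivalent} unfolded one level, since neither child is the root. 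Both arguments are sound; the paper's is shorter because it reuses already-established lemmas and routes the endpoint comparison through the root, while yours is more self-contained and makes the pointwise upgrade of the equivalence along the path explicit.
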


\begin{proof}
To see (\ref{weakdown1}), we know by Lemma~\ref{lem-strictlydown1}
that $v_2$ has a descendant $w_2$ such that
$(v_1,w_1)\pairs{\sigequiv}{\downequiv{k}} (v_2,w_2)$. By
Proposition~\ref{prop-kequivalent}, we also have that 
$(r,v_1)\pairs{\sigequiv}{\downequiv{k}} (r,v_2)$. It now readily
follows that $(r,w_1)\pairs{\sigequiv}{\downequiv{k}} (r,w_2)$, or,
again by Proposition~\ref{prop-kequivalent}, $w_1\updownequiv{k}
w_2$. It now follows from Lemma~\ref{lem-weaklydown1} that 
$(v_1,w_1)\pairs{\sigequiv}{\updownequiv{k}}
  (v_2,w_2)$.

Claim (\ref{weakdown2}) can be shown by induction on the length of the
path from $v_1$ to $w_1$. If $v_1=w_1$, then obviously, we must choose
$v_2\ass w_2$. If $v_1\neq w_1$, we have in particular that $w_1\neq
r$, and, hence, by $w_1\updownequiv{k} w_2$, that $w_2\neq r$. Let
$y_1$ be the parent of $w_1$ and $y_2$ be the parent of $w_2$. By
definition, $y_1\updownequiv{k} y_2$, and, by the induction
hypothesis there is a node $v_2$ in $D$ such that
$(v_1,y_1)\pairs{\sigequiv}{\updownequiv{k}} (v_2,y_2)$. It now readily
follows that $(v_1,w_1)\pairs{\sigequiv}{\updownequiv{k}} (v_2,w_2)$.
\end{proof}

We now link $\updownequiv{k}$-congruence of ancestor-descendant pairs of
nodes with expressibility in weakly downward languages.

\begin{proposition}
\label{prop-weaklydown}

Let $k\geq 1$, and let $E$ be the set of all nonbasic operations in
Table~\ref{tab-binops}, except for upward navigation (``$\up$''),
inverse (``$.^{-1}$''), and
selection on at least $m$ children 
satisfying some condition (``$\ch{m}(.)$'') for $m>k$. Let $e$ be an
expression in $\BL(E)$. 
Let $D=(V,\textit{Ed},r,\lambda)$ be a
document, and let $v_1$, $w_1$, $v_2$, 
and $w_2$ be nodes of $D$ such that $w_1$ is a descendant of $v_1$ and
$w_2$ is a descendant of $v_2$.  Assume furthermore that
$(v_1,w_1)\pairs{\sigequiv}{\updownequiv{k}} (v_2,w_2)$. Then,
$(v_1,w_1)\in e(D)$ if and only if $(v_2,w_2)\in e(D)$.

\end{proposition}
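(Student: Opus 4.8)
The plan is to prove this by structural induction on the expression $e$, closely mirroring the proof of Proposition~\ref{prop-strictlydown} but with two systematic changes dictated by the passage from strictly to weakly downward languages: wherever the strictly downward argument invokes $\downequiv{k}$-congruence I would invoke $\updownequiv{k}$-congruence instead, and wherever it invokes Lemma~\ref{lem-strictlydown1} (which produces a matching \emph{descendant}) I would invoke the two halves of Lemma~\ref{lem-weaklydown2}, which produce matching descendants \emph{and} matching ancestors under $\updownequiv{k}$. By the symmetry of $\pairs{\sigequiv}{\updownequiv{k}}$, it suffices to establish the forward implication $(v_1,w_1)\in e(D)\Rightarrow(v_2,w_2)\in e(D)$.

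For the base cases $\emptyset,\varepsilon,\hat{\ell},\down$ the verification is immediate: $(v_1,w_1)\sigequiv(v_2,w_2)$ forces $\sig(v_1,w_1)=\sig(v_2,w_2)$, so the shape of the connecting path is preserved, and $v_1\updownequiv{k} v_2$ gives $\lambda(v_1)=\lambda(v_2)$ for the label test. For composition $e_1/e_2$, any witnessing intermediate node $y_1$ must lie on the path from $v_1$ to $w_1$ by the weakly downward nature of the language; I would take the corresponding node $y_2$ and split the congruence via Proposition~\ref{prop-subpairs} into $(v_1,y_1)\pairs{\sigequiv}{\updownequiv{k}}(v_2,y_2)$ and $(y_1,w_1)\pairs{\sigequiv}{\updownequiv{k}}(y_2,w_2)$, applying the induction hypothesis to each factor. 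The Boolean cases $\cup,\cap,-$ are handled exactly as before, using the induction hypothesis in both directions to absorb the negation implicit in set difference.

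The genuinely new ingredient is the second projection $\pi_2(f)$, the operation that separates weakly from strictly downward languages and that forces $\updownequiv{k}$ rather than $\downequiv{k}$ to be the right equivalence. A membership $(v_1,v_1)\in\pi_2(f)(D)$ (so $v_1=w_1$, hence $v_2=w_2$) comes from some $z_1$ with $(z_1,v_1)\in f(D)$, where $z_1$ is an \emph{ancestor} of $v_1$ by the weakly downward nature of $f$. Since $v_1\updownequiv{k} v_2$, I would apply the ancestor half of Lemma~\ref{lem-weaklydown2} to obtain an ancestor $z_2$ of $v_2$ with $(z_1,v_1)\pairs{\sigequiv}{\updownequiv{k}}(z_2,v_2)$; the induction hypothesis then yields $(z_2,v_2)\in f(D)$, whence $(v_2,v_2)\in\pi_2(f)(D)$. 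The case $\pi_1(f)$ is the mirror image, using the descendant half of Lemma~\ref{lem-weaklydown2}.

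The step I expect to require the most care is counting, $e=\ch{m}(f)$ with $m\le k$. As in Proposition~\ref{prop-strictlydown}, the argument reduces to showing that the set of children of $v_1$ through which $f$ can fire is matched, up to the threshold $k$, by the corresponding set for $v_2$; the subtlety is that the induction hypothesis speaks of $\updownequiv{k}$-congruence, whereas the partition of children is governed by downward-$k$-equivalence. The observation that makes this go through is that two \emph{siblings} are downward-$k$-equivalent if and only if they are $k$-equivalent, since they share a common parent and the remaining clauses of Definition~\ref{def-kequivalent} are then vacuous; so I may freely upgrade $\downequiv{k}$ to $\updownequiv{k}$ among the children and again invoke Lemma~\ref{lem-weaklydown2}, after which the threshold bookkeeping ($\min(|\cdot|,k)$ against $m\le k$) proceeds verbatim as in the strictly downward case.
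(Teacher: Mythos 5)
Your proof is correct and takes essentially the same route as the paper's: the paper also reuses the structural induction of Proposition~\ref{prop-strictlydown} (with Lemma~\ref{lem-weaklydown2} taking over the role of Lemma~\ref{lem-strictlydown1}, since the hypothesis is now $\updownequiv{k}$-congruence) and singles out second projection $\pi_2(f)$ as the only genuinely new case, handling it exactly as you do via the ancestor half of Lemma~\ref{lem-weaklydown2}. One small refinement to your counting case: the upgrade from $\downequiv{k}$ to $\updownequiv{k}$ is needed not only for siblings but also for a child of $v_1$ paired with a child of $v_2$, where it follows not from a shared parent but from $v_1\updownequiv{k} v_2$ together with the third clause of Definition~\ref{def-kequivalent}.
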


\begin{proof}
The proof goes along the same lines of the proof of
Proposition~\ref{prop-strictlydown}. Actually, since
$\updownequiv{k}$-congruence implies $\downequiv{k}$-congruence,
almost all of the proof by structural induction can be used here
verbatim, except, of course, for the inductive step for the second
projection (``$\pi_2$''), which we consider next. Thus, let $e\ass
\pi_2(f)$, with $f$ satisfying Proposition~\ref{prop-weaklydown}.  If
$(v_1,w_1)\in\pi_2(f)$, then, of course, $v_1=w_1$ as a consequence of
which $v_2=w_2$. Also, there exists $y_1\in V$ such that $(y_1,v_1)\in
f(D)$. By Lemma~\ref{lem-weaklydown2}, (\ref{weakdown2}), there exists
$y_2\in V$ such that $(y_1,v_1)\pairs{\sigequiv}{\updownequiv{k}}
(y_2,v_2)$. By the induction hypothesis, $(y_2,v_2)\in f(D)$. Hence,
$(v_2,v_2)\in \pi_2(f)(D)$.
\end{proof}

By combining Proposition~\ref{prop-weaklydown} with
Lemma~\ref{lem-weaklydown2}, we can establish the following.

\begin{corollary}
\label{cor-weaklydown1}

Let $k\geq 1$, and let $E$ be the set of all nonbasic operations in
Table~\ref{tab-binops}, except for upward navigation (``$\up$''),
inverse (``$.^{-1}$''),
and selection on at least $m$ children
(``$\ch{m}(.)$'') for $m>k$. Let $e$ be an expression in $\BL(E)$. Let
$D=(V,\textit{Ed},r,\lambda)$ be a document, let $v_1$ and $w_1$
be nodes of $D$ such that $w_1$ is a descendant of $v_1$ and
$(v_1,w_1)\in e(D)$. Then,
\begin{enumerate}

\item\label{corweakdown1} each node $v_2$ of $D$ for which
  $v_1\updownequiv{k} v_2$ has a descendant
$w_2$ in $D$ such that $(v_2,w_2)\in e(D)$; and

\item\label{corweakdown2} each node $w_2$ of $D$ for which
  $w_1\updownequiv{k} w_2$ has an an ancestor
$v_2$ in $D$ such that $(v_2,w_2)\in e(D)$.

\end{enumerate}
\end{corollary}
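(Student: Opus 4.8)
The plan is to prove the two claims separately, since each follows by threading one part of Lemma~\ref{lem-weaklydown2} into Proposition~\ref{prop-weaklydown}. First I would observe that the set $E$ in the statement is exactly the one figuring in the hypothesis of Proposition~\ref{prop-weaklydown}, so that Proposition is available for the given expression $e$; Lemma~\ref{lem-weaklydown2}, being purely structural, imposes no condition on $E$ and hence applies unconditionally. This compatibility of hypotheses is really the only thing one must check before the two results slot together.

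For claim~(\ref{corweakdown1}), I would fix a node $v_2$ with $v_1\updownequiv{k} v_2$. Applying Lemma~\ref{lem-weaklydown2}, (\ref{weakdown1}), to $v_1$, $w_1$, and $v_2$ produces a descendant $w_2$ of $v_2$ with $(v_1,w_1)\pairs{\sigequiv}{\updownequiv{k}}(v_2,w_2)$. Since $w_1$ is a descendant of $v_1$, $w_2$ is a descendant of $v_2$, and $(v_1,w_1)\in e(D)$, all the hypotheses of Proposition~\ref{prop-weaklydown} are met; that Proposition then yields $(v_2,w_2)\in e(D)$, as required.

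For claim~(\ref{corweakdown2}) I would proceed symmetrically, fixing a node $w_2$ with $w_1\updownequiv{k} w_2$ and invoking Lemma~\ref{lem-weaklydown2}, (\ref{weakdown2}), to obtain an ancestor $v_2$ of $w_2$ with $(v_1,w_1)\pairs{\sigequiv}{\updownequiv{k}}(v_2,w_2)$. As before, $w_2$ is then a descendant of $v_2$, so Proposition~\ref{prop-weaklydown} applies and delivers $(v_2,w_2)\in e(D)$.

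I do not expect a genuine obstacle here: the corollary is precisely the ``existential'' reformulation of Proposition~\ref{prop-weaklydown}, obtained by using Lemma~\ref{lem-weaklydown2} to manufacture the missing second coordinate of the congruent pair. The only points warranting care are bookkeeping ones, namely confirming that the ancestor--descendant orientation demanded by Proposition~\ref{prop-weaklydown} is inherited automatically from the node produced by Lemma~\ref{lem-weaklydown2}, and that nowhere in the argument does one covertly need the excluded operations $\up$, $.^{-1}$, or $\ch{m}(.)$ with $m>k$.
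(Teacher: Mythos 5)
Your proposal is correct and is exactly the argument the paper intends: the paper's ``proof'' of this corollary is the single sentence ``By combining Proposition~\ref{prop-weaklydown} with Lemma~\ref{lem-weaklydown2}, we can establish the following,'' and your write-up simply fills in that combination, using part~(\ref{weakdown1}) of the lemma for the first claim and part~(\ref{weakdown2}) for the second, then invoking the proposition on the resulting $\updownequiv{k}$-congruent ancestor--descendant pairs. The bookkeeping points you flag (matching sets of operations, inherited ancestor--descendant orientation) are indeed the only things to check, and they check out.
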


Finally, we infer the following from
Corollary~\ref{cor-strictlydown2}, (\ref{corweakdown1}):

\begin{corollary}
\label{cor-weaklydown2}

Let $k\geq 1$, and let $E$ be a set of nonbasic operations
not containing upward navigation (``$\up$''),
inverse (``$.^{-1}$''),
and selection on at least $m$ children
satisfying some condition (``$\ch{m}(.)$'') for $m>k$. 
Consider the language $\BL(E)$ or $\CL(E)$. Let
$D=(V,\textit{Ed},r,\lambda)$ be a document, and let $v_1$ and $v_2$
be nodes of $D$. If $v_1\updownequiv{k} v_2$, then $v_1\expequiv v_2$.

\end{corollary}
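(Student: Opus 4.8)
The plan is to mirror exactly the derivation of Corollary~\ref{cor-strictlydown2} from Corollary~\ref{cor-strictlydown1}, but using the two-way machinery built up in Section~\ref{sec-weaklydownward}. First I would fix an expression $e$ in the language under consideration (either $\BL(E)$ or $\CL(E)$, for $E$ as in the statement) and suppose that $e(D)(v_1)\neq\emptyset$. This means there is a node $w_1$ with $(v_1,w_1)\in e(D)$; since the language is downward (it omits ``$\up$'', ``$.^{-1}$''), $w_1$ is a descendant of $v_1$. The goal is to produce a node $w_2$ with $(v_2,w_2)\in e(D)$, which would give $e(D)(v_2)\neq\emptyset$.

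The key step is to invoke Corollary~\ref{cor-weaklydown1}, part~(\ref{corweakdown1}). Because $v_1\updownequiv{k} v_2$ by hypothesis, that result applied to the pair $(v_1,w_1)\in e(D)$ yields a descendant $w_2$ of $v_2$ with $(v_2,w_2)\in e(D)$, hence $e(D)(v_2)\neq\emptyset$. By the symmetry of ``$\updownequiv{k}$'' (it is an equivalence relation, being built from the symmetric notions $\downequiv{k}$ and the iso­morphism of root-paths), the same argument with the roles of $v_1$ and $v_2$ interchanged shows that $e(D)(v_2)\neq\emptyset$ implies $e(D)(v_1)\neq\emptyset$. Combining both directions gives, for every expression $e$, that $e(D)(v_1)\neq\emptyset$ if and only if $e(D)(v_2)\neq\emptyset$, which is precisely $v_1\expgeq v_2$ and $v_2\expgeq v_1$, i.e.\ $v_1\expequiv v_2$.

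I expect no serious obstacle here, since the hard analytic work has already been absorbed into Proposition~\ref{prop-weaklydown} (the structural induction, including the new $\pi_2$ case) and its packaging into Corollary~\ref{cor-weaklydown1}. The only point requiring a moment's care is checking that $e$ genuinely falls under the hypotheses of Corollary~\ref{cor-weaklydown1}: the operation set $E$ in the present statement excludes ``$\up$'', ``$.^{-1}$'', and ``$\ch{m}(.)$'' for $m>k$, which is exactly what Corollary~\ref{cor-weaklydown1} requires, so every expression of $\BL(E)$ (and \emph{a fortiori} of $\CL(E)$, whose expressions are also $\BL(E)$ expressions modulo the placement restriction on $\cap$ and $-$) is admissible. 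Thus the proof is a direct two-fold application of Corollary~\ref{cor-weaklydown1}, exactly paralleling how Corollary~\ref{cor-strictlydown2} followed from Corollary~\ref{cor-strictlydown1}.
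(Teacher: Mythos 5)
Your proof is correct and follows essentially the same route as the paper: the paper derives this corollary from Corollary~\ref{cor-weaklydown1}, item~(\ref{corweakdown1}) (the reference to Corollary~\ref{cor-strictlydown2} in the paper's text is a cross-referencing slip), applied twice via the symmetry of $\updownequiv{k}$, exactly mirroring how Corollary~\ref{cor-strictlydown2} was obtained from Corollary~\ref{cor-strictlydown1}. Your additional checks---that expressions of $\CL(E)$ are also $\BL(E)$ expressions and that the operation set $E$ meets the hypotheses of Corollary~\ref{cor-weaklydown1}---are exactly the points the paper leaves implicit.
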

\subsection{Necessary conditions for expression equivalence}
\label{subsec-weaklydownnecessary}

We now explore requirements on the set of nonbasic operations
expressible in the language under which downward-$k$-equivalence
($k\geq 1$) is a necessary condition for expression-equivalence.
As we have endeavored to make as few assumptions as possible,
Proposition~\ref{prop-weaklydownnecessary} also holds for a class of languages
that are \emph{not} downward.

\begin{proposition}
\label{prop-weaklydownnecessary}

Let $k\geq 1$, and let $E$ be a set of nonbasic operations containing
at least one navigation operation (``$\down$'' or ``$\up$'') and set
difference (``$-$''). Consider the language $\BL(E)$ or $\CL(E)$, and
assume that, in this language, 
first and second projection (``$\pi_1$'' and ``$\pi_2$'') can be
expressed, as well as selection on at least $m$ 
children satisfying some condition (``$\ch{m}(.)$''),
for all $m=1,\ldots,k$.  Let $D=(V,\textit{Ed},r,\lambda)$ be a
document, and let $v_1$ and $v_2$ be nodes of $D$. If $v_1\expequiv
v_2$, then $v_1\updownequiv{k} v_2$.

\end{proposition}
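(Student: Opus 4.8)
The plan is to prove the implication by induction on the depth of $v_1$ in $D$, verifying in turn the three conditions of Definition~\ref{def-kequivalent}. As in the proof of Proposition~\ref{prop-downnecessary}, every expression of $\CL(E)$ is also an expression of $\BL(E)$, so expression-equivalence with respect to $\BL(E)$ implies expression-equivalence with respect to $\CL(E)$; it therefore suffices to treat the weaker case of $\CL(E)$, which I do throughout. I also use that, by Proposition~\ref{prop-notempty-empty} (whose witnessing construction $\pi_1(\varepsilon-\pi_1(\cdot))$ already lies in $\CL(E)$), the relations $\expequiv$ and $\expgeq$ coincide for the languages under consideration, and in particular $\expgeq$ is symmetric.

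The first condition, $v_1\downequiv{k} v_2$, needs no induction: the hypotheses of Proposition~\ref{prop-downnecessary} are met, since set difference lies in $E$ and $\pi_1$ and $\ch{1}(.),\ldots,\ch{k}(.)$ are expressible, so $v_1\expequiv v_2$ already yields $v_1\downequiv{k} v_2$. For the second condition I would exploit the one navigation operation guaranteed in $E$ to build a \emph{non-root detector}: if $\down\in E$, the expression $\pi_2(\down)$ satisfies $\pi_2(\down)(D)(v)\neq\emptyset$ exactly when $v$ has a parent, while if $\up\in E$ the expression $\pi_1(\up)$ does the same. Were one of $v_1,v_2$ the root and the other not, this expression would be empty at one node and nonempty at the other, contradicting $v_1\expequiv v_2$; hence $v_1$ is the root if and only if $v_2$ is.

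The third condition is the inductive heart of the argument and the step I expect to be the main obstacle. Assume $v_1$ (hence, by the previous paragraph, $v_2$) is not the root, with parents $u_1$ and $u_2$; it suffices to show $u_1\expequiv u_2$, for then the induction hypothesis, applicable since $u_1$ has strictly smaller depth, gives $u_1\updownequiv{k} u_2$ and completes the verification. Suppose for contradiction that $u_1\not\expequiv u_2$; by Proposition~\ref{prop-notempty-empty} this means $u_1\not\expgeq u_2$, so there is an expression $f$ in $\CL(E)$ with $f(D)(u_1)\neq\emptyset$ and $f(D)(u_2)=\emptyset$. The crux is to \emph{lift} $f$ from the parents to the children by a single upward step, which the navigation operation in $E$ permits in one of two forms: if $\up\in E$, set $e\ass\pi_1(\up/\pi_1(f))$; if $\down\in E$, set $e\ass\pi_2(\pi_1(f)/\down)$. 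In both cases $e$ lies in $\CL(E)$, and one checks directly from the semantics in Table~\ref{tab-binops}, using that each node has a unique parent, that $e(D)(v)\neq\emptyset$ precisely when the parent of $v$ satisfies $f(D)(\cdot)\neq\emptyset$. Consequently $e(D)(v_1)\neq\emptyset$ while $e(D)(v_2)=\emptyset$, contradicting $v_1\expequiv v_2$. This forces $u_1\expequiv u_2$ and closes the induction.

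The reason the hypothesis insists on $\pi_2$, in contrast with the strictly downward case, becomes visible precisely here: when only $\down$ is available, testing a property of the parent at the child is exactly what $\pi_2(\pi_1(f)/\down)$ accomplishes, and there appears to be no way to perform this one-step upward test without second projection. The only remaining points demanding care are routine: checking that each constructed expression genuinely lies in $\CL(E)$ rather than merely in $\BL(E)$ (each is a composition of projections applied to a navigation primitive and to $f\in\CL(E)$, with no intersection or difference occurring outside a projection), and noting that $u_2$ exists, which is guaranteed by the second condition once $v_1$ is known to be a non-root.
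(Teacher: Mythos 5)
Your proof is correct and takes essentially the same route as the paper's: it reduces to $\CL(E)$, obtains the downward condition from Proposition~\ref{prop-downnecessary}, and handles the root and parent conditions by induction on depth, lifting an expression about the parent to an expression about the child via a single navigation step ($\up/\cdot$ or $\cdot/\down$ wrapped in projections). Worth noting: your lift $\pi_2(\pi_1(f)/\down)$ is in fact the corrected form of the expression $\pi_2(e/\down)$ used in the paper's own proof, which as literally written tests whether the parent lies in the \emph{range} of $e(D)$ rather than whether $e(D)(\cdot)$ is nonempty at the parent, so your version repairs a small slip in the published argument.
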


\begin{proof}
Without loss of generality, we may assume that the language under
consideration is $\CL(E)$. In Proposition~\ref{prop-downnecessary}, we
have already established that $v_1\expequiv v_2$ implies
$v_1\downequiv{k} v_2$. By induction on the length of the path from
$r$ to~$v_1$, we establish that, furthermore, $v_1\updownequiv{k}
v_2$. For the basis of the induction, consider the case that
$v_1=r$. Let $d$ be the length of a longest path from $r$ to a leaf
of~$D$ (i.e., the height of the tree). We distinguish two cases:
\begin{enumerate}

\item $\down\in E$. Then,
$\down^d(D)(v_1)\neq\emptyset$. Hence, $\down^d(D)(v_2)\neq\emptyset$,
which implies $v_2=r$. 

\item $\up\in E$ Then, $\pi_2(\up^d)(D)(v_1)\neq\emptyset$. Hence,
$\pi_2(\up^d)(D)(v_2)\neq\emptyset$, which implies $v_2=r$. 

\end{enumerate}
In both cases, it follows that $v_1\updownequiv{k} v_2$.
For the induction step, consider the case that
$v_1\neq r$. Again, we distinguish two cases:
\begin{enumerate}

\item $\down\in E$. Then, $\pi_2(\down)(D)(v_1)\neq\emptyset$, and, hence,
$\pi_2(\down)(D)(v_2)\neq\emptyset$. So, $v_2\neq r$. 

\item $\up\in E$. Then, $\up(D)(v_1)\neq\emptyset$, and, hence,
$\up(D)(v_2)\neq\emptyset$. So, $v_2\neq r$. 

\end{enumerate}
Now, let $u_1$ be the
parent of $v_1$ and $u_2$ be the parent of $v_2$. We show that
$u_1\expequiv u_2$. Thereto, let $e$ be an expression in the language
under consideration for which $e(D)(u_1)\neq\emptyset$. Again, we
distinguish two cases:
\begin{enumerate}

\item $\down\in E$. Then, $\pi_2(e/\down)(v_1)\neq\emptyset$. Since
  $v_1\expequiv v_2$,  $\pi_2(e/\down)(v_2)\neq\emptyset$. It follows that
$e(D)(u_2)\neq\emptyset$. 

\item $\up\in E$. Then, $\up/e(v_1)\neq\emptyset$. Since
  $v_1\expequiv v_2$,  $\up/e(v_2)\neq\emptyset$. It follows that
$e(D)(u_2)\neq\emptyset$. 

\end{enumerate}
By the induction hypothesis, we may now
conclude that, in both cases, $u_1\updownequiv{k} u_2$. Hence, also
$v_1\updownequiv{k} v_2$.
\end{proof}

We see that Proposition~\ref{prop-weaklydownnecessary} is as well 
applicable to weakly downward languages as to weakly upward languages
(see Section~\ref{subsec-weaklyupward}). We shall see in
Section~\ref{subsec-weaklyupward} that this is no coincidence. For
now, we suffice with concluding that $k$-equivalence is a
necessary condition for expression-equivalence under a weakly
downward language containing downward navigation (``$\down$''), both
projections (``$\pi_1$'' and ``$\pi_2$''), and set difference
(``$-$''), provided  selection on at least $m$ children satisfying
some condition (``$\ch{m}$'') for all $m=1,\ldots,k$ can be expressed.
\subsection{Characterization of expression equivalence}
\label{subsec-weaklydowncharacterization}

The weakly downward languages containing downward navigation
(``$\down$'') and satisfying both Corollary~\ref{cor-weaklydown2} 
of Subsection~\ref{subsec-weaklydownsufficient}
and Proposition~\ref{prop-downnecessary} 
of Subsection~\ref{subsec-strictlydownnecessary} 
are  $$\BL(\down,\pi_1,\pi_2,\ch{1}(.),\ldots,\ch{k}(.),-) \text{~and~}
\CL(\down,\pi_1,\pi_2,\ch{1}(.),\ldots,\ch{k}(.),-),$$ which, moreover, are
equivalent, by Corollary~\ref{cor-downwardcorecoincides}. In addition,
we can eliminate set difference or intersection operations except
those used as operands of boolean combinations of subexpressions
inside a projection operation without loosing expressive
power. We call these languages
the \emph{weakly downward XPath algebra with counting up to~$k$\/} and the
\emph{weakly downward core XPath algebra with counting up to~$k$\/},
respectively. Combining the aforementioned results, we get the following.

\begin{theorem}
\label{theo-weaklydownequivalent}

Let $k\geq 1$, and consider the weakly downward (core) XPath algebra with
counting up to~$k$.  Let $D=(V,\textit{Ed},r,\lambda)$ be a document,
and let $v_1$ and $v_2$ be nodes of $D$. Then $v_1\expequiv v_2$, if
and only if $v_1\updownequiv{k} v_2$.

\end{theorem}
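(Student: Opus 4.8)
The plan is to obtain this characterization as a direct synthesis of the sufficiency and necessity results already established in the two preceding subsections, so that no fresh inductive machinery is needed. First I would fix the language under consideration, say $\BL(\down,\pi_1,\pi_2,\ch{1}(.),\ldots,\ch{k}(.),-)$, with the argument for the core variant $\CL(\down,\pi_1,\pi_2,\ch{1}(.),\ldots,\ch{k}(.),-)$ being identical, since each of the two results I intend to invoke is phrased for $\BL(E)$ and $\CL(E)$ alike. I would record that its set $E$ of nonbasic operations contains $\down$, both projections $\pi_1$ and $\pi_2$, set difference, and exactly the counting operators $\ch{1}(.),\ldots,\ch{k}(.)$.

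For the ``if'' direction, I would verify that $E$ meets the hypotheses of Corollary~\ref{cor-weaklydown2}: namely that $E$ omits upward navigation $\up$, inverse $.^{-1}$, and every counting operator $\ch{m}(.)$ with $m>k$. All three exclusions hold by construction, so Corollary~\ref{cor-weaklydown2} applies verbatim and yields $v_1\updownequiv{k} v_2 \Rightarrow v_1\expequiv v_2$. For the ``only if'' direction, I would instead check the hypotheses of Proposition~\ref{prop-weaklydownnecessary}: that $E$ contains at least one navigation operator, set difference, both projections, and $\ch{m}(.)$ for every $m=1,\ldots,k$. Each of these inclusions again holds by construction, with $\down$ serving as the required navigation operator, so Proposition~\ref{prop-weaklydownnecessary} gives $v_1\expequiv v_2 \Rightarrow v_1\updownequiv{k} v_2$. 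Combining the two implications produces the stated equivalence.

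The theorem therefore presents no genuine obstacle of its own; the entire difficulty has been front-loaded into the two cited results, whose proofs already dispatch the delicate second-projection step (via Lemma~\ref{lem-weaklydown2}) and the root-to-$v_1$ induction underlying $k$-equivalence. The one point requiring a moment's care is confirming that the counting operators present in the language range over \emph{exactly} $m=1,\ldots,k$: Corollary~\ref{cor-weaklydown2} demands that no $\ch{m}(.)$ with $m>k$ be expressible, while Proposition~\ref{prop-weaklydownnecessary} demands that every $\ch{m}(.)$ with $m\leq k$ \emph{be} expressible. These two bounds meet precisely at the fragment of ``counting up to~$k$,'' which is exactly why this particular language is the one for which both the required inclusions and the required exclusions align simultaneously.
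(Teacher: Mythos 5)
Your proposal is correct and takes essentially the same route as the paper, which obtains the theorem precisely by combining Corollary~\ref{cor-weaklydown2} (sufficiency) with the necessity result of Section~\ref{subsec-weaklydownnecessary} for the language $\BL(\down,\pi_1,\pi_2,\ch{1}(.),\ldots,\ch{k}(.),-)=\CL(\down,\pi_1,\pi_2,\ch{1}(.),\ldots,\ch{k}(.),-)$. If anything, your citation is the more precise one: the paper's text points to Proposition~\ref{prop-downnecessary}, which only yields $v_1\downequiv{k}v_2$, whereas the direction $v_1\expequiv v_2\Rightarrow v_1\updownequiv{k}v_2$ indeed requires Proposition~\ref{prop-weaklydownnecessary}, exactly as you invoke it.
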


A special case arises when $k=1$, since selection on at least one
child satisfying some condition (``$\ch{1}(.)$'') can be expressed in
terms of the other operations required by
Theorem~\ref{theo-weaklydownequivalent}, by Proposition~\ref{prop-counting}.
The languages we then obtain are called the \emph{weakly downward XPath
  algebra} and the \emph{weakly downward core XPath algebra}, respectively.
We have the following.

\begin{corollary}
\label{cor-weaklydownequivalent}

Consider the weakly downward (core) XPath algebra.  Let
$D=(V,\textit{Ed},r,\lambda)$ be a document, and let $v_1$ and $v_2$
be nodes of $D$. Then $v_1\expequiv v_2$, if and only if
$v_1\updownequiv{1} v_2$.

\end{corollary}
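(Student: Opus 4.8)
The plan is to obtain this corollary as the $k=1$ instance of Theorem~\ref{theo-weaklydownequivalent}. That theorem already supplies, for the weakly downward (core) XPath algebra \emph{with counting up to}~$1$ --- namely $\BL(\down,\pi_1,\pi_2,\ch{1}(.),-)$ and $\CL(\down,\pi_1,\pi_2,\ch{1}(.),-)$ --- the equivalence $v_1\expequiv v_2$ if and only if $v_1\updownequiv{1} v_2$. Hence the only gap to bridge is that these counting-enabled languages have exactly the same expressive power as the weakly downward (core) XPath algebra $\BL(\down,\pi_1,\pi_2,-)$ and $\CL(\down,\pi_1,\pi_2,-)$, in which the counting operator is dropped.

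First I would dispose of the easy containment: every expression of $\BL(\down,\pi_1,\pi_2,-)$ is literally an expression of $\BL(\down,\pi_1,\pi_2,\ch{1}(.),-)$, so the former is a sublanguage of the latter (and likewise for the core variants). For the reverse containment I would invoke Proposition~\ref{prop-counting}, item~1, which gives the identity $\ch{1}(e)(D)=\pi_1(\down/e)(D)$ for every document~$D$. Since $\down$, $\pi_1$, and composition are all present in the smaller language, this identity lets me rewrite away any occurrence of $\ch{1}$ by a straightforward structural induction, turning each expression of the counting language into an equivalent expression of the counting-free language. This shows that the two languages define the same binary relation on every document.

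Next I would observe that expression-equivalence (Definition~\ref{def-ee}) is defined purely in terms of the family of relations $\{e(D)(v)\mid e\in L\}$ realizable in the language $L$. Consequently, two languages that produce the same binary relations on every document induce \emph{identical} expression-equivalence relations; in particular $\expequiv$ for the weakly downward (core) XPath algebra coincides with $\expequiv$ for its counting-up-to-$1$ counterpart. Combining this observation with the $k=1$ case of Theorem~\ref{theo-weaklydownequivalent} immediately yields $v_1\expequiv v_2$ if and only if $v_1\updownequiv{1} v_2$, as claimed.

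There is no substantial obstacle here; the argument is a routine specialization, exactly parallel to the passage from Theorem~\ref{theo-downequivalent} to Corollary~\ref{cor-downequivalent} in the strictly downward setting. The one point deserving care is the justification that passing to an equi-expressive language preserves $\expequiv$ --- this is where one must appeal to the fact that $\expequiv$ quantifies over \emph{all} expressions of the language and depends on nothing beyond the collection of definable relations. The elimination of $\ch{1}$ via Proposition~\ref{prop-counting} is the only concrete computation, and it is a single-line rewrite applied inductively.
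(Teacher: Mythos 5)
Your proposal is correct and follows exactly the paper's route: the paper obtains this corollary as the $k=1$ case of Theorem~\ref{theo-weaklydownequivalent}, noting (just before the corollary's statement) that $\ch{1}(.)$ is expressible from the remaining operations via Proposition~\ref{prop-counting}, so that the counting-up-to-$1$ languages and the counting-free weakly downward (core) XPath algebra coincide in expressive power. Your additional remark that expression-equivalence depends only on the collection of definable relations is precisely the (implicit) justification the paper relies on, so there is nothing to add.
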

\subsection{Characterization of navigational expressiveness}

We start by proving a converse to Proposition~\ref{prop-weaklydown}.

\begin{proposition}
\label{prop-weaklydownconverse}

Let $k\geq 1$, and let $E$ be a set of nonbasic operations containing
downward navigation (``$\down$'') and set difference (``$-$'').  Consider
the language $\BL(E)$ or $\CL(E)$. Assume that, in this language,
first and second projection (``$\pi_1$'' and ``$\pi_2$'') can be
expressed, as well as selection on at least $m$ children satisfying
some condition (``$\ch{m}(.)$''), for all $m=1,\ldots,k$. Let
$D=(V,\textit{Ed},r,\lambda)$ be a document, and let $v_1$, $w_1$,
$v_2$, and $w_2$ be nodes of $D$ such that $w_1$ is a descendant
of~$v_1$ and $w_2$ is a descendant of~$v_2$. Assume furthermore that,
for each expression~$e$ in the language, $(v_1,w_1)\in e(D)$ if and
only if $(v_2,w_2)\in e(D)$. Then
$(v_1,w_1)\pairs{\sigequiv}{\updownequiv{k}}(v_2,w_2)$.

\end{proposition}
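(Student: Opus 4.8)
The plan is to mirror the proof of Proposition~\ref{prop-strictlydownconverse}, upgrading $\downequiv{k}$ to $\updownequiv{k}$ by invoking Proposition~\ref{prop-weaklydownnecessary} in place of Proposition~\ref{prop-downnecessary}, and to shortcut the verification at intermediate nodes through Lemma~\ref{lem-weaklydown1}. Since expression-equivalence in $\BL(E)$ implies the same in $\CL(E)$, it suffices to argue for $\CL(E)$, but the construction below lives in both.

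First I would establish that the signatures agree, i.e.\ $(v_1,w_1)\sigequiv(v_2,w_2)$. Since $w_1$ is a descendant of $v_1$, the signature $\sig(v_1,w_1)$ is a power of ``$\down$,'' hence an expression of the language. Applying the hypothesis to this expression (for which trivially $(v_1,w_1)\in\sig(v_1,w_1)(D)$) yields $(v_2,w_2)\in\sig(v_1,w_1)(D)$, and Proposition~\ref{prop-congruence}(\ref{congruence-1}) then gives $(v_1,w_1)\sigequiv(v_2,w_2)$; in particular $\sig(v_2,w_2)=\sig(v_1,w_1)=\down^n$ for some $n$, so $w_2$ is a descendant of $v_2$ at distance $n$.

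Next I would show that the endpoints are expression-equivalent, $w_1\expequiv w_2$. Given any expression $f$ of the language with $f(D)(w_1)\neq\emptyset$, set $e\ass\sig(v_1,w_1)/\pi_1(f)$, which is again in the language as $\pi_1$ is available. Then $(v_1,w_1)\in e(D)$, so by hypothesis $(v_2,w_2)\in e(D)$. Since $\pi_1(f)(D)$ is contained in the diagonal, the only way $(v_2,w_2)$ can lie in $\sig(v_1,w_1)/\pi_1(f)(D)=\down^n/\pi_1(f)(D)$ is that $w_2$ is a descendant of $v_2$ at distance $n$---which holds, as the signatures agree---together with $f(D)(w_2)\neq\emptyset$. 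By symmetry the converse implication holds as well, so $w_1\expequiv w_2$. This collapse of the composition onto the endpoint (forced by $\pi_1(f)$ being diagonal) is the one point that needs care, and it is exactly where having ``$\down$'' and $\pi_1$ in the language is used.

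Finally, Proposition~\ref{prop-weaklydownnecessary} applies, since its hypotheses---a navigation operation (``$\down$''), set difference, both projections, and counting up to $k$---are all among the assumptions here; it therefore yields $w_1\updownequiv{k} w_2$ from $w_1\expequiv w_2$. Combining this with $(v_1,w_1)\sigequiv(v_2,w_2)$ established above, Lemma~\ref{lem-weaklydown1} immediately gives $(v_1,w_1)\pairs{\sigequiv}{\updownequiv{k}}(v_2,w_2)$, as desired. The role of Lemma~\ref{lem-weaklydown1} is precisely what lets us avoid rerunning the endpoint argument at every intermediate node of the path from $v_1$ to $w_1$, as had to be done in the strictly downward setting: here congruence of the signatures plus $\updownequiv{k}$ at the deepest nodes propagates automatically to all corresponding intermediate nodes.
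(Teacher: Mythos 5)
Your proof is correct, but it is organized differently from the paper's. The paper first invokes Proposition~\ref{prop-strictlydownconverse} wholesale to obtain $(v_1,w_1)\pairs{\sigequiv}{\downequiv{k}}(v_2,w_2)$, reduces via Lemma~\ref{lem-weaklydown1} and Proposition~\ref{prop-kequivalent} to showing $(r,v_1)\pairs{\sigequiv}{\downequiv{k}}(r,v_2)$, and then handles each node $u_1$ on the path from $r$ to $v_1$ separately: it builds the distinguishing expression $\pi_2(\pi_1(f)/\sig(u_1,v_1))/\sig(v_1,w_1)$ (this is exactly where second projection enters the constructed expressions) to get $u_1\expequiv u_2$, and applies the \emph{strictly} downward necessary condition, Proposition~\ref{prop-downnecessary}, node by node. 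You instead bypass Proposition~\ref{prop-strictlydownconverse} entirely: you get signature congruence directly from the hypothesis and Proposition~\ref{prop-congruence}, prove only the single equivalence $w_1\expequiv w_2$ at the endpoints via the simpler expression $\sig(v_1,w_1)/\pi_1(f)$ (whose collapse onto the diagonal you correctly justify), and then invoke the \emph{weakly} downward necessary condition, Proposition~\ref{prop-weaklydownnecessary}, once---letting its proof, rather than your own construction, do the climb through the ancestors of $w_1$; Lemma~\ref{lem-weaklydown1} then assembles the conclusion exactly as in the paper. Your route is shorter and needs no explicit $\pi_2$ in the constructed expressions (the availability of $\pi_2$ is consumed inside Proposition~\ref{prop-weaklydownnecessary} instead), and all cited results precede this proposition in the paper, so there is no circularity; what the paper's longer route buys is an explicit, per-node family of distinguishing expressions along the whole root-to-$w_1$ path, which makes the role of each operation in the language more visible.
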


\begin{proof}
From Proposition~\ref{prop-strictlydownconverse}, we already know that
$(v_1,w_1)\pairs{\sigequiv}{\downequiv{k}}(v_2,w_2)$. In particular,
$(v_1,w_1)\sigequiv(v_2,w_2)$. By Lemma~\ref{lem-weaklydown1}, it
suffices to prove that $v_1\updownequiv{k} w_2$, or, by
Proposition~\ref{prop-kequivalent}, that
$(r,w_1)\pairs{\sigequiv}{\downequiv{k}}(r,w_2)$. In view of what we
already know, we only need to show that
$(r,v_1)\pairs{\sigequiv}{\downequiv{k}}(r,v_2)$. Since $(v_1,w_1)\in
\pi_2(\sig(r,v_1))/\sig(v_1,w_1)$, it follows that also $(v_2,w_2)\in
\pi_2(\sig(r,v_1))/\sig(v_1,w_1)$, for which we readily deduce that
$(r,v_1)\sigequiv (r,v_2)$. Let $u_1$ be a node
on the path from $r$ to~$v_1$, and let $u_2$ be the corresponding
node on the path from $r$ to $v_2$. Then,
$(r,u_1)\sigequiv (r,u_2)$ and $(u_1,v_1)\sigequiv
(u_2,u_2)$. Now, let $f$ be any expression in the language such that
$f(D)(u_1)\neq\emptyset$. Then, $(u_1,u_1)\in \pi_1(f)(D)$. Let
$e\ass \pi_2(\pi_1(f)/\sig(u_1,v_1))/\sig(v_1,w_1)$. By construction,
$(v_1,w_1)\in e(D)$. Hence, by assumption, $(v_2,w_2)\in e(D)$, which
implies $(u_2,u_2)\in \pi_1(f)(D)$ or $f(D)(u_2)\neq\emptyset$. The same
holds vice versa, and we may thus conclude that $u_1\expequiv u_2$, and,
hence, by Proposition~\ref{prop-downnecessary}, $u_1\downequiv{k} u_2$.
We may thus conclude that $(r,v_1)\downequiv{k}(r,v_2)$.
\end{proof}

Combining Propositions~\ref{prop-weaklydown}
and~\ref{prop-weaklydownconverse}, we obtain the following.

\begin{corollary}
\label{cor-weaklydown}

Let $k\geq 1$, and consider the weakly downward (core) XPath algebra with
counting up to~$k$. Let
$D=(V,\textit{Ed},r,\lambda)$ be a document, and let $v_1$, $w_1$,
$v_2$, and $w_2$ be nodes of $D$ such that $w_1$ is a descendant of
$v_1$ and $w_2$ is a descendant of $v_2$. 
Then, the property that, for each expression $e$ in the
language under consideration, $(v_1,w_1)\in e(D)$ if and only if
$(v_2,w_2)\in e(D)$ is equivalent to the property
$(v_1,w_1)\pairs{\sigequiv}{\updownequiv{k}} (v_2,w_2)$.

\end{corollary}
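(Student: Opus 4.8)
The plan is to read the asserted equivalence as two implications and to discharge each of them by invoking one of the two immediately preceding propositions, after verifying that the language in the statement meets the respective hypotheses. Recall that the weakly downward (core) XPath algebra with counting up to~$k$ is $\BL(\down,\pi_1,\pi_2,\ch{1}(.),\ldots,\ch{k}(.),-)$, which by Corollary~\ref{cor-downwardcorecoincides} coincides with its core variant $\CL(\down,\pi_1,\pi_2,\ch{1}(.),\ldots,\ch{k}(.),-)$; no new combinatorial analysis of the document is required, since all of the substantive work has already been carried out in Propositions~\ref{prop-weaklydown} and~\ref{prop-weaklydownconverse}.

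For the implication from $(v_1,w_1)\pairs{\sigequiv}{\updownequiv{k}}(v_2,w_2)$ to expression-indistinguishability, I would appeal to Proposition~\ref{prop-weaklydown}. The set $E$ figuring there consists of all nonbasic operations of Table~\ref{tab-binops} except $\up$, $.^{-1}$, and $\ch{m}(.)$ for $m>k$; in particular it contains $\down$, $\pi_1$, $\pi_2$, set difference, and $\ch{1}(.),\ldots,\ch{k}(.)$. Since our language uses only operators from this list (intersection, should it appear, being expressible from set difference), every expression $e$ of the language is also an expression of $\BL(E)$, and Proposition~\ref{prop-weaklydown} directly yields that $(v_1,w_1)\in e(D)$ if and only if $(v_2,w_2)\in e(D)$.

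For the converse implication I would invoke Proposition~\ref{prop-weaklydownconverse}, whose hypotheses ask for a set $E$ containing $\down$ and $-$ in which $\pi_1$, $\pi_2$, and $\ch{1}(.),\ldots,\ch{k}(.)$ are all expressible. These are satisfied verbatim by $\BL(\down,\pi_1,\pi_2,\ch{1}(.),\ldots,\ch{k}(.),-)$ by construction. Hence, assuming that $(v_1,w_1)\in e(D)$ if and only if $(v_2,w_2)\in e(D)$ holds for every expression $e$ of the language, that proposition delivers $(v_1,w_1)\pairs{\sigequiv}{\updownequiv{k}}(v_2,w_2)$, completing the equivalence.

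The only point requiring any care, and it is a bookkeeping matter rather than a genuine difficulty, is checking that the single language named in the statement simultaneously sits inside the maximal set $E$ of Proposition~\ref{prop-weaklydown} and meets the expressibility conditions of Proposition~\ref{prop-weaklydownconverse}; both verifications are immediate from the list of permitted operators above. Because of this, and because standard and core downward languages coincide by Corollary~\ref{cor-downwardcorecoincides}, the argument applies uniformly, justifying the ``(core)'' phrasing of the statement.
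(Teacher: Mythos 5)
Your proposal is correct and matches the paper's own proof, which obtains the corollary precisely by combining Proposition~\ref{prop-weaklydown} (for the direction from $\pairs{\sigequiv}{\updownequiv{k}}$ to expression-indistinguishability) with Proposition~\ref{prop-weaklydownconverse} (for the converse). Your additional checks that the language $\BL(\down,\pi_1,\pi_2,\ch{1}(.),\ldots,\ch{k}(.),-)$ satisfies both propositions' hypotheses, and that the core variant is covered via Corollary~\ref{cor-downwardcorecoincides}, are exactly the bookkeeping the paper leaves implicit.
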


From here on, the derivation of a BP-result for the weakly downward
(core) XPath algebra with counting up to~$k$ follows the development
in Section~\ref{subsec-strictlybp} very closely, which is why we only
state the final results.

\begin{theorem}
\label{theo-weaklydown-bp}

Let $k\geq 1$.
Let $D=(V,\textit{Ed},r,\lambda)$ be a document, and
let $R\subseteq V\times V$. Then, there exists an expression $e$ in the
weakly downward (core) XPath algebra with counting up to~$k$ such that
$e(D)=R$ if and only if, 
\begin{enumerate}

\item for all $v,w\in V$, $(v,w)\in R$ implies $w$ is a descendant
  of~$v$; and,

\item for all $v_1,w_1,v_2,w_2\in V$ with $w_1$ a descendant of $v_1$,
  $w_2$ a descendant of $v_2$, and
  $(v_1,w_1)\pairs{\sigequiv}{\updownequiv{k}} (v_2,w_2)$, 
$(v_1,w_1)\in R$ implies $(v_2,w_2)\in R$. 

\end{enumerate}
\end{theorem}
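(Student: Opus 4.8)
The plan is to replay the development of Section~\ref{subsec-strictlybp} almost verbatim, with downward $k$-equivalence ``$\downequiv{k}$'' systematically replaced by $k$-equivalence ``$\updownequiv{k}$,'' drawing on the weakly downward counterparts of the results used there. For the ``only if'' direction, the first condition is immediate from the (weakly) downward character of the language, and the second is exactly the content of Corollary~\ref{cor-weaklydown}: if $(v_1,w_1)\pairs{\sigequiv}{\updownequiv{k}}(v_2,w_2)$ for ancestor-descendant pairs, then no expression of the language separates them, so $(v_1,w_1)\in R=e(D)$ forces $(v_2,w_2)\in e(D)=R$.

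For the ``if'' direction I would first establish the two separation lemmas analogous to Lemmas~\ref{lem-separationdown1} and~\ref{lem-separationdown2}. For the node-level separator, given a node $v_1$ and any $w$ with $v_1\not\updownequiv{k}w$, Theorem~\ref{theo-weaklydownequivalent} yields $v_1\not\expequiv w$, and Proposition~\ref{prop-notempty-empty} (which applies since the language contains ``$\pi_1$'' and ``$-$'') produces an expression $f_{v_1,w}$ with $f_{v_1,w}(D)(v_1)\neq\emptyset$ and $f_{v_1,w}(D)(w)=\emptyset$. Setting
$$e_{v_1}\ass\pi_1\Bigl(\bigcap_{w\in V,\,v_1\not\updownequiv{k}w}\pi_1(f_{v_1,w})\Bigr),$$
one checks exactly as in Lemma~\ref{lem-separationdown1} that $e_{v_1}(D)(v_2)\neq\emptyset$ if and only if $v_1\updownequiv{k}v_2$: the ``if'' uses $v_1\expequiv v_2$ (Theorem~\ref{theo-weaklydownequivalent}) together with $e_{v_1}(D)(v_1)\neq\emptyset$, and the ``only if'' follows by taking $w\ass v_2$ in the intersection.

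For the pair-level separator, given an ancestor-descendant pair $(v_1,w_1)$ with path $v_1=y_{11},\dots,y_{1n}=w_1$, I would set
$$e_{v_1,w_1}\ass\pi_1(e_{y_{11}})/\down/\pi_1(e_{y_{12}})/\dots/\down/\pi_1(e_{y_{1n}}),$$
which satisfies $(v_1,w_1)\in e_{v_1,w_1}(D)$ by construction. The forward implication of the separator property is Corollary~\ref{cor-weaklydown}. For the converse, if $(v_2,w_2)\in e_{v_1,w_1}(D)$ then the $n-1$ composed ``$\down$'' steps force $(v_1,w_1)\sigequiv(v_2,w_2)$, while the final factor $\pi_1(e_{w_1})$ forces $e_{w_1}(D)(w_2)\neq\emptyset$, that is, $w_1\updownequiv{k}w_2$; Lemma~\ref{lem-weaklydown1} then upgrades these two facts to full congruence $(v_1,w_1)\pairs{\sigequiv}{\updownequiv{k}}(v_2,w_2)$. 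With the separators in hand, I would define $e\ass\bigcup_{(v_1,w_1)\in R}e_{v_1,w_1}$, well defined because condition~1 guarantees each $w_1$ is a descendant of $v_1$, and verify $e(D)=R$: the inclusion $R\subseteq e(D)$ is immediate, and for $e(D)\subseteq R$, any $(v_2,w_2)\in e(D)$ lies in some $e_{v_1,w_1}(D)$ with $(v_1,w_1)\in R$, whence $(v_1,w_1)\pairs{\sigequiv}{\updownequiv{k}}(v_2,w_2)$ and condition~2 gives $(v_2,w_2)\in R$.

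The step I expect to require the most care is the node-level separation lemma, precisely because the classes being separated are the $\updownequiv{k}$-classes rather than the $\downequiv{k}$-classes of the strictly downward case. Unlike $\downequiv{k}$, $k$-equivalence depends on the entire root-to-node path, an inherently upward property; the construction of $e_{v_1}$ nevertheless goes through because the weakly downward language can inspect ancestors through ``$\pi_2$,'' so the one-sided separators $f_{v_1,w}$ exist. The genuine content of the argument is therefore the appeal to Theorem~\ref{theo-weaklydownequivalent} (to know that $\updownequiv{k}$ coincides with expression-equivalence) and to Lemma~\ref{lem-weaklydown1} (to reduce $\updownequiv{k}$-congruence of an ancestor-descendant pair to $\updownequiv{k}$-equivalence of its descendant endpoint); once these are available, the remaining bookkeeping is identical to Section~\ref{subsec-strictlybp}.
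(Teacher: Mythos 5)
Your proposal is correct and follows essentially the same route as the paper, which itself proves Theorem~\ref{theo-weaklydown-bp} by replaying Section~\ref{subsec-strictlybp} with ``$\downequiv{k}$'' replaced by ``$\updownequiv{k}$,'' invoking Theorem~\ref{theo-weaklydownequivalent}, Proposition~\ref{prop-notempty-empty}, and Corollary~\ref{cor-weaklydown} exactly as you do. Your use of Lemma~\ref{lem-weaklydown1} to upgrade signature congruence plus $w_1\updownequiv{k}w_2$ to full pair congruence in the separator lemma is a minor (and clean) shortcut over checking every node on the path, but it does not change the substance of the argument.
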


The specialization to the weakly downward (core) XPath algebra is as
follows.

\begin{corollary}
\label{cor-weaklydown-bp}

Let $D=(V,\textit{Ed},r,\lambda)$ be a document, and let $R\subseteq
V\times V$. There exists an expression $e$ in the weakly downward (core)
XPath algebra such that $e(D)=R$ if and only if,
\begin{enumerate}

\item for all $v,w\in V$, $(v,w)\in R$ implies $w$ is a descendant
  of~$v$; and,

\item for all $v_1,w_1,v_2,w_2\in V$ with $w_1$ a descendant of $v_1$,
  $w_2$ a descendant of $v_2$, and
  $(v_1,w_1)\pairs{\sigequiv}{\updownequiv{1}} (v_2,w_2)$, 
  $(v_1,w_1)\in R$ implies $(v_2,w_2)\in R$. 

\end{enumerate}
\end{corollary}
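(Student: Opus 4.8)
The plan is to read off this corollary from Theorem~\ref{theo-weaklydown-bp} as its $k=1$ instance, exactly as Corollary~\ref{cor-downXPath-bp} was obtained from Theorem~\ref{theo-strictlydown-bp} and Corollary~\ref{cor-weaklydownequivalent} from Theorem~\ref{theo-weaklydownequivalent}. So almost no new work is needed; the one point to record is why the \emph{weakly downward (core) XPath algebra} (the language without explicit counting operators) is the same language as the \emph{weakly downward (core) XPath algebra with counting up to~$1$}.

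First I would recall the identity $\ch{1}(e)(D)=\pi_1(\down/e)(D)$ from Proposition~\ref{prop-counting}. Since this rewriting is a document-independent (semantic) equality and both $\down$ and $\pi_1$ are already present, the operator $\ch{1}(\cdot)$ can be eliminated without affecting the binary relation defined by any expression on any document. Hence the two languages define exactly the same sets of pairs on a fixed document $D$, and definability of a relation $R\subseteq V\times V$ in the one is equivalent to definability in the other.

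With this identification, I would simply invoke Theorem~\ref{theo-weaklydown-bp} with $k$ set to $1$. The first condition of that theorem (every pair of $R$ is an ancestor--descendant pair) is unchanged, while its second condition---closure of $R$ under $\pairs{\sigequiv}{\updownequiv{k}}$-congruence of ancestor--descendant pairs---becomes closure under $\pairs{\sigequiv}{\updownequiv{1}}$-congruence, which is verbatim the second condition in the statement of the corollary. There is no real obstacle here: the entire substance is carried by Theorem~\ref{theo-weaklydown-bp}, and the only thing to guard against is conflating the two language names, which the semantic identity above settles. I would therefore present the argument in essentially a single sentence, mirroring the ``As before, we can specialize\ldots'' phrasing already used for the strictly downward case.
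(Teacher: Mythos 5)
Your proposal is correct and matches the paper's own treatment: the paper likewise obtains Corollary~\ref{cor-weaklydown-bp} as the immediate $k=1$ specialization of Theorem~\ref{theo-weaklydown-bp}, with the identification of the weakly downward (core) XPath algebra as the counting-up-to-$1$ language justified exactly by the identity $\ch{1}(e)(D)=\pi_1(\down/e)(D)$ of Proposition~\ref{prop-counting}. No gap; your one recorded point is precisely the point the paper relies on.
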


We recast Theorem~\ref{theo-weaklydown-bp} and
Corollary~\ref{cor-weaklydown-bp} in terms of
node-level navigation.

\begin{theorem}
\label{theo-weaklydown-nodelevel}

Let $k\geq 1$.  Let $D=(V,\textit{Ed},r,\lambda)$ be a document, let
$v$ be a node of $D$, and let $W\subseteq V$. Then there exists an
expression $e$ in the weakly downward (core) XPath algebra with counting up
to~$k$ such that $e(D)(v)=W$ if and only if all nodes of~$W$ are
descendants of $v$, and, for all $w_1,w_2\in W$ with
$(v,w_1)\pairs{\sigequiv}{\updownequiv{k}} (v,w_2)$, $w_1\in W$ implies
$w_2\in W$.

\end{theorem}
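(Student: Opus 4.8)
The plan is to derive this node-level (local view) statement from the global pairs-level BP-result Theorem~\ref{theo-weaklydown-bp}, in exact parallel to the way Theorem~\ref{theo-strictlydown-nodelevel} was obtained from Theorem~\ref{theo-strictlydown-bp}; the only substitution needed throughout is to replace downward $k$-equivalence ``$\downequiv{k}$'' by two-way $k$-equivalence ``$\updownequiv{k}$'' and to cite the weakly downward analogues of the supporting results.

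For the \emph{only if} direction, I would start from an expression $e$ in the weakly downward (core) XPath algebra with counting up to $k$ satisfying $e(D)(v)=W$. Since every weakly downward language returns only descendants, all nodes of $W=e(D)(v)$ are descendants of $v$, giving the first condition for free. For the closure condition, suppose $w_1,w_2$ are descendants of $v$ with $(v,w_1)\pairs{\sigequiv}{\updownequiv{k}}(v,w_2)$ and $w_1\in W$. Then $(v,w_1)\in e(D)$, and Corollary~\ref{cor-weaklydown} yields $(v,w_2)\in e(D)$, whence $w_2\in e(D)(v)=W$. This is precisely the step in which the weakly downward machinery enters in place of the strictly downward one.

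For the \emph{if} direction, assume $W$ satisfies the two stated conditions, and define the global relation
\[
R \ass \{(v',w_2)\mid \text{there exists } w_1\in W \text{ such that } (v,w_1)\pairs{\sigequiv}{\updownequiv{k}}(v',w_2)\}.
\]
I would verify that $R$ meets the hypotheses of Theorem~\ref{theo-weaklydown-bp}. The first hypothesis holds because $w_1\in W$ is a descendant of $v$, so $\sig(v,w_1)$ is a purely downward signature $\down^n$; the congruence $(v,w_1)\sigequiv(v',w_2)$ forces $\sig(v',w_2)=\down^n$, making $w_2$ a descendant of $v'$. The second hypothesis is closure of $R$ under $\pairs{\sigequiv}{\updownequiv{k}}$, which reduces to transitivity of that congruence. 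Theorem~\ref{theo-weaklydown-bp} then supplies an expression $e$ with $e(D)=R$, and it remains to check $e(D)(v)=W$: the inclusion $W\subseteq e(D)(v)$ is immediate by taking $v'=v$, $w_1=w_2$ (reflexivity of congruence), while for $e(D)(v)\subseteq W$, any $w_2\in e(D)(v)$ gives $(v,w_2)\in R$, so some $w_1\in W$ has $(v,w_1)\pairs{\sigequiv}{\updownequiv{k}}(v,w_2)$, and the closure property assumed of $W$ delivers $w_2\in W$.

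The main obstacle I anticipate is the transitivity of $\pairs{\sigequiv}{\updownequiv{k}}$ needed for the second hypothesis of Theorem~\ref{theo-weaklydown-bp}. This turns out to be cleaner than a direct node-by-node argument along the paths: by Lemma~\ref{lem-weaklydown1}, for ancestor-descendant pairs the relation $\pairs{\sigequiv}{\updownequiv{k}}$ is equivalent to signature congruence $\sigequiv$ together with $k$-equivalence of the \emph{endpoints} alone. Since $\sigequiv$ and $\updownequiv{k}$ are each equivalence relations, transitivity of the combined relation follows immediately by composing these two facts, and all three pairs involved in the closure check are ancestor-descendant (two by the hypothesis of Theorem~\ref{theo-weaklydown-bp}, and $(v,w_1)$ because $w_1\in W$). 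Everything else is a faithful transcription of the strictly downward argument of Theorem~\ref{theo-strictlydown-nodelevel}.
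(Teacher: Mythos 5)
Your proposal is correct and takes essentially the same approach as the paper: the paper states this result without proof, noting that the derivation ``follows the development in Section~\ref{subsec-strictlybp} very closely,'' and your argument is exactly that transcription of the proof of Theorem~\ref{theo-strictlydown-nodelevel}, with Corollary~\ref{cor-weaklydown} and Theorem~\ref{theo-weaklydown-bp} substituted for their strictly downward counterparts. Your use of Lemma~\ref{lem-weaklydown1} to reduce transitivity of $\pairs{\sigequiv}{\updownequiv{k}}$ on ancestor-descendant pairs to transitivity of $\sigequiv$ and $\updownequiv{k}$ separately is a clean way of making explicit the step the paper dismisses as ``clearly, $R$ satisfies the properties.''
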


\begin{corollary}
\label{cor-weaklydown-nodelevel}

Let $D=(V,\textit{Ed},r,\lambda)$ be a document, let $v$ be a node of
$D$, and let $W\subseteq V$. Then there exists an 
expression $e$ in the weakly downward (core) XPath algebra such that
$e(D)(v)=W$ if and only if all nodes of~$W$ are descendants of $v$,
and, for all $w_1,w_2\in W$ with
$(v,w_1)\pairs{\sigequiv}{\updownequiv{1}} (v,w_2)$, 
$w_1\in W$ implies $w_2\in W$. 

\end{corollary}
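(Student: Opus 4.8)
The plan is to obtain this statement as the special case $k=1$ of Theorem~\ref{theo-weaklydown-nodelevel}, which I would take as already established. The only genuine content to supply is the verification that the language named in the corollary, the weakly downward (core) XPath algebra, coincides with the weakly downward (core) XPath algebra \emph{with counting up to~$1$} appearing in the theorem, so that the substitution $k=1$ is legitimate.

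First I would recall Proposition~\ref{prop-counting}, (1), which gives $\ch{1}(e)(D)=\pi_1(\down/e)(D)$ for every document $D$ and every expression $e$. Hence the counting operator $\ch{1}(.)$ is redundant as soon as downward navigation (``$\down$'') and first projection (``$\pi_1$'') are available, which is exactly the situation in the weakly downward languages under consideration. Consequently $\BL(\down,\pi_1,\pi_2,\ch{1}(.),-)$ and $\BL(\down,\pi_1,\pi_2,-)$ define the same binary relation on every document, and likewise for the corresponding core languages; this is precisely the identification already made in the text when the weakly downward (core) XPath algebra was introduced as the $k=1$ instance following Theorem~\ref{theo-weaklydownequivalent}.

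With this identification in hand, I would instantiate Theorem~\ref{theo-weaklydown-nodelevel} at $k=1$. The theorem then asserts that there is an expression $e$ in the weakly downward (core) XPath algebra with $e(D)(v)=W$ if and only if every node of $W$ is a descendant of $v$ and, for all $w_1,w_2$ with $(v,w_1)\pairs{\sigequiv}{\updownequiv{1}} (v,w_2)$, membership $w_1\in W$ forces $w_2\in W$. This is verbatim the statement of the corollary, so nothing further is required.

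I do not expect any real obstacle here: the argument is a pure specialization, and the single nontrivial ingredient---expressibility of $\ch{1}$ via $\pi_1$ and $\down$---is already recorded in Proposition~\ref{prop-counting}. The only point that deserves a moment's care is making explicit that the closure condition is stated for pairs $(v,w_1)\pairs{\sigequiv}{\updownequiv{1}}(v,w_2)$ sharing the \emph{same} first component $v$, which is exactly how Theorem~\ref{theo-weaklydown-nodelevel} phrases its $k$-parameterized condition, so no reformulation of the congruence relation is needed.
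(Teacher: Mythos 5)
Your proposal is correct and matches the paper's own treatment: the paper states this corollary without proof, as the immediate specialization of Theorem~\ref{theo-weaklydown-nodelevel} to $k=1$, relying on exactly the identification you make---that $\ch{1}(.)$ is expressible via Proposition~\ref{prop-counting}, so the weakly downward (core) XPath algebra coincides with the counting-up-to-$1$ version. Nothing is missing.
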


For $v=r$, the condition $(v,w_1)\pairs{\sigequiv}{\updownequiv{k}}
(v,w_2)$ reduces to $w_1\updownequiv{k} w_2$, by
Proposition~\ref{prop-kequivalent} and Lemma~\ref{lem-weaklydown1}. 
Comparing Theorem~\ref{theo-weaklydown-nodelevel} and
Corollary~\ref{cor-weaklydown-nodelevel} with, respectively,
Theorem~\ref{theo-strictlydown-root} and
Corollary~\ref{cor-strictlydown-root} then immediately yields the
following.

\begin{theorem}
\label{theo-weaklydown-root}

Let $D=(V,\textit{Ed},r,\lambda)$.
\begin{enumerate}

\item for each expression $e$ in the weakly downward (core) XPath algebra
  with counting up to~$k$, $k\geq 1$, there exists an expression $e'$ in
  the strictly downward (core) XPath algebra with counting up to~$k$
  such that $e(D)(r)=e'(D)(r)$; in particular,

\item for each expression $e$ in the weakly downward (core) XPath
  algebra, there exists an expression $e'$ in 
  the strictly downward (core) XPath algebra
  such that $e(D)(r)=e'(D)(r)$.

\end{enumerate}
\end{theorem}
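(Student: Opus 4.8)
The plan is to obtain both statements by directly comparing the node-level characterization of the weakly downward case (Theorem~\ref{theo-weaklydown-nodelevel}) with the root-level characterization of the strictly downward case (Theorem~\ref{theo-strictlydown-root}), specialized to $v=r$; the two turn out to be the very same closure condition. First I would fix an expression $e$ in the weakly downward (core) XPath algebra with counting up to~$k$ and set $W\ass e(D)(r)$. Since every node of $D$ is a descendant of the root, the ``descendant'' clause is vacuous, so the only-if direction of Theorem~\ref{theo-weaklydown-nodelevel} with $v=r$ tells me that $W$ is closed under the pair relation $(r,\cdot)\pairs{\sigequiv}{\updownequiv{k}}(r,\cdot)$: whenever $(r,w_1)\pairs{\sigequiv}{\updownequiv{k}}(r,w_2)$ and $w_1\in W$, then $w_2\in W$.

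The key step is to simplify this pair condition when $v=r$. By Lemma~\ref{lem-weaklydown1}, $(r,w_1)\pairs{\sigequiv}{\updownequiv{k}}(r,w_2)$ holds exactly when $(r,w_1)\sigequiv(r,w_2)$ and $w_1\updownequiv{k}w_2$. The second conjunct already implies the first: by Proposition~\ref{prop-kequivalent}, $w_1\updownequiv{k}w_2$ is equivalent to $(r,w_1)\pairs{\sigequiv}{\downequiv{k}}(r,w_2)$, whose very first requirement is $(r,w_1)\sigequiv(r,w_2)$. Concretely, $\updownequiv{k}$ equates the depths of $w_1$ and $w_2$, so $\sig(r,w_1)=\down^{d}=\sig(r,w_2)$ for the common depth $d$, and the signature congruence is automatic. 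Hence the pair condition collapses to $w_1\updownequiv{k}w_2$, and $W$ is closed under $\updownequiv{k}$.

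Now I would invoke the if-direction of Theorem~\ref{theo-strictlydown-root}: since $W$ is closed under $\updownequiv{k}$, there is an expression $e'$ in the strictly downward (core) XPath algebra with counting up to~$k$ with $e'(D)(r)=W=e(D)(r)$, which establishes the first claim. The second claim is the special case $k=1$, where $\ch{1}(.)$ is already expressible by Proposition~\ref{prop-counting}; the same argument then runs with Corollary~\ref{cor-weaklydown-nodelevel} in place of Theorem~\ref{theo-weaklydown-nodelevel} and Corollary~\ref{cor-strictlydown-root} in place of Theorem~\ref{theo-strictlydown-root}. I do not expect a genuine obstacle here, as the whole content is that, once $v$ is the root, the weakly downward node-level characterization and the strictly downward root-level characterization reduce to the identical closure requirement; the only point requiring care is confirming that $\updownequiv{k}$ subsumes the signature-congruence condition, which holds precisely because $k$-equivalence equates depths.
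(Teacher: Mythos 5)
Your proposal is correct and follows essentially the same route as the paper: the paper's proof likewise reduces the condition $(r,w_1)\pairs{\sigequiv}{\updownequiv{k}}(r,w_2)$ to $w_1\updownequiv{k}w_2$ via Proposition~\ref{prop-kequivalent} and Lemma~\ref{lem-weaklydown1}, and then obtains the result by comparing Theorem~\ref{theo-weaklydown-nodelevel} and Corollary~\ref{cor-weaklydown-nodelevel} with Theorem~\ref{theo-strictlydown-root} and Corollary~\ref{cor-strictlydown-root}, respectively. Your additional check that $\updownequiv{k}$ already entails the signature congruence (since $k$-equivalent nodes have equal depth) is exactly the content implicit in the paper's ``reduces to'' step.
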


Hence, the corresponding weakly downward and strictly downward
languages are navigationally equivalent if navigation always starts
from the root.

\subsection{Weakly downward languages not containing set difference}
\label{subsec-weaklydownnodifference}

To find characterizations for weakly downward languages not containing
set difference, we can proceed in two ways: 
\begin{enumerate}

\item we proceed as in Section~\ref{subsubsec-downwardpositivexpath} for
  strictly downward languages without set difference, i.e., reviewing
  the results in
  Sections~\ref{subsec-weaklydownsufficient}--\ref{subsec-weaklydowncharacterization}
  and examine to which extent these results in the case where $k=1$
  allow replacing 1-equivalence by relatedness
  (Definition~\ref{def-weakequivalent}); or

\item we start from the results in
  Section~\ref{subsubsec-downwardpositivexpath} on strictly downward
  languages without set difference and ``bootstrap'' them to results on
  weakly downward languages without set difference in the same way as
  the results on strictly downward languages with set difference in
  Sections~\ref{subsec-strictlydownsufficient}--\ref{subsec-strictlydowncharacterization}
  were bootstrapped to results on weakly down ward languages with set
  difference in Sections~\ref{subsec-weaklydownsufficient}--\ref{subsec-weaklydowncharacterization}.

\end{enumerate}
Of course, both approaches lead to the same results. As the necessary
intermediate lemmas and all the proofs can readily be deduced in one
of the two ways described above, we limit ourselves to giving the
main results. Only one technical subtlety deserves mentioning here:
despite the absence of difference, both the property that a node is
the root and the property that a node is not the root can be
expressed, the latter using second projection. For more details, we
refer to the proof of Proposition~\ref{prop-weaklydownnecessary}.

Concretely, the language for which we provide characterizations in
this Section, are $\BL(\down,\pi_1,\pi_2,\cap)$ and
$\CL(\down,\pi_1,\pi_2,\cap)$, which , moreover, are equivalent, by
Corollary~\ref{cor-downwardcorecoincides}. 
We call these languages the \emph{weakly 
  downward positive XPath algebra\/} and the
\emph{weakly downward core positive XPath algebra\/},
respectively. In addition, we can eliminate intersection
\emph{altogether\/}. This 
follows from an earlier result by some of the present
authors~\cite{WuGGP11}. Although this result was stated in the context
of languages that allow both downward and upward navigation, a careful
examination of the elimination algorithm reveals that the results
still hold in the absence of upward navigation. Thus, we have the
following.

\begin{proposition}
\label{prop-eliminateintersection}

The weakly downward positive XPath algebra and the
weakly downward core positive XPath algebra are both equivalent to 
$\BL(\down,\pi_1,\pi_2)$.

\end{proposition}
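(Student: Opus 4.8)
The plan is to prove the two claimed equivalences by reducing everything to a single intersection-elimination statement. First I would invoke Corollary~\ref{cor-downwardcorecoincides} with $E=\{\down,\pi_1,\pi_2,\cap\}$: since this set contains $\down$ and $\pi_1$ and omits both $\up$ and $.^{-1}$, the corollary yields $\BL(\down,\pi_1,\pi_2,\cap)=\CL(\down,\pi_1,\pi_2,\cap)$. It therefore suffices to show that one of these two languages, say the standard one, is equivalent to $\BL(\down,\pi_1,\pi_2)$. The inclusion $\BL(\down,\pi_1,\pi_2)\subseteq\BL(\down,\pi_1,\pi_2,\cap)$ is immediate, so the whole content lies in rewriting every intersection away.

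Next I would push all intersections inward using Theorem~\ref{theo-downwardcorecoincides}, which for the present signature eliminates every intersection \emph{except} those occurring as operands of a boolean combination inside a first or second projection. After this step the only surviving intersections have the form $\pi_1(g)$ or $\pi_2(g)$, with $g$ a $\cup$/$\cap$-combination of intersection-free downward expressions. Distributing unions through the projections and through $\cap$, the irreducible case is $\pi_1(p_1\cap p_2)$, and symmetrically $\pi_2(p_1\cap p_2)$, where $p_1$ and $p_2$ are single intersection-free path expressions. Here I would appeal to the intersection-elimination theorem of \cite{WuGGP11}, which establishes $\BL(\down,\up,\cap)=\BL(\down,\up,\pi_1,\pi_2)$ for the two-way positive fragment, and argue that its rewriting procedure is direction-preserving: on an input with no occurrence of $\up$ it produces an output with no occurrence of $\up$, so restricting the inputs to $\BL(\down,\pi_1,\pi_2,\cap)$ keeps the outputs inside $\BL(\down,\pi_1,\pi_2)$.

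To make the argument self-contained rather than a bare citation, I would verify direction-preservation by unwinding the merge of $p_1$ and $p_2$ directly. Because the language is downward, each $p_i$ relates $v$ only to descendants $w$, and in a tree the undirected path between $v$ and $w$ is unique; hence $(v,w)\in p_1\cap p_2$ exactly when that single path simultaneously witnesses both $p_1$ and $p_2$. This lets me align $p_1$ and $p_2$ level by level: if their signatures $\down^{n_1}$ and $\down^{n_2}$ differ, the intersection is empty and replaceable by $\emptyset$; otherwise I conjoin the node conditions attached at each of the $n$ levels, exactly as in the intersection case of Theorem~\ref{theo-downwardcorecoincides}, reducing $p_1\cap p_2$ to a single path expression whose per-level conditions are intersections of strictly simpler node conditions, on which I recurse.

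The main obstacle I anticipate is the presence of $\pi_2$. Unlike the strictly downward setting, a node condition here may be built from $\pi_2$ and therefore test a node's \emph{ancestors} as well as its descendants, so I must confirm that conjoining and re-projecting such conditions never forces the reintroduction of $\up$. The key point is that every $\pi_2$ in the original expression already ranges over downward pairs, so each ancestor test it encodes is of the bounded-depth form realizable by $\pi_2(\down^{\,m}/\cdots)$; checking that the merge keeps all these witnesses expressible using only $\down$, $\pi_1$, and $\pi_2$ — that is, that the algorithm of \cite{WuGGP11} never needs the inverse direction on such inputs — is the crux of the verification and the one place where the restriction to the downward signature is genuinely used.
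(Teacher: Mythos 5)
Your primary route is exactly the paper's: the paper also obtains $\BL(\down,\pi_1,\pi_2,\cap)=\CL(\down,\pi_1,\pi_2,\cap)$ from Corollary~\ref{cor-downwardcorecoincides}, and then eliminates intersection altogether by citing \cite{WuGGP11}, remarking only that a careful examination of that elimination algorithm shows it still works in the absence of upward navigation. So at the level of what is actually argued in print, your proposal and the paper coincide; the paper offers no more detail than your first two paragraphs do.

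The added value you aim for---a self-contained verification---stalls at a precise point, and it is exactly the point you defer to as ``the crux.'' Your merge handles $\pi_1(p_1\cap p_2)$ when $p_1,p_2$ contain at least one occurrence of $\down$: equal signatures force both paths through the same nodes, and you replace the intersection by a single path whose per-level conditions are, ``exactly as in the intersection case of Theorem~\ref{theo-downwardcorecoincides},'' of the form $\pi_1(c_{1j}\cap c_{2j})$. But now recurse on such a term: $c_{1j}$ and $c_{2j}$ are node conditions, i.e., paths with \emph{zero} occurrences of $\down$; their signatures trivially agree, and your merge rule hands back $\pi_1(c_{1j}\cap c_{2j})$ unchanged. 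The recursion is circular at this base case, so intersections of node conditions are never eliminated. The missing ingredient is elementary but essential: for expressions $c_1,c_2$ denoting subidentities (i.e., $c_i(D)\subseteq\varepsilon(D)$ for every document $D$), one has $c_1\cap c_2=c_1/c_2$, because any middle witness of the composition forces all three nodes to coincide. Replacing every node-condition intersection by a composition breaks the circularity, manifestly stays inside $\BL(\down,\pi_1,\pi_2)$, and simultaneously discharges the $\pi_2$ worry you leave open: a condition $\pi_2(g)$ testing ancestors is just another subidentity, and composing it with any other subidentity never calls for $\up$. With that one observation, your level-by-level merge becomes a complete, self-contained proof---indeed more explicit than the justification the paper itself provides.
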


We now summarize the characterization results.

\begin{theorem}
\label{theo-weakweaklydownequivalent}

Consider the weakly downward (core) positive XPath algebra.
Let $D=(V,\textit{Ed},r,\lambda)$ be a document, and let
$v_1$ and $v_2$ be nodes of $D$.  Then $v_1\expequiv v_2$ if and only
if $v_1\weakupdownequiv{} v_2$.

\end{theorem}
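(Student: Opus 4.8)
The plan is to prove the two implications by paralleling the development for the weakly downward (core) XPath algebra with counting up to $k=1$ (Sections~\ref{subsec-weaklydownsufficient}--\ref{subsec-weaklydowncharacterization}), but everywhere replacing $1$-equivalence ``$\updownequiv{1}$'' by \emph{relatedness} ``$\updowngeq{}$'' and its symmetric closure ``$\weakupdownequiv{}$'', and dropping the inductive step for set difference. Concretely, I would first reduce both directions to the asymmetric statement that $v_1\updowngeq{} v_2$ if and only if $v_1\expgeq v_2$: since $v_1\weakupdownequiv{} v_2$ means $v_1\updowngeq{} v_2$ and $v_2\updowngeq{} v_1$ (Definition~\ref{def-weakequivalent}), while $v_1\expequiv v_2$ means $v_1\expgeq v_2$ and $v_2\expgeq v_1$ (Definition~\ref{def-ee}), the theorem follows by applying the asymmetric equivalence in both directions.

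For sufficiency ($v_1\updowngeq{} v_2 \Rightarrow v_1\expgeq v_2$), I would first establish the positive, asymmetric analogues of Lemmas~\ref{lem-weaklydown1} and~\ref{lem-weaklydown2}. For ancestor--descendant pairs, $(v_1,w_1)\pairs{\sigequiv}{\updowngeq{}}(v_2,w_2)$ holds iff $(v_1,w_1)\sigequiv(v_2,w_2)$ and $w_1\updowngeq{} w_2$; this follows from Proposition~\ref{prop-weakequivalent}, (\ref{weakequivalent-1}), together with Proposition~\ref{prop-subpairs}, exactly as in Lemma~\ref{lem-weaklydown1}. The descendant-existence claim (analogue of Lemma~\ref{lem-weaklydown2}, (\ref{weakdown1})) is then obtained from the downward-related existence lemma Lemma~\ref{lem-weakstrictlydown1}, and the ancestor-existence claim (analogue of (\ref{weakdown2})) by induction on path length. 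With these in hand, I would carry out the structural induction of Proposition~\ref{prop-weaklydown} in the positive setting. Since $(v_1,w_1)\pairs{\sigequiv}{\updowngeq{}}(v_2,w_2)$ implies $(v_1,w_1)\pairs{\sigequiv}{\downgeq{}}(v_2,w_2)$, the base cases and the steps for composition, $\pi_1$, union, and intersection carry over verbatim from the positive strictly downward argument (Lemma~\ref{lem-weakstrictlydown}), using Lemma~\ref{lem-weakstrictlydown1} where Lemma~\ref{lem-strictlydown1} was used before; the step for $\pi_2$ is the one genuinely new piece and is handled as in Proposition~\ref{prop-weaklydown}, but invoking the new ancestor-existence lemma. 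Crucially, there is \emph{no} inductive step for set difference, which is precisely why the hypothesis may be the weaker, asymmetric $\pairs{\sigequiv}{\updowngeq{}}$. Combining with the descendant-existence lemma then yields $v_1\updowngeq{} v_2 \Rightarrow v_1\expgeq v_2$, just as Corollary~\ref{cor-weaklydown2} followed from Proposition~\ref{prop-weaklydown}.

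For necessity ($v_1\expgeq v_2 \Rightarrow v_1\updowngeq{} v_2$), I would verify the three defining conditions of ``$\updowngeq{}$'' (Definition~\ref{def-weakequivalent}). Condition~(a), $v_1\downgeq{} v_2$, is immediate from Lemma~\ref{lem-weakdownnecessary}. For conditions~(b) and~(c), I would reuse the root- and parent-detection arguments from the proof of Proposition~\ref{prop-weaklydownnecessary}, observing that they never use set difference: with $d$ the height of $D$, the expression $\down^d$ detects the root and $\pi_2(\down)$ detects non-roots, so $v_1\expgeq v_2$ forces $v_1$ and $v_2$ to have the same root status; and if neither is the root, an expression of the form $\pi_2(\pi_1(e)/\down)$ (cf.\ the proof of Proposition~\ref{prop-weaklydownnecessary}) pulls a witness for $e(D)(u_1)\neq\emptyset$ at the parent $u_1$ down to $v_1$, so that $v_1\expgeq v_2$ yields $u_1\expgeq u_2$ for the parents, and an induction on depth gives $u_1\updowngeq{} u_2$. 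This is exactly the content of the remark opening Section~\ref{subsec-weaklydownnodifference} that the root and non-root properties remain expressible using second projection.

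The main obstacle is precisely this last point: carrying out the depth induction of Proposition~\ref{prop-weaklydownnecessary} while staying inside the positive fragment. One must confirm that every appeal to set difference (or to Proposition~\ref{prop-notempty-empty}) in the original argument can be avoided---the label, child-existence, and sibling-counting conditions coming through Lemma~\ref{lem-weakdownnecessary}, and the root/parent bookkeeping through $\pi_2(\down)$---and that the relation obtained is the asymmetric ``$\updowngeq{}$'' rather than a symmetric one, so that the final symmetrization to ``$\weakupdownequiv{}$'' is legitimate. Once the positive versions of the existence lemmas and of Proposition~\ref{prop-weaklydown} are in place, and intersection is eliminated via Proposition~\ref{prop-eliminateintersection}, the statement follows.
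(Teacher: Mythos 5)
Your proposal is correct and takes essentially the same route as the paper: the paper itself proves this theorem only by the sketch you spell out, namely revisiting Sections~\ref{subsec-weaklydownsufficient}--\ref{subsec-weaklydowncharacterization} for $k=1$ with $1$-equivalence replaced by relatedness ``$\updowngeq{}$'' and its symmetrization ``$\weakupdownequiv{}$'', dropping the set-difference induction step, and invoking exactly the technical subtlety the paper highlights, that both root and non-root status remain expressible without difference (the latter via second projection). Your write-up is in fact slightly more careful than the paper's sketch---e.g.\ using $\pi_2(\pi_1(e)/\down)$ for the parent-detection step, which is the correct form of the expression written as $\pi_2(e/\down)$ in the proof of Proposition~\ref{prop-weaklydownnecessary}---so no gaps remain.
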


\begin{theorem}
\label{theo-weakweaklydown}

Consider the weakly downward (core) positive XPath algebra.
Let
$D=(V,\textit{Ed},r,\lambda)$ be a document, and let $v_1$, $w_1$,
$v_2$, and $w_2$ be nodes of $D$ such that $w_1$ is a descendant of
$v_1$ and $w_2$ is a descendant of $v_2$. 
Then, the property that, for each expression $e$ in the
language under consideration, $(v_1,w_1)\in e(D)$ if and only if
$(v_2,w_2)\in e(D)$ is equivalent to the property
$(v_1,w_1)\pairs{\sigequiv}{\weakupdownequiv{}} (v_2,w_2)$.

\end{theorem}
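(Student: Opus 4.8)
The plan is to run the two-step methodology exactly as in the with-difference case of Sections~\ref{subsec-weaklydownsufficient}--\ref{subsec-weaklydowncharacterization}, but at the positive level, upgrading the \emph{directional} notion $\downgeq{}$ and its symmetrization $\weakdownequiv{}$ used for strictly downward positive languages in Section~\ref{subsubsec-downwardpositivexpath} to the two-way notions $\updowngeq{}$ and $\weakupdownequiv{}$ of Definition~\ref{def-weakequivalent}. Concretely, I would prove a \emph{sufficiency} statement (that $(v_1,w_1)\pairs{\sigequiv}{\weakupdownequiv{}}(v_2,w_2)$ forces $(v_1,w_1)\in e(D)\Leftrightarrow(v_2,w_2)\in e(D)$ for every $e$ in the language) and the converse \emph{necessity} statement, and then combine them. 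Because $\BL(\down,\pi_1,\pi_2,\cap)$ and its core variant coincide (Corollary~\ref{cor-downwardcorecoincides}) and intersection can be dropped (Proposition~\ref{prop-eliminateintersection}), it suffices throughout to argue about expressions of $\BL(\down,\pi_1,\pi_2,\cap)$.

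For sufficiency I would first establish a one-directional lemma: if $(v_1,w_1)\pairs{\sigequiv}{\updowngeq{}}(v_2,w_2)$ with $w_i$ a descendant of $v_i$, then $(v_1,w_1)\in e(D)$ implies $(v_2,w_2)\in e(D)$. This is a structural induction. Since relatedness refines downward-relatedness (condition (a) of Definition~\ref{def-weakequivalent}), the cases $\emptyset,\varepsilon,\hat\ell,\down,{/},\pi_1,\cup,\cap$ are copied verbatim from the proof of Lemma~\ref{lem-weakstrictlydown} (the $\pi_1$ step invoking the downward-relatedness transfer of Lemma~\ref{lem-weakstrictlydown1}). The one genuinely new step is second projection, handled as in the proof of Proposition~\ref{prop-weaklydown}: from $(v_1,v_1)\in\pi_2(f)(D)$ one obtains an ancestor $y_1$ of $v_1$ with $(y_1,v_1)\in f(D)$, and one needs the ancestor $y_2$ of $v_2$ at equal depth to satisfy $(y_1,v_1)\pairs{\sigequiv}{\updowngeq{}}(y_2,v_2)$, so that the induction hypothesis delivers $(y_2,v_2)\in f(D)$. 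This is the relatedness analogue of Lemma~\ref{lem-weaklydown2}(\ref{weakdown2}); it follows by iterating Definition~\ref{def-weakequivalent}(iii) up the two root-paths (whose lengths agree, again by that definition) together with Proposition~\ref{prop-subpairs}. Two applications of this one-directional lemma with the two pairs interchanged --- legitimate since $\weakupdownequiv{}$ is precisely $\updowngeq{}$ in both directions --- then yield the sufficiency biconditional, the positive two-way counterpart of Proposition~\ref{prop-weakstrictlydown} and Proposition~\ref{prop-weaklydown}.

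For necessity I would first record the weak analogue of Lemma~\ref{lem-weaklydown1}, namely that for ancestor-descendant pairs $(v_1,w_1)\pairs{\sigequiv}{\weakupdownequiv{}}(v_2,w_2)$ holds iff $(v_1,w_1)\sigequiv(v_2,w_2)$ and $w_1\weakupdownequiv{} w_2$; indeed, $w_1\updowngeq{} w_2$ propagates $\updowngeq{}$ to all corresponding ancestors by Definition~\ref{def-weakequivalent}(iii), and symmetrically for $w_2\updowngeq{} w_1$. It therefore suffices to prove, from the hypothesis that no positive expression separates the two pairs, both $(v_1,w_1)\sigequiv(v_2,w_2)$ and $w_1\weakupdownequiv{} w_2$. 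The first follows by applying the expression $\sig(v_1,w_1)$ (a pure composition of $\down$'s, hence in the language) and its counterpart in the other direction. For the second, given any $f$ with $f(D)(w_1)\neq\emptyset$, the expression $\sig(v_1,w_1)/\pi_1(f)$ satisfies $(v_1,w_1)\in\sig(v_1,w_1)/\pi_1(f)(D)$; the hypothesis then forces $(v_2,w_2)$ into it, which (as $\pi_1(f)\subseteq\varepsilon$) yields $f(D)(w_2)\neq\emptyset$. By symmetry $w_1\expequiv w_2$, whence $w_1\weakupdownequiv{} w_2$ by the node-level characterization Theorem~\ref{theo-weakweaklydownequivalent}. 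This is the positive, $\pi_2$-free streamlining of the converse arguments of Propositions~\ref{prop-strictlydownconverse} and~\ref{prop-weaklydownconverse}.

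The main obstacle is the second-projection step in the sufficiency direction. In the with-difference proofs one may freely exploit expression-equivalence in \emph{both} directions inside the induction to absorb the negation inherent in ``$-$''; here, with set difference absent, the induction must be carried through with the directional relation $\updowngeq{}$ and only symmetrized at the very end. I therefore have to verify carefully that relatedness transports \emph{upward} along ancestors in the correct direction --- that is, that the ancestor-transfer analogue of Lemma~\ref{lem-weaklydown2}(\ref{weakdown2}) holds for $\updowngeq{}$ rather than for $\updownequiv{k}$ --- so that the witness $y_2$ for $\pi_2(f)$ is produced without appeal to negation. Once this directional ancestor-transfer is in place the remainder is routine, and, as noted at the start of Section~\ref{subsec-weaklydownnodifference}, the two proof routes (reviewing the weakly downward development at $k=1$ versus bootstrapping from the strictly downward positive results) collapse to the same argument.
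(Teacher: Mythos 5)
Your proposal is correct and takes essentially the approach the paper intends: the paper gives no explicit proof of Theorem~\ref{theo-weakweaklydown}, saying only that the development of Sections~\ref{subsec-weaklydownsufficient}--\ref{subsec-weaklydowncharacterization} carries over at $k=1$ with $1$-equivalence replaced by relatedness (Definition~\ref{def-weakequivalent}), and your proof executes precisely that route --- carrying the directional relation $\updowngeq{}$ through the structural induction (with the ancestor-transfer needed for the $\pi_2$ case, which indeed follows by iterating Definition~\ref{def-weakequivalent} up root-paths of equal length) and symmetrizing only at the end, exactly as the paper did for the strictly downward positive case in Section~\ref{subsubsec-downwardpositivexpath}. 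Your necessity argument, which reduces everything to $w_1\expequiv w_2$ via the expression $\sig(v_1,w_1)/\pi_1(f)$ and then invokes Theorem~\ref{theo-weakweaklydownequivalent} together with the weak analogue of Lemma~\ref{lem-weaklydown1}, is a mild streamlining of the Proposition~\ref{prop-weaklydownconverse} template and is equally valid, since $\weakupdownequiv{}$ propagates to corresponding ancestors by definition.
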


\begin{theorem}
\label{theo-weakweaklydown-bp}

Let $D=(V,\textit{Ed},r,\lambda)$ be a document, and let
$R\subseteq V\times V$. Then, there exists an expression $e$ in the
weakly downward (core) positive XPath algebra
such that $e(D)=R$ if and only if,
\begin{enumerate}

\item for all $v,w\in V$, $(v,w)\in R$ implies $w$ is a descendant
  of~$v$; and,

\item for all $v_1,w_1,v_2,w_2\in V$ with $w_1$ a descendant of $v_1$,
  $w_2$ a descendant of $v_2$, and
  $(v_1,w_1)\pairs{\sigequiv}{\updowngeq{}} (v_2,w_2)$, 
$(v_1,w_1)\in R$ implies $(v_2,w_2)\in R$. 

\end{enumerate}
\end{theorem}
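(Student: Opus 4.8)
The plan is to mirror the proof of Theorem~\ref{theo-strictlydown-bp}, and of its positive counterpart Theorem~\ref{theo-weakstrictlydown-bp}, now in the weakly downward positive setting. Two ingredients are needed, both obtainable by the bootstrapping sketched at the start of this subsection: first, a \emph{monotonicity} result, the weakly downward positive analogue of Lemma~\ref{lem-weakstrictlydown}, asserting that whenever $w_1$ is a descendant of $v_1$, $w_2$ a descendant of $v_2$, and $(v_1,w_1)\pairs{\sigequiv}{\updowngeq{}}(v_2,w_2)$, then for every expression $e$ of the language $(v_1,w_1)\in e(D)$ implies $(v_2,w_2)\in e(D)$; and second, a \emph{separation} lemma, the analogue of Lemma~\ref{lem-weakseparationdown}, producing for every ancestor--descendant pair $(v_1,w_1)$ an expression $e_{(v_1,w_1)}$ with $(v_2,w_2)\in e_{(v_1,w_1)}(D)$ exactly when $(v_1,w_1)\pairs{\sigequiv}{\updowngeq{}}(v_2,w_2)$.

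The monotonicity result is proved by structural induction exactly as in Proposition~\ref{prop-weaklydown}, the only genuinely new case being second projection, which is handled by the $\updowngeq{}$-version of Lemma~\ref{lem-weaklydown2}, (\ref{weakdown2}). As the language is positive, the induction needs the implication in one direction only, so the asymmetric relation $\updowngeq{}$ suffices here, just as $\downgeq{}$ sufficed in Lemma~\ref{lem-weakstrictlydown}. With this in hand the ``only if'' direction is immediate: if $e(D)=R$, the first condition holds because the language is downward, while the second is precisely the monotonicity result applied with $R=e(D)$.

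For the ``if'' direction, assume $R$ satisfies both conditions and set $e\ass\bigcup_{(v_1,w_1)\in R}e_{(v_1,w_1)}$, which is well defined since the first condition forces each $(v_1,w_1)\in R$ to be an ancestor--descendant pair, and which lies in the language. Reflexivity of $\pairs{\sigequiv}{\updowngeq{}}$ gives $R\subseteq e(D)$; conversely, if $(v_2,w_2)\in e(D)$ then $(v_2,w_2)\in e_{(v_1,w_1)}(D)$ for some $(v_1,w_1)\in R$, so $(v_1,w_1)\pairs{\sigequiv}{\updowngeq{}}(v_2,w_2)$ and the second condition yields $(v_2,w_2)\in R$. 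Hence $e(D)=R$.

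The main obstacle is the separation lemma. By the $\updowngeq{}$-analogue of Lemma~\ref{lem-weaklydown1}, for ancestor--descendant pairs $(v_1,w_1)\pairs{\sigequiv}{\updowngeq{}}(v_2,w_2)$ reduces to $(v_1,w_1)\sigequiv(v_2,w_2)$ together with $w_1\updowngeq{}w_2$, and by Proposition~\ref{prop-weakequivalent}, (\ref{weakequivalent-1}), the latter says that the root-to-$w_2$ path matches the root-to-$w_1$ path level by level under $\downgeq{}$. The difficulty is that this is an \emph{upward} condition to be tested in a language with only downward navigation and second projection. It is met by composing the strictly downward separation expressions $e_{z}$ of Lemma~\ref{lem-weakseparationdown}, (1), along the root path, anchoring the composition at the root and reading off its endpoint with $\pi_2$: one forms the predicate $\pi_2\bigl(\rho/\pi_1(e_{z_1})/\down/\cdots/\down/\pi_1(e_{z_m})\bigr)$, where $r=z_1,\ldots,z_m=w_1$ is the root-to-$w_1$ path, detecting exactly the nodes $w_2$ with $w_1\updowngeq{}w_2$, and then prefixes it with $\sig(v_1,w_1)$ to recover the pair $(v_2,w_2)$. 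Here $\rho$ expresses ``being the root'': although difference is unavailable, for the fixed document $D$ the root is the unique node possessing a descendant at distance equal to the height of $D$, so $\rho\ass\pi_1(\down^{h})$ with $h$ the height of $D$ works (non-root being $\pi_2(\down)$, as noted before Proposition~\ref{prop-eliminateintersection}).
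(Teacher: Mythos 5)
Your proof is correct and follows exactly the route the paper prescribes: the paper does not write out a proof of this theorem, but states that it is obtained by bootstrapping the strictly downward positive results (Lemma~\ref{lem-weakseparationdown}, Theorem~\ref{theo-weakstrictlydown-bp}) to the weakly downward setting in the same way the with-difference results of Section~\ref{sec-weaklydownward} were obtained, flagging only the subtlety that root-ness must be expressible without difference. Your monotonicity and separation lemmas are precisely the intended intermediate results, and your root predicate $\rho\ass\pi_1(\down^{h})$ resolves that flagged subtlety exactly as the paper's pointer to the proof of Proposition~\ref{prop-weaklydownnecessary} suggests.
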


\begin{corollary}
\label{theo-weakweaklydown-nodelevel}

Let $D=(V,\textit{Ed},r,\lambda)$ be a document, let
$v$ be a node of $D$, and let $W\subseteq V$. Then there exists an
expression $e$ in the weakly downward (core) positive XPath algebra
such that $e(D)(v)=W$ if and only if all nodes
of~$W$ are descendants of $v$, and, for all nodes $w_1$ and $w_2$ of $D$ with
$(v,w_1)\pairs{\sigequiv}{\updowngeq{}} (v,w_2)$, $w_1\in W$ implies
$w_2\in W$.

\end{corollary}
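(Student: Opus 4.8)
The plan is to mirror the proof of Theorem~\ref{theo-strictlydown-nodelevel}, replacing downward $1$-equivalence throughout by relatedness ``$\updowngeq{}$'' and drawing on the weakly downward counterparts of the strictly downward machinery. Two facts about the combined relation $\pairs{\sigequiv}{\updowngeq{}}$, restricted to ancestor-descendant pairs, underpin everything: it is \emph{reflexive} and \emph{transitive}. Both follow because relatedness ``$\updowngeq{}$'' is a preorder (reflexivity and transitivity of ``$\downgeq{}$'' propagate up the path to the root in Definition~\ref{def-weakequivalent}), while congruence ``$\sigequiv$'' of ancestor-descendant pairs simply means equality of signatures $\down^n$, so corresponding nodes at each depth chain through the preorder.

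For the \emph{only if} direction, suppose $e(D)(v)=W$ for some expression $e$ in the language. Since the language is downward, every node of $W=e(D)(v)$ is a descendant of $v$. Now take $w_1,w_2$ with $(v,w_1)\pairs{\sigequiv}{\updowngeq{}}(v,w_2)$ and $w_1\in W$, so $(v,w_1)\in e(D)$. The key input here is the weakly downward analogue of Lemma~\ref{lem-weakstrictlydown} (which the paper notes is obtainable by the same bootstrapping used to pass from the strictly to the weakly downward setting): for every expression $e$ of the weakly downward (core) positive XPath algebra, $(v_1,w_1)\pairs{\sigequiv}{\updowngeq{}}(v_2,w_2)$ and $(v_1,w_1)\in e(D)$ imply $(v_2,w_2)\in e(D)$. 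Applying it with $v_1=v_2=v$ gives $(v,w_2)\in e(D)$, i.e.\ $w_2\in W$.

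For the \emph{if} direction, assume $W$ satisfies the stated closure property and set
$$R\ass\{(v',w_2)\mid\textrm{there exists }w_1\in W\textrm{ such that }(v,w_1)\pairs{\sigequiv}{\updowngeq{}}(v',w_2)\}.$$
I will check that $R$ meets the two hypotheses of Theorem~\ref{theo-weakweaklydown-bp}: the first holds because $(v,w_1)\pairs{\sigequiv}{\updowngeq{}}(v',w_2)$ forces $\sig(v',w_2)=\sig(v,w_1)=\down^n$, so $w_2$ is a descendant of $v'$; the second is exactly transitivity of $\pairs{\sigequiv}{\updowngeq{}}$, since $(v_1,w_1)\in R$ witnessed by some $w\in W$ together with $(v_1,w_1)\pairs{\sigequiv}{\updowngeq{}}(v_2,w_2)$ yields $(v,w)\pairs{\sigequiv}{\updowngeq{}}(v_2,w_2)$ and hence $(v_2,w_2)\in R$. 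Theorem~\ref{theo-weakweaklydown-bp} then supplies an expression $e$ with $e(D)=R$. Finally $e(D)(v)=W$: reflexivity of $\pairs{\sigequiv}{\updowngeq{}}$ gives $W\subseteq e(D)(v)$, and conversely any $w_2\in e(D)(v)$ satisfies $(v,w_2)\in R$, so some $w_1\in W$ has $(v,w_1)\pairs{\sigequiv}{\updowngeq{}}(v,w_2)$, whence $w_2\in W$ by the closure property of $W$.

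I expect the only genuine obstacle to be the \emph{only if} step, which hinges on the directional (non-symmetric) subsumption lemma for the weakly downward positive algebra. Unlike the symmetric characterization of Theorem~\ref{theo-weakweaklydown}, this lemma must handle the second-projection operator ``$\pi_2$'' in its inductive step, which requires the ``ancestor'' form of the relatedness transport (the weakly downward analogue of Lemma~\ref{lem-weaklydown2}, (\ref{weakdown2})); establishing that lemma in the one-directional, difference-free setting is where the real work lies, everything else being a routine transcription of the strictly downward argument.
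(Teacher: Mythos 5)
Your proposal is correct and takes essentially the approach the paper intends: the paper deliberately omits this proof as a routine transcription, and you carry out exactly that transcription, mirroring the proof of Theorem~\ref{theo-strictlydown-nodelevel} by building the relation $R$ and invoking Theorem~\ref{theo-weakweaklydown-bp}, with reflexivity and transitivity of $\pairs{\sigequiv}{\updowngeq{}}$ doing the bookkeeping. Your identification of the one-directional coupling lemma (the $\updowngeq{}$-analogue of Lemma~\ref{lem-weakstrictlydown}, whose $\pi_2$ step needs the ancestor form of the relatedness transport from Lemma~\ref{lem-weaklydown2}) as the substantive ingredient is also exactly the ingredient that underlies the ``only if'' half of Theorem~\ref{theo-weakweaklydown-bp} itself, so the two arguments coincide.
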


\begin{corollary}
\label{cor-weakweaklydown-root}

Let $D=(V,\textit{Ed},r,\lambda)$ be a document, 
and let $W\subseteq V$. Then there exists an 
expression $e$ in the weakly downward (core) positive XPath algebra such that
$e(D)(r)=W$ if and only if, for all nodes $w_1$ and $w_2$ of $D$ with
$w_1\updowngeq{} w_2$, $w_1\in W$ implies $w_2\in W$. 

\end{corollary}

Hence, the weakly downward positive (core) XPath algebra and the
strictly downward positive (core) XPath algebra are navigationally
equivalent if navigation always starts from the root.

\section{Upward languages}
\label{sec-upward}

In analogy to downward languages, we call a language \emph{upward\/}
if, for any expression in that
language, and for any node $v$ of the document $D$ under
consideration, all nodes in $e(D)(v)$ are ancestors of $v$.
If in an addition, it is always the case that $e(D)(v)=e(D')$, where
$D'$ is the subtree of $D$ obtained by removing from $D$ all strict
descendants of~$v$, we call the language \emph{strictly
  upward\/}. Upward languages that are \emph{not\/} strictly upward
will be called \emph{weakly upward}.

For $E$ a set of nonbasic
  operations of Table~\ref{tab-binops}, $\BL(E)$ or $\CL(E)$ is
  upward if it does not contain downward navigation (``$\down$''), 
  and inverse
  (``$.^{-1}$''). Additionally, strictly upward languages do
  not contain second projection (``$\pi_2$'') and counting operations
  (``$\ch{k}(.)$'').

Of course, there is a distinct asymmetry between strictly upward
languages and strictly downward languages: while a node can have an
arbitrary number of children, it 
has at most one parent, making the analysis of strictly upward
languages much easier than the analysis of downward languages. We
shall see, however, that this asymmetry disappears for weakly upward
languages versus weakly downward languages.

Finally, we observe that the analogues of
Theorem~\ref{theo-downwardcorecoincides} and
Corollary~\ref{cor-downwardcorecoincides} still hold for upward
languages: set difference (``$-$'') and intersection (``$\cap$'') can be
eliminated, unless they are used as operations in a Boolean
combination of subexpressions of the language within a
projection. Hence, an upward language and its
corresponding core language coincide.
\subsection{Strictly upward languages}
\label{subsec-strictlyupward}

The languages we consider here, are $\BL(\up,\pi_1,-)$ and
$\CL(\up,\pi_1,-)$, which are equivalent, and $\BL(\up,\pi_1,\cap)$ and
$\CL(\up,\pi_1,\cap)$, which are also equivalent. We refer to the
former as the \emph{strictly upward (core) XPath algebra\/} and the
\emph{strictly upward (core) positive XPath algebra\/}, respectively.
As the characterization results for these languages are easy to derive
along the lines set out in Section~\ref{sec-strictlydownward}, we
merely summarize the results.

\begin{theorem}
\label{theo-upequivalent}

Consider the strictly upward (core) (positive) XPath algebra.
Let $D=(V,\textit{Ed},r,\lambda)$ be a document,
and let $v_1$ and $v_2$ be nodes of $D$. Then $v_1\expequiv v_2$, if
and only if $v_1\upequiv v_2$.

\end{theorem}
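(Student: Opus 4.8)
The plan is to mirror the two-step development of Section~\ref{sec-strictlydownward}: sufficiency by a pairs-level structural induction, necessity by showing that $\expequiv$ satisfies the defining clauses of $\upequiv$. The key simplifying fact is that for an upward language every pair $(v,w)\in e(D)$ has $w$ an ancestor of $v$, so only ancestor pairs arise and subsumption and congruence coincide (Proposition~\ref{prop-congissub}). Because each node has a unique parent, no counting is involved and the whole argument is markedly simpler than the downward one; in particular $\upequiv$ plays the role that $\downequiv{k}$ did, and $\pairs{\sigequiv}{\upequiv}$ that of $\pairs{\sigequiv}{\downequiv{k}}$.

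For the \emph{sufficiency} direction ($v_1\upequiv v_2$ implies $v_1\expequiv v_2$), I would first prove the upward analogue of Lemma~\ref{lem-strictlydown1}: if $v_1\upequiv v_2$ and $w_1$ is an ancestor of $v_1$, then $v_2$ has an ancestor $w_2$ with $(v_1,w_1)\pairs{\sigequiv}{\upequiv}(v_2,w_2)$. This is immediate, since $\upequiv$ makes the root-to-$v_1$ and root-to-$v_2$ paths isomorphic, so $w_2$ is simply the unique ancestor of $v_2$ at the same depth as $w_1$. Next I would prove the analogue of Proposition~\ref{prop-strictlydown}: for ancestor pairs with $(v_1,w_1)\pairs{\sigequiv}{\upequiv}(v_2,w_2)$ and any expression $e$ of the language, $(v_1,w_1)\in e(D)$ if and only if $(v_2,w_2)\in e(D)$, by structural induction on $e$. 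The atomic cases $\emptyset,\varepsilon,\up$ follow from matching signatures and $\hat\ell$ from $\lambda(v_1)=\lambda(v_2)$; composition, union, intersection, and difference are handled verbatim as in Proposition~\ref{prop-strictlydown} using Proposition~\ref{prop-subpairs}; and the $\pi_1(f)$ case uses the upward analogue of Lemma~\ref{lem-strictlydown1} just stated. Letting $w_1$ range over the ancestors witnessing $e(D)(v_1)\neq\emptyset$ then yields $v_1\expequiv v_2$.

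For the \emph{necessity} direction ($v_1\expequiv v_2$ implies $v_1\upequiv v_2$), I would verify the three clauses of Definition~\ref{def-upwardequivalent} by induction on the depth of $v_1$. Clause (1), $\lambda(v_1)=\lambda(v_2)$, follows by applying $\widehat{\lambda(v_1)}$, exactly as in the first item of the proof of Proposition~\ref{prop-downnecessary}. Clause (2) follows from the expression $\up$, since $\up(D)(v)\neq\emptyset$ precisely when $v$ is not the root, so $v_1\expequiv v_2$ forces $v_1$ and $v_2$ to be simultaneously the root or not. For clause (3), when $v_1,v_2$ are non-roots with parents $u_1,u_2$, the identity $(\up/e)(D)(v)=e(D)(u)$, valid because $\up(D)(v)$ is the singleton parent, shows $e(D)(u_1)\neq\emptyset$ iff $(\up/e)(D)(v_1)\neq\emptyset$, and symmetrically for $v_2$; hence $v_1\expequiv v_2$ gives $u_1\expequiv u_2$, and the induction hypothesis (the depth of $u_1$ being one less) gives $u_1\upequiv u_2$.

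The argument is uniform over the variants in the statement. Since an upward language and its core counterpart coincide (as observed at the start of Section~\ref{sec-upward}), it suffices to treat the $\BL$ versions. The necessity direction uses only $\hat\ell$, $\up$, and composition, present in both $\BL(\up,\pi_1,\cap)$ and $\BL(\up,\pi_1,-)$, while the sufficiency induction already covers $\cap$ and $-$; hence passing to a sublanguage cannot separate $\upequiv$-equivalent nodes, and $\expequiv$ coincides with $\upequiv$ for the strictly upward (core) (positive) XPath algebra alike. I expect no genuine obstacle here: the only care needed is the signature/depth bookkeeping for corresponding nodes, together with the observation that $\up(D)(v)$ is a singleton, which is exactly what removes the counting machinery that dominated the downward development.
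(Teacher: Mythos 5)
Your proposal is correct and matches the paper's intent exactly: the paper gives no explicit proof of Theorem~\ref{theo-upequivalent}, stating only that the results ``are easy to derive along the lines set out in Section~\ref{sec-strictlydownward},'' and your argument is precisely that derivation, with the upward analogues of Lemma~\ref{lem-strictlydown1} and Proposition~\ref{prop-strictlydown} for sufficiency and the depth-induction (exploiting uniqueness of parents via $(\up/e)(D)(v)=e(D)(u)$) for necessity. Your observation that the single-parent property eliminates both the counting machinery and the need for set difference in the necessity direction is exactly why the theorem holds uniformly for the positive variants as well.
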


\begin{theorem}
\label{theo-strictlyup}

Consider the strictly upward (core) (positive) XPath algebra.
Let $D=(V,\textit{Ed},r,\lambda)$ be a document, and
let $v_1$, $w_1$, $v_2$, and $w_2$ be nodes of $D$ such that $w_1$ is
an ancestor of $v_1$ and $w_2$ is an ancestor of $v_2$. 
Then, the property that, for each expression $e$ in the
language under consideration, $(v_1,w_1)\in e(D)$ if and only if
$(v_2,w_2)\in e(D)$ is equivalent to the property
$(v_1,w_1)\pairs{\sigequiv}{\upequiv} (v_2,w_2)$.

\end{theorem}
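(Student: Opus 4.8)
The plan is to prove the two implications separately, exactly mirroring the way Corollary~\ref{cor-strictlydown} was assembled from Proposition~\ref{prop-strictlydown} and its converse Proposition~\ref{prop-strictlydownconverse} in the downward development. Since the language is strictly upward, only ancestor pairs are relevant, so every signature occurring here has the pure form $\up^m$, with $m$ the distance from $w_i$ to $v_i$; in particular $\top(v_i,w_i)=w_i$, and by Proposition~\ref{prop-congruence}, (\ref{congruence-1}), congruence of such pairs coincides with subsumption. Throughout, the single-node fact I rely on is Theorem~\ref{theo-upequivalent}, namely that $y_1\expequiv y_2$ is equivalent to $y_1\upequiv y_2$; this plays the role that Proposition~\ref{prop-downnecessary} played in the downward case.

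For the ``if'' direction (from $\pairs{\sigequiv}{\upequiv}$ to expression-indistinguishability), I first establish the upward counterpart of Lemma~\ref{lem-strictlydown1}: if $v_1\upequiv v_2$ and $z_1$ is an ancestor of $v_1$, then $v_2$ has an ancestor $z_2$ with $(v_1,z_1)\pairs{\sigequiv}{\upequiv}(v_2,z_2)$. This is immediate, since $v_1\upequiv v_2$ makes the root-to-$v_1$ and root-to-$v_2$ paths isomorphic; taking $z_2$ to be the ancestor of $v_2$ at the same depth as $z_1$ and reading off the correspondence along these paths supplies both the common signature $\up^m$ and node-wise upward-equivalence of the intermediate nodes. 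With this lemma in hand, the forward direction is a structural induction on $e$ following Proposition~\ref{prop-strictlydown} almost verbatim: the atoms $\emptyset$, $\varepsilon$, $\hat\ell$, and $\up$ are direct, composition splits at the corresponding intermediate node via Proposition~\ref{prop-subpairs}, and the Boolean cases $\cup$, $\cap$, and $-$ are routine. The only case genuinely using the new lemma is $\pi_1(f)$: from $(v_1,v_1)\in\pi_1(f)(D)$ one extracts a witness ancestor $z_1$ with $(v_1,z_1)\in f(D)$, relocates it to an ancestor $z_2$ of $v_2$ via the lemma, and applies the induction hypothesis to conclude $(v_2,v_2)\in\pi_1(f)(D)$.

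For the ``only if'' direction I copy Proposition~\ref{prop-strictlydownconverse}. Applying the hypothesis to $e\ass\sig(v_1,w_1)$ gives $(v_2,w_2)\in\sig(v_1,w_1)(D)$, and symmetrically, so $(v_1,w_1)\sigequiv(v_2,w_2)$. Now fix a node $y_1$ on the upward path from $v_1$ to $w_1$, with corresponding node $y_2$, and let $f$ be any expression with $f(D)(y_1)\neq\emptyset$. The expression $\sig(v_1,y_1)/\pi_1(f)/\sig(y_1,w_1)$ contains $(v_1,w_1)$, hence by hypothesis $(v_2,w_2)$; since $\sig(v_1,y_1)=\up^j$ pins the intermediate node to the unique ancestor of $v_2$ at the depth matching $y_1$, namely $y_2$, this forces $f(D)(y_2)\neq\emptyset$. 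By symmetry $y_1\expequiv y_2$, whence $y_1\upequiv y_2$ by Theorem~\ref{theo-upequivalent}. Combining the common signature with node-wise upward-equivalence yields $(v_1,w_1)\pairs{\sigequiv}{\upequiv}(v_2,w_2)$.

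The argument is essentially mechanical once the relocation lemma is available, so the only real obstacle is the $\pi_1$ case of the forward direction, which is exactly where strict upwardness is used: the witness $z_1$ must be transported along the isomorphic root-paths. A secondary point deserving a line is uniformity across the three variants named in the statement (with difference, with intersection only, and the positive language): because Theorem~\ref{theo-upequivalent} holds for all of them and the converse argument uses only $\up$, $\pi_1$, and composition, the same proof covers every case, with the forward induction merely dropping the Boolean clauses that a given variant lacks.
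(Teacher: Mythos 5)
Your proof is correct and follows exactly the route the paper intends: the paper gives no explicit proof of Theorem~\ref{theo-strictlyup}, stating only that the strictly upward characterizations ``are easy to derive along the lines set out in Section~\ref{sec-strictlydownward},'' and your argument is precisely that derivation---an upward relocation lemma playing the role of Lemma~\ref{lem-strictlydown1}, a structural induction mirroring Proposition~\ref{prop-strictlydown}, and a converse mirroring Proposition~\ref{prop-strictlydownconverse} that feeds into Theorem~\ref{theo-upequivalent}. The simplifications you exploit (each node having at most one parent, so $\up^m$ is deterministic and relocation along isomorphic root-paths is immediate, and the resulting uniformity across the difference and positive variants) are exactly what the paper alludes to when it notes that the upward analysis is ``much easier'' than the downward one.
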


\begin{theorem}
\label{theo-strictlyup-bp}

Let $D=(V,\textit{Ed},r,\lambda)$ be a document, and
let $R\subseteq V\times V$. Then, there exists an expression $e$ in the
strictly upward (core) XPath algebra 
such that $e(D)=R$
if and only if, 
\begin{enumerate}

\item for all $v,w\in V$, $(v,w)\in R$ implies $w$ is a ancestor
  of~$v$; and,

\item for all $v_1,w_1,v_2,w_2\in V$ with $w_1$ a ancestor of $v_1$,
  $w_2$ a ancestor of $v_2$, and
  $(v_1,w_1)\pairs{\sigequiv}{\upequiv} (v_2,w_2)$, 
$(v_1,w_1)\in R$ implies $(v_2,w_2)\in R$. 

\end{enumerate}
\end{theorem}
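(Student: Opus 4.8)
The plan is to mirror, essentially verbatim, the strictly downward development culminating in Theorem~\ref{theo-strictlydown-bp}, replacing downward navigation by upward navigation, descendants by ancestors, and downward-$k$-equivalence by upward-equivalence throughout (there is no counting parameter in the upward case, since each node has a unique parent). The ``only if'' direction is immediate: the first condition holds because the language is upward, so every pair in $e(D)$ consists of a node together with one of its ancestors; the second condition is exactly Theorem~\ref{theo-strictlyup}, which guarantees that $\pairs{\sigequiv}{\upequiv}$-congruent ancestor pairs cannot be told apart by any expression of the language, so that $R=e(D)$ is necessarily closed under $\pairs{\sigequiv}{\upequiv}$.

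For the ``if'' direction I would first establish the upward analogues of the two separation lemmas. The analogue of Lemma~\ref{lem-separationdown1} asserts that for each node $v_1$ there is an expression $e_{v_1}$ with $e_{v_1}(D)(v_2)\neq\emptyset$ if and only if $v_1\upequiv v_2$. Since the strictly upward (core) XPath algebra is $\CL(\up,\pi_1,-)$, Proposition~\ref{prop-notempty-empty} applies, and for each $w$ with $v_1\not\upequiv w$---equivalently, by Theorem~\ref{theo-upequivalent}, $v_1\not\expequiv w$---there is an expression $f_{v_1,w}$ with $f_{v_1,w}(D)(v_1)\neq\emptyset$ and $f_{v_1,w}(D)(w)=\emptyset$. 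Setting
$$e_{v_1}\ass\pi_1\left(\bigcap_{w\in V\ \&\ v_1\not\upequiv w}\pi_1(f_{v_1,w})\right)$$
then yields the claim exactly as before. The analogue of Lemma~\ref{lem-separationdown2} follows next: writing the upward path from $v_1$ to its ancestor $w_1$ as $v_1=y_1,\ldots,y_n=w_1$, I would set
$$e_{v_1,w_1}\ass\pi_1(e_{y_1})/\up/\pi_1(e_{y_2})/\ldots/\up/\pi_1(e_{y_n}),$$
so that $(v_1,w_1)\in e_{v_1,w_1}(D)$ by construction. Theorem~\ref{theo-strictlyup} gives the inclusion $(v_1,w_1)\pairs{\sigequiv}{\upequiv}(v_2,w_2)\Rightarrow(v_2,w_2)\in e_{v_1,w_1}(D)$, while the explicit shape of the expression (its signature is $\up^{n-1}$, and each factor $\pi_1(e_{y_j})$ forces $y_j\upequiv z_j$ along the matching path $v_2=z_1,\ldots,z_n=w_2$) gives the converse; hence $(v_2,w_2)\in e_{v_1,w_1}(D)$ if and only if $(v_1,w_1)\pairs{\sigequiv}{\upequiv}(v_2,w_2)$.

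With these in hand, the theorem is finished exactly as Theorem~\ref{theo-strictlydown-bp}: put $e\ass\bigcup_{(v_1,w_1)\in R}e_{v_1,w_1}$, which is well defined precisely because condition~1 ensures each $w_1$ is an ancestor of $v_1$. Then $R\subseteq e(D)$ trivially, and for the reverse inclusion any $(v_2,w_2)\in e(D)$ lies in some $e_{v_1,w_1}(D)$ with $(v_1,w_1)\in R$, whence $(v_1,w_1)\pairs{\sigequiv}{\upequiv}(v_2,w_2)$ and condition~2 forces $(v_2,w_2)\in R$. I do not anticipate a genuine obstacle: the only point demanding care is the verification of the two separation lemmas, and these are in fact \emph{simpler} than their downward counterparts, since $\upequiv$ carries no counting parameter and the maximal-sibling-set bookkeeping of the downward proofs is entirely absent.
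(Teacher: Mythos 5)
Your proposal is correct and follows exactly the route the paper intends: the paper itself gives no explicit proof of Theorem~\ref{theo-strictlyup-bp}, stating only that the strictly upward results ``are easy to derive along the lines set out in Section~\ref{sec-strictlydownward},'' and your argument is precisely that derivation---upward analogues of Lemmas~\ref{lem-separationdown1} and~\ref{lem-separationdown2} built from Theorem~\ref{theo-upequivalent} and Proposition~\ref{prop-notempty-empty}, followed by the union construction of Theorem~\ref{theo-strictlydown-bp}. Your observation that the upward case is simpler (no counting parameter, hence no sibling-set bookkeeping) is also exactly why the paper felt justified in omitting the details.
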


\begin{theorem}
\label{theo-weakstrictlyup-bp}

Let $D=(V,\textit{Ed},r,\lambda)$ be a document, and
let $R\subseteq V\times V$. Then, there exists an expression $e$ in the
strictly upward (core) positive XPath algebra 
such that $e(D)=R$
if and only if, 
\begin{enumerate}

\item for all $v,w\in V$, $(v,w)\in R$ implies $w$ is a ancestor
  of~$v$; and,

\item for all $v_1,w_1,v_2,w_2\in V$ with $w_1$ a ancestor of $v_1$,
  $w_2$ a ancestor of $v_2$, and
  $(v_1,w_1)\pairs{\sigequiv}{\upgeq} (v_2,w_2)$, 
$(v_1,w_1)\in R$ implies $(v_2,w_2)\in R$. 

\end{enumerate}
\end{theorem}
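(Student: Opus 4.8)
The plan is to replay the strictly downward positive development—Lemma~\ref{lem-weakstrictlydown1} through Theorem~\ref{theo-weakstrictlydown-bp}—in the upward direction, exploiting the fact that the ancestors of any node form a \emph{single chain}. Because of this linearity, every step is at least as easy as its downward counterpart, and the substantive work is confined to two preparatory lemmas: a \emph{preservation} lemma and a \emph{separation} lemma, the upward analogues of Lemma~\ref{lem-weakstrictlydown} and Lemma~\ref{lem-weakseparationdown}, respectively. (Note that, by Proposition~\ref{prop-congissub}, for the ancestor pairs at issue here subsumption and congruence at the signature level coincide, so $\pairs{\siggeq}{\upgeq}$ is the only relevant pair relation.)

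First I would prove the preservation lemma: for every expression $e$ of the strictly upward (core) positive XPath algebra, and all nodes with $w_1$ an ancestor of $v_1$ and $w_2$ an ancestor of $v_2$, if $(v_1,w_1)\pairs{\siggeq}{\upgeq}(v_2,w_2)$ and $(v_1,w_1)\in e(D)$, then $(v_2,w_2)\in e(D)$. The proof is by structural induction on $e$; the available operators are only $\up$, $\pi_1$, $\cap$ and composition, so all cases but $\pi_1$ are routine. For the $\pi_1$ step I need the upward analogue of Lemma~\ref{lem-weakstrictlydown1}: if $v_1\upgeq v_2$ and $w_1$ is an ancestor of $v_1$, then $v_2$ has an ancestor $w_2$ with $(v_1,w_1)\pairs{\siggeq}{\upgeq}(v_2,w_2)$. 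Since each node has at most one parent, this follows directly by walking up the unique ancestor chains in lockstep, reading off the recursive shape of $\upgeq$.

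Next I would establish separation. At the node level, linearity makes the construction transparent: writing $v_1=a_0,a_1,\dots,a_d=r$ for the ancestor chain of $v_1$, set
$$e_{v_1}\ass \widehat{\lambda(a_0)}/\up/\widehat{\lambda(a_1)}/\up/\cdots/\up/\widehat{\lambda(a_d)},$$
a positive upward expression for which $e_{v_1}(D)(v_2)\neq\emptyset$ holds exactly when the bottom-up labels along the chain of $v_2$ match those of $v_1$, i.e.\ exactly when $v_1\upgeq v_2$. This sidesteps the difference-based Proposition~\ref{prop-notempty-empty} (unavailable here) in the spirit of Lemma~\ref{lem-weakseparationdown}, but even more cleanly, as there is no branching or counting to reproduce. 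Lifting to pairs, for an ancestor pair $(v_1,w_1)$ with chain $v_1=y_{11},\dots,y_{1n}=w_1$ I take $e_{(v_1,w_1)}\ass \pi_1(e_{y_{11}})/\up/\pi_1(e_{y_{12}})/\up/\cdots/\up/\pi_1(e_{y_{1n}})$; combining the node-level separation property with the preservation lemma shows that its pairs are precisely those $(v_2,w_2)$ with $(v_1,w_1)\pairs{\siggeq}{\upgeq}(v_2,w_2)$.

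With these in hand the theorem assembles exactly as Theorem~\ref{theo-strictlydown-bp} and Theorem~\ref{theo-weakstrictlydown-bp} did. For the ``only if,'' condition~1 is immediate from the upward character of the language and condition~2 is the preservation lemma applied to $R=e(D)$. For the ``if,'' I form $e\ass\bigcup_{(v_1,w_1)\in R}e_{(v_1,w_1)}$, which is well defined since condition~1 forces each $(v_1,w_1)\in R$ to be an ancestor pair; then $R\subseteq e(D)$ holds by construction and $e(D)\subseteq R$ by the separation property together with condition~2. The main obstacle is really only the $\pi_1$ case of the preservation induction, together with the claim that $e_{v_1}$ captures $\upgeq$ \emph{exactly}—that no positive strictly upward expression separates $\upgeq$-related nodes. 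I expect both to be routine consequences of the single-parent structure, which is precisely why the paper states this result without a detailed proof.
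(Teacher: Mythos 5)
Your proposal is correct and follows exactly the route the paper itself prescribes: the paper gives no explicit proof of Theorem~\ref{theo-weakstrictlyup-bp}, saying only that the upward characterizations ``are easy to derive along the lines set out in Section~\ref{sec-strictlydownward},'' and your preservation-lemma/separation-lemma/union-assembly architecture is precisely that transposition of the strictly downward positive development (Lemma~\ref{lem-weakstrictlydown}, Lemma~\ref{lem-weakseparationdown}, Theorem~\ref{theo-weakstrictlydown-bp}), including the correct reduction of $\pairs{\sigequiv}{\upgeq}$ to $\pairs{\siggeq}{\upgeq}$ via Proposition~\ref{prop-congissub}. Your only deviation is a harmless simplification: rather than obtaining the node-level separation expression by intersecting pairwise-separating expressions as in the downward case, you build it directly as the label chain $\widehat{\lambda(a_0)}/\up/\cdots/\up/\widehat{\lambda(a_d)}$, which is exactly the exploitation of the single-parent structure that the paper alludes to when it calls the upward analysis ``much easier.''
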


The difference between the strictly downward (core) XPath algebra and
the strictly downward (core) positive XPath algebra becomes only
apparent in the BP-characterization: in
Theorem~\ref{theo-strictlyup-bp}, $R$ is a union of equivalence
classes under $\pairs{\sigequiv}{\upequiv}$, whereas in 
Theorem~\ref{theo-weakstrictlyup-bp}, $R$ is merely closed under
the relation $\pairs{\sigequiv}{\upgeq}$.
\subsection{Weakly upward languages}
\label{subsec-weaklyupward}

Weakly upward languages are closely related to weakly downward
languages, by the following result.

\begin{theorem}
\label{theo-weakupisweakdowninverse}

Let $E$ be a set of nonbasic operations not containing
downward navigation (``$\down$''),
and inverse (``$.^{-1}$''). Let $E'$ be the set
of nonbasic operations obtained from $E$ by replacing downward
navigation by upward navigation (``$\up$''), first projection
(``$\pi_1$'') by second projection (``$\pi_2$''), and second
projection by first projection. Then, for each expression $e$ in
$\BL(E)$ (respectively, $\CL(E)$), there is an expression $e'$ in
$\BL(E')$ (respectively, $\CL(E')$) such that $e^{-1}$ and $e'$ are
equivalent at the level of queries, and vice versa.

\end{theorem}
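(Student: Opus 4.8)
The plan is to make the stated duality explicit through a single semantics-preserving syntactic translation $e\mapsto e^{\flat}$ that interchanges the downward and upward navigation primitives and interchanges the two projections. Concretely I would set $\emptyset^{\flat}=\emptyset$, $\varepsilon^{\flat}=\varepsilon$, $\hat{\ell}^{\flat}=\hat{\ell}$, $\up^{\flat}=\down$, $(e_1/e_2)^{\flat}=e_2^{\flat}/e_1^{\flat}$, $(e_1\cup e_2)^{\flat}=e_1^{\flat}\cup e_2^{\flat}$, $(e_1\cap e_2)^{\flat}=e_1^{\flat}\cap e_2^{\flat}$, $(e_1-e_2)^{\flat}=e_1^{\flat}-e_2^{\flat}$, $\pi_1(f)^{\flat}=\pi_2(f^{\flat})$, and $\pi_2(f)^{\flat}=\pi_1(f^{\flat})$ (the clause for $\ch{k}$ is discussed below). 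Since $E$ contains neither $\down$ nor $.^{-1}$, the set $E'$ is exactly the dual of $E$ obtained by swapping the navigation axis and the projections, so each clause produces an operator available in $E'$; hence $e\in\BL(E)$ gives $e^{\flat}\in\BL(E')$. Because $\flat$ sends compositions to compositions and unions to unions and distributes over the Boolean combinations permitted inside a projection, it also carries $\CL(E)$ into $\CL(E')$. I would then take $e':=e^{\flat}$.

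The heart of the argument is the identity $e^{\flat}(D)=e(D)^{-1}$, valid for every document $D$, which I would prove by structural induction on $e$. The atoms are immediate: $\emptyset$, $\varepsilon$ and $\hat{\ell}$ denote symmetric relations, and $\up^{\flat}(D)=\down(D)=\textit{Ed}=(\textit{Ed}^{-1})^{-1}=\up(D)^{-1}$. For composition, union, intersection and difference the inductive step is obtained by combining the induction hypothesis with the inverse identities already listed in the proof of Proposition~\ref{prop-eliminate}, e.g.\ $(e_1/e_2)(D)^{-1}=e_2(D)^{-1}/e_1(D)^{-1}$. The first substantive case is the projection. As $\pi_1(f)(D)$ is contained in the diagonal it is symmetric, so it suffices to check $\pi_2(f^{\flat})(D)=\pi_1(f)(D)$; but $\pi_2(f^{\flat})(D)$ is the range of $f^{\flat}(D)$, which by the induction hypothesis equals the range of $f(D)^{-1}$, i.e.\ the domain of $f(D)$, i.e.\ $\pi_1(f)(D)$. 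The case of $\pi_2$ is symmetric.

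The step I expect to require the most care is the counting operator $\ch{k}$, which is inherently a \emph{downward} construct---it counts children of a node---yet must be transported into the dual language without the help of inversion, since $\ch{k}(e)(D)$ is again a symmetric diagonal relation. The key observation is that $\ch{k}(e)(D)$ depends on $e$ only through its domain $\pi_1(e)(D)$. I would therefore set $\ch{k}(e)^{\flat}:=\ch{k}(\pi_2(e^{\flat}))$. The projection computation of the previous paragraph gives $\pi_1(\pi_2(e^{\flat}))(D)=\pi_2(e^{\flat})(D)=\pi_1(e)(D)$, so the translated operator counts exactly the same children of each node; hence $\ch{k}(\pi_2(e^{\flat}))(D)=\ch{k}(e)(D)=\ch{k}(e)(D)^{-1}$, completing the induction. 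The projection $\pi_2$ needed here lies in $E'$ precisely because the (weakly upward) languages with counting to which the theorem is applied carry both projections, so that $\pi_1\in E$ and $\pi_2\in E'$.

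Finally, $\flat$ is an involution: interchanging the two navigation primitives twice, interchanging the two projections twice, and reversing each composition twice returns every expression to itself, and correspondingly $(E')'=E$. The forward direction now reads $e'(D)=e^{\flat}(D)=e(D)^{-1}=e^{-1}(D)$ for all $D$, so $e'$ and $e^{-1}$ are equivalent at the query level. For the converse, given any $e'\in\BL(E')$ (respectively $\CL(E')$) the same construction applied inside $E'$ yields $(e')^{\flat}\in\BL(E)$ (respectively $\CL(E)$) with $(e')^{\flat}(D)=e'(D)^{-1}=(e')^{-1}(D)$, which is the asserted ``vice versa.'' This proves the theorem.
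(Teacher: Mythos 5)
Your translation $e\mapsto e^{\flat}$ is, in substance, the paper's own proof. The paper obtains $e'$ from $e^{-1}$ by pushing the inverse inward, using the identities listed in the proof of Proposition~\ref{prop-eliminate} together with $\pi_1(e)^{-1}(D)=\pi_2(e^{-1})(D)$ and $\pi_2(e)^{-1}(D)=\pi_1(e^{-1})(D)$; your map $\flat$ is exactly that rewriting, packaged as a recursive translation and verified by structural induction, and your involution remark is the paper's implicit ``vice versa.'' On the counting-free fragments the two arguments coincide. Where you genuinely differ is the clause for $\ch{k}$, and there you are \emph{more} careful than the paper: the paper reuses the identity $\ch{k}(e)^{-1}(D)=\ch{k}(e)(D)$, which removes the inverse but leaves the argument untranslated, so the resulting expression still mentions $\up$, $\pi_1$, etc., and hence is \emph{not} in $\BL(E')$ --- precisely the defect that the paper's own footnote points out for the identity $\pi_1(e)^{-1}(D)=\pi_1(e)(D)$. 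Your clause $\ch{k}(e)^{\flat}\ass\ch{k}(\pi_2(e^{\flat}))$, justified by $\pi_1(\pi_2(e^{\flat}))(D)=\pi_1(e)(D)$, repairs this.

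The remaining issue is the proviso your repair needs: $\pi_2\in E'$ (and $\pi_2\in E$ for the converse direction). Your justification --- that the languages with counting to which the theorem is applied carry both projections --- is neither a hypothesis of the theorem nor true of the paper's own second application in Section~\ref{subsec-weaklyupward}, where the theorem is invoked between duals of strictly downward languages with counting, i.e.\ $E=\{\up,\pi_2,\ch{1}(.),\ldots,\ch{k}(.),-\}$ and $E'=\{\down,\pi_1,\ch{1}(.),\ldots,\ch{k}(.),-\}$, each of which carries a single projection. Moreover, this is not something any proof can engineer around, because in the stated generality the theorem is false. Take $E=\{\up,\ch{1}\}$, so $E'=\{\down,\ch{1}\}$, and let $e\ass\ch{1}(\up/\up/\hat{a})$. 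Since every child of $v$ has $v$'s parent as its grandparent, $e(D)$ is the diagonal over the non-leaf, non-root nodes whose parent is labeled $a$, and $e^{-1}=e$. But $E'$ contains none of $\up$, $\pi_2$, $.^{-1}$, so $\BL(E')$ is strictly downward: the value of any of its expressions at a node $v$ depends only on the subtree rooted at $v$, and therefore cannot depend on the label of $v$'s parent. Hence no $e'\in\BL(E')$ is query-equivalent to $e^{-1}$. The correct resolution is to add the hypothesis that $E$ contains both projections whenever it contains some $\ch{m}(.)$; under that hypothesis your proof is complete, whereas the paper's counting clause remains defective even then, since it never dualizes the argument of $\ch{k}$.
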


\begin{proof}
Starting from $e^{-1}$, we eliminate inverse (``$.^{-1}$'') using the
identities in the proof of Proposition~\ref{prop-eliminate}, and the
additional identities $\pi_1(e)^{-1}(D)=\pi_2(e^{-1})(D)$
and $\pi_2^{-1}(D)=\pi_1(e^{-1})(D)$, for $D$ an arbitrary
document.\footnote{Observe that $\pi_1(e)^{-1}(D)=\pi_1(e)(D)$ and
$\pi_2(e)^{-1}(D)=\pi_2(e)(D)$ are also valid identities if the sole
  purpose was to eliminate inverse; however, these identities will not
  lead to the desired result.} This elimination process yields the
desired expression $e'$.
\end{proof}

Together with the fact that, in a subsumption or congruence, the order
of the nodes in the pairs on the left- and right-hand sides may be swapped
simultaneously (Proposition~\ref{prop-subsumption},
(\ref{subsumption-2}) and (\ref{subsumption-4})),
Theorem~\ref{theo-weakupisweakdowninverse} has the following immediate
consequences:
\begin{enumerate}

\item Each characterization for a weakly downward language in
  Section~\ref{sec-weaklydownward}---which in each instance contains
  both projections---yields a characterization for the corresponding
  weakly upward language (i.e., obtained by substituting upward
  navigation for downward navigation) by replacing ``descendant'' by
  ``ancestor''; and

\item Each characterization for a strictly downward language in
  Section~\ref{sec-strictlydownward}---which in each instance contains
  the first projection---yields a characterization for the
  corresponding weakly upward language (i.e., obtained by
  substituting upward navigation for downward navigation and second
  for first projection) by replacing ``descendant'' by
  ``ancestor''.

\end{enumerate}
Moreover, Theorem~\ref{theo-weakupisweakdowninverse} gives us for free
characterizations for some additional weakly \emph{downward\/}
languages \emph{not\/} considered in Section~\ref{sec-weaklydownward}:
\begin{enumerate}

\item[~] Each characterization for a strictly upward language in
  Section~\ref{subsec-strictlyupward}---which in each instance contains
  the second projection---yields a characterization for the
  corresponding weakly downward language (i.e., obtained by
  substituting downward navigation for downward navigation and first
  for second projection) by replacing ``ancestor'' by
  ``descendant''.

\end{enumerate}
In view of space considerations, however, we refrain from explicitly writing down these new
characterization results.

\section{Languages for two-way navigation}
\label{sec-general}

We finally consider languages which are neither downward nor upward,
i.e., in which navigation in both directions (``$\down$'' and
``$\up$'') is possible. A notable difference in this case is that
standard languages no longer always coincide with their associated core
languages in expressive power. Below we distinguish languages with and
without difference. In the first case, we discuss the standard languages
and the core languages separately 
(Sections~\ref{subsec-generaldiffnocore} and~\ref{subsec-generaldiffcore}).
In the second case, there is no need for this distinction
(Section~\ref{subsec-generalnodiff}).
\subsection{Standard languages with difference for two-way navigation}
\label{subsec-generaldiffnocore}

First, we state analogues to Lemmas~\ref{lem-weaklydown1}
and~\ref{lem-weaklydown2} for pairs
of nodes that are \emph{not\/} necessarily ancestor-descendant pairs.

\begin{lemma}
\label{lem-updown1}

Let $D=(V,\textit{Ed},r,\lambda)$ be a document, let $v_1$, $w_1$,
$v_2$, and $w_2$ be nodes of $D$, and let
$k\geq 1$. Then, $(v_1,w_1)\pairs{\sigequiv}{\updownequiv{k}}
(v_2,w_2)$ if and only if $(v_1,w_1)\sigequiv(v_2,w_2)$, 
$v_1\updownequiv{k} v_2$, and $w_1\updownequiv{k} w_2$. 

\end{lemma}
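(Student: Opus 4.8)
The plan is to prove the two directions separately, with the ``only if'' being immediate from the definitions and the ``if'' carrying the real content. For the ``only if,'' suppose $(v_1,w_1)\pairs{\sigequiv}{\updownequiv{k}}(v_2,w_2)$. Condition~(1) of Definition~\ref{def-pairsofnodes} gives $(v_1,w_1)\sigequiv(v_2,w_2)$ directly. For the two endpoint equivalences I would apply condition~(2) to the extreme nodes of the path: $v_1$ lies on the path from $v_1$ to $w_1$ and its corresponding node is $v_2$ (since $(v_2,v_2)\in\sig(v_1,v_1)(D)=\varepsilon(D)$), whence $v_1\updownequiv{k} v_2$; likewise $w_1$ corresponds to $w_2$ (as $(v_2,w_2)\in\sig(v_1,w_1)(D)$ by $\sigequiv$), whence $w_1\updownequiv{k} w_2$.

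The substance is the ``if'' direction, which is a natural generalization of the proof of Lemma~\ref{lem-weaklydown1} from ancestor--descendant pairs to arbitrary pairs. Condition~(1) of Definition~\ref{def-pairsofnodes} is precisely the hypothesis $(v_1,w_1)\sigequiv(v_2,w_2)$, so it remains only to verify condition~(2): every node $y_1$ on the path from $v_1$ to $w_1$ satisfies $y_1\updownequiv{k} y_2$, where $y_2$ is its corresponding node. I would split the path at $t_1\ass\top(v_1,w_1)$, so that every node on it is either an ancestor of $v_1$ (the segment running from $v_1$ up to $t_1$) or an ancestor of $w_1$ (the segment running from $t_1$ down to $w_1$), with $t_1$ belonging to both segments.

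For a node $y_1$ on the upward segment (an ancestor of $v_1$), I would invoke the hypothesis $v_1\updownequiv{k} v_2$. By Proposition~\ref{prop-kequivalent} this is equivalent to $(r,v_1)\pairs{\sigequiv}{\downequiv{k}}(r,v_2)$; since $y_1$ lies on the path from $r$ to $v_1$, Proposition~\ref{prop-subpairs} yields $(r,y_1)\pairs{\sigequiv}{\downequiv{k}}(r,y_2)$, and a second application of Proposition~\ref{prop-kequivalent} gives $y_1\updownequiv{k} y_2$. For a node $y_1$ on the downward segment (an ancestor of $w_1$), the identical argument run with $w_1\updownequiv{k} w_2$ in place of $v_1\updownequiv{k} v_2$ delivers $y_1\updownequiv{k} y_2$. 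As these two segments exhaust the path, condition~(2) holds throughout, completing the ``if.''

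The one point requiring care, and the main (minor) obstacle, is checking that the node $y_2$ produced by the root-path correspondence coincides with the corresponding node prescribed in Definition~\ref{def-pairsofnodes} for the pair $(v_1,w_1)$-$(v_2,w_2)$. This reduces to bookkeeping with depths: $v_1\updownequiv{k} v_2$ and $w_1\updownequiv{k} w_2$ entail upward-equivalence of the endpoints, so that $v_1,v_2$ have equal depth and likewise $w_1,w_2$; hence an ancestor of $v_1$ (resp.\ $w_1$) at a given depth corresponds, under either reading, to the ancestor of $v_2$ (resp.\ $w_2$) at that same depth. Exactly this consistency underlies the proof of Lemma~\ref{lem-weaklydown1}, so no genuinely new difficulty arises here.
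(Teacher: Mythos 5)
Your proof is correct and takes essentially the same route as the paper's: both handle the ``only if'' as immediate, and for the ``if'' both split the path at $\top(v_1,w_1)$ and reduce to the ancestor--descendant case, where the depth-equality given by $k$-equivalence makes the two notions of corresponding node agree. The only difference is organizational: the paper applies Lemma~\ref{lem-weaklydown1} as a black box to the two halves and glues them with Proposition~\ref{prop-subsumption}, whereas you inline that lemma's argument (Propositions~\ref{prop-kequivalent} and~\ref{prop-subpairs}) node by node---the same ingredients either way.
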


\begin{proof}
As the ``only if'' is obvious, we focus on the ``if.'' Obviously, 
$(v_1,w_1)\sigequiv(v_2,w_2)$ implies that
$(\top(v_1,w_1),v_1)\sigequiv (\top(v_2,w_2),w_2)$. By
Lemma~\ref{lem-weaklydown1}, \\
$(\top(v_1,w_1),v_1)\pairs{\sigequiv}{\updownequiv{k}}
(\top(v_2,w_2),v_2)$. In the same way, we derive \\
$(\top(v_1,w_1),w_1)\pairs{\sigequiv}{\updownequiv{k}}
(\top(v_2,w_2),w_2)$. Applying Proposition~\ref{prop-subsumption},
(\ref{subsumption-2}), (\ref{subsumption-3}) and
(\ref{subsumption-4}), yields the desired result.
\end{proof}

\begin{lemma}
\label{lem-updown2}

Let $D=(V,\textit{Ed},r,\lambda)$ be a document, let $v_1$ and $w_1$
be nodes of $D$, and let $k\geq 2$. Then,
\begin{enumerate}

\item\label{updown1} for each node $v_2$ of $D$ for which
  $v_1\updownequiv{k} v_2$ there is a node
$w_2$ in $D$ such that $(v_1,w_1)\pairs{\sigequiv}{\updownequiv{k}}
  (v_2,w_2)$; and

\item\label{updown2} for each node $w_2$ of $D$ for which
  $w_1\updownequiv{k} w_2$ there is a node
$v_2$ in $D$ such that $(v_1,w_1)\pairs{\sigequiv}{\updownequiv{k}}
  (v_2,w_2)$.

\end{enumerate}
\end{lemma}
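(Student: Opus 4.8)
The plan is to reduce the statement to finding a single suitable child of the common ancestor, and to derive everything else from the machinery already developed for the (weakly) downward case. First I would dispose of part~(\ref{updown2}) by symmetry: by Proposition~\ref{prop-subsumption}, (\ref{subsumption-2}) and (\ref{subsumption-4}), we have $(v_1,w_1)\pairs{\sigequiv}{\updownequiv{k}}(v_2,w_2)$ if and only if $(w_1,v_1)\pairs{\sigequiv}{\updownequiv{k}}(w_2,v_2)$, so claim~(\ref{updown2}) for $(v_1,w_1)$ is exactly claim~(\ref{updown1}) applied to $(w_1,v_1)$. For part~(\ref{updown1}), by Lemma~\ref{lem-updown1} it suffices, given $v_2$ with $v_1\updownequiv{k} v_2$, to exhibit $w_2$ with $(v_1,w_1)\sigequiv(v_2,w_2)$ and $w_1\updownequiv{k} w_2$. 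If $v_1$ and $w_1$ stand in an ancestor--descendant relationship, this is already Lemma~\ref{lem-weaklydown2} (reading $\top(v_1,w_1)$ as $v_1$ or as $w_1$), so I assume $v_1$ and $w_1$ genuinely diverge, i.e.\ $\sig(v_1,w_1)=\up^m/\down^n$ with $m,n\geq 1$.

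Set $t_1\ass\top(v_1,w_1)$. Since $v_1\updownequiv{k} v_2$, $v_2$ has an ancestor $t_2$ at distance $m$; invoking Proposition~\ref{prop-kequivalent} to write $v_1\updownequiv{k}v_2$ as $(r,v_1)\pairs{\sigequiv}{\downequiv{k}}(r,v_2)$, then Proposition~\ref{prop-subpairs} on the corresponding nodes $t_1,t_2$, and Proposition~\ref{prop-kequivalent} again, I get $t_1\updownequiv{k} t_2$, hence in particular $t_1\downequiv{k} t_2$. Let $a_1,b_1$ be the children of $t_1$ toward $v_1$ and $w_1$ respectively (distinct, as $t_1=\top(v_1,w_1)$), and let $a_2$ be the child of $t_2$ toward $v_2$; the same correspondence argument gives $a_1\downequiv{k} a_2$. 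The crux is to produce a child $b_2$ of $t_2$ with $b_2\neq a_2$ and $b_1\downequiv{k} b_2$. Once this is done, Lemma~\ref{lem-strictlydown1} applied to the descendant pair $(b_1,w_1)$ together with $b_1\downequiv{k} b_2$ yields a descendant $w_2$ of $b_2$ with $(b_1,w_1)\pairs{\sigequiv}{\downequiv{k}}(b_2,w_2)$; since $b_2\neq a_2$ we get $t_2=\top(v_2,w_2)$ and $\sig(v_2,w_2)=\up^m/\down^n=\sig(v_1,w_1)$, so $(v_1,w_1)\sigequiv(v_2,w_2)$, and the argument of Lemma~\ref{lem-weaklydown2}, (\ref{weakdown1}) (from $t_1\updownequiv{k}t_2$ and the downward congruence just obtained) gives $w_1\updownequiv{k} w_2$, as required.

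The existence of $b_2$ splits into two cases, and this is the step I expect to be the main obstacle, since it is exactly where the hypothesis $k\geq 2$ is consumed. If $a_1\not\downequiv{k} b_1$, then by transitivity of $\downequiv{k}$ and $a_1\downequiv{k} a_2$ we have $a_2\not\downequiv{k} b_1$; condition~2 of Definition~\ref{def-kdownwardequivalent} applied to $t_1\downequiv{k} t_2$ supplies a child $b_2$ of $t_2$ with $b_1\downequiv{k} b_2$, and $b_2\neq a_2$ follows since $a_2\not\downequiv{k} b_1\downequiv{k} b_2$. If instead $a_1\downequiv{k} b_1$, then the $\downequiv{k}$-class of $a_1$ among the children of $t_1$ contains the two distinct nodes $a_1$ and $b_1$, so $|\bar a_1|\geq 2$; condition~3 of Definition~\ref{def-kdownwardequivalent} gives $\min(|\bar a_2|,k)=\min(|\bar a_1|,k)\geq\min(2,k)=2$ because $k\geq 2$, whence $|\bar a_2|\geq 2$ and $t_2$ has a child $b_2\neq a_2$ in that same class, so $b_2\downequiv{k} a_2\downequiv{k} b_1$. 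In both cases the desired $b_2$ exists, completing part~(\ref{updown1}).

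I would remark that $k\geq 2$ is genuinely needed only in the second case: when $v_1$ and $w_1$ diverge and the two divergent children are downward-$k$-equivalent, only the counting condition guarantees a \emph{second} matching child of $t_2$, so that $v_2$ and $w_2$ can still be made to diverge at $t_2$. This is precisely the feature absent from the purely ancestor--descendant Lemma~\ref{lem-weaklydown2}, which is why that lemma holds for all $k\geq 1$ while the present one requires $k\geq 2$.
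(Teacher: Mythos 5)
Your proof is correct and takes essentially the same route as the paper's: both arguments come down to producing a child of $t_2$ (the counterpart of $\top(v_1,w_1)$) that is downward-$k$-equivalent to the child of $\top(v_1,w_1)$ toward $w_1$ but avoids the path to $v_2$ — exactly where $k\geq 2$ is consumed via the counting condition — and then extend downward from that child and conclude with Lemma~\ref{lem-updown1}. Your explicit two-case analysis spells out what the paper compresses into a footnote, and your symmetry reduction of part~(2) via Proposition~\ref{prop-subsumption}, (\ref{subsumption-2}) and (\ref{subsumption-4}), cleanly replaces the paper's ``completely analogous'' remark; both refinements are sound.
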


\begin{proof}
We only prove (\ref{updown1}); the proof of (\ref{updown2}) is
completely analogous. By Lemma \ref{lem-weaklydown2},
(\ref{weakdown2}), there exists a node $t_2$ in $D$ such that
$(\top(v_1,w_1),v_1) \pairs{\sigequiv}{\updownequiv{k}} (t_2,v_2)$,
and, hence, also that
$(v_1,\top(v_1,w_1))\pairs{\sigequiv}{\updownequiv{k}} (v_2,t_2)$, by
Proposition~\ref{prop-subsumption}, (\ref{subsumption-2}) and
(\ref{subsumption-4}). Let $y_1$ be the child of $\top(v_1,w_1)$ on
the path to $w_1$. Since $k\geq 2$, there is a child $y_2$ of $t_2$
such that (1) $y_1\updownequiv{k} y_2$ and (2) $y_2$ is not on the
path from $t_2$ to $v_2$.\footnote{To see the latter claim, observe
  that $t_2$ must have two different $k$-equivalent children when
  $\top(v_1,w_1)$ has.} By Lemma~\ref{lem-weaklydown2},
(\ref{weakdown2}), there exists a node $w_2$ in $D$ such that
$(y_1,w_1)\pairs{\sigequiv}{\updownequiv{k}} (y_2,w_2)$. In
particular, $w_1\updownequiv{k} w_2$. By construction,
  $t_2=\top(v_2,w_2)$, and, hence, $(v_1,w_1)\sigequiv (v_2,w_2)$. The
  result now follows from Lemma~\ref{lem-updown1}.
\end{proof}

The mutual position of the nodes in the statement and the proof of
Lemma~\ref{lem-updown2}, (\ref{updown1}),
 is illustrated in Figure~\ref{fig-updown2}.

\begin{figure}[!htb]
\begin{center}
\resizebox{0.9\textwidth}{!}{\input{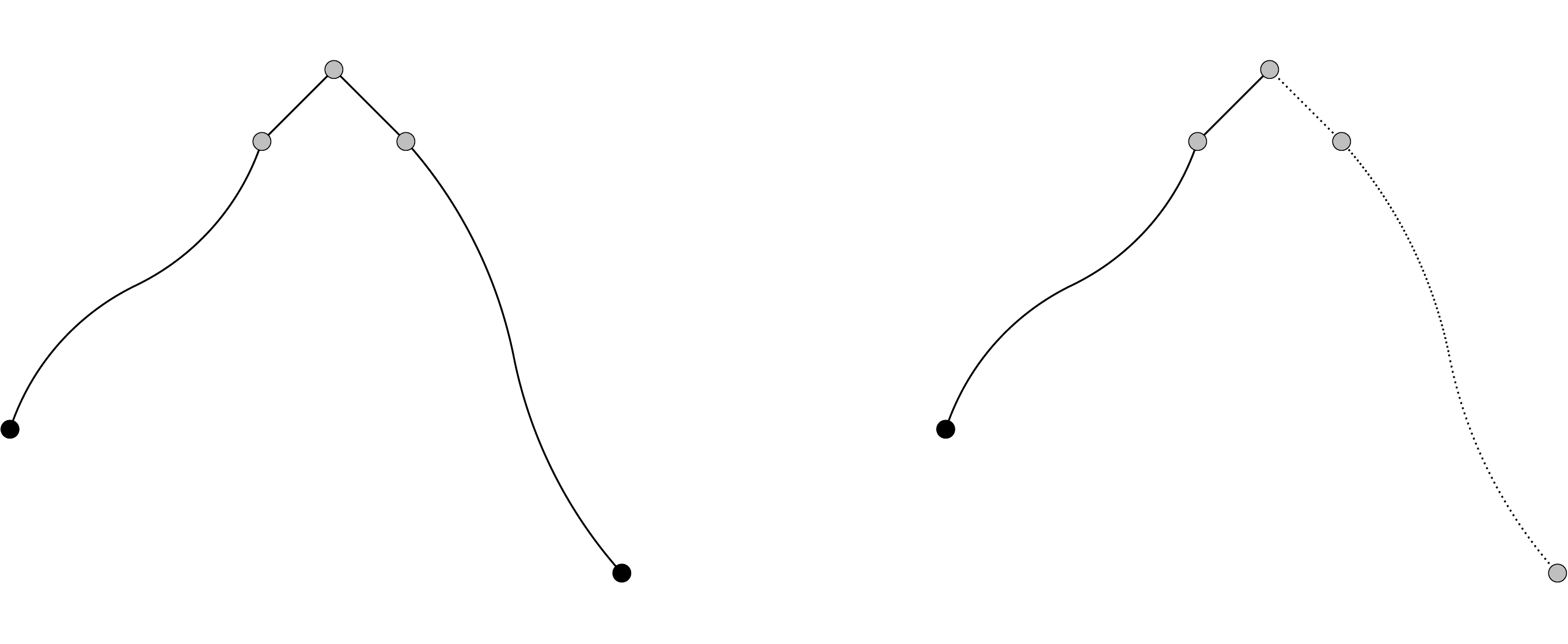_t}}
\end{center}
\caption{Mutual position of the nodes in the statement and the proof of
Lemma~\ref{lem-updown2}, (\ref{updown1}).}%
\label{fig-updown2}
\end{figure}

Before we can start with establishing the relationship between
$\updownequiv{k}$-congruence and expression equivalence under
languages allowing two-way navigation, we need one more lemma to be
able to deal with the composition operator.

\begin{lemma}
\label{lem-updown3}

Let $D=(V,\textit{Ed},r,\lambda)$ be a document, let $v_1$, $w_1$,
$v_2$, and $w_2$ be nodes of $D$ such that
$(v_1,w_1)\pairs{\sigequiv}{\updownequiv{k}}(v_2,w_2)$, and let $k\geq
3$. Then, for every node $y_1$ of $D$, there exists a node $y_2$ of
$D$ such that $(v_1,y_1)\pairs{\sigequiv}{\updownequiv{k}}(v_2,y_2)$, and
$(y_1,w_1)\pairs{\sigequiv}{\updownequiv{k}}(y_2,w_2)$.

\end{lemma}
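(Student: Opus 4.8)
Let me think about what Lemma~\ref{lem-updown3} is asking. We have two pairs of nodes $(v_1,w_1)$ and $(v_2,w_2)$ that are $\updownequiv{k}$-congruent (for $k \geq 3$), meaning: they have the same signature, and corresponding nodes along the undirected paths are $k$-equivalent. Given an arbitrary third node $y_1$, we want to find a node $y_2$ such that both $(v_1,y_1) \pairs{\sigequiv}{\updownequiv{k}} (v_2,y_2)$ and $(y_1,w_1) \pairs{\sigequiv}{\updownequiv{k}} (y_2,w_2)$.

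This is the lemma needed to handle composition: if $e = e_1/e_2$ and $(v_1,w_1) \in e(D)$ via intermediate node $y_1$ (so $(v_1,y_1) \in e_1(D)$ and $(y_1,w_1) \in e_2(D)$), we need a matching intermediate node $y_2$ for $(v_2,w_2)$.

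**Why this is harder than the downward case.** In the strictly/weakly downward case (Proposition~\ref{prop-strictlydown}, Proposition~\ref{prop-weaklydown}), the intermediate node $y_1$ necessarily lies ON the path from $v_1$ to $w_1$ (because the language is downward). So $y_2$ is just the corresponding node on the path from $v_2$ to $w_2$, and Proposition~\ref{prop-subpairs} immediately gives what we need.

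But here we allow two-way navigation. The node $y_1$ can be ANYWHERE in the tree — it need not lie on the path from $v_1$ to $w_1$. So I cannot just read off $y_2$ from a correspondence. This is the key obstacle: I have to construct $y_2$ from scratch in a way that simultaneously matches both the $(v_1,y_1)$ relationship and the $(y_1,w_1)$ relationship.

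**My proof strategy.**

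The idea is to reduce the problem to applications of Lemma~\ref{lem-updown2}, which produces a matching endpoint given $k$-equivalence of one endpoint.

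Let me set up the geometry. Let $t = \top(v_1,w_1)$ and let $s = \top(v_1,w_1,y_1)$ be the least common ancestor of all three nodes (or more carefully, I need to understand how $y_1$ sits relative to the $v_1$-$w_1$ path). The key structural fact is: there is a unique node $p_1$ on the path from $v_1$ to $w_1$ that is "closest" to $y_1$ — i.e., the path from $y_1$ to the $v_1$-$w_1$ path first meets it at $p_1$. Equivalently, $p_1 = \top(v_1, y_1)$ or $\top(w_1, y_1)$ lands on the path, and the branch to $y_1$ departs at $p_1$.

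Here is the plan:

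\begin{enumerate}
\item \emph{Locate $y_1$ relative to the $v_1$–$w_1$ path.} Let $p_1$ be the node on the path from $v_1$ to $w_1$ where the branch toward $y_1$ departs (formally, the node on that path that is an ancestor-or-equal of $y_1$ and is closest to $y_1$; this is well-defined by tree structure). Let $p_2$ be the corresponding node on the path from $v_2$ to $w_2$. By Proposition~\ref{prop-subpairs}, $p_1 \updownequiv{k} p_2$ (since $p_1$ lies on the $v_1$–$w_1$ path and corresponds to $p_2$).

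\item \emph{Transport the branch from $p_1$ to $y_1$ over to $p_2$.} Now $y_1$ is related to $p_1$, and $p_1 \updownequiv{k} p_2$. I want a node $y_2$ hanging off $p_2$ that matches. This is exactly the kind of thing Lemma~\ref{lem-updown2}, (\ref{updown1}) provides: since $p_1 \updownequiv{k} p_2$, there is a node $y_2$ with $(p_1, y_1) \pairs{\sigequiv}{\updownequiv{k}} (p_2, y_2)$. This uses $k \geq 2$.

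\item \emph{Check the two required congruences.} Now I need $(v_1,y_1) \pairs{\sigequiv}{\updownequiv{k}} (v_2,y_2)$ and $(y_1,w_1) \pairs{\sigequiv}{\updownequiv{k}} (y_2,w_2)$. These should follow from: (a) $(v_1,p_1) \pairs{\sigequiv}{\updownequiv{k}} (v_2,p_2)$ (from Proposition~\ref{prop-subpairs}, as $p_1, p_2$ are corresponding nodes), (b) $(p_1, y_1) \pairs{\sigequiv}{\updownequiv{k}} (p_2, y_2)$ (from step 2), and then concatenating via Proposition~\ref{prop-subsumption}, (\ref{subsumption-3}) and (\ref{subsumption-4}). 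The transitivity-style gluing in Proposition~\ref{prop-subsumption} requires the condition about $\top$ being an ancestor of the middle node — and this is precisely where the $k \geq 3$ hypothesis must do its work.
\end{enumerate}

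**Where $k \geq 3$ is the main obstacle.** The degree bound $k \geq 3$ (strictly stronger than the $k \geq 2$ needed in Lemma~\ref{lem-updown2}) is the crux. The reason is that when I glue $(v_1, p_1) \to (p_1, y_1)$ to get $(v_1, y_1)$, and separately glue toward $w_1$, the node $y_2$ must simultaneously satisfy TWO congruence requirements whose paths share the segment from $p_2$ out to $y_2$. To invoke Proposition~\ref{prop-subsumption}, (\ref{subsumption-3})+(\ref{subsumption-4}), I need the branching structure at $p_2$ to support a child toward $y_2$ that is distinct from the children toward $v_2$ AND toward $w_2$. Since $p_1$ may already have two distinct $k$-equivalent children used by the $v_1$-side and the $w_1$-side of the path (when $p_1$ is the top $\top(v_1,w_1)$ itself), finding a \emph{third} appropriately-$k$-equivalent child branch at $p_2$ for $y_2$ requires $p_2$ to have enough $k$-equivalent children — and the counting condition in downward-$k$-equivalence only guarantees matching multiplicities up to $\min(\cdot, k)$. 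To guarantee a third distinct branch survives the transport, I expect to need $\min(\cdot, k) \geq 3$, i.e. $k \geq 3$. The careful bookkeeping here — ensuring the chosen child toward $y_2$ avoids both the $v_2$-branch and the $w_2$-branch at their common ancestor — is the delicate part, and it is entirely analogous to the footnote argument in the proof of Lemma~\ref{lem-updown2} where $k \geq 2$ was needed to find \emph{one} distinct branch; here we need one more degree of freedom.

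**Summary of the plan.** I would (i) identify the departure node $p_1$ of $y_1$ from the $v_1$-$w_1$ path and its correspondent $p_2$, getting $p_1 \updownequiv{k} p_2$ from Proposition~\ref{prop-subpairs}; (ii) apply Lemma~\ref{lem-updown2} to transport the $p_1 \to y_1$ branch to a $p_2 \to y_2$ branch, taking care (using $k \geq 3$) that the new branch is distinct from both the $v_2$- and $w_2$-branches so that the signatures $\sig(v_1,y_1)$ and $\sig(y_1,w_1)$ are faithfully reproduced; and (iii) assemble the two desired congruences by the gluing properties of Proposition~\ref{prop-subsumption}. The main obstacle is step (ii)'s branch-selection under the counting constraint, which is exactly what forces $k \geq 3$ rather than $k \geq 2$.
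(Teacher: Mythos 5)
Your overall strategy coincides with the paper's: reduce the problem to Lemma~\ref{lem-updown2}, glue the resulting pairs back together with Proposition~\ref{prop-subsumption} (items (\ref{subsumption-2})--(\ref{subsumption-4})) and Lemma~\ref{lem-updown1}, and spend the hypothesis $k\geq 3$ exactly where you spend it, namely when the branch towards $y_1$ leaves the $v_1$--$w_1$ path at $\top(v_1,w_1)$ itself, so that in the worst case a \emph{third} pairwise $k$-equivalent child is needed to dodge both the $v_2$-branch and the $w_2$-branch. That is precisely Case~\ref{updown-case8} of the paper's proof, including the counting argument you sketch; the paper merely organizes the whole argument as an explicit eight-case analysis on the position of $y_1$ rather than your uniform ``departure node'' formulation.

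There are, however, two gaps. First, your formal definition of $p_1$ --- the node on the path that is an ancestor-or-equal of $y_1$ and closest to $y_1$ --- does not exist when $y_1$ is not a descendant of \emph{any} node on the $v_1$--$w_1$ path, i.e., when $y_1$ is a strict ancestor of $\top(v_1,w_1)$ or hangs off such a strict ancestor (the paper's Cases~4 and~\ref{updown-case7}). In those configurations the connection to the path leaves $\top(v_1,w_1)$ \emph{upward} through its unique parent, so your child-selection and avoidance analysis does not apply; these cases need their own (easy) treatment: transport the pair $(\top(v_1,w_1),y_1)$ directly by Lemma~\ref{lem-updown2}, (\ref{updown1}), where congruence --- which for non-ancestor-descendant pairs pins down the branch point by Proposition~\ref{prop-congruence}, (\ref{congruence-2}) --- does the avoidance for free. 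Second, Step~2 as literally written (apply Lemma~\ref{lem-updown2}, (\ref{updown1}), to the pair $(p_1,y_1)$) would fail: when $y_1$ is a strict descendant of $p_1$, the pair $(p_1,y_1)$ is an ancestor-descendant pair, for which congruence coincides with subsumption (Proposition~\ref{prop-congissub}), so the produced $y_2$ may hang off $p_2$ through the very child lying on the path toward $v_2$ (or $w_2$), destroying $\sig(v_2,y_2)=\sig(v_1,y_1)$ and hence the required congruence. You do identify and repair exactly this defect in your discussion of $k\geq 3$ --- first choose a $k$-equivalent sibling $z_2$ of the blocked children, then transport $(z_1,y_1)$ to $(z_2,y_2)$ --- but that repair must \emph{replace} Step~2, not merely supplement it, and the counting that justifies the choice of $z_2$ (two blocked children at the top, one elsewhere, none when the departure is upward) is the substance of the proof rather than an afterthought. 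With these two repairs your plan fleshes out into a complete proof equivalent to the paper's.
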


\begin{proof}
The proof is essentially a case analysis. In each case description, we
assume implicitly that the cases that were already dealt with before
are excluded.
\begin{enumerate}

\item \emph{$y_1$ is on the path from $v_1$ to $w_1$\/}. In that case,
  let $y_2$ be the node corresponding to $y_1$ on the path from $v_2$
  to $w_2$. The result now follows immediately.

\item \emph{$y_1$ is a strict descendant of $v_1$\/}. By
  Lemma~\ref{lem-weaklydown2}, (\ref{weakdown1}), there is a
  (strict) descendant $y_2$ of $v_2$ such that
  $(v_1,y_1)\pairs{\sigequiv}{\updownequiv{k}}(v_2,y_2)$. The result now
  follows immediately.

\item \emph{$y_1$ is a strict descendant of $w_1$\/}. Analogous to the
  previous case.

\item \emph{$y_1$ is a strict ancestor of $\top(v_1,w_1)$\/}. By
  Lemma~\ref{lem-weaklydown2}, (\ref{weakdown2}), there is a
  (strict) ancestor $y_2$ of $\top(v_2,w_2)$ such that
  $(\top(v_1,w_1),y_1)\pairs{\sigequiv}{\updownequiv{k}}(\top(v_2,w_2),y_2)$. The
  result now follows immediately.

\item\label{updown-case5} \emph{$\top(v_1,y_1)$ is an internal node on
  the path from $v_1$ to $\top(v_1,w_1)$\/}. By
  Lemma~\ref{lem-updown2}, (\ref{updown1}), there exists a node $y_2$
  in $D$ such that
  $(v_1,y_1)\pairs{\sigequiv}{\updownequiv{k}}(v_2,y_2)$. Since, in
  this case, $\top(y_1,w_1)=\top(v_1,w_1)$, and, therefore, an
  ancestor of $v_1$, we may apply Proposition~\ref{prop-subsumption},
  (\ref{subsumption-2})--(\ref{subsumption-4}), to obtain that
  $(y_1,w_1)\sigequiv (y_2,w_2)$. Since, moreover, $y_1\updownequiv{k}
  y_2$ and $w_1\updownequiv{k} w_2$, the desired result now follows from
  Lemma~\ref{lem-updown1}.

Figure~\ref{fig-updown3-1} illustrates this case and the constructions
therein.

\begin{figure}[!htb]
\begin{center}
\resizebox{0.85\textwidth}{!}{\input{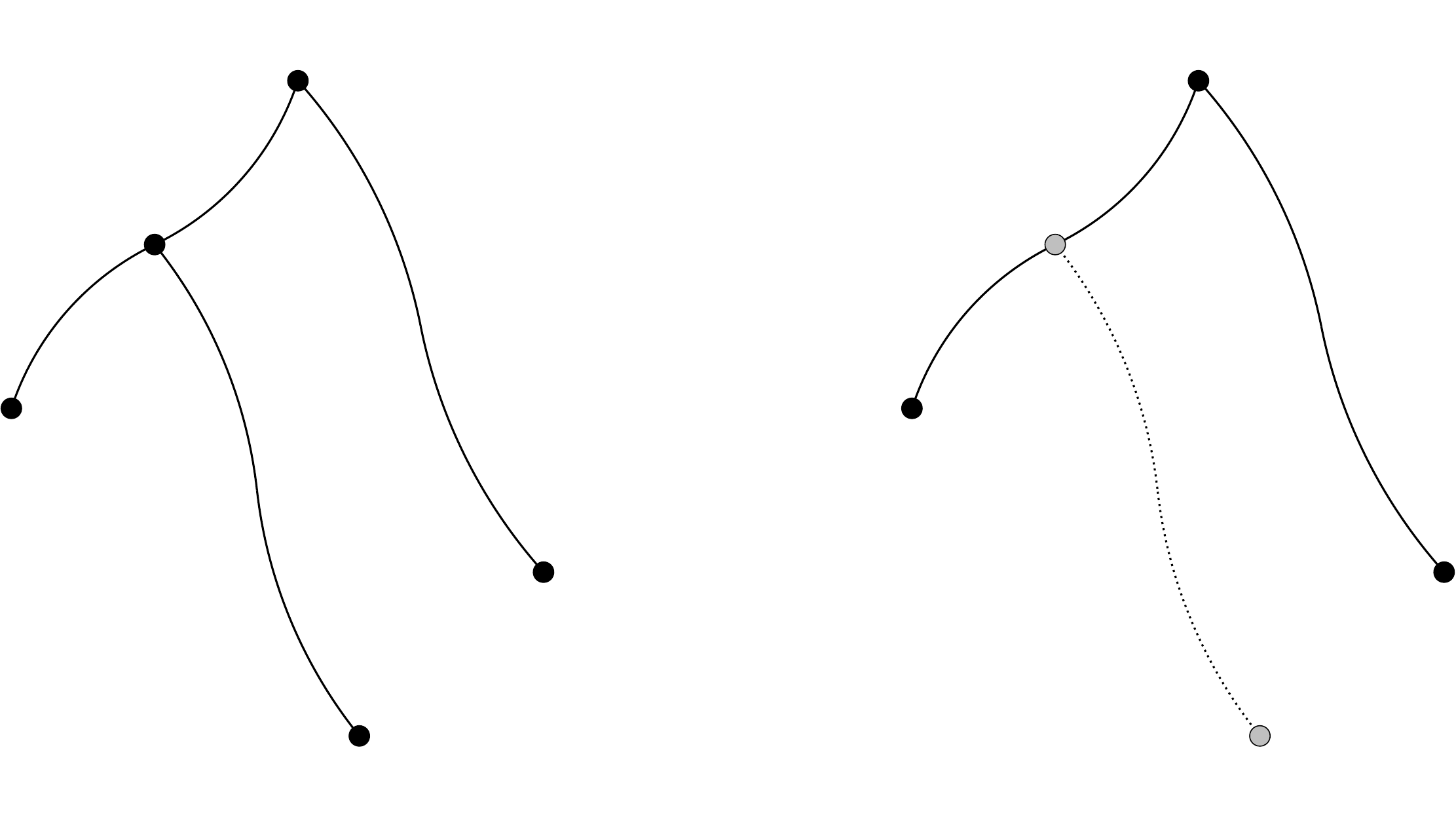_t}}
\end{center}
\caption{Mutual position of the nodes in Case~\ref{updown-case5} of
  the proof of Lemma~\ref{lem-updown3}.}%
\label{fig-updown3-1}
\end{figure}

\item\label{updown-case6} \emph{$\top(y_1,w_1)$ is an internal node on
  the path from $\top(v_1,w_1)$ to $w_1$\/}. Analogous to the
  previous case.

\item\label{updown-case7} \emph{$\top(v_1,y_1)=\top(y_1,w_1)$ is a strict
ancestor of $\top(v_1,w_1)$\/}.  By
  Lemma~\ref{lem-updown2}, (\ref{updown1}), there exists a node $y_2$
  in $D$ such that
  $(v_1,y_1)\pairs{\sigequiv}{\updownequiv{k}}(v_2,y_2)$. Since, in
  this case, $\top(y_1,w_1)$ is a (strict) ancestor of
  $\top(v_1,w_1)$, and, therefore, an ancestor of $v_1$, we may apply
  Proposition~\ref{prop-subsumption},
  (\ref{subsumption-2})--(\ref{subsumption-4}), to obtain that
  $(y_1,w_1)\sigequiv (y_2,w_2)$. Since, moreover, $y_1\updownequiv{k} 
  y_2$ and $w_1\updownequiv{k} w_2$, the desired result now follows from
  Lemma~\ref{lem-updown1}.

Figure~\ref{fig-updown3-2} illustrates this case and the constructions
therein.

\begin{figure}[!htb]
\begin{center}
\resizebox{0.85\textwidth}{!}{\input{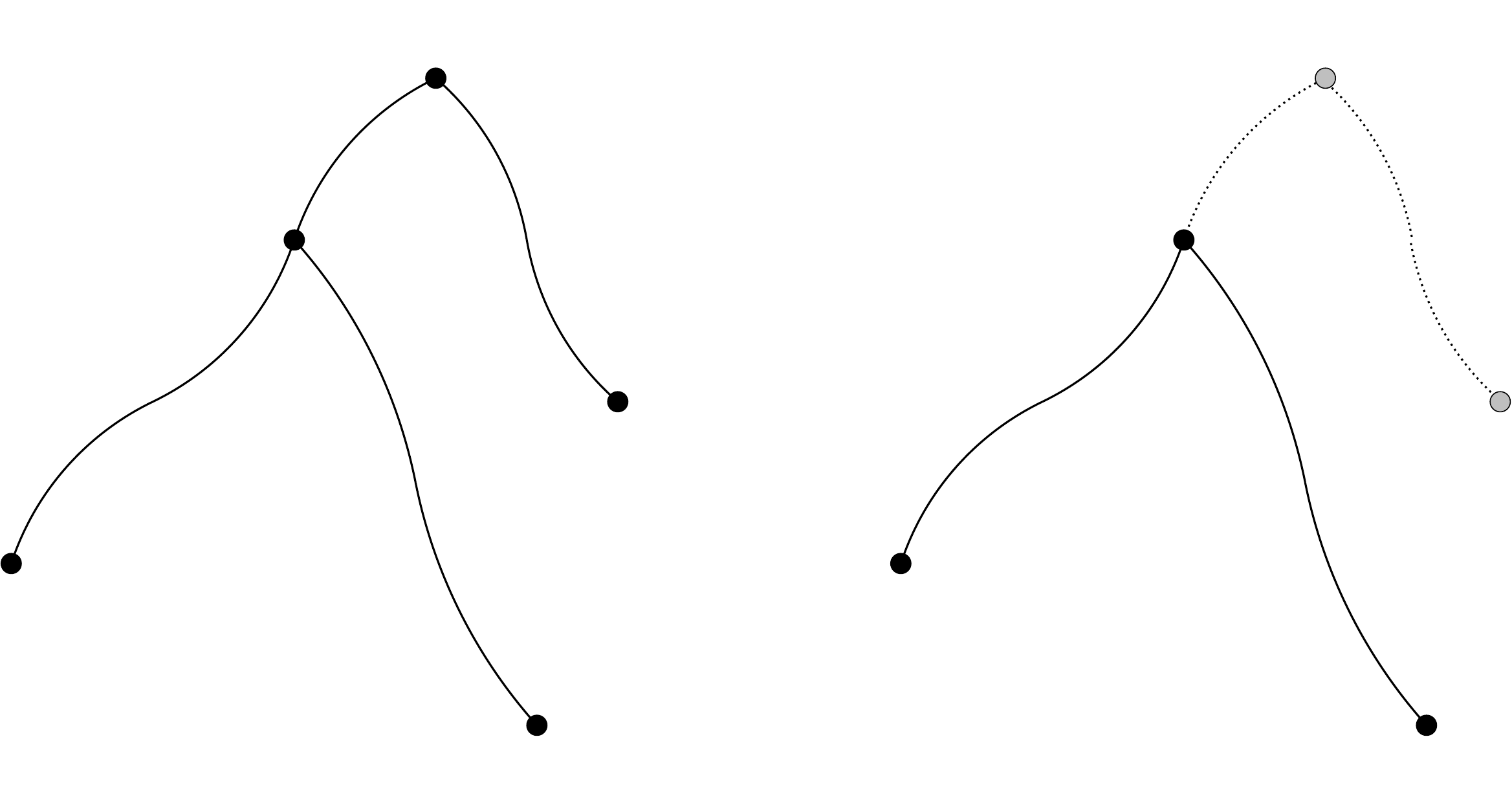_t}}
\end{center}
\caption{Mutual position of the nodes in Case~\ref{updown-case7} of
  the proof of Lemma~\ref{lem-updown3}.}%
\label{fig-updown3-2}
\end{figure}

\item\label{updown-case8}
  \emph{$\top(v_1,y_1)=\top(y_1,w_1)=\top(v_1,w_1)$\/}. Let $z_1$ be
  the child of the top node on the path to $y_1$. By assumption, $z_1$
  is not on the path from $v_1$ to $w_1$. Since $k\geq 3$, there is a
  node $z_2$ in $D$ not on the path from $v_2$ to $w_2$ such that
  $z_1\updownequiv{k} z_2$. (For example, in the subcase where the
  children of $\top(v_1,w_1)$ on the paths to $v_1$, $w_1$, and $y_1$,
  the last of which is $z_1$, are all three $k$-equivalent, we know
  that $\top(v_2,w_2)$ must also have at least three children that are
  $k$-equivalent to $z_1$. Hence, at least one of these is not on the
  path from $v_1$ to $w_1$.) By   Lemma~\ref{lem-updown2},
  (\ref{updown1}), there exists a node $y_2$ in $D$ such that
  $(z_1,y_1)\pairs{\sigequiv}{\updownequiv{k}}(z_2,y_2)$. The result
  now follows readily.

Figure~\ref{fig-updown3-3} illustrates this case and the constructions
therein.

\begin{figure}[!htb]
\begin{center}
\resizebox{0.85\textwidth}{!}{\input{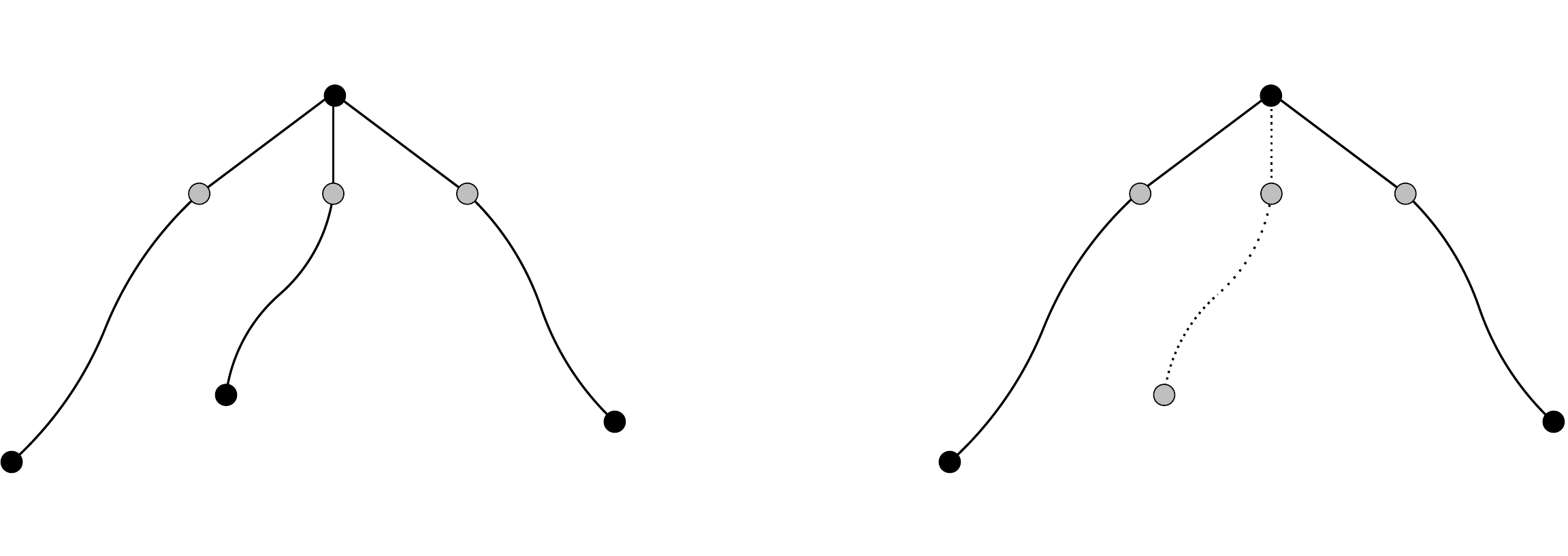_t}}
\end{center}
\caption{Mutual position of the nodes in Case~\ref{updown-case8} of
  the proof of Lemma~\ref{lem-updown3}.}%
\label{fig-updown3-3}
\end{figure}

\end{enumerate}
\end{proof}

We are now ready to state the analogue of
Proposition~\ref{prop-weaklydown} for languages with two-way navigation.

\begin{proposition}
\label{prop-updown}

Let $k\geq 3$, and let $E$ be the set of all nonbasic operations in
Table~\ref{tab-binops}, except for selection on at least $m$ children 
satisfying some condition (``$\ch{m}(.)$'') for $m>k$. Let $e$ be an
expression in $\BL(E)$. 
Let $D=(V,\textit{Ed},r,\lambda)$ be a
document, and let $v_1$, $w_1$, $v_2$, 
and $w_2$ be nodes of $D$.  Assume furthermore that
$(v_1,w_1)\pairs{\sigequiv}{\updownequiv{k}} (v_2,w_2)$. Then,
$(v_1,w_1)\in e(D)$ if and only if $(v_2,w_2)\in e(D)$.

\end{proposition}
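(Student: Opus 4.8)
The plan is to mimic the structural inductions of Propositions~\ref{prop-strictlydown} and~\ref{prop-weaklydown}, but now with pairs $(v_1,w_1)$ and $(v_2,w_2)$ that need not be ancestor--descendant pairs, so that the intermediate node arising in a composition may lie anywhere in the tree. Since $\pairs{\sigequiv}{\updownequiv{k}}$ is symmetric, it suffices to prove that $(v_1,w_1)\in e(D)$ implies $(v_2,w_2)\in e(D)$; the converse then follows by swapping the roles of the two pairs. I would proceed by structural induction on $e$. For the base cases, the relations $\emptyset$, $\varepsilon$, $\down$, and $\up$ are handled purely by the signature part of the hypothesis: if $(v_1,w_1)$ lies in one of these, then $\sig(v_1,w_1)$ is $\varepsilon$, $\down$, or $\up$, and since $(v_1,w_1)\sigequiv(v_2,w_2)$ forces $\sig(v_2,w_2)=\sig(v_1,w_1)$, the pair $(v_2,w_2)$ lies in the same relation. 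For $\hat{\ell}$ one additionally uses that $(v_1,w_1)\pairs{\sigequiv}{\updownequiv{k}}(v_2,w_2)$ entails $v_1\updownequiv{k}v_2$ (Lemma~\ref{lem-updown1}), hence $\lambda(v_1)=\lambda(v_2)$.

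For the induction step, the key observation---exactly as in the proof of Proposition~\ref{prop-weaklydown}---is that $\updownequiv{k}$-congruence implies $\downequiv{k}$-congruence, so that the arguments for union, intersection, set difference, first projection, and the counting operators $\ch{m}(.)$ with $m\le k$ carry over almost verbatim from Proposition~\ref{prop-strictlydown}. For $\pi_1$ I would invoke Lemma~\ref{lem-updown2}, (\ref{updown1}), to produce the witness; for set difference I would use the induction hypothesis in both directions, which is legitimate by the symmetry of the congruence relation.

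The genuinely new cases are three. For the inverse $e:=f^{-1}$, I would use Proposition~\ref{prop-subsumption}, (\ref{subsumption-2}) and (\ref{subsumption-4}), to pass from $(v_1,w_1)\pairs{\sigequiv}{\updownequiv{k}}(v_2,w_2)$ to $(w_1,v_1)\pairs{\sigequiv}{\updownequiv{k}}(w_2,v_2)$, and then apply the induction hypothesis to $f$. For the second projection $e:=\pi_2(f)$, the argument mirrors $\pi_1$ but uses Lemma~\ref{lem-updown2}, (\ref{updown2}), to produce the required ancestor witness.

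The main obstacle is the composition case $e:=e_1/e_2$, and this is precisely where the hypothesis $k\ge 3$ is consumed. If $(v_1,w_1)\in e(D)$, there is an intermediate node $y_1$ with $(v_1,y_1)\in e_1(D)$ and $(y_1,w_1)\in e_2(D)$. Unlike in the (strictly or weakly) downward setting, $y_1$ need not lie on the path from $v_1$ to $w_1$---it may be a descendant, an ancestor, or entirely off to the side---so Proposition~\ref{prop-subpairs} no longer suffices to locate a matching $y_2$. This is exactly the content of Lemma~\ref{lem-updown3}, whose case analysis (and in particular its Case~\ref{updown-case8}, where $y_1$ hangs off the common top node and one needs a third $k$-equivalent child, forcing $k\ge 3$) delivers a node $y_2$ with $(v_1,y_1)\pairs{\sigequiv}{\updownequiv{k}}(v_2,y_2)$ and $(y_1,w_1)\pairs{\sigequiv}{\updownequiv{k}}(y_2,w_2)$. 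Applying the induction hypothesis to $e_1$ and $e_2$ then yields $(v_2,y_2)\in e_1(D)$ and $(y_2,w_2)\in e_2(D)$, whence $(v_2,w_2)\in e(D)$, completing the induction.
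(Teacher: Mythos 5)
Your proposal is correct and takes essentially the same route as the paper's own proof: a structural induction in which composition is handled by Lemma~\ref{lem-updown3} (the one place where $k\geq 3$ is consumed), the two projections by Lemma~\ref{lem-updown2}, (\ref{updown1}) and (\ref{updown2}) respectively, and the union, intersection, difference, and counting cases by carrying over the arguments of Proposition~\ref{prop-strictlydown}. The paper states this only as a sketch citing those same lemmas, so your write-up merely supplies more explicit detail (the base cases, the symmetry reduction, and the inverse case) along the identical line of argument.
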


\begin{proof}
The proof goes along the same lines as the proofs of
Propositions~\ref{prop-weaklydown} and~\ref{prop-strictlydown}. 
The base case, for the atomic operators, remains straightforward.
In the induction step, we must now rely on Lemma~\ref{lem-updown3} to
make the case for composition (``$./.$''). To make the case for first,
respectively, second projection (``$\pi_1$,'' respectively,
``$\pi_2$''), we must rely on Lemma~\ref{lem-updown2}, 
(\ref{updown1}), respectively, (\ref{updown2}). The arguments for the
counting operations (``$\ch{m}(.)$,'' $m\leq k$), union (''$\cup$''),
intersection (''$\cap$''), and set difference (``$-$'') in the proof
of Proposition~\ref{prop-strictlydown} carry over to the present
setting. Finally, the case for inverse (``$.^{-1}$'') is straightforward.
\end{proof}

As in Section~\ref{subsec-weaklydownnecessary}, we can in two steps infer the
following result from Proposition~\ref{prop-updown}.

\begin{corollary}
\label{cor-updown}

Let $k\geq 3$, and let $E$ be a set of nonbasic operations
not containing selection on at least $m$ children
satisfying some condition (``$\ch{m}(.)$'') for $m>k$. 
Consider the language $\BL(E)$. Let
$D=(V,\textit{Ed},r,\lambda)$ be a document, and let $v_1$ and $v_2$
be nodes of $D$. If $v_1\updownequiv{k} v_2$, then $v_1\expequiv v_2$.

\end{corollary}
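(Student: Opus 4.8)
The plan is to bootstrap Corollary~\ref{cor-updown} from Proposition~\ref{prop-updown} in exactly the two-step fashion used in Section~\ref{subsec-weaklydownsufficient} to pass from Proposition~\ref{prop-weaklydown} to Corollary~\ref{cor-weaklydown2}, now substituting the two-way ingredients. The first step combines Proposition~\ref{prop-updown} with Lemma~\ref{lem-updown2} to obtain a one-directional navigational statement (the analogue of Corollary~\ref{cor-weaklydown1}); the second step then invokes that statement in both directions to conclude expression-equivalence.

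For the first step, let $e$ be an expression in $\BL(E)$. Since $E$ contains no operation $\ch{m}(.)$ with $m>k$, the set $E$ is contained in the operation set to which Proposition~\ref{prop-updown} applies, so $e$ is in particular an expression of that larger language and Proposition~\ref{prop-updown} is applicable to it. Now assume $v_1\updownequiv{k} v_2$ and suppose $(v_1,w_1)\in e(D)$ for some node $w_1$. Because $k\geq 3\geq 2$, Lemma~\ref{lem-updown2},~(\ref{updown1}) produces a node $w_2$ with $(v_1,w_1)\pairs{\sigequiv}{\updownequiv{k}}(v_2,w_2)$, and Proposition~\ref{prop-updown} then yields $(v_2,w_2)\in e(D)$. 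Hence: whenever $v_1\updownequiv{k} v_2$ and $(v_1,w_1)\in e(D)$, there is a node $w_2$ with $(v_2,w_2)\in e(D)$.

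For the second step, take any expression $e$ in $\BL(E)$ with $e(D)(v_1)\neq\emptyset$. Then some $w_1$ satisfies $(v_1,w_1)\in e(D)$, and the first step supplies a $w_2$ with $(v_2,w_2)\in e(D)$, so $e(D)(v_2)\neq\emptyset$; this establishes $v_1\expgeq v_2$. Since $\updownequiv{k}$ is symmetric, we also have $v_2\updownequiv{k} v_1$, and the identical argument with the roles of $v_1$ and $v_2$ interchanged gives $v_2\expgeq v_1$. Together these yield $v_1\expequiv v_2$, as desired.

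There is no genuine obstacle remaining at this level: all the difficulty has already been absorbed into Proposition~\ref{prop-updown}—whose proof needs $k\geq 3$ to treat composition via Lemma~\ref{lem-updown3}—and into Lemma~\ref{lem-updown2}. The only points requiring care are the two bookkeeping checks I flagged above, namely that Proposition~\ref{prop-updown} remains applicable to the possibly smaller operation set $E$ (because $\BL(E)$-expressions are expressions of the full language it governs), and that the reverse inequality $v_2\expgeq v_1$ is obtained for free from the symmetry of $\updownequiv{k}$ rather than requiring a separate construction.
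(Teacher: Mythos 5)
Your proposal is correct and follows exactly the paper's intended argument: the paper derives Corollary~\ref{cor-updown} "in two steps" from Proposition~\ref{prop-updown}, mirroring the passage from Proposition~\ref{prop-weaklydown} via Corollary~\ref{cor-weaklydown1} to Corollary~\ref{cor-weaklydown2}, which is precisely your combination of Lemma~\ref{lem-updown2},~(\ref{updown1}) with Proposition~\ref{prop-updown} followed by the symmetric application to get both $v_1\expgeq v_2$ and $v_2\expgeq v_1$. Your two bookkeeping checks (that $\BL(E)\subseteq\BL(E')$ for the full operation set $E'$ of Proposition~\ref{prop-updown}, and that the reverse direction comes from symmetry of $\updownequiv{k}$) are exactly the points the paper leaves implicit.
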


Now, notice that Proposition~\ref{prop-weaklydownnecessary} of
Section~\ref{subsec-weaklydownnecessary} is also
applicable to an important class of languages allowing two-way navigation.
The standard language for two-way navigation satisfying both
Corollary~\ref{cor-updown} and
Proposition~\ref{prop-weaklydownnecessary} is
$\BL(\down,\up,\ch{1}(.),\ldots,\ch{k}(.),-)$.\footnote{All other operations
  are redundant, by Proposition~\ref{prop-eliminate}.} We call this
language the \emph{XPath algebra with counting up to~$k$\/}. Combining the
aforementioned results, we obtain the following.

\begin{theorem}
\label{theo-updown}

Let $k\ge 3$, and consider the XPath algebra with counting up to~$k$.
Let $D=(V,\textit{Ed},r,\lambda)$ be a document, and let $v_1$ and $v_2$
be nodes of $D$. Then, $v_1\expequiv v_2$ if and only if
$v_1\updownequiv{k} v_2$.

\end{theorem}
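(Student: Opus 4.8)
The plan is to assemble this characterization directly from the two halves that have already been established for broad classes of two-way languages, exactly mirroring how Theorems~\ref{theo-downequivalent} and~\ref{theo-weaklydownequivalent} were obtained in the downward cases. First I would pin down the language precisely: the XPath algebra with counting up to~$k$ is $\BL(\down,\up,\ch{1}(.),\ldots,\ch{k}(.),-)$ (all remaining operations being redundant by Proposition~\ref{prop-eliminate}), and I would verify that this language lies within the scope of both Corollary~\ref{cor-updown} and Proposition~\ref{prop-weaklydownnecessary}.

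For the ``if'' direction I would simply invoke Corollary~\ref{cor-updown}. Its hypothesis requires only that the language contain no selection operator $\ch{m}(.)$ with $m>k$, which holds here by construction. Hence $v_1\updownequiv{k} v_2$ immediately yields $v_1\expequiv v_2$.

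For the ``only if'' direction I would apply Proposition~\ref{prop-weaklydownnecessary}. Its hypotheses demand at least one navigation operation, set difference, both projections $\pi_1$ and $\pi_2$, and selection $\ch{m}(.)$ for every $m=1,\ldots,k$. All of these except the projections are present verbatim; the expressibility of $\pi_1$ and $\pi_2$ follows from the simultaneous presence of ``$\down$'', ``$\up$'', and ``$-$'', which by the identities used in the proof of Proposition~\ref{prop-eliminate} yields inverse and intersection, and thence the projection identities $\pi_1(e)=(e/e^{-1})\cap\varepsilon$ and $\pi_2(e)=(e^{-1}/e)\cap\varepsilon$. With the hypotheses met, $v_1\expequiv v_2$ forces $v_1\updownequiv{k} v_2$, completing the equivalence.

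The point requiring the most care is the restriction $k\geq 3$. In the downward cases one could take $k\geq 1$, but here the sufficiency direction genuinely needs $k\geq 3$: it passes through Corollary~\ref{cor-updown}, hence Proposition~\ref{prop-updown}, and ultimately Lemma~\ref{lem-updown3}, whose case analysis for the composition operator requires, for an arbitrary intermediate node $y_1$ with $\top(v_1,y_1)=\top(y_1,w_1)=\top(v_1,w_1)$, a matching node $y_2$ lying \emph{off} the path from $v_2$ to $w_2$; producing such a $y_2$ relies on $\top(v_2,w_2)$ having a \emph{third} $k$-equivalent child. I would therefore flag explicitly that the theorem is stated only for $k\geq 3$, that this is precisely where the two-way setting departs from the downward one, and that the necessity direction alone holds for all $k\geq 1$.
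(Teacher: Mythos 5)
Your proof is correct and follows essentially the same route as the paper: the paper also obtains Theorem~\ref{theo-updown} by combining Corollary~\ref{cor-updown} (sufficiency) with Proposition~\ref{prop-weaklydownnecessary} (necessity), noting via Proposition~\ref{prop-eliminate} that the remaining operations, in particular both projections, are expressible in $\BL(\down,\up,\ch{1}(.),\ldots,\ch{k}(.),-)$. Your closing remark correctly identifies that the restriction $k\geq 3$ enters only through the sufficiency direction, via Lemma~\ref{lem-updown3}, exactly as in the paper.
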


By Proposition~\ref{prop-counting}, selection on up to three children
satisfying some condition (``$\ch{m}(.)$,'' $1\leq m\leq 3$) can be
expressed in the XPath algebra. Hence, a special case arises for
$k=3$:

\begin{corollary}
\label{cor-updownxpath}

Consider the XPath algebra. Let
$D=(V,\textit{Ed},r,\lambda)$ be a document, and let $v_1$ and $v_2$
be nodes of $D$. Then, $v_1\expequiv v_2$ if and only if
$v_1\updownequiv{3} v_2$.

\end{corollary}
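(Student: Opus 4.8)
The plan is to obtain this corollary as a direct specialization of Theorem~\ref{theo-updown} to the case $k=3$, the only extra ingredient being that the counting operators invoked there are already definable in the plain XPath algebra. So first I would instantiate Theorem~\ref{theo-updown} at $k=3$: this yields the equivalence $v_1\expequiv v_2 \iff v_1\updownequiv{3} v_2$, but stated for the \emph{XPath algebra with counting up to~$3$}, namely the language $\BL(\down,\up,\ch{1}(.),\ch{2}(.),\ch{3}(.),-)$. The remaining task is therefore to argue that this counting language and the bare XPath algebra $\BL(\down,\up,-)$ induce one and the same expression-equivalence relation.

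Next I would establish that the two languages have identical query-level expressive power. By Proposition~\ref{prop-counting}, each of $\ch{1}(.)$, $\ch{2}(.)$, and $\ch{3}(.)$ can be rewritten using only $\down$, $\up$, first projection, set difference, $\varepsilon$, and composition; and since $\pi_1$ is itself expressible in $\BL(\down,\up,-)$ (via Proposition~\ref{prop-eliminate}), all three rewritings lie within the XPath algebra. Hence every expression of $\BL(\down,\up,\ch{1}(.),\ch{2}(.),\ch{3}(.),-)$ admits an equivalent XPath-algebra expression, and the reverse containment is immediate. The two languages therefore define exactly the same family of binary relations on every document~$D$.

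Finally, I would observe that the notion of expression-equivalence in Definition~\ref{def-ee} depends on the language only through this family of definable relations: $v_1\expequiv v_2$ asks, for every expression $e$, whether $e(D)(v_1)$ and $e(D)(v_2)$ are simultaneously empty, and this condition is unchanged when one language is replaced by another defining the same relations. Consequently the $\expequiv$ induced by the XPath algebra coincides with that induced by the XPath algebra with counting up to~$3$, and substituting into the $k=3$ instance of Theorem~\ref{theo-updown} delivers $v_1\expequiv v_2 \iff v_1\updownequiv{3} v_2$, as claimed.

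I do not expect a genuine obstacle here, as the statement is essentially a packaging of Theorem~\ref{theo-updown}. The only point demanding care is verifying that the explicit rewritings of Proposition~\ref{prop-counting} never escape the operation set of the XPath algebra---in particular that first projection is legitimately at our disposal---which is guaranteed by Proposition~\ref{prop-eliminate}; everything else is routine.
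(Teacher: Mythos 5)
Your proposal is correct and follows essentially the same route as the paper: the paper derives Corollary~\ref{cor-updownxpath} from Theorem~\ref{theo-updown} at $k=3$ by noting, via Proposition~\ref{prop-counting}, that $\ch{m}(.)$ for $1\leq m\leq 3$ is expressible in the XPath algebra, so the counting language and the XPath algebra induce the same expression-equivalence. Your additional care about $\pi_1$ being available (via Proposition~\ref{prop-eliminate}) and about expression-equivalence depending only on the family of definable relations merely makes explicit what the paper leaves implicit.
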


We next prove a converse to Proposition~\ref{prop-updown}.

\begin{proposition}
\label{prop-updownconverse}

Let $k\geq 3$, and consider the XPath algebra with counting up to~$k$.
Let $D=(V,\textit{Ed},r,\lambda)$ be a document, and let $v_1$, $w_1$,
$v_2$, and $w_2$ be nodes of $D$. Assume furthermore that,
for each expression~$e$ in the language, $(v_1,w_1)\in e(D)$ if and
only if $(v_2,w_2)\in e(D)$. Then
$(v_1,w_1)\pairs{\sigequiv}{\updownequiv{k}}(v_2,w_2)$.

\end{proposition}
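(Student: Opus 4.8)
The plan is to exploit the decomposition of $\pairs{\sigequiv}{\updownequiv{k}}$ supplied by Lemma~\ref{lem-updown1}: it suffices to establish the three separate facts $(v_1,w_1)\sigequiv(v_2,w_2)$, $v_1\updownequiv{k} v_2$, and $w_1\updownequiv{k} w_2$, after which Lemma~\ref{lem-updown1} assembles them into the conclusion. This mirrors the strategy of Proposition~\ref{prop-strictlydownconverse} and Proposition~\ref{prop-weaklydownconverse}, the only new wrinkle being that the two endpoints $v_i$ and $w_i$ must now be treated symmetrically, since neither need be an ancestor of the other.

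For the congruence of signatures I would invoke Proposition~\ref{prop-congruence}: whichever of its two cases applies, it furnishes an expression whose output on $D$ is exactly the set of pairs congruent to $(v_1,w_1)$ --- namely $\sig(v_1,w_1)$ itself when one of $v_1,w_1$ is an ancestor of the other, and $\up^m/\down^n-\up^{m-1}/\down^{n-1}$ (where $\sig(v_1,w_1)=\up^m/\down^n$) otherwise. Both lie in $\BL(\down,\up,-)$ and hence in the XPath algebra with counting up to~$k$. Since $(v_1,w_1)$ belongs to this expression's output by reflexivity of congruence, the indistinguishability hypothesis forces $(v_2,w_2)$ into it as well, giving $(v_1,w_1)\sigequiv(v_2,w_2)$, and in particular $\sig(v_1,w_1)=\sig(v_2,w_2)$.

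Next I would show $v_1\expequiv v_2$ and $w_1\expequiv w_2$ by a ``flanking'' argument. For the first, take any expression $f$ in the language with $f(D)(v_1)\neq\emptyset$, so that $(v_1,v_1)\in\pi_1(f)(D)$, and set $e\ass\pi_1(f)/\sig(v_1,w_1)$, which is again an expression of the language (recall that $\pi_1$ is expressible once $\down$, $\up$, and ``$-$'' are all present, by Proposition~\ref{prop-eliminate}). Then $(v_1,w_1)\in e(D)$, so by hypothesis $(v_2,w_2)\in e(D)$; since $\pi_1(f)(D)\subseteq\varepsilon(D)$, the intermediate node of the composition is forced to be $v_2$, whence $f(D)(v_2)\neq\emptyset$. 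The symmetry of the hypothesis then yields $v_1\expequiv v_2$, and the mirror-image choice $e\ass\sig(v_1,w_1)/\pi_1(g)$ for expressions $g$ with $g(D)(w_1)\neq\emptyset$ (forcing the intermediate node to be $w_2$) yields $w_1\expequiv w_2$. Because the XPath algebra with counting up to~$k$ contains $\down$, $\up$, ``$-$'', both projections, and selection on at least $m$ children for every $m\le k$, Proposition~\ref{prop-weaklydownnecessary} applies and upgrades these expression-equivalences to $v_1\updownequiv{k} v_2$ and $w_1\updownequiv{k} w_2$.

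With the three ingredients assembled, Lemma~\ref{lem-updown1} delivers $(v_1,w_1)\pairs{\sigequiv}{\updownequiv{k}}(v_2,w_2)$. I expect the only genuinely delicate point to be the congruence step: one must extract true congruence rather than mere subsumption, which is precisely why Proposition~\ref{prop-congruence} together with the availability of set difference is invoked. I would also remark that, unlike the forward direction (Proposition~\ref{prop-updown}, which leans on Lemma~\ref{lem-updown3}), this converse needs nothing beyond $k\ge 1$; the hypothesis $k\ge 3$ is carried along only so that the result pairs cleanly with Proposition~\ref{prop-updown} into a single equivalence.
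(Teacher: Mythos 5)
Your proof is correct and follows essentially the same route as the paper's: signature congruence extracted from the hypothesis, the flanking construction $\pi_1(f)/\sig(v_1,w_1)$ (and its mirror image $\sig(v_1,w_1)/\pi_1(g)$) to obtain $v_1\expequiv v_2$ and $w_1\expequiv w_2$, an upgrade to $\updownequiv{k}$, and assembly via Lemma~\ref{lem-updown1}. The only cosmetic differences are that the paper gets congruence without Proposition~\ref{prop-congruence} or set difference --- it simply applies the two-directional hypothesis once to $\sig(v_1,w_1)$ and once to $\sig(v_2,w_2)$, and mutual subsumption is congruence by definition, so your remark that set difference is needed at that step overstates the point --- and that the paper cites Theorem~\ref{theo-updown} where you cite its necessity half, Proposition~\ref{prop-weaklydownnecessary}, directly; the latter choice indeed substantiates your closing observation that this converse direction holds for all $k\geq 1$, the hypothesis $k\geq 3$ being needed only for Proposition~\ref{prop-updown} so that the two combine into Corollary~\ref{cor-updownpairs}.
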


\begin{proof}
Since $\sig(v_1,w_1)$ is an expression in the language under
consideration, and since $(v_1,w_1)\in\sig(v_1,w_1)$,
$(v_2,w_2)\in\sig(v_1,w_1)$. Similarly, 
$(v_1,w_1)\in\sig(v_2,w_2)$. We may thus conclude that
$(v_1,w_1)\sigequiv (v_2,w_2)$. Now, let $f$ be any expression in the
language such that $f(D)(v_1)\neq\emptyset$. Then, $(v_1,v_1)\in
\pi_1(f)(D)$. Let $e\ass \pi_1(f)/\sig(v_1,w_1)$. By construction,
$(v_1,w_1)\in e(D)$. Hence, by assumption, $(v_2,w_2)\in e(D)$, which
implies $(v_2,v_2)\in \pi_1(f)(D)$ or $f(D)(v_2)\neq\emptyset$. The same
holds vice versa, and we may thus conclude that $v_1\expequiv v_2$, and,
hence, by Theorem~\ref{theo-updown}, $v_1\updownequiv{k} v_2$.
In a similar way, we prove that $w_1\updownequiv{k} w_2$.
By Lemma~\ref{lem-updown1}, we may now conclude that
$(v_1,w_1)\pairs{\sigequiv}{\updownequiv{k}} (v_2,w_2)$.
\end{proof}

Combining Propositions~\ref{prop-updown}
and~\ref{prop-updownconverse}, we obtain the following
characterization. 

\begin{corollary}
\label{cor-updownpairs}

Let $k\geq 3$, and consider the XPath algebra with
counting up to~$k$. Let
$D=(V,\textit{Ed},r,\lambda)$ be a document, and let $v_1$, $w_1$,
$v_2$, and $w_2$ be nodes of $D$.
Then, the property that, for each expression $e$ in the
language under consideration, $(v_1,w_1)\in e(D)$ if and only if
$(v_2,w_2)\in e(D)$ is equivalent to the property
$(v_1,w_1)\pairs{\sigequiv}{\updownequiv{k}} (v_2,w_2)$.

\end{corollary}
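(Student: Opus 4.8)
The plan is to obtain the corollary simply as the conjunction of the two implications already furnished by Propositions~\ref{prop-updown} and~\ref{prop-updownconverse}, since together these state precisely the claimed equivalence for the XPath algebra with counting up to~$k$ under the standing assumption $k\geq 3$.

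First I would treat the direction from congruence to indistinguishability. Assuming $(v_1,w_1)\pairs{\sigequiv}{\updownequiv{k}}(v_2,w_2)$, I would invoke Proposition~\ref{prop-updown} to conclude that for every expression $e$ in the language, $(v_1,w_1)\in e(D)$ if and only if $(v_2,w_2)\in e(D)$. The only thing to verify here is that the language named in the corollary, namely $\BL(\down,\up,\ch{1}(.),\ldots,\ch{k}(.),-)$, is covered by the hypothesis of Proposition~\ref{prop-updown}: its operator set contains no selection $\ch{m}(.)$ with $m>k$, which is immediate from the definition, and the remaining operators are exactly those permitted.

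Conversely, for the direction from indistinguishability to congruence I would appeal directly to Proposition~\ref{prop-updownconverse}, which is already phrased for the XPath algebra with counting up to~$k$: if $(v_1,w_1)\in e(D)$ exactly when $(v_2,w_2)\in e(D)$ for every expression $e$ in the language, then $(v_1,w_1)\pairs{\sigequiv}{\updownequiv{k}}(v_2,w_2)$. Combining the two implications yields the asserted equivalence of the two properties.

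Since both directions are supplied verbatim by the preceding propositions, the corollary carries essentially no independent difficulty; all of the technical weight---in particular the need for $k\geq 3$, which enters through the composition case handled by Lemma~\ref{lem-updown3} and the branching arguments of Lemma~\ref{lem-updown2}, together with the elimination of redundant operators via Proposition~\ref{prop-eliminate}---already lives in Proposition~\ref{prop-updown} and its supporting lemmas. The only point demanding attention in assembling the corollary is the routine bookkeeping that the hypotheses of Propositions~\ref{prop-updown} and~\ref{prop-updownconverse} both apply to the single language $\BL(\down,\up,\ch{1}(.),\ldots,\ch{k}(.),-)$ when $k\geq 3$.
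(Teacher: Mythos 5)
Your proposal is correct and matches the paper's own proof exactly: the paper derives this corollary in one line by combining Proposition~\ref{prop-updown} (congruence implies indistinguishability, applicable since the XPath algebra with counting up to~$k$ uses only operators permitted there) with Proposition~\ref{prop-updownconverse} (indistinguishability implies congruence). Your additional bookkeeping about the operator sets is the only verification needed, and it is carried out correctly.
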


Using Theorem~\ref{theo-updown} instead of
Theorem~\ref{theo-downequivalent}, we can recast the proof of
Lemma~\ref{lem-separationdown1} into a proof of

\begin{lemma}
\label{lem-separationupdown1}

Let $k\geq 3$. Let $D=(V,\textit{Ed},r,\lambda)$ be a document, and
let $v_1$ be a node of $D$.  There exists an expression $e_{v_1}$ in
the XPath algebra with counting up to~$k$ such that, for
each node $v_2$ of $D$, $e_{v_1}(D)(v_2)\neq\emptyset$ if and only if
$v_1\updownequiv{k} v_2$. 

\end{lemma}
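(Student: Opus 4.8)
The plan is to transcribe the proof of Lemma~\ref{lem-separationdown1} almost verbatim, substituting $k$-equivalence ``$\updownequiv{k}$'' for downward $k$-equivalence ``$\downequiv{k}$'' and Theorem~\ref{theo-updown} for Theorem~\ref{theo-downequivalent}. The only genuinely new point to check is that the enabling machinery still applies in the two-way setting; the rest is a mechanical reassembly.

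First I would observe that the XPath algebra with counting up to~$k$, i.e., $\BL(\down,\up,\ch{1}(.),\ldots,\ch{k}(.),-)$, contains both navigation operators and set difference, so by Proposition~\ref{prop-eliminate} (and the elimination identities exhibited in its proof) first projection ``$\pi_1$'' and intersection ``$\cap$'' are expressible in it. In particular this language meets the hypotheses of Proposition~\ref{prop-notempty-empty}, so that expression-equivalence and expression-relatedness coincide here: $v_1\expequiv v_2$ if and only if $v_1\expgeq v_2$.

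Next, for each node $w$ of $D$ with $v_1\not\updownequiv{k} w$, Theorem~\ref{theo-updown} gives $v_1\not\expequiv w$, and hence, by the previous paragraph, $v_1\not\expgeq w$. Unwinding Definition~\ref{def-ee}, this yields an expression $f_{v_1,w}$ in the language with $f_{v_1,w}(D)(v_1)\neq\emptyset$ and $f_{v_1,w}(D)(w)=\emptyset$. I would then set
$$e_{v_1}\ass \pi_1\!\left(\bigcap_{w\in V\ \&\ v_1\not\updownequiv{k} w}\pi_1(f_{v_1,w})\right),$$
again an expression in the XPath algebra with counting up to~$k$ (taking the empty intersection, should every node be $k$-equivalent to $v_1$, to be $\varepsilon$). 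By construction $e_{v_1}(D)(v_1)\neq\emptyset$. For an arbitrary node $v_2$: if $v_1\updownequiv{k} v_2$ then Theorem~\ref{theo-updown} gives $v_1\expequiv v_2$, whence $e_{v_1}(D)(v_2)\neq\emptyset$; and if instead $v_1\not\updownequiv{k} v_2$, then $v_2$ indexes one of the conjuncts, so $\pi_1(f_{v_1,v_2})(D)(v_2)=\emptyset$, forcing $e_{v_1}(D)(v_2)=\emptyset$.

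I expect no real obstacle beyond the bookkeeping just described, since Theorem~\ref{theo-updown} already supplies the hard equivalence between $\updownequiv{k}$ and $\expequiv$ (and is precisely where the hypothesis $k\geq 3$ enters). The only points deserving a moment's care are confirming that $\pi_1$ and $\cap$ are indeed available in the two-way language so that Proposition~\ref{prop-notempty-empty} applies, and treating the degenerate empty-intersection case; both are immediate.
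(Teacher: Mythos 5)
Your proof is correct and takes exactly the paper's route: the paper obtains Lemma~\ref{lem-separationupdown1} precisely by recasting the proof of Lemma~\ref{lem-separationdown1} with Theorem~\ref{theo-updown} substituted for Theorem~\ref{theo-downequivalent}, which is what you do, including the same separating expressions $f_{v_1,w}$ obtained via Proposition~\ref{prop-notempty-empty} and the same construction $e_{v_1}\ass \pi_1\bigl(\bigcap_{w}\pi_1(f_{v_1,w})\bigr)$. Your extra care about the expressibility of $\pi_1$ and $\cap$ in $\BL(\down,\up,\ch{1}(.),\ldots,\ch{k}(.),-)$ (via Proposition~\ref{prop-eliminate}) and about the empty-intersection case merely makes explicit details the paper leaves implicit.
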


We can now bootstrap Lemma~\ref{lem-separationupdown1} to the
following result.

\begin{lemma}
\label{lem-separationupdown2}

Let $k\geq 3$.  Let $D=(V,\textit{Ed},r,\lambda)$ be a document, and
let $v_1$ and $w_1$ be a nodes of $D$.
There exists an expression $e_{v_1,w_1}$ in the 
XPath algebra with counting up to~$k$ such that, for all nodes
$v_2$ and $w_2$ of $D$, $(v_2,w_2)\in e_{v_1,w_1}(D)$ if and only if
$(v_1,w_1)\pairs{\sigequiv}{\updownequiv{k}} (v_2,w_2)$.

\end{lemma}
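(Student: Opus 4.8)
The plan is to bootstrap the node detector of Lemma~\ref{lem-separationupdown1} to a pair detector, in the spirit of Lemma~\ref{lem-separationdown2}, but now for pairs that need not be ancestor--descendant pairs. Since $k\geq 3$, Lemma~\ref{lem-separationupdown1} provides, for each node $u$ of $D$, an expression $e_u$ in the XPath algebra with counting up to~$k$ such that $e_u(D)(u')\neq\emptyset$ if and only if $u\updownequiv{k} u'$; equivalently, $\pi_1(e_u)(D)=\{(u',u')\mid u\updownequiv{k} u'\}$. The natural first attempt---stringing such filters along the whole path from $v_1$ to $w_1$, interleaved with single $\up$/$\down$ steps, exactly as in the strictly downward case---does \emph{not} carry over. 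The difficulty is at the turning point $\top(v_1,w_1)$: if the two children of $\top(v_1,w_1)$ on the paths to $v_1$ and to $w_1$ happen to be $\updownequiv{k}$-equivalent, then a filtered walk starting at a node $v_2$ with $v_1\updownequiv{k} v_2$ may climb to a common ancestor and then descend through the \emph{same} child it arrived on, reaching some $w_2$ whose least common ancestor with $v_2$ lies strictly below that node. Such a $(v_2,w_2)$ satisfies $(v_1,w_1)\siggeq(v_2,w_2)$ together with the endpoint equivalences, yet is \emph{not} congruent to $(v_1,w_1)$. Hence the construction must enforce congruence of the signature, not merely subsumption.

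I would achieve this using Proposition~\ref{prop-congruence}. Writing $\sig(v_1,w_1)=\up^m/\down^n$, define the ``congruent-signature'' expression
\[
c \ass
\begin{cases}
\sig(v_1,w_1) & \text{if $v_1$ is an ancestor of $w_1$ or vice versa,}\\
\up^m/\down^n-\up^{m-1}/\down^{n-1} & \text{otherwise (so $m,n\geq 1$).}
\end{cases}
\]
By Proposition~\ref{prop-congruence}, $c(D)=\{(v_2,w_2)\mid (v_1,w_1)\sigequiv(v_2,w_2)\}$; and $c$ lies in the XPath algebra with counting up to~$k$, since it uses only $\up$, $\down$ and set difference. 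I then set
\[
e_{v_1,w_1}\ass \pi_1(e_{v_1})/c/\pi_1(e_{w_1}),
\]
which is again an expression in that language.

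It remains to verify that $e_{v_1,w_1}(D)=\{(v_2,w_2)\mid (v_1,w_1)\pairs{\sigequiv}{\updownequiv{k}}(v_2,w_2)\}$. As both projection factors are contained in $\varepsilon(D)$, a pair $(v_2,w_2)$ belongs to $e_{v_1,w_1}(D)$ precisely when $e_{v_1}(D)(v_2)\neq\emptyset$, $(v_2,w_2)\in c(D)$, and $e_{w_1}(D)(w_2)\neq\emptyset$---that is, precisely when $v_1\updownequiv{k} v_2$, $(v_1,w_1)\sigequiv(v_2,w_2)$, and $w_1\updownequiv{k} w_2$. By Lemma~\ref{lem-updown1}, this conjunction is equivalent to $(v_1,w_1)\pairs{\sigequiv}{\updownequiv{k}}(v_2,w_2)$, which is exactly the desired equivalence (and in particular shows $(v_1,w_1)\in e_{v_1,w_1}(D)$). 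Note that Lemma~\ref{lem-updown1} is doing the real work here: it guarantees that congruence of the signature together with $k$-equivalence of the two endpoints automatically propagates to $k$-equivalence of \emph{all} corresponding intermediate nodes, so no intermediate filters are needed.

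The main obstacle is thus the one flagged above---capturing congruence rather than mere subsumption of signatures---and it is resolved by appealing to Proposition~\ref{prop-congruence} and to the availability of set difference in the language. The hypothesis $k\geq 3$ is used only indirectly, through the node detector of Lemma~\ref{lem-separationupdown1}, which in turn relies on Theorem~\ref{theo-updown}.
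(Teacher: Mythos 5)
Your proof is correct and is essentially the paper's own argument: the paper defines $e_{v_1,w_1}\ass \pi_1(e_{v_1})/\sig(v_1,w_1)/\pi_1(e_{w_1})-\up^{u-1}/\down^{d-1}$ (with the convention $f^{-1}\ass\emptyset$), which is semantically identical to your $\pi_1(e_{v_1})/c/\pi_1(e_{w_1})$ --- your explicit case split and placement of the set difference on the middle factor are only cosmetic differences, and both proofs rest on the same ingredients (Lemma~\ref{lem-separationupdown1}, Proposition~\ref{prop-congruence}, and Lemma~\ref{lem-updown1}). The only minor divergence is that the paper handles the converse direction by citing Corollary~\ref{cor-updownpairs}, whereas you obtain both directions at once from the biconditional chain, which is equally valid.
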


\begin{proof}
From Lemma~\ref{lem-separationupdown1}, we know that, for node $y_1$
of $D$, there exists an expression $e_{y_1}$ in the XPath algebra with
counting up to~$k$ such that, for each node $y_2$ of $D$,
$e_{y_1}(D)(y_2)\neq\emptyset$ if and only if $y_1\updownequiv{k}
y_2$. Now, let $v_1$ and $w_1$ be nodes of $D$.
Let $\sig(v_1,w_1)=\up^u/\down^d$, with $u,d\geq 0$, and
define
$$e_{v_1,w_1}\ass
\pi_1(e_{v_1})/\sig(v_1,w_1)/\pi_1(e_{w_1})-\up^{u-1}/\down^{d-1},$$ 
where, for an expression $f$, we define $f^{-1}\ass\emptyset$.
Clearly, $e_{v_1,w_1}$ is also in the XPath algebra with counting up to~$k$.
Let $v_2$ and $w_2$ be nodes of $D$. Suppose $(v_2,w_2)\in
e_{v_1,w_1}(D)$. Then, by Proposition~\ref{prop-congruence}, 
$\sig(v_1,w_1)=\sig(v_2,w_2)$. Furthermore, it follows that
$(v_2,v_2)\in e_{v_1}(D)$ and $(w_2,w_2)\in e_{w_1}(D)$. By
Lemma~\ref{lem-separationupdown1}, $v_1\updownequiv{k} v_2$ and
$w_1\updownequiv{k} w_2$. It now follows from Lemma~\ref{lem-updown1}
that $(v_1,w_2)\pairs{\sigequiv}{\updownequiv{k}}(v_2,w_2)$.
As $(v_1,w_1)\in e_{v_1,w_1}(D)$, the converse follows from 
Corollary~\ref{cor-updownpairs}.
\end{proof}

The BP characterization results now follow readily.

\begin{theorem}
\label{theo-updown-bp}

Let $k\geq 3$. Let $D=(V,\textit{Ed},r,\lambda)$ be a document, and let
$R\subseteq V\times V$. Then, there exists an expression $e$ in the
XPath algebra with counting up to $k$ such that $e(D)=R$ if and only if,
for all $v_1,w_1,v_2,w_2\in V$ with
  $(v_1,w_1)\pairs{\sigequiv}{\updownequiv{k}} (v_2,w_2)$, 
$(v_1,w_1)\in R$ implies $(v_2,w_2)\in R$. 

\end{theorem}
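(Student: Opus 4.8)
The plan is to mirror the proof of Theorem~\ref{theo-strictlydown-bp} essentially verbatim, exploiting the fact that the two-way analogues of its supporting results are now available. The key structural observation is that, because the XPath algebra with counting up to~$k$ allows navigation in both directions, every pair of nodes is joined by its signature expression; consequently there is no ``descendant'' constraint on $R$ as in the downward cases, and the single condition characterizing definability is closure of $R$ under $\pairs{\sigequiv}{\updownequiv{k}}$.

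For the ``only if'' direction, I would assume $e(D)=R$ and invoke Corollary~\ref{cor-updownpairs}: whenever $(v_1,w_1)\pairs{\sigequiv}{\updownequiv{k}}(v_2,w_2)$, the pairs are indistinguishable by every expression of the language, so $(v_1,w_1)\in R$ iff $(v_2,w_2)\in R$, which in particular yields the stated implication.

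For the ``if'' direction, I would assume $R$ is closed under $\pairs{\sigequiv}{\updownequiv{k}}$ and construct $e$ explicitly. First, Lemma~\ref{lem-separationupdown2} supplies, for each pair $(v_1,w_1)$, a ``separating'' expression $e_{v_1,w_1}$ with $(v_2,w_2)\in e_{v_1,w_1}(D)$ iff $(v_1,w_1)\pairs{\sigequiv}{\updownequiv{k}}(v_2,w_2)$. Since $V$ is finite, I would set
$$e\ass \bigcup_{(v_1,w_1)\in R}e_{v_1,w_1}$$
(taking $e\ass\emptyset$ if $R=\emptyset$), which is again an expression of the language. The verification that $e(D)=R$ then splits in two: the inclusion $R\subseteq e(D)$ follows from reflexivity of $\pairs{\sigequiv}{\updownequiv{k}}$, which places each $(v_1,w_1)\in R$ in $e_{v_1,w_1}(D)$; and the reverse inclusion follows because any $(v_2,w_2)\in e(D)$ lies in some $e_{v_1,w_1}(D)$ with $(v_1,w_1)\in R$, forcing $(v_1,w_1)\pairs{\sigequiv}{\updownequiv{k}}(v_2,w_2)$ and hence, by the closure hypothesis, $(v_2,w_2)\in R$.

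I do not anticipate a genuine obstacle here, since the substantive work has already been carried out in the run-up: the delicate case analysis resides in Lemma~\ref{lem-updown3} (and is precisely where the hypothesis $k\geq 3$ enters), while the separating expressions are furnished by Lemmas~\ref{lem-separationupdown1} and~\ref{lem-separationupdown2}. The present argument merely assembles these. The one point deserving a moment's care is checking that the union defining $e$ is finite and well-formed, which is immediate from finiteness of the document, together with the appeal to reflexivity that secures $R\subseteq e(D)$.
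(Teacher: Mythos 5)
Your proposal is correct and matches the paper's intended argument: the paper states that the BP result ``follows readily'' from Corollary~\ref{cor-updownpairs} and Lemma~\ref{lem-separationupdown2} by the same template as Theorem~\ref{theo-strictlydown-bp}, which is precisely the assembly you carry out (closure via Corollary~\ref{cor-updownpairs} for the ``only if,'' and the union of separating expressions $e_{v_1,w_1}$ for the ``if''). Your observations that the descendant condition disappears in the two-way setting and that $R\subseteq e(D)$ rests on reflexivity of $\pairs{\sigequiv}{\updownequiv{k}}$ are exactly the right points of care.
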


The specialization to the XPath algebra is as
follows.

\begin{corollary}
\label{cor-updown-bp}

Let $D=(V,\textit{Ed},r,\lambda)$ be a document, and let $R\subseteq
V\times V$. There exists an expression $e$ in the XPath algebra such
that $e(D)=R$ if and only if, 
for all $v_1,w_1,v_2,w_2\in V$ with
  $(v_1,w_1)\pairs{\sigequiv}{\updownequiv{3}} (v_2,w_2)$, 
  $(v_1,w_1)\in R$ implies $(v_2,w_2)\in R$. 

\end{corollary}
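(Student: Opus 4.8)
The plan is to derive Corollary~\ref{cor-updown-bp} as the $k=3$ instance of Theorem~\ref{theo-updown-bp}, exploiting the fact that the plain XPath algebra already coincides in expressive power with the XPath algebra with counting up to~$3$. This mirrors exactly the way Corollary~\ref{cor-updownxpath} was obtained from Theorem~\ref{theo-updown}.

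First I would recall that the XPath algebra is $\BL(\down,\up,-)$, which by Proposition~\ref{prop-eliminate} also provides first and second projection, inverse, and intersection. By Proposition~\ref{prop-counting}, the counting operators $\ch{1}(.)$, $\ch{2}(.)$, and $\ch{3}(.)$ are each definable from these primitives. Consequently, every expression of the XPath algebra with counting up to~$3$, namely $\BL(\down,\up,\ch{1}(.),\ch{2}(.),\ch{3}(.),-)$, is equivalent to some expression of the plain XPath algebra, while conversely the plain algebra is trivially a syntactic fragment of it. Hence, for any document~$D$, the two languages realize exactly the same family of binary relations on the nodes of~$D$: the set $\{e(D)\mid e \text{ in the XPath algebra}\}$ equals the set $\{e(D)\mid e \text{ in the XPath algebra with counting up to }3\}$.

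With this identification in hand, the corollary is immediate. Theorem~\ref{theo-updown-bp} with $k=3$ states that a relation $R\subseteq V\times V$ equals $e(D)$ for some expression $e$ in the XPath algebra with counting up to~$3$ precisely when $R$ is closed under $\pairs{\sigequiv}{\updownequiv{3}}$, i.e., when $(v_1,w_1)\pairs{\sigequiv}{\updownequiv{3}}(v_2,w_2)$ together with $(v_1,w_1)\in R$ forces $(v_2,w_2)\in R$. Substituting ``XPath algebra'' for ``XPath algebra with counting up to~$3$'' via the language equivalence of the previous paragraph yields exactly the stated characterization.

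I do not expect a genuine obstacle here; the single point deserving care is that the equivalence of the two languages must be strong enough to transfer a \emph{definability} statement rather than merely a distinguishability statement. This is clear, because the property ``there exists $e$ with $e(D)=R$'' depends only on the collection of relations realizable over the fixed document~$D$, and that collection is unchanged when one passes between two languages whose expressions are pairwise equivalent on every document, in particular on~$D$.
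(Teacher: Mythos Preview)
Your proposal is correct and matches the paper's own reasoning: the corollary is presented there simply as the $k=3$ specialization of Theorem~\ref{theo-updown-bp}, using that the XPath algebra already expresses $\ch{1}(.)$, $\ch{2}(.)$, and $\ch{3}(.)$ by Proposition~\ref{prop-counting} (together with Proposition~\ref{prop-eliminate}).
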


We recast Theorem~\ref{theo-updown-bp} and
Corollary~\ref{cor-updown-bp} in terms of
node-level navigation.

\begin{theorem}
\label{theo-updown-nodelevel}

Let $k\geq 3$.  Let $D=(V,\textit{Ed},r,\lambda)$ be a document, let
$v$ be a node of $D$, and let $W\subseteq V$. Then there exists an
expression $e$ in the XPath algebra
with counting up to~$k$ such that $e(D)(v)=W$ if and only if,
for all $w_1,w_2\in W$ with $(v,w_1)\pairs{\sigequiv}{\updownequiv{k}}
(v,w_2)$, $w_1\in W$ implies $w_2\in W$.

\end{theorem}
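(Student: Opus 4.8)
The plan is to mirror the proof of Theorem~\ref{theo-strictlydown-nodelevel}, replacing the strictly downward machinery by its two-way counterparts---namely Theorem~\ref{theo-updown-bp}, Corollary~\ref{cor-updownpairs}, and Lemma~\ref{lem-updown1}---and dropping the descendant restriction, since the result of a two-way expression applied to $v$ need no longer consist of descendants of $v$. As in the downward case, the closure hypothesis on $W$ should be read over all $w_1,w_2\in V$ (the genuinely restrictive reading), exactly as in the proof of Theorem~\ref{theo-strictlydown-nodelevel}.

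For the ``only if'' direction, I would take an expression $e$ in the XPath algebra with counting up to~$k$ with $e(D)(v)=W$, pick $w_1,w_2\in V$ with $(v,w_1)\pairs{\sigequiv}{\updownequiv{k}}(v,w_2)$ and $w_1\in W$, and observe that $(v,w_1)\in e(D)$. By Corollary~\ref{cor-updownpairs} this immediately forces $(v,w_2)\in e(D)$, i.e.\ $w_2\in e(D)(v)=W$, which is exactly the required closure.

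For the ``if'' direction, I would define
\[
R \ass \{(v',w')\mid \text{there exists } w\in W \text{ with } (v,w)\pairs{\sigequiv}{\updownequiv{k}}(v',w')\},
\]
and then verify that $R$ satisfies the single hypothesis of Theorem~\ref{theo-updown-bp}: whenever $(v_1,w_1)\in R$ and $(v_1,w_1)\pairs{\sigequiv}{\updownequiv{k}}(v_2,w_2)$, also $(v_2,w_2)\in R$. This reduces to the transitivity of $\pairs{\sigequiv}{\updownequiv{k}}$, which Lemma~\ref{lem-updown1} delivers at once by characterizing that relation as the conjunction of signature congruence $\sigequiv$ and $k$-equivalence of both pairs of endpoints, each of which is an equivalence relation. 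Theorem~\ref{theo-updown-bp} then produces an expression $e$ with $e(D)=R$. To finish, I would show $e(D)(v)=W$ by two inclusions: $W\subseteq e(D)(v)$ follows from reflexivity of $\pairs{\sigequiv}{\updownequiv{k}}$, since for $w\in W$ we have $(v,w)\pairs{\sigequiv}{\updownequiv{k}}(v,w)$ and hence $(v,w)\in R=e(D)$; and conversely, if $w'\in e(D)(v)$ then $(v,w')\in R$, so some $w\in W$ satisfies $(v,w)\pairs{\sigequiv}{\updownequiv{k}}(v,w')$, whence the assumed closure of $W$ gives $w'\in W$.

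The only step carrying any subtlety is the verification that $R$ is closed under $\pairs{\sigequiv}{\updownequiv{k}}$, i.e.\ the transitivity of that relation on arbitrary (not necessarily ancestor--descendant) pairs. This is precisely the obstacle, and it is entirely dissolved by Lemma~\ref{lem-updown1}---indeed, this is exactly why that lemma was proved in the general two-way setting rather than only for ancestor--descendant pairs. Everything else is routine bookkeeping, and no new combinatorial argument beyond the cited results is needed.
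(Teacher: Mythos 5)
Your proposal is correct and takes essentially the paper's own (implicit) route: the paper states Theorem~\ref{theo-updown-nodelevel} as a direct recasting of Theorem~\ref{theo-updown-bp}, leaving the proof unwritten precisely because it is the same bootstrapping as in Theorem~\ref{theo-strictlydown-nodelevel}---the ``only if'' via Corollary~\ref{cor-updownpairs}, and the ``if'' by forming the closure set $R$, checking it is closed under $\pairs{\sigequiv}{\updownequiv{k}}$ (transitivity and reflexivity following from Lemma~\ref{lem-updown1}), and applying the BP theorem. Your reading of the closure hypothesis as quantifying over all of $V$ is also the intended one, matching how the paper itself argues in the downward case.
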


The specialization to the XPath algebra is as
follows.

\begin{corollary}
\label{cor-updown-nodelevel}

Let $D=(V,\textit{Ed},r,\lambda)$ be a document, let
$v$ be a node of $D$, and let $W\subseteq V$. Then there exists an
expression $e$ in the XPath algebra
such that $e(D)(v)=W$ if and only if,
for all $w_1,w_2\in W$ with $(v,w_1)\pairs{\sigequiv}{\updownequiv{3}}
(v,w_2)$, $w_1\in W$ implies $w_2\in W$.

\end{corollary}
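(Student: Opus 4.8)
The plan is to derive Corollary~\ref{cor-updown-nodelevel} as the $k=3$ specialization of Theorem~\ref{theo-updown-nodelevel}, exactly as the ``specialization'' remarks elsewhere in the paper (e.g., Corollary~\ref{cor-updownxpath} from Theorem~\ref{theo-updown}, and Corollary~\ref{cor-updown-bp} from Theorem~\ref{theo-updown-bp}) are obtained. The key observation, already recorded in the excerpt immediately before Corollary~\ref{cor-updownxpath}, is that by Proposition~\ref{prop-counting} the counting operators $\ch{1}(.)$, $\ch{2}(.)$, and $\ch{3}(.)$ are all expressible in the plain XPath algebra. Hence the XPath algebra coincides in expressive power with the XPath algebra with counting up to $k=3$, so the hypotheses of Theorem~\ref{theo-updown-nodelevel} are met for this value of~$k$.

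First I would invoke Theorem~\ref{theo-updown-nodelevel} with $k=3$. That theorem states that there exists an expression $e$ in the XPath algebra with counting up to~$3$ with $e(D)(v)=W$ if and only if, for all $w_1,w_2\in W$ with $(v,w_1)\pairs{\sigequiv}{\updownequiv{3}}(v,w_2)$, $w_1\in W$ implies $w_2\in W$. Second, I would replace ``XPath algebra with counting up to~$3$'' by ``XPath algebra'' on the left-hand side of the biconditional. This replacement is justified because the two languages define exactly the same binary relations on any fixed document: each $\ch{m}(.)$ for $m\in\{1,2,3\}$ can be rewritten by Proposition~\ref{prop-counting} into an equivalent expression using only $\down$, $\up$, $\pi_1$, and set difference, all of which are available (directly or by Proposition~\ref{prop-eliminate}) in the XPath algebra. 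Therefore the set of achievable values $e(D)(v)$ is identical for both languages, and the structural condition on~$W$ is unchanged.

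The entire argument is a one-line instantiation plus the language-equivalence remark, so there is no genuine obstacle here; the real content lives in Theorem~\ref{theo-updown-nodelevel} and ultimately in Lemma~\ref{lem-updown3} and Proposition~\ref{prop-updown}. The only point requiring a moment's care is confirming that $k=3$ is admissible: Theorem~\ref{theo-updown-nodelevel} requires $k\geq 3$, so $k=3$ sits exactly at the boundary and is permitted, and the counting-elimination via Proposition~\ref{prop-counting} covers precisely $m\leq 3$, matching this boundary. Thus the concrete proof reads as follows.

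\begin{proof}
By Proposition~\ref{prop-counting}, selection on at least $m$ children satisfying some condition (``$\ch{m}(.)$'') for $1\leq m\leq 3$ can be expressed in the XPath algebra. Consequently, the XPath algebra and the XPath algebra with counting up to~$3$ define exactly the same binary relation on every document, and in particular yield precisely the same sets $e(D)(v)$. The result now follows by specializing Theorem~\ref{theo-updown-nodelevel} to the case $k=3$.
\end{proof}
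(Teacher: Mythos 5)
Your proof is correct and matches the paper's own (implicit) argument: the paper derives this corollary from Theorem~\ref{theo-updown-nodelevel} at $k=3$ via exactly the observation, recorded before Corollary~\ref{cor-updownxpath}, that $\ch{m}(.)$ for $1\leq m\leq 3$ is expressible in the XPath algebra by Proposition~\ref{prop-counting}, so the XPath algebra and the XPath algebra with counting up to~$3$ coincide in expressive power. Your extra care about the boundary case $k=3$ and the direction of the language equivalence is sound but adds nothing beyond what the paper intends.
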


Finally, we consider the special case where navigation starts from the
root. For $v=r$, the condition $(v,w_1)\pairs{\sigequiv}{\updownequiv{k}}
(v,w_2)$ reduces to $w_1\updownequiv{k} w_2$, by
Proposition~\ref{prop-kequivalent} and Lemma~\ref{lem-weaklydown1}. 
Comparing Theorem~\ref{theo-updown-nodelevel} and
Corollary~\ref{cor-updown-nodelevel} with, respectively,
Theorem~\ref{theo-strictlydown-root} and
Corollary~\ref{cor-strictlydown-root} then immediately yields the
following.

\begin{theorem}
\label{theo-updown-root}

Let $D=(V,\textit{Ed},r,\lambda)$.
\begin{enumerate}

\item for each expression $e$ in the XPath algebra
  with counting up to~$k$, $k\geq 3$, there exists an expression $e'$ in
  the strictly downward (core) XPath algebra with counting up to~$k$
  such that $e(D)(r)=e'(D)(r)$.

\item for each expression $e$ in the XPath
  algebra, there exists an expression $e'$ in 
  the strictly downward (core) XPath algebra with counting up to~3
  such that $e(D)(r)=e'(D)(r)$.

\end{enumerate}
\end{theorem}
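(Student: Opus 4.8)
The plan is to obtain both statements by specializing the node-level characterizations already proved to the case $v=r$ and matching them against the corresponding strictly downward results. For claim~(1), I would fix an expression $e$ in the XPath algebra with counting up to~$k$ (with $k\geq 3$) and put $W\ass e(D)(r)$. Applying Theorem~\ref{theo-updown-nodelevel} with $v=r$, the mere fact that $W$ has the form $e(D)(r)$ guarantees that $W$ is closed under the pair relation at the root: for all $w_1,w_2$ with $w_1\in W$ and $(r,w_1)\pairs{\sigequiv}{\updownequiv{k}}(r,w_2)$, one has $w_2\in W$.

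The crucial reduction, already noted just before the statement, is that at the root this pair relation collapses to a relation on single nodes. Since $(r,w_1)$ and $(r,w_2)$ are ancestor-descendant pairs, Lemma~\ref{lem-weaklydown1} gives that $(r,w_1)\pairs{\sigequiv}{\updownequiv{k}}(r,w_2)$ holds exactly when $(r,w_1)\sigequiv(r,w_2)$ \emph{and} $w_1\updownequiv{k} w_2$. But $w_1\updownequiv{k} w_2$ already places $w_1$ and $w_2$ at equal depth, so the signature condition $(r,w_1)\sigequiv(r,w_2)$ is automatically satisfied (Proposition~\ref{prop-kequivalent}). Hence $W$ is closed under $\updownequiv{k}$ in precisely the sense required by Theorem~\ref{theo-strictlydown-root}.

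Applying Theorem~\ref{theo-strictlydown-root} with the same value of~$k$ then yields an expression $e'$ in the strictly downward (core) XPath algebra with counting up to~$k$ satisfying $e'(D)(r)=W=e(D)(r)$, which is claim~(1). For claim~(2), I would invoke the $k=3$ instance: by Proposition~\ref{prop-counting} the operations $\ch{1}(.)$, $\ch{2}(.)$, and $\ch{3}(.)$ are all definable in $\BL(\down,\up,-)$, so the XPath algebra coincides in expressive power with the XPath algebra with counting up to~3. Thus every expression of the XPath algebra is also an expression of the XPath algebra with counting up to~3, and claim~(2) follows from claim~(1) at $k=3$. Alternatively, one runs the same specialization to the root through Corollary~\ref{cor-updown-nodelevel} and Theorem~\ref{theo-strictlydown-root} at $k=3$.

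There is no genuine difficulty here beyond lining up the two ``if and only if'' conditions; the one point meriting care is the reduction of the pair relation to $\updownequiv{k}$ at the root, where one must confirm that the signature requirement $(r,w_1)\sigequiv(r,w_2)$ carries no information beyond what $w_1\updownequiv{k} w_2$ already provides. Once this is checked, the closure conditions appearing in Theorem~\ref{theo-updown-nodelevel} (respectively Corollary~\ref{cor-updown-nodelevel}) and in Theorem~\ref{theo-strictlydown-root} become literally identical, and the two claims drop out immediately.
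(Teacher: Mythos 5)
Your proof is correct and takes essentially the same route as the paper: the paper also obtains this theorem by specializing Theorem~\ref{theo-updown-nodelevel} and Corollary~\ref{cor-updown-nodelevel} to $v=r$, reducing the condition $(r,w_1)\pairs{\sigequiv}{\updownequiv{k}}(r,w_2)$ to $w_1\updownequiv{k} w_2$ via Proposition~\ref{prop-kequivalent} and Lemma~\ref{lem-weaklydown1}, and then comparing the resulting closure conditions with Theorem~\ref{theo-strictlydown-root}. Your explicit verification that the signature requirement carries no extra information at the root, and your use of Proposition~\ref{prop-counting} to pass from the XPath algebra to counting up to~3 for claim~(2), are precisely the steps the paper compresses into its one-sentence justification.
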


Theorem~\ref{theo-updown-root} extends
Theorem~\ref{theo-weaklydown-root}. When navigating from the root, the
only thing that the full XPath algebra adds compared to using the
strictly downward (core) XPath algebra is its ability to select on at
least~2 and on at least~3 children satisfying some condition.
\subsection{Core languages with difference for two-way navigation}
\label{subsec-generaldiffcore}

We now investigate what changes if we replace a standard language with
difference for two-way navigation by the corresponding core
language. The most important observation is that both languages are
\emph{not\/} equivalent, unlike in the cases of downward or upward
navigation.

\begin{figure}
\begin{center}
\resizebox{0.5\textwidth}{!}{\input{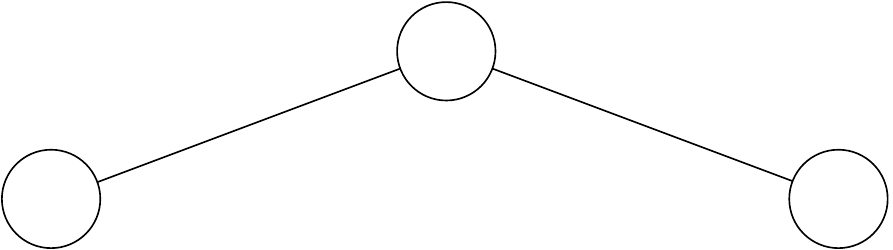_t}}
\end{center}
\caption[a]{Document of Example~\ref{ex-notequivalent}.}
\label{fig-bp-notequivalent}
\end{figure}

\begin{example}
\label{ex-notequivalent}

Let $D=(V,\textit{Ed},r,\lambda)$ be the
 very simple document in
Figure~\ref{fig-bp-notequivalent}. For every value of $k\geq
2,$\footnote{We will not consider $k=1$, because both $\ch{1}(.)$ and
  $\ch{2}(.)$ can be expressed in the core XPath algebra, by
  Proposition~\ref{prop-counting}.} $e\ass\up/\down-\varepsilon$ is an
expression in the XPath algebra with counting up to~$k$. We have that
$e(D)=\{(v,w),(w,v)\}$. From Proposition~\ref{prop-updowncore}, it will
follow, however, that, for every expression $e'$ in the corresponding
core language, $(v,w)\in e'(D)$ implies that not only $(w,v)\in e'(D)$,
but also $(v,v)\in e'(D)$ and $(w,w)\in e'(D)$.

\end{example}

We now explore which changes occur when we try to make the same
reasoning as in Section~\ref{subsec-generaldiffnocore}. 

As Example~\ref{ex-notequivalent} suggests, there is no hope that we
can express congruence in the core XPath algebra with counting up
to~$k$,\footnote{This is the name we give to the core language
  corresponding to the (standard) XPath algebra with counting up
  to~$k$} for any $k\geq 2$. Therefore we shall have to work with
subsumption instead of congruence.

Lemma~\ref{lem-updown1} still holds if we replace congruence by
subsumption. We may of course still use Lemma~\ref{lem-updown2}
(as replacing congruence by subsumption here would 
yield a weaker statement). Lemma~\ref{lem-updown3} also survives
replacing congruence by subsumption, except that we can then strengthen
its statement, as follows.

\begin{lemma}
\label{lem-updowncore3}

Let $D=(V,\textit{Ed},r,\lambda)$ be a document, let $v_1$, $w_1$,
$v_2$, and $w_2$ be nodes of $D$ such that
$(v_1,w_1)\pairs{\siggeq}{\updownequiv{k}}(v_2,w_2)$, and let $k\geq
2$. Then, for every node $y_1$ of $D$, there exists a node $y_2$ of
$D$ such that $(v_1,y_1)\pairs{\siggeq}{\updownequiv{k}}(v_2,y_2)$, and
$(y_1,w_1)\pairs{\siggeq}{\updownequiv{k}}(y_2,w_2)$.

\end{lemma}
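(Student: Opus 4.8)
The plan is to reproduce the case analysis in the proof of Lemma~\ref{lem-updown3} essentially verbatim, replacing congruence (``$\sigequiv$'') by subsumption (``$\siggeq$'') everywhere, and to note that every auxiliary result used there is available in the form required here. The decisive observations are that Lemma~\ref{lem-updown1} remains valid when congruence is replaced by subsumption (as already remarked in the text), so it can still convert an endpoint statement---$(a,b)\siggeq(a',b')$ together with $a\updownequiv{k}a'$ and $b\updownequiv{k}b'$---into the full relation $(a,b)\pairs{\siggeq}{\updownequiv{k}}(a',b')$; and that Lemma~\ref{lem-updown2} may be used unchanged, since the congruence it delivers implies the subsumption we want. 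With these in hand, Cases~1--7 carry over with only cosmetic changes: wherever Lemma~\ref{lem-updown3} derives a congruence of a subpair through Proposition~\ref{prop-subpairs} or Proposition~\ref{prop-subsumption}, I would instead invoke the subsumption clauses of those results (in particular Proposition~\ref{prop-subsumption}, (\ref{subsumption-2}) and (\ref{subsumption-3})), whose hypotheses on the location of the relevant top nodes hold for exactly the same geometric reasons as before. The residual requirement $k\geq 2$ enters only through the appeals to Lemma~\ref{lem-updown2} in Cases~5--7.

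The genuine content, and the reason the hypothesis improves from $k\geq 3$ to $k\geq 2$, lies in Case~8, where $\top(v_1,y_1)=\top(y_1,w_1)=\top(v_1,w_1)$. Here I would let $t$ denote the node corresponding to $\top(v_1,w_1)$ under the given relation $(v_1,w_1)\pairs{\siggeq}{\updownequiv{k}}(v_2,w_2)$, so that $\top(v_1,w_1)\updownequiv{k}t$, and let $z_1$ be the child of $\top(v_1,w_1)$ on the path toward $y_1$, with $p\geq 1$ the distance from $\top(v_1,w_1)$ to $y_1$. Since $\top(v_1,w_1)\downequiv{k}t$, the node $t$ has a child $z_2$ with $z_1\downequiv{k}z_2$; and because the parents $\top(v_1,w_1)$ and $t$ are themselves $k$-equivalent, this upgrades to $z_1\updownequiv{k}z_2$. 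The crucial simplification over Lemma~\ref{lem-updown3} is that, for a subsumption conclusion, there is no need to arrange that $z_2$ lie off the path from $v_2$ to $w_2$: any $k$-equivalent child of $t$ suffices. This is precisely why the ``three mutually $k$-equivalent children'' argument that forced $k\geq 3$ in Lemma~\ref{lem-updown3} disappears here.

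To complete Case~8, I would apply Lemma~\ref{lem-weaklydown2}, (\ref{weakdown1}), to $z_1\updownequiv{k}z_2$ and to the descendant $y_1$ of $z_1$, producing a descendant $y_2$ of $z_2$ with $(z_1,y_1)\pairs{\sigequiv}{\updownequiv{k}}(z_2,y_2)$; in particular $y_1\updownequiv{k}y_2$, and $y_2$ lies at distance $p$ below $t$. Writing $\sig(v_1,w_1)=\up^m/\down^n$, so that $t$ is the ancestor of $v_2$ at distance $m$ and of $w_2$ at distance $n$, the two needed signature memberships then follow by direct inspection: $(v_2,y_2)\in\up^m/\down^p(D)=\sig(v_1,y_1)(D)$ and $(y_2,w_2)\in\up^p/\down^n(D)=\sig(y_1,w_1)(D)$, whence $(v_1,y_1)\siggeq(v_2,y_2)$ and $(y_1,w_1)\siggeq(y_2,w_2)$. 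Feeding these, together with $v_1\updownequiv{k}v_2$, $w_1\updownequiv{k}w_2$ (the endpoints of the hypothesis) and $y_1\updownequiv{k}y_2$, into the subsumption form of Lemma~\ref{lem-updown1} yields both $(v_1,y_1)\pairs{\siggeq}{\updownequiv{k}}(v_2,y_2)$ and $(y_1,w_1)\pairs{\siggeq}{\updownequiv{k}}(y_2,w_2)$, as required.

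The step I expect to be most delicate is the bookkeeping common to Cases~5--8: checking that the node $y_2$ constructed for the first of the two required relations simultaneously satisfies the second. This is where the positional hypotheses on $\top(v_1,y_1)$ and $\top(y_1,w_1)$ must be invoked to license Proposition~\ref{prop-subsumption}, (\ref{subsumption-3}), and where I would need to verify that passing from congruence to subsumption does not weaken the ancestor conditions that proposition demands. Since those conditions depend only on the undirected-path geometry, which is identical to that in Lemma~\ref{lem-updown3}, I expect them to hold without modification.
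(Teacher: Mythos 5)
Your proposal is correct and takes essentially the same route as the paper: the paper's proof of this lemma is precisely the observation that the proof of Lemma~\ref{lem-updown3} can be recast with subsumption in place of congruence, the only use of $k\geq 3$ there being Case~8's need to keep the new node off the path from $\top(v_2,w_2)$ to $v_2$ and to $w_2$, a requirement that becomes vacuous for subsumption. Your explicit reconstruction of Case~8 (introducing the node $t$ corresponding to $\top(v_1,w_1)$, taking any $k$-equivalent child $z_2$ of $t$, applying Lemma~\ref{lem-weaklydown2}, (\ref{weakdown1}), and closing with the subsumption form of Lemma~\ref{lem-updown1}) correctly fills in the details the paper leaves implicit.
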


\begin{proof}
The only case in the proof of Lemma~\ref{lem-updown3} where we used
$k\geq 3$ is Case~\ref{updown-case8}
($\top(v_1,y_1)=\top(y_1,w_1)=\top(v_1,w_1)$) to guarantee that the
path from $\top(v_2,w_2)$ to $y_2$ has no overlap with both the path
from $\top(v_2,w_2)$ to $v_2$ and the path 
from $\top(v_2,w_2)$ to $w_2$. As this is no concern anymore when
we consider subsumption rather than congruence, the condition $k\geq
2$ suffices to recast the proof of Lemma~\ref{lem-updown3} into a
proof of Lemma~\ref{lem-updowncore3}.
\end{proof}

Lemma~\ref{lem-updown3} was used to complete the induction step for
composition (``$/$'') in the proof of Proposition~\ref{prop-updown}. If we
replace Lemma~\ref{lem-updown3} by Lemma~\ref{lem-updowncore3}, we can also
avoid making the assumption $\geq 3$ here. Thanks to the restricted use of
difference in core languages, we can also get away with subsumption
instead of congruence.

\begin{proposition}
\label{prop-updowncore}

Let $k\geq 2$, and let $E$ be the set of all nonbasic operations in
Table~\ref{tab-binops}, except for selection on at least $m$ children 
satisfying some condition (``$\ch{m}(.)$'') for $m>k$. Let $e$ be an
expression in $\CL(E)$. 
Let $D=(V,\textit{Ed},r,\lambda)$ be a
document, and let $v_1$, $w_1$, $v_2$, 
and $w_2$ be nodes of $D$.  Assume furthermore that
$(v_1,w_1)\pairs{\siggeq}{\updownequiv{k}} (v_2,w_2)$. Then,
$(v_1,w_1)\in e(D)$ implies $(v_2,w_2)\in e(D)$.

\end{proposition}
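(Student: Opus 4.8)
The plan is to mimic the structural induction in the proof of Proposition~\ref{prop-updown}, but with three adaptations dictated by the core restriction: I argue with subsumption rather than congruence, I establish only the one-directional implication, and I invoke Lemma~\ref{lem-updowncore3} in place of Lemma~\ref{lem-updown3} so that the hypothesis $k\geq 2$ (rather than $k\geq 3$) suffices. The base case for the atomic operators is unchanged. For composition $e\ass e_1/e_2$, given $(v_1,w_1)\in e(D)$ I pick an intermediate node $y_1$ and apply Lemma~\ref{lem-updowncore3} to obtain a matching $y_2$ with $(v_1,y_1)\pairs{\siggeq}{\updownequiv{k}}(v_2,y_2)$ and $(y_1,w_1)\pairs{\siggeq}{\updownequiv{k}}(y_2,w_2)$; the induction hypothesis then transports membership in $e_1$ and $e_2$, yielding $(v_2,w_2)\in e(D)$. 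For union and inverse the arguments are immediate (the latter using Proposition~\ref{prop-subsumption}, (\ref{subsumption-2})), and for the counting operations $\ch{m}(f)$ with $m\leq k$ the counting argument of Proposition~\ref{prop-strictlydown} carries over essentially verbatim, since $\updownequiv{k}$-subsumption still forces agreement of the $\downequiv{k}$-classes of children up to threshold $k$.

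The crux is the two projection cases, and here the core structure is essential. Consider $e\ass\pi_1(f)$, where $f$ is now a \emph{boolean combination} of core-language expressions built with $\cup$ and the operations in $E\cap\{\cap,-\}$. From $(v_1,v_1)\in e(D)$ I obtain some $z_1$ with $(v_1,z_1)\in f(D)$, and from the pair subsumption I extract $v_1\updownequiv{k} v_2$. Using Lemma~\ref{lem-updown2}, (\ref{updown1}), I would then produce a node $z_2$ with the \emph{stronger} property $(v_1,z_1)\pairs{\sigequiv}{\updownequiv{k}}(v_2,z_2)$, that is, \emph{congruence}, not merely subsumption. This is exactly the point at which the top-level relaxation to subsumption is paid back: the negation-inducing operations are confined inside projections, and once inside a projection we regain the freedom to select a congruent partner, which is what the difference operator will require.

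It then remains to prove, by a secondary induction on the boolean structure of $f$, the sub-claim that whenever $(v_1,z_1)\pairs{\sigequiv}{\updownequiv{k}}(v_2,z_2)$ one has $(v_1,z_1)\in f(D)$ if and only if $(v_2,z_2)\in f(D)$. Its base case, where $f$ is a single core-language expression, follows from the main induction hypothesis applied in \emph{both} directions, since congruence is symmetric and therefore entails $\updownequiv{k}$-subsumption each way; the cases $f_1\cup f_2$ and $f_1\cap f_2$ merely propagate the biconditional through ``or'' and ``and,'' while the case $f_1-f_2$ uses the biconditional on each operand to handle the complement. Applying this sub-claim to $z_2$ gives $(v_2,z_2)\in f(D)$ and hence $(v_2,v_2)\in\pi_1(f)(D)$; the case $\pi_2(f)$ is symmetric, via Lemma~\ref{lem-updown2}, (\ref{updown2}). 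I expect the main obstacle to be precisely the bookkeeping in this sub-claim: one must see clearly that the \emph{congruence} recovered inside the projection (through Lemma~\ref{lem-updown2}) is what licenses the two-directional reasoning demanded by set difference, whereas the \emph{outer} induction can run on one-directional subsumption because, by the definition of $\CL(E)$, no intersection or difference occurs outside a projection.
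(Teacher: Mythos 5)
Your proof is correct and follows essentially the same route as the paper's: the same structural induction carried out with subsumption and a one-directional implication at the top level, Lemma~\ref{lem-updowncore3} in place of Lemma~\ref{lem-updown3} for the composition case (which is exactly what permits $k\geq 2$), and---crucially---the recovery of full \emph{congruence} via Lemma~\ref{lem-updown2} inside projections, so that the induction hypothesis can be applied in both directions to handle the negation hidden in set difference. The only, immaterial, difference is bookkeeping inside the projection case: the paper first normalizes the boolean combination $f$ to a union-free conjunction of expressions and complemented expressions and transfers the witness conjunct by conjunct, whereas you run a secondary induction on the boolean structure of $f$ proving a biconditional under congruence; both arguments rest on precisely the same point.
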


\begin{proof}
The proof goes along the same lines as the proof of
Proposition~\ref{prop-updown}, except that, in the induction step, we
need not consider the case of set difference (``$-$''). However, we
must consider instead the case where the expression is of the form
$e\ass\pi_1(f)$ or $e\ass\pi_2(f)$ with $f$ a Boolean combination of
expressions of $\CL(E)$ satisfying the induction hypothesis. For
reasons of symmetry, we only consider the case $e\ass\pi_1(f)$.
Without loss of generality, we may assume that $f$ is
union-free. Indeed, we can always rewrite $f$ in disjunctive normal
form, and, for $f=f_1\cup f_2$, $\pi_1(f)=\pi_1(f_1)\cup\pi_1(f_2)$. If,
for an expression $g$ in $\CL(E)$, we define $\overline{g}$ by
$\overline{g}(D)\ass V\times V-g(D)$, we can write $f=f_1\cap\ldots
f_p\cap \overline{g_1}\cap\ldots \overline{g_q}$ for some $p\geq 1$
and $q\geq 0$, with $f_1,\ldots,f_p,g_1,\ldots,g_q$ in $\CL(E)$ and
satisfying the induction hypothesis. In particular, if
$(v_1,v_1)\in\pi_1(f)(D)$,\footnote{In this case, $v_1=w_1$ and
  $v_2=w_2$.} there exists a node $y_1$ in $D$ such that 
$(v_1,y_1)\in f_1(D),\ldots,(v_1,y_1)\in f_p(D)$ and
$(v_1,y_1)\notin g_1(D),\ldots,(v_1,y_1)\notin g_p(D)$.
By Lemma~\ref{lem-updown2}, there exists a node $y_2$ in $D$ such that
$(v_1,y_1)\pairs{\sigequiv}{\updownequiv{k}}(v_2,y_2)$. 
Hence, $(v_1,y_1)\pairs{\siggeq}{\updownequiv{k}}(v_2,y_2)$ \textit{and\/}
$(v_2,y_2)\pairs{\siggeq}{\updownequiv{k}}(v_1,y_1)$.
By the induction hypothesis, 
$(v_2,y_2)\in f_1(D),\ldots,(v_2,y_2)\in f_p(D)$. Now, assume that, for
some $j$, $1\leq j\leq q$, $(v_2,y_2)\in g_j(D)$. Then, again by
the induction hypothesis, $(v_1,y_1)\in g_j(D)$, a contradiction. Hence, 
$(v_2,y_2)\notin g_1(D),\ldots,(v_2,y_2)\notin g_p(D)$. We may thus
conclude that $(v_2,v_2)\in \pi_1(f)$.
\end{proof}

By applying Proposition~\ref{prop-updowncore} twice, we obtain the following.

\begin{corollary}
\label{cor-updowncore1}

Let $k\geq 2$, and let $E$ be the set of all nonbasic operations in
Table~\ref{tab-binops}, except for selection on at least $m$ children 
satisfying some condition (``$\ch{m}(.)$'') for $m>k$. Let $e$ be an
expression in $\CL(E)$. 
Let $D=(V,\textit{Ed},r,\lambda)$ be a
document, and let $v_1$, $w_1$, $v_2$, 
and $w_2$ be nodes of $D$.  Assume furthermore that
$(v_1,w_1)\pairs{\sigequiv}{\updownequiv{k}} (v_2,w_2)$. Then,
$(v_1,w_1)\in e(D)$ if and only if $(v_2,w_2)\in e(D)$.

\end{corollary}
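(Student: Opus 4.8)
The plan is to derive the biconditional from two applications of Proposition~\ref{prop-updowncore}, one in each direction, by exploiting the symmetry built into $\pairs{\sigequiv}{\updownequiv{k}}$-congruence. The essential observation is that congruence of a pair of pairs amounts to \emph{mutual} $\pairs{\siggeq}{\updownequiv{k}}$-subsumption, so the single-direction implication furnished by the Proposition can simply be invoked twice.

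Concretely, I would first verify that $(v_1,w_1)\pairs{\sigequiv}{\updownequiv{k}}(v_2,w_2)$ implies both $(v_1,w_1)\pairs{\siggeq}{\updownequiv{k}}(v_2,w_2)$ and $(v_2,w_2)\pairs{\siggeq}{\updownequiv{k}}(v_1,w_1)$. The forward subsumption is immediate from Definition~\ref{def-pairsofnodes}: since $(v_1,w_1)\sigequiv(v_2,w_2)$ entails $(v_1,w_1)\siggeq(v_2,w_2)$, and the condition on corresponding nodes is literally the same in the definitions of $\pairs{\sigequiv}{\updownequiv{k}}$ and $\pairs{\siggeq}{\updownequiv{k}}$, nothing further is needed. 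For the reverse subsumption, $(v_2,w_2)\siggeq(v_1,w_1)$ follows from the symmetry of signature congruence, and the corresponding-nodes condition follows because congruent signatures induce a \emph{bijection} between the two undirected paths: the node $y_1$ on the path from $v_1$ to $w_1$ associated with a given $y_2$ on the path from $v_2$ to $w_2$ is precisely the node whose partner is $y_2$ (here one uses $\sig(v_2,y_2)=\sig(v_1,y_1)$ to see that the correspondence is the same bijection read in the opposite direction). Since $\updownequiv{k}$ is symmetric---immediate from Definitions~\ref{def-kdownwardequivalent} and~\ref{def-kequivalent}---the relation $y_1\updownequiv{k} y_2$ guaranteed by $\pairs{\sigequiv}{\updownequiv{k}}$ yields the required $y_2\updownequiv{k} y_1$.

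With both subsumptions established, I would apply Proposition~\ref{prop-updowncore} to $(v_1,w_1)\pairs{\siggeq}{\updownequiv{k}}(v_2,w_2)$ to conclude that $(v_1,w_1)\in e(D)$ implies $(v_2,w_2)\in e(D)$, and to $(v_2,w_2)\pairs{\siggeq}{\updownequiv{k}}(v_1,w_1)$ to conclude the converse implication; together these give the biconditional. The only point requiring any care---and the nearest thing to an obstacle---is making precise the symmetry of the corresponding-nodes relation under congruent signatures; once one notes that the bijection $y_1\leftrightarrow y_2$ is well defined and that $\updownequiv{k}$ is symmetric, both subsumptions, and hence the result, follow with no further calculation.
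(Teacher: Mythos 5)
Your proposal is correct and matches the paper's own derivation, which states exactly that Corollary~\ref{cor-updowncore1} follows ``by applying Proposition~\ref{prop-updowncore} twice''; your careful verification that $\pairs{\sigequiv}{\updownequiv{k}}$-congruence yields mutual $\pairs{\siggeq}{\updownequiv{k}}$-subsumption (via symmetry of the node correspondence and of $\updownequiv{k}$) simply makes explicit what the paper leaves implicit.
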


As in Section~\ref{subsec-weaklydownnecessary}, we can in two steps infer the
following result from Corollary~\ref{cor-updowncore1}.

\begin{corollary}
\label{cor-updowncore2}

Let $k\geq 2$, and let $E$ be a set of nonbasic operations
not containing selection on at least $m$ children
satisfying some condition (``$\ch{m}(.)$'') for $m>k$. 
Consider the language $\CL(E)$. Let
$D=(V,\textit{Ed},r,\lambda)$ be a document, and let $v_1$ and $v_2$
be nodes of $D$. If $v_1\updownequiv{k} v_2$, then $v_1\expequiv v_2$.

\end{corollary}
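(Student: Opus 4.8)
The plan is to mirror the two-step bootstrapping already used in Section~\ref{subsec-weaklydownsufficient} (cf.\ the passage from Proposition~\ref{prop-weaklydown} via Corollary~\ref{cor-weaklydown1} to Corollary~\ref{cor-weaklydown2}), now drawing on Corollary~\ref{cor-updowncore1} in place of the sufficiency proposition for the ancestor-descendant case. Since ``$\updownequiv{k}$'' is an equivalence relation, the hypothesis $v_1\updownequiv{k} v_2$ is symmetric in $v_1$ and $v_2$, so the whole statement reduces to establishing a single implication, $v_1\expgeq v_2$, and then invoking the same argument with $v_1$ and $v_2$ interchanged to obtain $v_2\expgeq v_1$. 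By Definition~\ref{def-ee}, these together give $v_1\expequiv v_2$. These are the two steps.

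For the single direction I would take an arbitrary expression $e$ in $\CL(E)$ with $e(D)(v_1)\neq\emptyset$ and fix a node $w_1$ with $(v_1,w_1)\in e(D)$. The key move is to promote the node-level equivalence $v_1\updownequiv{k} v_2$ to a pair-level congruence to which Corollary~\ref{cor-updowncore1} can be applied. Concretely, by Lemma~\ref{lem-updown2},~(\ref{updown1}), there exists a node $w_2$ of $D$ with $(v_1,w_1)\pairs{\sigequiv}{\updownequiv{k}}(v_2,w_2)$; note that the hypothesis $k\geq 2$ here is precisely what Lemma~\ref{lem-updown2} requires, so the side conditions line up. Applying Corollary~\ref{cor-updowncore1} to this congruent pair then yields $(v_2,w_2)\in e(D)$, whence $e(D)(v_2)\neq\emptyset$. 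As $e$ was arbitrary in $\CL(E)$, this establishes $v_1\expgeq v_2$, and the symmetric repetition closes the argument.

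The proof presents no genuine obstacle; the one point requiring care is the choice of bridging lemma. One must use Lemma~\ref{lem-updown2},~(\ref{updown1}) rather than Lemma~\ref{lem-updown1}, because the target node $w_2$ is not given in advance: the only content of the step is the \emph{existence} of some $w_2$ making the two paths congruent, and it is exactly Lemma~\ref{lem-updown2} that guarantees such a $w_2$ whenever $v_1\updownequiv{k} v_2$. Once the congruent pair is produced, Corollary~\ref{cor-updowncore1} supplies the indistinguishability, and symmetry finishes. (It is worth emphasizing that we need only the \emph{standard} equivalence version of Corollary~\ref{cor-updowncore1}, not the one-directional subsumption Proposition~\ref{prop-updowncore}, since the pair obtained is congruent and the conclusion sought is two-sided.)
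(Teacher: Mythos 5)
Your proof is correct and takes essentially the same route as the paper: the paper obtains Corollary~\ref{cor-updowncore2} ``in two steps'' from Corollary~\ref{cor-updowncore1}, mirroring the pattern of Corollaries~\ref{cor-strictlydown1} and~\ref{cor-strictlydown2}, and your argument spells out exactly those steps---use Lemma~\ref{lem-updown2},~(\ref{updown1}) (whose $k\geq 2$ hypothesis you rightly note matches) to promote $v_1\updownequiv{k} v_2$ to a congruent pair, apply Corollary~\ref{cor-updowncore1} to transfer membership in $e(D)$, and finish by symmetry of ``$\updownequiv{k}$''. The only unstated (and harmless) detail is that an expression of $\CL(E)$ for the given $E$ is also an expression of the larger language to which Corollary~\ref{cor-updowncore1} is stated to apply, the same one-line observation the paper makes in the proof of Corollary~\ref{cor-strictlydown2}.
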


Notice that, for $k\geq 3$, Corollary~\ref{cor-updowncore2} is also an
immediate consequence of Corollary~\ref{cor-updown}. Because we are
dealing with a weaker language, we can also include the case $k=2$, however.

We already observed that Proposition~\ref{prop-weaklydownnecessary} of
Section~\ref{subsec-weaklydownnecessary} is also
applicable to an important class of languages allowing two-way navigation.
The core language for two-way navigation satisfying both
Corollary~\ref{cor-updowncore2} and
Proposition~\ref{prop-weaklydownnecessary} is
$\CL(\down,\up,\pi_1,\pi_2,\ch{1}(.),\ldots,\ch{k}(.),-)$.\footnote{Inverse
  (``$\mathstrut^{-1}$'') is redundant, by the identities in the proof
  of Proposition~\ref{prop-eliminate}, complemented by
  $\pi_1(e)^{-1}(D)=\pi_1(e)(D)$ and $\pi_2(e)^{-1}(D)=\pi_2(e)(D)$.}
We call this language the \emph{core XPath algebra with counting up
  to~$k$\/}. Combining the aforementioned results, we obtain the following.

\begin{theorem}
\label{theo-updowncore}

Let $k\geq 2$, and consider the core XPath algebra with counting up to~$k$.
Let $D=(V,\textit{Ed},r,\lambda)$ be a document, and let $v_1$ and $v_2$
be nodes of $D$. Then, $v_1\expequiv v_2$ if and only if
$v_1\updownequiv{k} v_2$.

\end{theorem}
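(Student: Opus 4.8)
The plan is to establish the two implications of the biconditional separately, each by invoking a one-directional result already proved for this exact language, so that the theorem becomes a matter of assembly rather than fresh argument. Recall that the core XPath algebra with counting up to~$k$ is $\CL(\down,\up,\pi_1,\pi_2,\ch{1}(.),\ldots,\ch{k}(.),-)$, i.e.\ $\CL(E)$ with $E=\{\down,\up,\pi_1,\pi_2,\ch{1}(.),\ldots,\ch{k}(.),-\}$. The forward direction, that $v_1\updownequiv{k} v_2$ implies $v_1\expequiv v_2$, is precisely the content of Corollary~\ref{cor-updowncore2}; the converse, that $v_1\expequiv v_2$ implies $v_1\updownequiv{k} v_2$, is an instance of Proposition~\ref{prop-weaklydownnecessary}. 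Thus the whole argument reduces to checking that our fixed language satisfies the hypotheses of each of these two results and then quoting them.

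First I would verify the hypotheses of Corollary~\ref{cor-updowncore2}: it requires only that $E$ contain no counting operator $\ch{m}(.)$ with $m>k$, which holds by construction since $E$ stops at $\ch{k}(.)$. Hence $v_1\updownequiv{k} v_2$ yields $v_1\expequiv v_2$ at once. Next I would verify the hypotheses of Proposition~\ref{prop-weaklydownnecessary}: it demands a navigation operator (we have both $\down$ and $\up$), set difference (we have $-$), both projections $\pi_1$ and $\pi_2$ (present), and expressibility of $\ch{m}(.)$ for every $m=1,\ldots,k$ (all present). All hold, so $v_1\expequiv v_2$ gives $v_1\updownequiv{k} v_2$. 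Combining the two implications closes the biconditional.

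The one point deserving care — and the reason the theorem is stated for $k\ge 2$ rather than $k\ge 1$ — is the mismatch between the admissible ranges of $k$ for the two directions. Proposition~\ref{prop-weaklydownnecessary} holds already for $k\ge 1$, but the sufficiency direction for two-way languages needs a larger~$k$, because the construction underlying it (tracing a witnessing node $y_1$ through the composition case) must find a $k$-equivalent copy $y_2$ lying off the path joining the image pair. For the standard language this forced $k\ge 3$ (Corollary~\ref{cor-updown}); the gain in the core setting is that Lemma~\ref{lem-updowncore3}, by working with subsumption rather than congruence, no longer needs to keep $y_2$ off \emph{both} branches below $\top(v_2,w_2)$, so $k\ge 2$ already suffices (Corollary~\ref{cor-updowncore2}). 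Hence $k\ge 2$ is exactly the common range on which both cited results apply, and no work beyond assembling them is required.
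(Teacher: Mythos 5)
Your proof is correct and is essentially the paper's own argument: the paper obtains Theorem~\ref{theo-updowncore} exactly as you do, by combining Corollary~\ref{cor-updowncore2} (sufficiency of $v_1\updownequiv{k} v_2$) with Proposition~\ref{prop-weaklydownnecessary} (necessity), after observing that $\CL(\down,\up,\pi_1,\pi_2,\ch{1}(.),\ldots,\ch{k}(.),-)$ satisfies the hypotheses of both. Your closing explanation of why $k\ge 2$ (rather than $k\ge 3$) suffices in the core setting also matches the paper's reasoning via Lemma~\ref{lem-updowncore3}.
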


By Proposition~\ref{prop-counting}, selection on up to two children
satisfying some condition (``$\ch{m}(.)$,'' $1\leq m\leq 2$) can be
expressed in the core XPath algebra. Hence, a special case arises for
$k=2$:

\begin{corollary}
\label{cor-updownxpathcore}

Consider the core XPath algebra. Let
$D=(V,\textit{Ed},r,\lambda)$ be a document, and let $v_1$ and $v_2$
be nodes of $D$. Then, $v_1\expequiv v_2$ if and only if
$v_1\updownequiv{2} v_2$.

\end{corollary}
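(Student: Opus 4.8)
The plan is to deduce this corollary from Theorem~\ref{theo-updowncore} in the special case $k=2$, by showing that the core XPath algebra $\CL(\down,\up,\pi_1,\pi_2,-,\cap)$ and the core XPath algebra with counting up to~$2$, namely $\CL(\down,\up,\pi_1,\pi_2,\ch{1}(.),\ch{2}(.),-)$, have the same expressive power at the level of queries. Once this is established, the two languages define exactly the same binary relations on every document, so they induce the same notion of expression-equivalence on nodes; the desired equivalence $v_1\expequiv v_2 \iff v_1\updownequiv{2} v_2$ is then literally the statement of Theorem~\ref{theo-updowncore} for $k=2$. First I would record that this mirrors Corollary~\ref{cor-updownxpath}, where the full XPath algebra needed $k=3$; the improvement to $k=2$ here reflects exactly which counting operators the core language can simulate.

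The key step is to verify that $\ch{1}(.)$ and $\ch{2}(.)$ are expressible in the core XPath algebra. For $\ch{1}$ this is immediate from Proposition~\ref{prop-counting}, since $\ch{1}(e)=\pi_1(\down/e)$ contains no difference and is therefore already a legal core expression whenever $e$ is. For $\ch{2}$, the expression $\pi_1(\down/(\pi_1(e)/\up/\down/\pi_1(e)-\varepsilon))$ of Proposition~\ref{prop-counting} is \emph{not} syntactically a core expression, because the difference occurs inside a composition rather than directly as a boolean combination under a projection. I would remedy this by pushing a projection inward, using the identity $\pi_1(\down/X)=\pi_1(\down/\pi_1(X))$, to rewrite $\ch{2}(e)$ as $\pi_1(\down/\pi_1(\pi_1(e)/\up/\down/\pi_1(e)-\varepsilon))$. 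Now the difference sits directly inside a first projection, as a boolean combination of the two core expressions $\pi_1(e)/\up/\down/\pi_1(e)$ and $\varepsilon$, so the whole expression is a legal core expression whenever $e$ is. This yields the inclusion of the counting core language into the core XPath algebra; for the reverse inclusion one only needs that the explicit intersection of the core XPath algebra is redundant, since inside a projection $A\cap B$ can be replaced by $A-(A-B)$, a boolean combination using difference alone. Routine structural inductions in both directions then establish the claimed query-level equivalence.

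The main obstacle I anticipate is precisely the syntactic bookkeeping of core languages: difference and intersection are permitted only as boolean combinations immediately inside a projection, so the entire argument hinges on the rewriting above keeping the single difference operator in that privileged position. It is worth stressing why this works for $\ch{2}$ but not beyond: the analogous expression for $\ch{3}(.)$ in Proposition~\ref{prop-counting} nests differences in a way that cannot be flattened into a boolean combination under one projection, and indeed Example~\ref{ex-notequivalent} exhibits a document on which no core expression can separate $(v,w)$ from the diagonal, so congruence---and hence $\ch{3}$-style counting---is genuinely inexpressible in the core language. Thus $k=2$ is the sharpest value available here, and once the language equivalence is in hand the corollary follows with no further work.
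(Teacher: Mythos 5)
Your proof is correct and follows the paper's own route: specialize Theorem~\ref{theo-updowncore} to $k=2$ after observing, via Proposition~\ref{prop-counting}, that $\ch{1}(.)$ and $\ch{2}(.)$ are expressible in the core XPath algebra. Your explicit rewriting $\ch{2}(e)=\pi_1(\down/\pi_1(\pi_1(e)/\up/\down/\pi_1(e)-\varepsilon))$, which places the single difference in the syntactically privileged position directly inside a projection, spells out a legality check that the paper leaves implicit, and is a valid refinement rather than a departure.
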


The proof of Proposition~\ref{prop-updownconverse} can be recast to
a proof of the following converse to
Proposition~\ref{prop-updowncore}.

\begin{proposition}
\label{prop-updownconversecore}

Let $k\geq 2$, and consider the core XPath algebra with counting up to~$k$.
Let $D=(V,\textit{Ed},r,\lambda)$ be a document, and let $v_1$, $w_1$,
$v_2$, and $w_2$ be nodes of $D$. Assume furthermore that,
for each expression~$e$ in the language, $(v_1,w_1)\in e(D)$ implies
$(v_2,w_2)\in e(D)$. Then $(v_1,w_1)\pairs{\siggeq}{\updownequiv{k}}(v_2,w_2)$.

\end{proposition}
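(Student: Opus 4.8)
The plan is to recast the proof of Proposition~\ref{prop-updownconverse}, tracking the two changes forced by the core setting: the hypothesis now runs only in one direction, and accordingly the target relation is subsumption $\pairs{\siggeq}{\updownequiv{k}}$ rather than congruence.

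First I would establish the signature condition $(v_1,w_1)\siggeq(v_2,w_2)$. Since $\sig(v_1,w_1)$ is built only from ``$\down$'' and ``$\up$'', it is an expression of the core XPath algebra with counting up to~$k$, and $(v_1,w_1)\in\sig(v_1,w_1)(D)$; the hypothesis then gives $(v_2,w_2)\in\sig(v_1,w_1)(D)$, which is exactly $(v_1,w_1)\siggeq(v_2,w_2)$. In contrast to the standard case, the one-directional hypothesis does not let us run the symmetric argument, so we cannot upgrade this to $\sigequiv$ -- which is precisely why the conclusion is phrased with subsumption.

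Next I would show $v_1\updownequiv{k} v_2$ and $w_1\updownequiv{k} w_2$. For the first, let $f$ be any core expression with $f(D)(v_1)\neq\emptyset$, so that $(v_1,v_1)\in\pi_1(f)(D)$, and put $e\ass\pi_1(f)/\sig(v_1,w_1)$, again a core expression. Then $(v_1,w_1)\in e(D)$, so by hypothesis $(v_2,w_2)\in e(D)$; since $\pi_1(f)(D)\subseteq\varepsilon(D)$, the intermediate node must be $v_2$ itself, whence $f(D)(v_2)\neq\emptyset$. This shows $v_1\expgeq v_2$. The key point now is that Proposition~\ref{prop-notempty-empty} applies verbatim in the core setting: its proof only uses the expression $\pi_1(\varepsilon-\pi_1(f))$, in which the set difference occurs \emph{inside} a projection and is therefore legal in $\CL(E)$ whenever ``$-$'' and ``$\pi_1$'' belong to $E$ -- both of which hold here. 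Thus $v_1\expgeq v_2$ already yields $v_1\expequiv v_2$, and Theorem~\ref{theo-updowncore} gives $v_1\updownequiv{k} v_2$. The symmetric choice $e\ass\sig(v_1,w_1)/\pi_1(g)$, for $g$ with $g(D)(w_1)\neq\emptyset$, gives $w_1\expgeq w_2$ and hence $w_1\updownequiv{k} w_2$ in the same way.

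Finally I would assemble the three facts $(v_1,w_1)\siggeq(v_2,w_2)$, $v_1\updownequiv{k} v_2$, and $w_1\updownequiv{k} w_2$ by means of the subsumption version of Lemma~\ref{lem-updown1} (which, as remarked just before Proposition~\ref{prop-updowncore}, still holds), yielding $(v_1,w_1)\pairs{\siggeq}{\updownequiv{k}}(v_2,w_2)$. I expect the main obstacle to be this bootstrapping step: one must notice that losing the symmetric half of the hypothesis costs only congruence of the signature, \emph{not} the node-level equivalences, because $\expgeq$ can be promoted back to $\expequiv$ via Proposition~\ref{prop-notempty-empty}, and that this promotion remains available in the core language precisely because the difference it needs sits inside a projection.
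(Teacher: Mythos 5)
Your proposal is correct and matches the paper's intended argument: the paper gives no separate proof but simply states that the proof of Proposition~\ref{prop-updownconverse} ``can be recast,'' and your recasting is exactly the right one --- signature subsumption from the one-directional hypothesis, node-level equivalences $v_1\updownequiv{k}v_2$ and $w_1\updownequiv{k}w_2$ via $\pi_1(f)/\sig(v_1,w_1)$ and $\sig(v_1,w_1)/\pi_1(g)$ together with Theorem~\ref{theo-updowncore}, and assembly by the subsumption version of Lemma~\ref{lem-updown1}. Your key observation --- that the loss of the symmetric half of the hypothesis only costs congruence of the signature, while the node-level equivalences are recovered because Proposition~\ref{prop-notempty-empty} remains valid in the core language (its set difference sits inside a projection, exactly as the paper itself exploits in the proof of Proposition~\ref{prop-downnecessary}) --- is precisely the point that makes the recasting go through.
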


Combining Propositions~\ref{prop-updowncore}
and~\ref{prop-updownconversecore}, we obtain the following
characterization. 

\begin{corollary}
\label{cor-updownpairscore}

Let $k\geq 2$, and consider the core XPath algebra with
counting up to~$k$. Let
$D=(V,\textit{Ed},r,\lambda)$ be a document, and let $v_1$, $w_1$,
$v_2$, and $w_2$ be nodes of $D$.
Then, 
\begin{enumerate}

    \item the property that, for each expression $e$ in the language under
        consideration, $(v_1,w_1)\in e(D)$ implies $(v_2,w_2)\in e(D)$ is
        equivalent to the property $(v_1,w_1)\pairs{\siggeq}{\updownequiv{k}}
        (v_2,w_2)$; and,

    \item the property that, for each expression $e$ in the language under
        consideration, $(v_1,w_1)\in e(D)$ if and only if $(v_2,w_2)\in e(D)$ is
        equivalent to the property $(v_1,w_1)\pairs{\sigequiv}{\updownequiv{k}}
        (v_2,w_2)$.

\end{enumerate}
\end{corollary}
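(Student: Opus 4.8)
The plan is to obtain both items directly by combining Propositions~\ref{prop-updowncore} and~\ref{prop-updownconversecore}, which between them characterize exactly the pairs that are subsumed by $(v_1,w_1)$ under every expression of the language.

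For the first item I would treat the two implications separately. The direction asserting that $(v_1,w_1)\pairs{\siggeq}{\updownequiv{k}}(v_2,w_2)$ forces $(v_1,w_1)\in e(D)$ to imply $(v_2,w_2)\in e(D)$ for every expression $e$ in the core XPath algebra with counting up to~$k$ is exactly Proposition~\ref{prop-updowncore}. The converse direction, that this entailment holding for all $e$ forces $(v_1,w_1)\pairs{\siggeq}{\updownequiv{k}}(v_2,w_2)$, is exactly Proposition~\ref{prop-updownconversecore}. Putting these together yields the equivalence in the first item.

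For the second item I would first rewrite the biconditional property as a conjunction of two one-directional implication properties: ``for all $e$, $(v_1,w_1)\in e(D)$ if and only if $(v_2,w_2)\in e(D)$'' is logically the same as the conjunction of ``$(v_1,w_1)\in e(D)$ implies $(v_2,w_2)\in e(D)$ for all $e$'' and ``$(v_2,w_2)\in e(D)$ implies $(v_1,w_1)\in e(D)$ for all $e$.'' By the first item, applied once in each direction, the former is equivalent to $(v_1,w_1)\pairs{\siggeq}{\updownequiv{k}}(v_2,w_2)$ and the latter to $(v_2,w_2)\pairs{\siggeq}{\updownequiv{k}}(v_1,w_1)$.

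It then remains to check the purely combinatorial fact that the conjunction of these two subsumptions coincides with $(v_1,w_1)\pairs{\sigequiv}{\updownequiv{k}}(v_2,w_2)$. For the forward direction, mutual subsumption of the two signatures gives $\sig(v_1,w_1)=\sig(v_2,w_2)$, hence $(v_1,w_1)\sigequiv(v_2,w_2)$; this is the first requirement of $\vartheta$-congruence in Definition~\ref{def-pairsofnodes}, and the per-node condition needed there is already supplied by condition~(2) of either subsumption. Conversely, $\pairs{\sigequiv}{\updownequiv{k}}$ gives signature congruence, which yields subsumption of signatures in both directions, and the per-node condition transfers to the reverse correspondence because $\updownequiv{k}$ is symmetric (being an equivalence relation) and the correspondence between the two undirected paths is a bijection. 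I expect this last bookkeeping---confirming that the two ``corresponding node'' maps are mutually inverse and that congruence of signatures is precisely what upgrades a pair of one-sided subsumptions to two-sided congruence---to be the only real, and still minor, point to verify; everything heavier has already been carried out in Propositions~\ref{prop-updowncore} and~\ref{prop-updownconversecore}.
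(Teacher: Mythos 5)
Your proposal is correct and matches the paper's own (very terse) proof, which simply says the corollary follows by combining Propositions~\ref{prop-updowncore} and~\ref{prop-updownconversecore}; your item~(1) is exactly that combination, and your item~(2) spells out the routine bookkeeping---mutual $\pairs{\siggeq}{\updownequiv{k}}$-subsumption coincides with $\pairs{\sigequiv}{\updownequiv{k}}$-congruence, using symmetry of $\updownequiv{k}$ and the fact that equal signatures make the two correspondence maps mutually inverse---that the paper leaves implicit.
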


%
%

Lemma~\ref{lem-separationupdown1} also holds for the core XPath
algebra (with the condition $k\geq 3$ replaced by $k\geq 2$).
Lemma~\ref{lem-separationupdown2} is a different story,
unfortunately. Example~\ref{ex-notequivalent} already indicates that,
given nodes $v_1$, $w_1$, $v_2$, and $w_2$ of a document~$D$, we can
in general not hope for an expression $e_{v_1,w_1}$ such that
$(v_2,w_2)\in e_{v_1,w_1}(D)$ if and only if
$(v_1,w_1)\pairs{\sigequiv}{\updownequiv{k}}(v_2,w_2)$. The version
with congruence replaced by subsumption does hold, however.

\begin{lemma}
\label{lem-separationupdowncore}

Let $k\geq 2$.  Let $D=(V,\textit{Ed},r,\lambda)$ be a document, and
let $v_1$ and $w_1$ be a nodes of $D$.
There exists an expression $e_{v_1,w_1}$ in the 
core XPath algebra with counting up to~$k$ such that, for all nodes
$v_2$ and $w_2$ of $D$, $(v_2,w_2)\in e_{v_1,w_1}(D)$ if and only if
$(v_1,w_1)\pairs{\siggeq}{\updownequiv{k}} (v_2,w_2)$.

\end{lemma}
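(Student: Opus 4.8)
The plan is to reuse the construction behind Lemma~\ref{lem-separationupdown2}, but to \emph{drop} the subtractive term $-\up^{u-1}/\down^{d-1}$ that there served only to upgrade subsumption to congruence; since here we target subsumption, that correction is both unnecessary and (being an outer-level difference) unavailable in a core language. First I would invoke the core analogue of Lemma~\ref{lem-separationupdown1}, which as the text notes holds for $k\geq 2$, to obtain for each node $y_1$ of $D$ an expression $e_{y_1}$ in the core XPath algebra with counting up to~$k$ such that $e_{y_1}(D)(y_2)\neq\emptyset$ iff $y_1\updownequiv{k} y_2$. Writing $\sig(v_1,w_1)=\up^u/\down^d$ with $u,d\geq 0$, I would then set
$$e_{v_1,w_1}\ass \pi_1(e_{v_1})/\sig(v_1,w_1)/\pi_1(e_{w_1}).$$
This is legal in the core XPath algebra with counting up to~$k$: both $e_{v_1}$ and $e_{w_1}$ already are, so $\pi_1(e_{v_1})$ and $\pi_1(e_{w_1})$ are admissible core projections, and composing them with $\sig(v_1,w_1)\in\BL(\down,\up)$ preserves membership.

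The verification is then a direct unfolding of the composition semantics. Since $\pi_1(e_{v_1})(D)=\{(x,x)\mid v_1\updownequiv{k} x\}$ and $\pi_1(e_{w_1})(D)=\{(y,y)\mid w_1\updownequiv{k} y\}$, while $\sig(v_1,w_1)(D)$ is by definition the set of pairs subsumed by $(v_1,w_1)$, a pair $(v_2,w_2)$ lies in $e_{v_1,w_1}(D)$ precisely when there is a witness forcing $v_1\updownequiv{k} v_2$, $(v_1,w_1)\siggeq(v_2,w_2)$, and $w_1\updownequiv{k} w_2$ to hold simultaneously. Conversely, any triple satisfying these three conditions yields $(v_2,w_2)\in e_{v_1,w_1}(D)$ by the same unfolding, so membership in $e_{v_1,w_1}(D)$ is \emph{equivalent} to the conjunction of these three conditions.

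It remains to match that conjunction with the target relation $(v_1,w_1)\pairs{\siggeq}{\updownequiv{k}}(v_2,w_2)$, and for this I would appeal to the subsumption version of Lemma~\ref{lem-updown1}, which (as remarked in the text immediately before Lemma~\ref{lem-updowncore3}) holds verbatim with congruence replaced by subsumption: $(v_1,w_1)\pairs{\siggeq}{\updownequiv{k}}(v_2,w_2)$ is equivalent to the conjunction of $(v_1,w_1)\siggeq(v_2,w_2)$, $v_1\updownequiv{k} v_2$, and $w_1\updownequiv{k} w_2$. This settles both directions of the ``if and only if'' at once. I do not expect a genuine obstacle here; the only conceptual step is recognizing that omitting the subtractive term is exactly what turns the Lemma~\ref{lem-separationupdown2} construction from a congruence-tester into a subsumption-tester, consistent with Example~\ref{ex-notequivalent} showing that congruence cannot be captured in the core. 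All substantive content is absorbed into the subsumption analogue of Lemma~\ref{lem-updown1}, whose backward direction (propagating endpoint $k$-equivalence to every corresponding interior node) is the real work but is already granted by the excerpt.
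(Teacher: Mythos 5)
Your proposal is correct and matches the paper's proof: the paper likewise takes the expression from Lemma~\ref{lem-separationupdown2} with the minus term omitted, relying on the core analogue of Lemma~\ref{lem-separationupdown1} (valid for $k\geq 2$) and on the subsumption version of Lemma~\ref{lem-updown1}. The only cosmetic difference is that you verify both directions by directly unfolding the composition semantics, while the paper routes the converse direction through the already-established soundness result (Corollary~\ref{cor-updownpairscore}); both are valid and essentially equivalent.
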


\begin{proof}
The proof follows the lines of the proof of
Proposition~\ref{lem-separationupdown2} very closely, the main
difference being that, from the proposed expression, the minus term
must be omitted.
\end{proof}

The BP characterization results now follow readily.

\begin{theorem}
\label{theo-updown-bpcore}

Let $k\geq 2$. Let $D=(V,\textit{Ed},r,\lambda)$ be a document, and let
$R\subseteq V\times V$. Then, there exists an expression $e$ in the
core XPath algebra with counting up to $k$ such that $e(D)=R$ if and only if,
for all $v_1,w_1,v_2,w_2\in V$ with
  $(v_1,w_1)\pairs{\siggeq}{\updownequiv{k}} (v_2,w_2)$, 
$(v_1,w_1)\in R$ implies $(v_2,w_2)\in R$. 

\end{theorem}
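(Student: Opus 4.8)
The plan is to follow the same two-direction template used for the standard XPath algebra in Theorem~\ref{theo-updown-bp}, but everywhere replacing congruence by subsumption, since Example~\ref{ex-notequivalent} shows that congruence cannot be captured by a single expression in the core language. Accordingly, the closure condition in the statement is phrased with the (asymmetric) relation $\pairs{\siggeq}{\updownequiv{k}}$ rather than with an equivalence relation, and this is the only genuine structural difference from the standard case.

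For the ``only if'' direction, I would suppose $e$ is an expression in the core XPath algebra with counting up to~$k$ with $e(D)=R$. The required closure property then follows immediately from Proposition~\ref{prop-updowncore}: whenever $(v_1,w_1)\pairs{\siggeq}{\updownequiv{k}}(v_2,w_2)$ and $(v_1,w_1)\in e(D)=R$, that proposition yields $(v_2,w_2)\in e(D)=R$.

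For the ``if'' direction, I would invoke the separating expressions of Lemma~\ref{lem-separationupdowncore}: for each pair $(v_1,w_1)\in R$ there is an expression $e_{v_1,w_1}$ in the core XPath algebra with counting up to~$k$ such that $(v_2,w_2)\in e_{v_1,w_1}(D)$ if and only if $(v_1,w_1)\pairs{\siggeq}{\updownequiv{k}}(v_2,w_2)$. I would then set $e\ass\bigcup_{(v_1,w_1)\in R} e_{v_1,w_1}$, which is again in the core language since union belongs to the basic algebra~$\BL$. It remains to verify $e(D)=R$. The inclusion $R\subseteq e(D)$ holds because subsumption is reflexive (as $(v_1,w_1)\in\sig(v_1,w_1)(D)$ and each node is trivially $k$-equivalent to itself, so $(v_1,w_1)\pairs{\siggeq}{\updownequiv{k}}(v_1,w_1)$), whence $(v_1,w_1)\in e_{v_1,w_1}(D)\subseteq e(D)$. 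For $e(D)\subseteq R$, any $(v_2,w_2)\in e(D)$ lies in some $e_{v_1,w_1}(D)$ with $(v_1,w_1)\in R$; by Lemma~\ref{lem-separationupdowncore} this gives $(v_1,w_1)\pairs{\siggeq}{\updownequiv{k}}(v_2,w_2)$, and the closure hypothesis on $R$ then yields $(v_2,w_2)\in R$.

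I do not expect a real obstacle here: all the hard work has already been carried out in Proposition~\ref{prop-updowncore} (the subsumption-preservation property) and in Lemma~\ref{lem-separationupdowncore} (the construction of the subsumption-separating expression). The present theorem is thus a routine bootstrapping step, and the only point deserving emphasis is that, in contrast with Theorem~\ref{theo-updown-bp} where $R$ must be a union of $\pairs{\sigequiv}{\updownequiv{k}}$-classes, here $R$ need merely be \emph{closed} under the subsumption relation $\pairs{\siggeq}{\updownequiv{k}}$.
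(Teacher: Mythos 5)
Your proposal is correct and matches the paper's intended argument: the paper itself states that the BP characterization ``follows readily,'' precisely by combining Proposition~\ref{prop-updowncore} for the ``only if'' direction with the subsumption-separating expressions of Lemma~\ref{lem-separationupdowncore} and a union construction for the ``if'' direction, exactly as in the template of Theorem~\ref{theo-strictlydown-bp}. Your additional observations (reflexivity of $\pairs{\siggeq}{\updownequiv{k}}$ giving $R\subseteq e(D)$, and closure of $R$ under subsumption rather than partition into congruence classes) are the same points the paper relies on.
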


The specialization to the core XPath algebra is as
follows.

\begin{corollary}
\label{cor-updown-bpcore}

Let $D=(V,\textit{Ed},r,\lambda)$ be a document, and let $R\subseteq
V\times V$. There exists an expression $e$ in the core XPath algebra such
that $e(D)=R$ if and only if, 
for all $v_1,w_1,v_2,w_2\in V$ with
  $(v_1,w_1)\pairs{\siggeq}{\updownequiv{3}} (v_2,w_2)$, 
  $(v_1,w_1)\in R$ implies $(v_2,w_2)\in R$. 

\end{corollary}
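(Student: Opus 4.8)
The plan is to derive this corollary by specializing Theorem~\ref{theo-updown-bpcore} to the core XPath algebra, mirroring the passage from Theorem~\ref{theo-updown-bp} to Corollary~\ref{cor-updown-bp} in the standard case. Both implications rest on the separating expressions of Lemma~\ref{lem-separationupdowncore} together with the pair-level characterization of Corollary~\ref{cor-updownpairscore}, so the first step is to record which counting operators the core XPath algebra actually supplies: by Proposition~\ref{prop-counting}, $\ch{1}(.)$ and $\ch{2}(.)$ are definable in $\CL(\down,\up,\pi_1,\pi_2,-,\cap)$, whence the core XPath algebra coincides with the core XPath algebra with counting up to~$2$, while $\ch{3}(.)$ is \emph{not} available.

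For the ``only if'' direction I would argue directly at the stated exponent. Assume $e(D)=R$ for some $e$ in the core XPath algebra. Since $e$ is in particular an expression of the core XPath algebra with counting up to~$3$, Proposition~\ref{prop-updowncore} applied with $k=3$ gives that $(v_1,w_1)\pairs{\siggeq}{\updownequiv{3}}(v_2,w_2)$ together with $(v_1,w_1)\in e(D)=R$ forces $(v_2,w_2)\in e(D)=R$. Thus $R$ is closed under $\pairs{\siggeq}{\updownequiv{3}}$, which is exactly the condition in the statement. (Because $\updownequiv{2}$ is coarser than $\updownequiv{3}$ by Proposition~\ref{prop-kplusoneisfiner}, the same reasoning with $k=2$ shows that $R$ is in fact closed under the larger relation $\pairs{\siggeq}{\updownequiv{2}}$ as well.)

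For the ``if'' direction I would set $e\ass\bigcup_{(v_1,w_1)\in R}e_{v_1,w_1}$, with $e_{v_1,w_1}$ the expression of Lemma~\ref{lem-separationupdowncore}; then $R\subseteq e(D)$ is immediate, and the content is the reverse inclusion $e(D)\subseteq R$. This is where I expect the main obstacle, and it is precisely an obstacle about the exponent. Lemma~\ref{lem-separationupdowncore} produces an $e_{v_1,w_1}$ capturing $\pairs{\siggeq}{\updownequiv{k}}$ only when the ambient language has counting up to~$k$; for the \emph{plain} core XPath algebra this is realizable at $k=2$ but not at $k=3$, since $\ch{3}(.)$ cannot be invoked there—consistent with the fact that the core XPath algebra distinguishes nodes exactly up to $\updownequiv{2}$ (Corollary~\ref{cor-updownxpathcore}). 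Hence the union $e$ computes the $\pairs{\siggeq}{\updownequiv{2}}$-closure of $R$, and the inclusion $e(D)\subseteq R$ closes only under the hypothesis that $R$ is closed under $\pairs{\siggeq}{\updownequiv{2}}$, the coarser relation. Reconciling this with the printed exponent is the crux: the relation that can be \emph{realized} by an expression of the core XPath algebra is $\pairs{\siggeq}{\updownequiv{2}}$, so the specialization of Theorem~\ref{theo-updown-bpcore} to this language carries exponent~$2$, the value~$3$ being the one appropriate to the standard-algebra Corollary~\ref{cor-updown-bp} (where $\ch{3}(.)$ is free by Proposition~\ref{prop-counting}). I would therefore complete the ``if'' direction with $\pairs{\siggeq}{\updownequiv{2}}$ playing the role of $\pairs{\siggeq}{\updownequiv{3}}$, and flag that the $\updownequiv{3}$ in the displayed statement should read $\updownequiv{2}$.
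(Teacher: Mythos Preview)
Your analysis is correct and matches the paper's intended approach: the corollary is meant to be the specialization of Theorem~\ref{theo-updown-bpcore} to $k=2$, since Proposition~\ref{prop-counting} shows that $\ch{1}(.)$ and $\ch{2}(.)$, but not $\ch{3}(.)$, are expressible in the core XPath algebra, exactly as recorded in Corollary~\ref{cor-updownxpathcore}, Corollary~\ref{cor-updown-nodelevelcore}, and Theorem~\ref{theo-updown-rootcore}. The paper gives no separate proof for the corollary, treating it as an immediate instance of the theorem.

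You are right that the exponent~$3$ in the displayed statement is a slip: it should be~$2$, consistent with every other specialization of the core-language results in Section~\ref{subsec-generaldiffcore}. Your argument that the ``if'' direction genuinely fails with exponent~$3$ is sound, and can be made concrete: take a root with two $b$-labeled children $u_1,u_2$, where $u_1$ has two $c$-leaves and $u_2$ has three; then $u_1\updownequiv{2}u_2$ but $u_1\not\updownequiv{3}u_2$, so $R=\{(u_1,u_1)\}$ is closed under $\pairs{\siggeq}{\updownequiv{3}}$ yet not under $\pairs{\siggeq}{\updownequiv{2}}$, hence not definable in the core XPath algebra by Theorem~\ref{theo-updown-bpcore}. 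Your observation that the ``only if'' direction survives the typo (because $\pairs{\siggeq}{\updownequiv{3}}\subseteq\pairs{\siggeq}{\updownequiv{2}}$) is also correct.
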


We recast Theorem~\ref{theo-updown-bpcore} and
Corollary~\ref{cor-updown-bpcore} in terms of
node-level navigation.

\begin{theorem}
\label{theo-updown-nodelevelcore}

Let $k\geq 2$.  Let $D=(V,\textit{Ed},r,\lambda)$ be a document, let
$v$ be a node of $D$, and let $W\subseteq V$. Then there exists an
expression $e$ in the core XPath algebra
with counting up to~$k$ such that $e(D)(v)=W$ if and only if,
for all $w_1,w_2\in W$ with $(v,w_1)\pairs{\siggeq}{\updownequiv{k}}
(v,w_2)$, $w_1\in W$ implies $w_2\in W$.

\end{theorem}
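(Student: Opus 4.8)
The plan is to obtain this local (node-level) statement from the machinery already developed for the core XPath algebra with counting up to~$k$, in the same spirit in which Theorem~\ref{theo-strictlydown-nodelevel} was derived for the strictly downward case, and I would prove the two implications separately. For the \emph{only if} direction, suppose $e(D)(v)=W$ for some expression $e$ in the language, and let $w_1\in W$ and $w_2\in V$ satisfy $(v,w_1)\pairs{\siggeq}{\updownequiv{k}}(v,w_2)$. Since $w_1\in W=e(D)(v)$ we have $(v,w_1)\in e(D)$, and Proposition~\ref{prop-updowncore}, which is precisely the transfer result for subsumption in the core two-way setting, immediately yields $(v,w_2)\in e(D)$, i.e.\ $w_2\in e(D)(v)=W$. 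This establishes the required closure property of $W$.

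For the \emph{if} direction I would construct the witnessing expression directly rather than routing through the BP-characterization of Theorem~\ref{theo-updown-bpcore}. Using the separation expressions of Lemma~\ref{lem-separationupdowncore}, for each $w_1\in W$ there is an expression $e_{v,w_1}$ in the core XPath algebra with counting up to~$k$ such that $(v_2,w_2)\in e_{v,w_1}(D)$ if and only if $(v,w_1)\pairs{\siggeq}{\updownequiv{k}}(v_2,w_2)$. Since $V$ is finite, I set $e\ass\bigcup_{w_1\in W}e_{v,w_1}$ (and $e\ass\emptyset$ when $W=\emptyset$), which is again an expression of the language because union is a basic operation. Then $e(D)(v)=\{w_2\mid(\exists w_1\in W)\,(v,w_1)\pairs{\siggeq}{\updownequiv{k}}(v,w_2)\}$. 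Reflexivity of both subsumption and $\updownequiv{k}$ gives $(v,w_1)\pairs{\siggeq}{\updownequiv{k}}(v,w_1)$, so $W\subseteq e(D)(v)$; and if $w_2\in e(D)(v)$, then some $w_1\in W$ has $(v,w_1)\pairs{\siggeq}{\updownequiv{k}}(v,w_2)$, whence the assumed closure property of $W$ forces $w_2\in W$. Hence $e(D)(v)=W$.

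The step I expect to be the real obstacle, and the reason I would \emph{not} simply imitate the $R$-based argument of Theorem~\ref{theo-strictlydown-nodelevel}, is that the core two-way setting forces us to work with subsumption $\pairs{\siggeq}{\updownequiv{k}}$ rather than congruence. Defining $R\ass\{(v',w')\mid(\exists w_1\in W)\,(v,w_1)\pairs{\siggeq}{\updownequiv{k}}(v',w')\}$ and feeding it to Theorem~\ref{theo-updown-bpcore} would require $R$ to be closed under $\pairs{\siggeq}{\updownequiv{k}}$, which amounts to transitivity of this relation; unlike congruence, subsumption is neither symmetric nor obviously transitive (cf.\ Example~\ref{ex-nottrans}), and although transitivity can in fact be recovered here because the $\updownequiv{k}$-decoration forces equal depths of corresponding nodes, so that the longer signature $\up^{m}/\down^{n}$ still applies, establishing this cleanly is delicate. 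Building $e$ directly from Lemma~\ref{lem-separationupdowncore} sidesteps the issue entirely, reducing the argument to reflexivity and the defining property of $W$, with the only remaining care being the empty-$W$ case and the observation that a finite union keeps $e$ inside the core language.
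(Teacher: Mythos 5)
Your proof is correct. The \emph{only if} direction is exactly the paper's: closure of $W$ under $\pairs{\siggeq}{\updownequiv{k}}$ follows immediately from Proposition~\ref{prop-updowncore} (equivalently, Corollary~\ref{cor-updownpairscore}). Where you genuinely diverge is the \emph{if} direction. The paper obtains the theorem by recasting Theorem~\ref{theo-updown-bpcore}, i.e.\ by the template of Theorem~\ref{theo-strictlydown-nodelevel}: set $R:=\{(v',w')\mid(\exists w_1\in W)\,(v,w_1)\pairs{\siggeq}{\updownequiv{k}}(v',w')\}$, check that $R$ is closed under $\pairs{\siggeq}{\updownequiv{k}}$, and invoke the BP result to obtain $e$ with $e(D)=R$, whence $e(D)(v)=W$. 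You instead build $e:=\bigcup_{w_1\in W}e_{v,w_1}$ directly from Lemma~\ref{lem-separationupdowncore}. The two routes in fact yield the very same expression and the same relation $e(D)=R$; the difference lies in what must be verified. The BP route requires the closure of $R$, which, as you correctly identify, amounts to transitivity of $\pairs{\siggeq}{\updownequiv{k}}$: in the strictly downward template this is free because decorated congruence is an equivalence relation, but for decorated subsumption it needs a genuine argument (it does hold --- the $\updownequiv{k}$-decoration forces equal depths of the endpoints, so the longer signature remains applicable, and heredity of $k$-equivalence along ancestors transfers the decoration above the turning point --- but the paper never spells this out). Your route replaces that verification by reflexivity of $\pairs{\siggeq}{\updownequiv{k}}$ plus the hypothesis on $W$, which is less work and self-contained; what it gives up is reusability, since the closure fact would be needed anyway if one wanted the BP theorem itself by this $R$-construction (the paper avoids that too, proving Theorem~\ref{theo-updown-bpcore} by the same union-of-separators device you use). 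One small inaccuracy in your side remark: Example~\ref{ex-nottrans} illustrates the failure of Proposition~\ref{prop-subsumption}, item (3), without its ancestor condition (composing subsumptions of pairs that share a node), not intransitivity of the subsumption relation on pairs; the caution is warranted, but that example is not evidence for it.
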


The specialization to the core XPath algebra is as
follows.

\begin{corollary}
\label{cor-updown-nodelevelcore}

Let $D=(V,\textit{Ed},r,\lambda)$ be a document, let
$v$ be a node of $D$, and let $W\subseteq V$. Then there exists an
expression $e$ in the core XPath algebra
such that $e(D)(v)=W$ if and only if,
for all $w_1,w_2\in W$ with $(v,w_1)\pairs{\siggeq}{\updownequiv{2}}
(v,w_2)$, $w_1\in W$ implies $w_2\in W$.

\end{corollary}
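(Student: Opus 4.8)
The plan is to derive this corollary as the $k=2$ instance of Theorem~\ref{theo-updown-nodelevelcore}. The only substantive point is to confirm that the core XPath algebra and the core XPath algebra with counting up to~$2$ compute the same binary relation on every document; once that is established, the family of subsets $W\subseteq V$ of the form $e(D)(v)$ is the same for both languages and the structural side condition $(v,w_1)\pairs{\siggeq}{\updownequiv{2}}(v,w_2)$ is literally unchanged, so the characterization transfers directly.

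First I would show that $\ch{1}(.)$ and $\ch{2}(.)$ are expressible in the core XPath algebra. By Proposition~\ref{prop-counting}, $\ch{1}(e)=\pi_1(\down/\pi_1(e))$, which is already a core expression. For $\ch{2}(e)$ I would reuse the identity of Proposition~\ref{prop-counting}, but rewrite it so that the single set difference it contains sits directly inside a projection, as the core syntax demands: setting $g\ass\pi_1(\pi_1(e)/\up/\down/\pi_1(e)-\varepsilon)$ makes $g$ a legal core expression, and then $\ch{2}(e)=\pi_1(\down/g)$ is core as well. Conversely, the core XPath algebra is trivially a fragment of the core XPath algebra with counting up to~$2$ (and the operator $\cap$, which it nominally contains, is redundant given $-$ inside projections). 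Hence every expression of one language is equivalent to one of the other, i.e.\ the two languages define identical relations $e(D)$ on each document $D$.

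Because $e(D)(v)$ is determined by $e(D)$, the sets $W$ realizable as $e(D)(v)$ coincide for the two languages, and Theorem~\ref{theo-updown-nodelevelcore} with $k=2$ yields exactly the claimed equivalence. This mirrors the passage from Theorem~\ref{theo-updowncore} to Corollary~\ref{cor-updownxpathcore}. The main obstacle---though a modest one---is the syntactic bookkeeping in the $\ch{2}$ rewriting: one must keep the negation implicit in ``at least two children'' nested inside a projection so that the expression remains in the core fragment; the remaining steps are a routine specialization.
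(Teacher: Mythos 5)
Your proposal is correct and matches the paper's own route: the paper obtains this corollary by specializing Theorem~\ref{theo-updown-nodelevelcore} to $k=2$, justified by the remark (made before Corollary~\ref{cor-updownxpathcore}) that $\ch{1}(.)$ and $\ch{2}(.)$ are expressible in the core XPath algebra via Proposition~\ref{prop-counting}. Your explicit rewriting of $\ch{2}(e)$ as $\pi_1(\down/\pi_1(\pi_1(e)/\up/\down/\pi_1(e)-\varepsilon))$, so that the set difference sits inside a projection as the core syntax requires, simply fills in the one syntactic detail the paper leaves implicit.
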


Finally, for the special case where navigation starts from the
root, Theorem~\ref{theo-updown-nodelevelcore} and
Corollary~\ref{cor-updown-nodelevelcore} reduce to the following.

\begin{theorem}
\label{theo-updown-rootcore}

Let $D=(V,\textit{Ed},r,\lambda)$.
\begin{enumerate}

\item for each expression $e$ in the core XPath algebra
  with counting up to~$k$, $k\geq 2$, there exists an expression $e'$ in
  the strictly downward (core) XPath algebra with counting up to~$k$
  such that $e(D)(r)=e'(D)(r)$.

\item for each expression $e$ in the core XPath
  algebra, there exists an expression $e'$ in 
  the strictly downward (core) XPath algebra with counting up to~2
  such that $e(D)(r)=e'(D)(r)$.

\end{enumerate}
\end{theorem}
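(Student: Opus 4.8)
The plan is to prove both statements by specializing the node-level characterizations already obtained for two-way core languages to the case $v=r$, and then observing that the resulting condition on $W$ becomes \emph{identical} to the one characterizing the strictly downward languages from the root. Concretely, I would first invoke Theorem~\ref{theo-updown-nodelevelcore} (for the counting-up-to-$k$ case) and Corollary~\ref{cor-updown-nodelevelcore} (for the plain case), setting $v=r$. This says that a set $W\subseteq V$ equals $e(D)(r)$ for some expression $e$ in the core XPath algebra with counting up to~$k$ precisely when, for all $w_1,w_2$ with $(r,w_1)\pairs{\siggeq}{\updownequiv{k}}(r,w_2)$, membership $w_1\in W$ forces $w_2\in W$.

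The key step is to show that, at the root, the \emph{subsumption} condition $(r,w_1)\pairs{\siggeq}{\updownequiv{k}}(r,w_2)$ collapses to $w_1\updownequiv{k} w_2$. Since $r$ is an ancestor of both $w_1$ and $w_2$, Proposition~\ref{prop-congissub} (taken with $\vartheta=\updownequiv{k}$) lets me replace subsumption by congruence, so that $(r,w_1)\pairs{\siggeq}{\updownequiv{k}}(r,w_2)$ holds if and only if $(r,w_1)\pairs{\sigequiv}{\updownequiv{k}}(r,w_2)$ holds. By Lemma~\ref{lem-weaklydown1} the latter is equivalent to $(r,w_1)\sigequiv(r,w_2)$ together with $w_1\updownequiv{k} w_2$; but $w_1\updownequiv{k} w_2$ already forces the paths from $r$ to $w_1$ and to $w_2$ to have equal length, hence $\sig(r,w_1)=\sig(r,w_2)$, so the signature clause is automatically implied. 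Thus the closure condition on $W$ reduces to: for all $w_1,w_2$ with $w_1\updownequiv{k} w_2$, $w_1\in W$ implies $w_2\in W$. This is verbatim the condition in Theorem~\ref{theo-strictlydown-root}, which characterizes the sets definable as $e'(D)(r)$ for $e'$ in the strictly downward (core) XPath algebra with counting up to~$k$.

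Having matched the two characterizations, part~1 follows immediately: given any $e$ in the core XPath algebra with counting up to~$k$, the image $W=e(D)(r)$ is closed under $\updownequiv{k}$, so Theorem~\ref{theo-strictlydown-root} furnishes an $e'$ in the strictly downward (core) XPath algebra with counting up to~$k$ with $e'(D)(r)=W=e(D)(r)$. For part~2 I would note, via Proposition~\ref{prop-counting} and Corollary~\ref{cor-updownxpathcore}, that the core XPath algebra coincides with the core XPath algebra with counting up to~$2$, and then apply part~1 with $k=2$. The only genuinely new ingredient compared with the corresponding reduction for the standard XPath algebra (Theorem~\ref{theo-updown-root}, which already rested on $\pairs{\sigequiv}{\updownequiv{k}}$ collapsing to $\updownequiv{k}$ at the root) is the passage from subsumption to congruence, so the hard part—such as it is—is simply recognizing that Proposition~\ref{prop-congissub} applies here because the first node of each pair is the root; everything else is bookkeeping that carries over from the standard case.
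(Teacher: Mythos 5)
Your proof is correct and follows essentially the same route the paper intends: the paper obtains this theorem by specializing Theorem~\ref{theo-updown-nodelevelcore} and Corollary~\ref{cor-updown-nodelevelcore} to $v=r$ and comparing with Theorem~\ref{theo-strictlydown-root}, exactly as you do. Your explicit use of Proposition~\ref{prop-congissub} together with Lemma~\ref{lem-weaklydown1} to collapse the root-level subsumption condition $(r,w_1)\pairs{\siggeq}{\updownequiv{k}}(r,w_2)$ to $w_1\updownequiv{k} w_2$ is precisely the detail the paper leaves implicit, and you handle it correctly.
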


Together with Theorem~\ref{theo-updown-root},
Theorem~\ref{theo-updown-rootcore} extends
Theorem~\ref{theo-weaklydown-root}. When navigating from the root, the
only thing that the core XPath algebra adds compared to using the
strictly downward (core) XPath algebra is its ability to select on at
least~2 children satisfying some condition.
\subsection{Languages without difference for two-way navigation}
\label{subsec-generalnodiff}

As before with languages not containing difference, we do not consider
counting operations, corresponding to considering the various
syntactic notions of relatedness or equivalence between nodes only for
the case $k=1$. Taking into account Proposition~\ref{prop-eliminate},
and recognizing that the techniques used in this paper to establish
characterizations heavily use intersection,
this means that only the following two languages must be considered:
\begin{itemize}

\item the language $\BL(\down,\up,\cap)$, which we call the
  \emph{positive XPath algebra}; and

\item the language $\CL(\down,\up,\pi_1,\pi_2,\cap)$, which we call the
  \emph{core positive XPath algebra}.

\end{itemize}

Some of the present authors showed in \cite{WuGGP11} that $\BL(\down,\up,\cap)$
and $\BL(\down,\up,\pi_1,\pi_2)$ are equivalent in expressive power
(even at the level of queries). Since obviously 
$\BL(\down,\up,\pi_1,\pi_2)=\CL(\down,\up,\pi_1,\pi_2)$, it follows
readily that the positive XPath algebra and the core positive XPath
algebra are equivalent. 

The following results were already proved in \cite{WuGGP11}, and are only
repeated for completeness' sake.

\begin{theorem}
\label{theo-weakupdownequivalentnodes}

Consider the (core) positive XPath algebra.
Let $D=(V,\textit{Ed},r,\lambda)$ be a document, and let
$v_1$ and $v_2$ be nodes of $D$.  Then,
\begin{enumerate}

\item  $v_1\expgeq v_2$ if and only
if $v_1\updowngeq{1} v_2$; and

\item  $v_1\expequiv v_2$ if and only
if $v_1\weakupdownequiv{} v_2$.

\end{enumerate}
\end{theorem}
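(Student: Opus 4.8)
The plan is to prove Part~1 and obtain Part~2 for free. Since $v_1\expequiv v_2$ means $v_1\expgeq v_2$ and $v_2\expgeq v_1$, while $v_1\weakupdownequiv{} v_2$ means $v_1\updowngeq{} v_2$ and $v_2\updowngeq{} v_1$ (Definition~\ref{def-weakequivalent}), Part~2 is just the symmetric closure of Part~1. Everything therefore reduces to showing that $v_1\expgeq v_2$ if and only if $v_1\updowngeq{} v_2$, where the relation written $\updowngeq{1}$ in the statement is the relatedness relation $\updowngeq{}$ of Definition~\ref{def-weakequivalent} (the superscript merely marking the $k=1$ level). I would work throughout in the equivalent form $\BL(\down,\up,\pi_1,\pi_2,\cap)$; in particular both projections and $\cap$ are available, but set difference is not, so I must reason with $\expgeq$ rather than $\expequiv$ (Proposition~\ref{prop-notempty-empty} being unavailable).

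For necessity ($v_1\expgeq v_2\Rightarrow v_1\updowngeq{} v_2$) I would mimic the induction on the depth of $v_1$ used in Proposition~\ref{prop-weaklydownnecessary}, replacing each use of set difference. Condition~(a) of $\updowngeq{}$, namely $v_1\downgeq{} v_2$, is exactly Lemma~\ref{lem-weakdownnecessary} applied to our language, which contains $\pi_1$ and $\cap$. For the root biconditional~(b) I use two positive tests: $\pi_1(\up)$ is nonempty at a node precisely when it has a parent, giving $v_1\neq r\Rightarrow v_2\neq r$; and, with $H$ the height of~$D$, the expression $\down^H$ is nonempty only at the root, giving $v_1=r\Rightarrow v_2=r$. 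Together these yield that $v_1=r$ iff $v_2=r$. For condition~(c), when $v_1,v_2$ are non-roots with parents $u_1,u_2$, I lift any witness $e$ with $e(D)(u_1)\neq\emptyset$ to $\up/e$, whose value at $v_1$ equals $e(D)(u_1)$; hence $u_1\expgeq u_2$, and the induction hypothesis supplies $u_1\updowngeq{} u_2$.

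For sufficiency ($v_1\updowngeq{} v_2\Rightarrow v_1\expgeq v_2$) I would prove by structural induction on $e$ the pairs statement: if $(v_1,w_1)\pairs{\siggeq}{\updowngeq{}}(v_2,w_2)$, then $(v_1,w_1)\in e(D)$ implies $(v_2,w_2)\in e(D)$. This is the relatedness-and-subsumption analogue of Proposition~\ref{prop-updowncore}. The atomic cases and the cases for $\cup$, $\cap$, and the projections are routine (the projection case using that matching along a purely downward path lifts $\downgeq{}$ to $\updowngeq{}$, since the parent chains are already related), while composition and the two projections require relatedness-and-subsumption analogues of the image lemmas~\ref{lem-updown2} and~\ref{lem-updown3}. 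With those in hand I specialize: given $e(D)(v_1)\neq\emptyset$, pick $(v_1,w_1)\in e(D)$, use the analogue of Lemma~\ref{lem-updown2} to produce $w_2$ with $(v_1,w_1)\pairs{\siggeq}{\updowngeq{}}(v_2,w_2)$, and conclude $(v_2,w_2)\in e(D)$, so $e(D)(v_2)\neq\emptyset$.

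The hard part will be establishing these image lemmas with no counting available. In the congruence setting, Lemma~\ref{lem-updown2} needs $k\ge 2$ and Lemma~\ref{lem-updown3} needs $k\ge 3$, precisely to force the constructed target node into a sibling subtree so that the top node is preserved. Here $\downgeq{}$ is only a one-way matching, so two children of a node $t_1$ may be forced to collapse onto a single child of its image $t_2$; this would lower $\top(v_2,w_2)$ and destroy \emph{congruence}. The resolution, exactly as in Lemma~\ref{lem-updowncore3}, is that I only ever assert \emph{subsumption}: because $\up^m/\down^n$ subsumes $\up^{m-1}/\down^{n-1}$, a dropped top still leaves $(v_2,w_2)\in\sig(v_1,w_1)(D)$, and the corresponding nodes along the (possibly shorter) path from $v_2$ to $w_2$ remain $\updowngeq{}$-related. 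The case analysis of Lemma~\ref{lem-updown3} then goes through for relatedness with subsumption and no counting is needed, which is exactly why the positive algebra, despite lacking difference, is characterized by relatedness.
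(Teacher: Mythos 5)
Your proposal is sound, but there is nothing in the paper to compare it against: the paper does not prove Theorem~\ref{theo-weakupdownequivalentnodes} at all, importing it from \cite{WuGGP11} (``already proved\ldots repeated for completeness' sake''). What you have effectively done is reconstruct a self-contained proof inside the paper's own framework, and the reconstruction is the right one. Your necessity direction is exactly the positive adaptation of Proposition~\ref{prop-weaklydownnecessary}: Lemma~\ref{lem-weakdownnecessary} gives condition (a) of Definition~\ref{def-weakequivalent}; the difference-free tests $\up$ (or $\pi_1(\up)$) and $\down^H$, with $H$ the height of $D$, settle the root biconditional; and the $\up/e$ lifting plus induction on depth settles condition (c) --- all within $\BL(\down,\up,\pi_1,\pi_2,\cap)$, where the one-directional $\expgeq$ argument survives because the child-separating expression $\pi_1(e_1)\cap\ldots\cap\pi_1(e_n)$ needs only $\pi_1$ and $\cap$. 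Your sufficiency direction correctly isolates what must be re-proved --- analogues of Lemmas~\ref{lem-updown2} and~\ref{lem-updown3} for $\pairs{\siggeq}{\updowngeq{}}$ --- and gives the right reason they hold with no counting at all: under subsumption the constructed node need never be forced into a sibling subtree, which is the only role the hypotheses $k\geq 2$ and $k\geq 3$ play in the congruence setting, exactly as the paper observes in Lemma~\ref{lem-updowncore3}. One simplification worth making explicit: the analogue of Lemma~\ref{lem-updown1} holds in the form ``$(v_1,w_1)\pairs{\siggeq}{\updowngeq{}}(v_2,w_2)$ iff $(v_1,w_1)\siggeq(v_2,w_2)$, $v_1\updowngeq{} v_2$, and $w_1\updowngeq{} w_2$'' (related nodes have equal depth and pairwise related ancestors), after which every case of the Lemma~\ref{lem-updown3} analogue reduces to producing $y_2$ with the right signature and $y_1\updowngeq{} y_2$ via Lemma~\ref{lem-weakstrictlydown1} and the parent-chain lifting you mention. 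One caveat: under proper subsumption (collapsed top) the ``unique corresponding node'' of Definition~\ref{def-pairsofnodes} must be read as the node satisfying \emph{both} stated conditions --- an ancestor of $v_2$ on the upward segment, of $w_2$ on the downward segment --- since the two are no longer equivalent; but this ambiguity is inherited from the paper's own core-language treatment in Section~\ref{subsec-generaldiffcore}, not introduced by you. Compared with the paper's route (outsourcing to \cite{WuGGP11}), yours costs the image-lemma case analysis but buys a proof that stays entirely within the two-step methodology of Sections~\ref{sec-strictlydownward}--\ref{sec-general}.
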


\begin{theorem}
\label{theo-weakupdownequivalentpairs}

Consider the (core) positive XPath algebra.
Let $D=(V,\textit{Ed},r,\lambda)$ be a document, and let
$v_1$, $v_2$, $w_1$, and $w_2$ be nodes of $D$.  Then,
\begin{enumerate}

\item  the property that, for each (core) positive XPath expression $e$,
$(v_1,w_1)\in e(D)$ implies $(v_2,w_2)\in e(D)$ is equivalent to 
$(v_1,w_1)\pairs{\siggeq}{\updowngeq{1}}(v_2,w_2)$; and

\item  the property that, for each expression $e$ in the language
  under consideration,
$(v_1,w_1)\in e(D)$ if and only if $(v_2,w_2)\in e(D)$ is equivalent to 
the property $(v_1,w_1)\pairs{\sigequiv}{\weakupdownequiv{}}(v_2,w_2)$.

\end{enumerate}
\end{theorem}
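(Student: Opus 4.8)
The plan is to follow the paper's two-step methodology. The node-level characterizations are already in hand from Theorem~\ref{theo-weakupdownequivalentnodes} ($v_1\expgeq v_2$ iff $v_1\updowngeq{1} v_2$, and $v_1\expequiv v_2$ iff $v_1\weakupdownequiv{} v_2$), so the task is to \emph{bootstrap} them to the pair level. I would prove part~1 first and derive part~2 from it by a symmetry argument, exactly as Corollaries~\ref{cor-updownpairs} and~\ref{cor-updownpairscore} are obtained in the counting cases.

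For the ``if'' (soundness) direction of part~1 I would first assemble positive, subsumption-flavored analogues of the structural lemmas used in Section~\ref{subsec-generaldiffnocore}. The analogue of Lemma~\ref{lem-weaklydown1} is crucial: it must reduce $(v_1,w_1)\pairs{\siggeq}{\updowngeq{1}}(v_2,w_2)$ to signature subsumption together with endpoint relatedness, and here Proposition~\ref{prop-weakequivalent},~(\ref{weakequivalent-1}) bridges the merely \emph{downward} relatedness $\downgeq{}$ delivered by the matching Lemma~\ref{lem-weakstrictlydown1} and the \emph{full} relatedness $\updowngeq{1}$ demanded by the pair relation. With the analogues of Lemmas~\ref{lem-updown2} and~\ref{lem-updown3} in place, I would run a structural induction on the positive expression $e$: the atomic operators and inverse are immediate (the latter by Proposition~\ref{prop-subsumption},~(\ref{subsumption-2})), union and intersection are monotone, the projections $\pi_1(f)$ and $\pi_2(f)$ appeal to the matching lemma, and composition $e_1/e_2$ appeals to the key lemma that inserts an arbitrary intermediate node.

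For the ``only if'' (completeness) direction I would argue as in Proposition~\ref{prop-updownconverse}: applying the hypothesis to the positive expression $\sig(v_1,w_1)$ gives $(v_1,w_1)\siggeq(v_2,w_2)$, and for each node $y_1$ on the undirected path from $v_1$ to $w_1$, testing the positive expressions $\sig(v_1,y_1)/\pi_1(f)/\sig(y_1,w_1)$ for arbitrary positive $f$ forces $f(D)(y_2)\neq\emptyset$ at the corresponding node $y_2$, hence $y_1\expgeq y_2$ and, by Theorem~\ref{theo-weakupdownequivalentnodes}, $y_1\updowngeq{1} y_2$; these combine to $(v_1,w_1)\pairs{\siggeq}{\updowngeq{1}}(v_2,w_2)$. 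Part~2 then follows by invoking part~1 in both directions: a two-sided expression condition yields both $(v_1,w_1)\pairs{\siggeq}{\updowngeq{1}}(v_2,w_2)$ and $(v_2,w_2)\pairs{\siggeq}{\updowngeq{1}}(v_1,w_1)$; the two signature halves combine to $(v_1,w_1)\sigequiv(v_2,w_2)$ and the two-directional relatedness at every corresponding pair combines to $y_1\weakupdownequiv{} y_2$, i.e.\ $(v_1,w_1)\pairs{\sigequiv}{\weakupdownequiv{}}(v_2,w_2)$, with the converse splitting symmetrically.

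The main obstacle is the composition lemma, the positive analogue of Lemma~\ref{lem-updown3}. Its delicate case is $\top(v_1,y_1)=\top(y_1,w_1)=\top(v_1,w_1)$, where $y_1$ branches off at the common top node: in the counting setting this forced a sibling subtree of $\top(v_2,w_2)$ lying off the paths to both $v_2$ and $w_2$, which is why Lemma~\ref{lem-updown3} needed $k\geq 3$ and Lemma~\ref{lem-updowncore3} still needed $k\geq 2$. At $k=1$ that multiplicity argument is unavailable, so the point to verify is that working with \emph{subsumption} rather than congruence removes the requirement that the top node of $(v_2,y_2)$ occupy any prescribed position, while the \emph{directionality} of $\updowngeq{1}$ still supplies, through Lemma~\ref{lem-weakstrictlydown1}, a $\downgeq{}$-related child into which the branch to $y_1$ can be matched even when it reuses an already-traversed path. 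Confirming that these relaxations restore the lemma at $k=1$ is the crux; as the statements coincide with results of~\cite{WuGGP11}, I would cite that paper for the complete case analysis.
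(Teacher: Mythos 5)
You should know at the outset that the paper contains no proof of this theorem to compare against: it introduces Theorems~\ref{theo-weakupdownequivalentnodes} and~\ref{theo-weakupdownequivalentpairs} with the remark that they ``were already proved in \cite{WuGGP11}, and are only repeated for completeness' sake.'' Your proposal is therefore not being matched against an internal argument, but it is a correct reconstruction of what such an argument looks like, and it follows exactly the two-step methodology the paper applies to its other fragments. Your completeness direction is sound: in the test expression $\sig(v_1,y_1)/\pi_1(f)/\sig(y_1,w_1)$ one of the two signature factors is a pure power of $\up$ (when $y_1$ is an ancestor of $v_1$) or a pure power of $\down$ (when $y_1$ is an ancestor of $w_1$), so the witness node hit in $D$ is forced to be the corresponding node $y_2$; this gives $y_1\expgeq y_2$, which Theorem~\ref{theo-weakupdownequivalentnodes} converts to $y_1\updowngeq{} y_2$, and all test expressions lie in $\BL(\down,\up,\cap)$ by Proposition~\ref{prop-eliminate}. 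Your derivation of part~2 from part~1 is also correct, since two-sided subsumption makes the two path correspondences mutually inverse bijections, so relatedness in both directions at corresponding nodes yields $\weakupdownequiv{}$. Most importantly, your diagnosis of the crux is accurate, and the relaxation you describe does work: the only use of $k\geq 3$ in Lemma~\ref{lem-updown3} (and of $k\geq 2$ in Lemma~\ref{lem-updowncore3}) is to force the matched branch at $\top(v_2,w_2)$ off the paths to $v_2$ and $w_2$, and under subsumption that requirement simply disappears---one may take \emph{any} $\downgeq{}$-related child of the corresponding ancestor (such a child exists by the definition of $\downgeq{}$), descend into it by Lemma~\ref{lem-weakstrictlydown1}, and the parent clause of Definition~\ref{def-weakequivalent} upgrades $\downgeq{}$ to $\updowngeq{}$ node by node along the new branch, while the original subsumption keeps $w_2$ at the correct distance below the corresponding ancestor so that both required subsumptions follow. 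Since you close by citing \cite{WuGGP11} for the remaining case analysis, your write-up is, if anything, more explicit than the paper's own treatment of this statement, which consists of the citation alone.
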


As in Section~\ref{subsec-weaklydownnodifference}, we can bootstrap
these results to BP-type characterizations.

\begin{theorem}
\label{theo-posxpathbp-bp}

Let $D=(V,\textit{Ed},r,\lambda)$ be a document, and let
$R\subseteq V\times V$. Then, there exists an expression $e$ in the
(core) positive XPath algebra
such that $e(D)=R$ if and only if,
for all $v_1,w_1,v_2,w_2\in V$ with
$(v_1,w_1)\pairs{\siggeq}{\updowngeq{}} (v_2,w_2)$, 
$(v_1,w_1)\in R$ implies $(v_2,w_2)\in R$. 

\end{theorem}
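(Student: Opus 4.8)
The plan is to mirror the BP-proofs for the downward cases (Theorems~\ref{theo-strictlydown-bp} and~\ref{theo-weakstrictlydown-bp}), with the role of downward relatedness taken over by two-way relatedness $\updowngeq{}$, and to feed in the node- and pair-level characterizations already recorded in Theorems~\ref{theo-weakupdownequivalentnodes} and~\ref{theo-weakupdownequivalentpairs}. For the ``only if'' direction, suppose $e(D)=R$ for some expression $e$ in the (core) positive XPath algebra. If $(v_1,w_1)\pairs{\siggeq}{\updowngeq{}}(v_2,w_2)$ and $(v_1,w_1)\in R=e(D)$, then Theorem~\ref{theo-weakupdownequivalentpairs}(1)---which states precisely that $\pairs{\siggeq}{\updowngeq{}}$ coincides with the property that every positive expression containing $(v_1,w_1)$ also contains $(v_2,w_2)$---immediately gives $(v_2,w_2)\in e(D)=R$.

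For the ``if'' direction I would build separating expressions in two stages. First, a node separator: for each node $v_1$ I want $e_{v_1}$ with $e_{v_1}(D)(v_2)\neq\emptyset$ iff $v_1\updowngeq{} v_2$. Since $D$ is finite, for every $w$ with $v_1\not\updowngeq{} w$---equivalently $v_1\not\expgeq w$, by Theorem~\ref{theo-weakupdownequivalentnodes}(1)---there is an expression $f_{v_1,w}$ with $f_{v_1,w}(D)(v_1)\neq\emptyset$ and $f_{v_1,w}(D)(w)=\emptyset$; set
$$e_{v_1}\ass\pi_1\Bigl(\bigcap_{w\in V\,:\,v_1\not\updowngeq{} w}\pi_1(f_{v_1,w})\Bigr),$$
which lies in the core positive XPath algebra (with the convention $e_{v_1}\ass\varepsilon$ when the index set is empty). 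Using Theorem~\ref{theo-weakupdownequivalentnodes}(1) one checks that $v_1\updowngeq{} v_2$ forces each $f_{v_1,w}(D)(v_2)\neq\emptyset$, while $v_1\not\updowngeq{} v_2$ makes $v_2$ itself an index with $f_{v_1,v_2}(D)(v_2)=\emptyset$; hence $e_{v_1}(D)(v_2)\neq\emptyset$ exactly when $v_1\updowngeq{} v_2$.

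Second, a pair separator: writing the undirected path from $v_1$ to $w_1$ as $v_1=y_0,\ldots,y_u=\top(v_1,w_1),\ldots,y_{u+d}=w_1$ with $\sig(v_1,w_1)=\up^u/\down^d$, set
$$e_{v_1,w_1}\ass\pi_1(e_{y_0})/\up/\pi_1(e_{y_1})/\cdots/\up/\pi_1(e_{y_u})/\down/\pi_1(e_{y_{u+1}})/\cdots/\down/\pi_1(e_{y_{u+d}}).$$
Because the tree path realizing a fixed signature from a given start node has uniquely determined intermediate nodes, a pair $(v_2,w_2)$ lies in $e_{v_1,w_1}(D)$ precisely when $(v_1,w_1)\siggeq(v_2,w_2)$ and the node occupying position $i$ of that path is related to $y_i$ under $\updowngeq{}$; by Definition~\ref{def-pairsofnodes} this is exactly $(v_1,w_1)\pairs{\siggeq}{\updowngeq{}}(v_2,w_2)$. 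Finally I put $e\ass\bigcup_{(v_1,w_1)\in R}e_{v_1,w_1}$ (and $e\ass\emptyset$ if $R=\emptyset$). Then $R\subseteq e(D)$ since $(v_1,w_1)\in e_{v_1,w_1}(D)$, while conversely any $(v_2,w_2)\in e(D)$ lies in some $e_{v_1,w_1}(D)$ with $(v_1,w_1)\in R$, whence $(v_1,w_1)\pairs{\siggeq}{\updowngeq{}}(v_2,w_2)$ and so $(v_2,w_2)\in R$ by the closure hypothesis; thus $e(D)=R$.

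The hard part will be the pair separator: I must verify that inserting a node separator $\pi_1(e_{y_i})$ at \emph{every} position along the signature path captures exactly the interior relatedness conditions of Definition~\ref{def-pairsofnodes}, not merely the endpoint ones. This hinges on the uniqueness of the intermediate nodes along the undirected tree path realizing a fixed signature, so that the nodes selected by the composition coincide with the ``corresponding nodes'' $y_2$ of the definition. Since $\updowngeq{}$ is only a preorder and relatedness need not propagate from endpoints to interior nodes, placing a separator at each position---rather than only at $v_1$ and $w_1$---is essential, and this is precisely where the argument departs from the naive endpoint construction.
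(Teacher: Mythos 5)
Your proof is correct, and it follows the same bootstrap methodology the paper intends for this theorem (the paper gives no explicit proof here, saying only that Theorems~\ref{theo-weakupdownequivalentnodes} and~\ref{theo-weakupdownequivalentpairs} can be bootstrapped ``as in Section~\ref{subsec-weaklydownnodifference}''): node separators built by intersecting witnesses of expression-relatedness inside a projection, pair separators threaded along the signature, and a union over $R$. The one place where you depart from the paper's own template is the pair separator. For two-way navigation the paper's pattern is Lemma~\ref{lem-separationupdowncore}, whose separator tests only the \emph{endpoints}, $e_{v_1,w_1}\ass\pi_1(e_{v_1})/\sig(v_1,w_1)/\pi_1(e_{w_1})$, justified by the propagation property of Lemma~\ref{lem-updown1} (in its subsumption form). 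You instead place a node separator at \emph{every} position along the path, in the style of the strictly downward Lemma~\ref{lem-separationdown2}. Both constructions work; yours is self-contained in that it needs no propagation lemma, at the cost of a longer expression.

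However, your closing justification for this design choice is mistaken. You claim that relatedness ``need not propagate from endpoints to interior nodes,'' so that the endpoint-only construction would fail. In fact it does propagate: by Definition~\ref{def-weakequivalent}, $v_1\updowngeq{} v_2$ recursively requires the parents of $v_1$ and $v_2$ to be related, hence the $i$-th ancestors of $v_1$ and $v_2$ are related for every $i$; and every interior node of the path from $v_1$ to $w_1$ is an ancestor of $v_1$ or of $w_1$, its corresponding node (in the sense of Definition~\ref{def-pairsofnodes}) being the same-height ancestor of $v_2$ or $w_2$. Consequently, $(v_1,w_1)\siggeq(v_2,w_2)$ together with $v_1\updowngeq{} v_2$ and $w_1\updowngeq{} w_2$ already implies $(v_1,w_1)\pairs{\siggeq}{\updowngeq{}}(v_2,w_2)$, so the endpoint-only separator in the style of Lemma~\ref{lem-separationupdowncore} would have sufficed. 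This does not affect the validity of your argument---the per-node separators are sound, merely redundant---but the remark asserting their necessity should be deleted or corrected.
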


Finally, Theorem~\ref{theo-posxpathbp-bp} can be specialized to the
node level, as follows.

\begin{corollary}
\label{cor-posxpathbp-nodelevel}

Let $D=(V,\textit{Ed},r,\lambda)$ be a document, let
$v$ be a node of $D$, and let $W\subseteq V$. Then there exists an
expression $e$ in the (core) positive XPath algebra
such that $e(D)(v)=W$ if and only if, for all nodes $w_1$ and $w_2$ of
$D$ with $(v,w_1)\pairs{\siggeq}{\updowngeq{}} (v,w_2)$, $w_1\in W$ implies
$w_2\in W$.

\end{corollary}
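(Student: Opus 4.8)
The plan is to obtain this node-level statement by bootstrapping from the global (pairs-level) characterization Theorem~\ref{theo-posxpathbp-bp}, exactly in the style already used for Theorem~\ref{theo-strictlydown-nodelevel} and Corollary~\ref{cor-weaklydown-nodelevel}. For the ``only if'' direction, suppose $e(D)(v)=W$ for some expression $e$ in the (core) positive XPath algebra, and let $w_1,w_2$ be nodes with $(v,w_1)\pairs{\siggeq}{\updowngeq{}}(v,w_2)$ and $w_1\in W$. Then $(v,w_1)\in e(D)$, and since $R:=e(D)$ is, by Theorem~\ref{theo-posxpathbp-bp}, forward-closed under $\pairs{\siggeq}{\updowngeq{}}$, we get $(v,w_2)\in e(D)$, i.e.\ $w_2\in e(D)(v)=W$. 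This half is immediate.

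For the ``if'' direction, assume $W$ has the stated closure property and set
$$R:=\{(v',w')\mid (\exists w\in W)\ (v,w)\pairs{\siggeq}{\updowngeq{}}(v',w')\}.$$
First I would verify that $R$ satisfies the hypothesis of Theorem~\ref{theo-posxpathbp-bp}: if $(v_1,w_1)\in R$ and $(v_1,w_1)\pairs{\siggeq}{\updowngeq{}}(v_2,w_2)$, take the witness $w\in W$ with $(v,w)\pairs{\siggeq}{\updowngeq{}}(v_1,w_1)$ and chain the two relatednesses to conclude $(v,w)\pairs{\siggeq}{\updowngeq{}}(v_2,w_2)$, so that $(v_2,w_2)\in R$. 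Theorem~\ref{theo-posxpathbp-bp} then furnishes an expression $e$ with $e(D)=R$. The inclusion $W\subseteq e(D)(v)$ follows from reflexivity of $\pairs{\siggeq}{\updowngeq{}}$ (for $w\in W$, use the witness $w$ itself with $v'=v$, $w'=w$), and $e(D)(v)\subseteq W$ follows from the assumed closure of $W$: any $w'\in e(D)(v)$ satisfies $(v,w')\in R$, hence $(v,w)\pairs{\siggeq}{\updowngeq{}}(v,w')$ for some $w\in W$, whence $w'\in W$.

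The one step that genuinely needs care, and which I would isolate as a short lemma, is the chaining invoked above, namely the \emph{transitivity} of $\pairs{\siggeq}{\updowngeq{}}$. This is not routine, because bare signature subsumption ``$\siggeq$'' on pairs is \emph{not} transitive: a subsuming pair may collapse to a degenerate signature, so for siblings $v,w$ one can have $(v,w)\siggeq(x,x)\siggeq(r,r)$ while $(v,w)\not\siggeq(r,r)$, since the root $r$ has no parent and hence cannot realize $\up/\down$. What restores transitivity is the relatedness decoration: ``$\updowngeq{}$'' preserves distance to the root (immediate by induction from the ``root iff root'' clause of Definition~\ref{def-weakequivalent}, propagated upward along parents), so the begin-points of all three pairs share one depth and the end-points share another, and the corresponding top nodes sit at equal depths as well. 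Consequently the target pair $(v_3,w_3)$ has enough ancestors to realize $\sig(v_1,w_1)=\up^m/\down^n$, so $(v_1,w_1)\siggeq(v_3,w_3)$ holds; and transitivity of the node part is just transitivity of ``$\updowngeq{}$'', which rests on the transitivity of the simulation ``$\downgeq{}$'' (verified by a straightforward induction on height). Composing these facts along corresponding nodes yields $(v_1,w_1)\pairs{\siggeq}{\updowngeq{}}(v_3,w_3)$, which discharges the obstacle and completes the argument.
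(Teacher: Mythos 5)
Your proof is correct and follows essentially the same route as the paper, which obtains this corollary by specializing Theorem~\ref{theo-posxpathbp-bp} via exactly the closure construction you use (the template spelled out in the proof of Theorem~\ref{theo-strictlydown-nodelevel}). The one step the paper leaves implicit---that the set $R$ is closed because $\pairs{\siggeq}{\updowngeq{}}$ can be chained---is precisely the transitivity fact you isolate, and your verification of it is sound: depth preservation of $\updowngeq{}$ guarantees that the third pair has enough ancestors to realize $\sig(v_1,w_1)$, while transitivity of $\updowngeq{}$ together with its propagation along parent chains (clause~(c) of Definition~\ref{def-weakequivalent}, which is needed because the intermediate corresponding nodes can lie strictly above $\top(v_2,w_2)$ and hence off the path from $v_2$ to $w_2$) handles the node decoration.
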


\begin{corollary}
\label{cor-posxpathbp-root}

Let $D=(V,\textit{Ed},r,\lambda)$ be a document, 
and let $W\subseteq V$. Then there exists an 
expression $e$ in the (core) positive XPath algebra such that
$e(D)(r)=W$ if and only if, for all nodes $w_1$ and $w_2$ of $D$
with $w_1\updowngeq{} w_2$, $w_1\in W$ implies $w_2\in W$. 

\end{corollary}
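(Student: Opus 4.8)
The plan is to derive this as the $v=r$ instance of the node-level characterization Corollary~\ref{cor-posxpathbp-nodelevel}. Putting $v=r$ there gives at once that an expression $e$ of the (core) positive XPath algebra with $e(D)(r)=W$ exists if and only if, for all $w_1,w_2\in V$ with $(r,w_1)\pairs{\siggeq}{\updowngeq{}}(r,w_2)$, the implication $w_1\in W\Rightarrow w_2\in W$ holds. Hence everything reduces to proving that, for arbitrary nodes $w_1,w_2$, the pair relation $(r,w_1)\pairs{\siggeq}{\updowngeq{}}(r,w_2)$ is equivalent to the node relation $w_1\updowngeq{} w_2$; substituting this equivalence into the displayed condition is exactly the assertion of the corollary.

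I would prove the equivalence straight from the definitions. Because $r$ is an ancestor of $w_1$, Proposition~\ref{prop-congissub} allows replacing $\pairs{\siggeq}{\updowngeq{}}$ by $\pairs{\sigequiv}{\updowngeq{}}$, so I may work with congruence; this is the two-way counterpart of Proposition~\ref{prop-weakequivalent}, (\ref{weakequivalent-1}), on which the text already notes the analogous strictly downward root result rests. For the forward direction, $w_1$ is the terminal node of the (downward) path from $r$ to $w_1$, and its corresponding node in the sense of Definition~\ref{def-pairsofnodes} is $w_2$ itself, since $\sig(r,w_1)=\down^n$ with $n$ the depth of $w_1$ and the depth-$n$ ancestor of $w_2$ is $w_2$; the second clause of Definition~\ref{def-pairsofnodes} then yields $w_1\updowngeq{} w_2$ immediately. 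For the converse, I would unwind the recursion of Definition~\ref{def-weakequivalent}: $w_1\updowngeq{} w_2$ propagates root-or-not status and relatedness up through all parents to the root, which simultaneously forces equal depths (hence $(r,w_1)\sigequiv(r,w_2)$) and forces every pair of corresponding ancestors to be $\updowngeq{}$-related, which is precisely the path condition of Definition~\ref{def-pairsofnodes}.

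The argument carries no genuinely hard step; it is a bookkeeping specialization. The one point I would treat carefully is the identification of the two families of node comparisons involved: that the node paired with the depth-$i$ ancestor of $w_1$ by the ``corresponding node'' prescription of Definition~\ref{def-pairsofnodes} is the same as the depth-$i$ ancestor of $w_2$ produced at the matching level of the recursion of Definition~\ref{def-weakequivalent}. Once this correspondence is pinned down, both implications close and the corollary follows.
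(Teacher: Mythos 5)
Your proposal is correct and follows exactly the route the paper intends for this corollary: specialize Corollary~\ref{cor-posxpathbp-nodelevel} to $v=r$ and then identify the pair relation $(r,w_1)\pairs{\siggeq}{\updowngeq{}}(r,w_2)$ with the node relation $w_1\updowngeq{} w_2$, just as the earlier root results (Theorem~\ref{theo-strictlydown-root}, Corollary~\ref{cor-weakstrictlydown-root}) are obtained via Proposition~\ref{prop-kequivalent} and Proposition~\ref{prop-weakequivalent}, (\ref{weakequivalent-1}). Your explicit handling of the two small gaps left implicit in the paper---passing from subsumption to congruence via Proposition~\ref{prop-congissub}, and checking that the $\updowngeq{}$-correspondence along root paths unwinds to the recursion of Definition~\ref{def-weakequivalent}---is exactly the bookkeeping the paper's implicit proof relies on.
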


Hence, the (core) positive XPath algebra, the weakly downward positive
(core) XPath algebra, and the strictly downward positive (core) XPath
algebra are all navigationally equivalent if navigation always starts
from the root.
\section{Discussion}
\label{sec-conclusions}

\begin{sidewaystable}
    \centering
    \caption{Summary of main results.}\label{table:summary}
\footnotesize{
    \begin{tabular}{|p{.21\textwidth}ccccc|}
        \hline
        {\em Language} & {\em Node relationship} & {\em Node Coupling Theorem}  & {\em Path Relationship} & {\em Path Coupling Theorem} & {\em BP Result}\\
        \hline
        strictly downward (core) XPath algebra with counting up to $k$ & \downequiv{k} & Theorem \ref{theo-downequivalent} 
                                                                       &\pairs{\sigequiv}{\downequiv{k}} &Corollary \ref{cor-strictlydown}& Theorem \ref{theo-strictlydown-bp} \\
        \hline
        strictly downward (core) positive XPath algebra &\weakdownequiv{} & Theorem \ref{theo-weakdownequivalent} 
                                                        &\pairs{\sigequiv}{\weakdownequiv{}} &Theorem \ref{theo-weakstrictlydown}& Theorem \ref{theo-weakstrictlydown-bp} \\
        \hline
        weakly downward (core) XPath algebra with counting up to $k$ & \updownequiv{k} & Theorem \ref{theo-weaklydownequivalent} 
                                                                     &\pairs{\sigequiv}{\updownequiv{k}}&Corollary \ref{cor-weaklydown}& Theorem \ref{theo-weaklydown-bp} \\
        \hline
        weakly downward (core) positive XPath algebra & \weakupdownequiv{}  & Theorem \ref{theo-weakweaklydownequivalent} 
                                                      &\pairs{\sigequiv}{\weakupdownequiv{}}&Theorem \ref{theo-weakweaklydown}& Theorem \ref{theo-weakweaklydown-bp} \\
        \hline
        strictly upward (core) XPath algebra & \upequiv & Theorem \ref{theo-upequivalent} 
                                             &\pairs{\sigequiv}{\upequiv}&Theorem \ref{theo-strictlyup}& Theorem \ref{theo-strictlyup-bp} \\
        \hline
        strictly upward (core) positive XPath algebra &\upequiv  & Theorem \ref{theo-upequivalent} 
                                                      &\pairs{\sigequiv}{\upequiv}&Theorem \ref{theo-strictlyup}& Theorem \ref{theo-weakstrictlyup-bp} \\
        \hline
        weakly upward languages & & & {\em see Section \ref{subsec-weaklyupward}}   &&  \\
        \hline
        XPath algebra with counting up to $k$ & \updownequiv{k} & Theorem \ref{theo-updown} 
                                              &\pairs{\sigequiv}{\updownequiv{k}}&Corollary \ref{cor-updownpairs}& Theorem \ref{theo-updown-bp} \\
        \hline
        core XPath algebra with counting up to $k$ & \updownequiv{k} & Theorem \ref{theo-updowncore} 
                                                   &\pairs{\sigequiv}{\updownequiv{k}}&Corollary \ref{cor-updownpairscore}& Theorem \ref{theo-updown-bpcore} \\
        \hline
        (core) positive XPath algebra (\cite{WuGGP11})& \weakupdownequiv{}  & Theorem \ref{theo-weakupdownequivalentnodes}  
                                                      &\pairs{\sigequiv}{\weakupdownequiv{}} &Theorem \ref{theo-weakupdownequivalentpairs}& Theorem \ref{theo-posxpathbp-bp} \\
        \hline
    \end{tabular}
}
\end{sidewaystable}

In this paper, we characterized the expressive power of several natural
fragments of XPath at the document level, as summarized in Table
\ref{table:summary}.  Of course, it is possible to
consider other fragments or extensions of the XPath algebra and its
data model. Analyzing these using our two-step methodology in order
to further improve our understanding of the instance expressivity of Tarski's
algebra is one possible research
direction which we have pursued recently \cite{FletcherGLBGVW11,FletcherGLBGVW12,WuGGP11}.  

Another future research direction is
refining the links between XPath 
and finite-variable first-order logics \cite{LibkinFMT}.
Indeed, such links have been established at the level of query semantics.
For example, Marx~\cite{MarxTODS05}
has shown that an extended version of Core XPath is equivalent to
$\textrm{FO}^2_{\mathtt{tree}}$---first-order logic using at
most two variables over \emph{ordered} node-labeled
trees---interpreted in the signature
\texttt{child}, 
\texttt{descendant}, and
\texttt{following\_sibling}. 

Our results
establish new links to finite-variable first-order logics at the
document level. 
For example, we can
show that, on a given document, the XPath algebra and
$\textrm{FO}^3$---first-order logic with at most three variables---are
equivalent in expressive power.
Indeed, as we discussed above, at the document level, the
XPath-algebra is equivalent with Tarski's relation
algebra~\cite{Tarski41} over trees. Tarski and
Givant~\cite{Givant06,TarskiGivant} established the link between
Tarski's algebra and $\textrm{FO}^3$. Corollary~\ref{cor-updownxpath} can then
be used to give a new characterization, other than via
pebble-games \cite{LibkinFMT,Krzeszczakowski03}, of when two nodes in
an unordered tree are indistinguishable in $\textrm{FO}^3$. In this
light, connections between other fragments of the XPath algebra and
finite-variable logics must be examined.

The connection between the XPath algebra and $\textrm{FO}^3$ also has
ramifications with regard to complexity issues. Indeed, using a result of
Grohe~\cite{Grohe99} which establishes that expression equivalence for
$\textrm{FO}^3$ is decidable in polynomial time, it follows readily from
Corollaries~\ref{cor-updown-bp} and \ref{cor-updown-nodelevel} that the global and local
definability problems for the XPath algebra are decidable in polynomial time.
Using the syntactic characterizations in this paper, one
can also establish that the global and local definability problems for the other
fragments of the XPath algebra are decidable in polynomial time.  As mentioned
in the Introduction, this feasibility suggests efficient partitioning and
reduction techniques on the set of nodes and the set of paths in a document.
Such techniques might be successfully applied towards various aspects of XML
document processing such as indexing, access control, and document compression.
This is another research direction which we are currently pursuing
\cite{Sofia,FletcherGWGBP09}.

\bibliographystyle{model1-num-names}
\bibliography{xpathbib}

\end{document}